\documentclass[a4paper,UKenglish]{article}
\pdfoutput=1
\usepackage[T1]{fontenc}
\usepackage[utf8]{inputenc}
\usepackage[english]{babel}

\usepackage{amsmath, amssymb}
\usepackage{listings}
\usepackage{cmll}
\usepackage{bbold}
\usepackage{stmaryrd}
\usepackage{tikz}
\usetikzlibrary{trees,positioning,arrows,shapes, calc, fit, snakes, decorations.shapes}

\usepackage[left=3cm,right=3cm, bottom=4cm]{geometry}

\usepackage{shuffle}
\DeclareFontFamily{U}{shuffle}{}
\DeclareFontShape{U}{shuffle}{m}{n}{ <-8>shuffle7 <8->shuffle10}{}

\usepackage[hidelinks]{hyperref}

\usepackage{amsthm}
\theoremstyle{plain}
\newtheorem{theorem}{Theorem}
\newtheorem{proposition}[theorem]{Proposition}
\newtheorem{lemma}[theorem]{Lemma}
\newtheorem{corollary}[theorem]{Corollary}
\theoremstyle{definition}
\newtheorem{definition}[theorem]{Definition}
\newtheorem{example}[theorem]{Example}
\theoremstyle{remark}
\newtheorem*{remark}{Remark}

\newcommand{\daimon}{\maltese}
\newcommand{\design}[1]{{\mathfrak{#1}}}
\newcommand{\interpret}[1]{\llbracket#1\rrbracket}
\newcommand{\normalisation}[1]{(\![#1]\!)}
\newcommand{\antiview}[1]{{\llcorner#1\lrcorner}}
\newcommand{\defined}[1]{\textbf{#1}}
\newcommand{\imp}[1]{{\em #1}}
\newcommand{\visit}[1]{V_{\beh{#1}}}
\newcommand{\beh}[1]{{\mathbf{#1}}}
\newcommand{\triv}[1]{\langle#1\rangle}
\newcommand{\trivv}[2]{\langle#1\rangle_{#2}}
\newcommand{\gbar}{~~~\boldsymbol |~~~}
\newcommand{\setst}[2]{\{ #1 ~\boldsymbol |~ #2\}}
\newcommand{\posdes}[3]{#1|\overline{#2}\langle #3 \rangle}
\newcommand{\posdesblue}[3]{#1|\overline{#2}\textcolor{blue}{\langle} #3 \textcolor{blue}{\rangle}}
\newcommand{\posdesdots}[4]{#1|\overline{#2}\langle #3, \dots, #4 \rangle}
\newcommand{\negdes}[3]{\sum_{#1\in \mathcal S} #1(#2).#3}
\newcommand{\negdesdots}[4]{\sum_{#1\in \mathcal S} #1(#2, \dots, #3).#4}
\newcommand{\interseq}[2]{\langle #1 \leftarrow #2\rangle}
\newcommand{\viewseq}[1]{{\mathbb{#1}}}
\newcommand{\cut}[2]{\mathfrak{Cut}_{#1|#2}}
\newcommand{\completed}[1]{\fullview{#1}^c}
\newcommand{\antishuffle}{\text{\rotatebox[origin=c]{180}{$\shuffle$}}}
\newcommand{\vect}[1]{\overrightarrow{#1}}
\newcommand{\symshneg}{\blacktriangle}
\newcommand{\symshpos}{\blacktriangledown}
\newcommand{\sympar}{\wp}
\newcommand{\symtensor}{\bullet}
\newcommand{\symavec}{\pi}
\newcommand{\symplus}{\iota}

\newcommand{\proj}[2]{
\mathinner{\mathchoice
{#1\!\!\upharpoonright\!\!#2}
{#1\!\!\upharpoonright\!\!#2}
{#1\mkern1mu\upharpoonright\mkern1mu #2}
{#1\!\!\upharpoonright\!\!#2}
}
}

\DeclareFontFamily{OT1}{pzc}{}
\DeclareFontShape{OT1}{pzc}{m}{it}{<-> [1.1] pzcmi8t}{} 
\DeclareMathAlphabet{\mathpzc}{OT1}{pzc}{m}{it}
\newcommand{\pathLL}[1]{\mathpzc{#1}}

\newlength{\Viewheight}
\newlength{\ulcornerheight}
\newcommand{\view}[1]{
\settoheight{\Viewheight}{$#1$}
\settoheight{\ulcornerheight}{$\ulcorner$}
\addtolength{\Viewheight}{-\ulcornerheight}
\addtolength{\Viewheight}{2pt}
\raisebox{\Viewheight}{$\ulcorner$}{#1}\raisebox{\Viewheight}{$\urcorner$}}

\newlength{\Fullviewheight}
\newcommand{\fullview}[1]{
\settoheight{\Fullviewheight}{$#1$}
\settoheight{\ulcornerheight}{$\ulcorner$}
\addtolength{\Fullviewheight}{-\ulcornerheight}
\addtolength{\Fullviewheight}{2pt}
\raisebox{\Fullviewheight}{$\ulcorner\mkern-6mu\ulcorner\mkern-2mu$}{#1}\raisebox{\Fullviewheight}{$\mkern-2mu\urcorner\mkern-6mu\urcorner$}}

\newlength{\dualwidth}
\newlength{\dualheight}
\newcommand{\dual}[2][1]{
\settowidth{\dualwidth}{$#2$}
\settoheight{\dualheight}{$#2$}
\makebox[\dualwidth][c]{\mbox{\rule{0cm}{#1\dualheight}$\Widetilde[#1]{#2}$}}
}

\makeatletter
\newlength{\Widetildeheight}
\setlength{\Widetildeheight}{.62em}
\newlength{\Widetildewidth}
\newcommand{\Widetildestretch}{1}
\newcommand*\Widetilde[2][\Widetildestretch]{
        \begingroup
        \mathchoice{\Widetilde@helper{#1}{#2}{\displaystyle}{\textfont}}
                   {\Widetilde@helper{#1}{#2}{\textstyle}{\textfont}}
                   {\Widetilde@helper{#1}{#2}{\scriptstyle}{\scriptfont}}
                   {\Widetilde@helper{#1}{#2}{\scriptscriptstyle}{\scriptscriptfont}}
        \endgroup
}

\newcommand*\Widetilde@helper[4]{
\settowidth{\Widetildewidth}{$#2$}
\settoheight{\Widetildeheight}{$#2$}
\setlength{\Widetildeheight}{#1\Widetildeheight}
\rlap{\raisebox{\Widetildeheight}{$\resizebox{\Widetildewidth}{\height}{$\sim$}$}}{#2}
}
\makeatother

\title{Inductive and Functional Types in Ludics\footnote{Extended version of the paper accepted for publication in CSL 2017.}}

\author{Alice Pavaux \\ \\ Université Paris 13, Sorbonne Paris Cité, LIPN, CNRS, UMR 7030}

\begin{document}

\maketitle

\begin{abstract}
  Ludics is a logical framework in which types/formulas are modelled by sets of terms with the same computational behaviour. This paper investigates the representation of inductive data types and functional types in ludics. We study their structure following a game semantics approach. Inductive types are interpreted as least fixed points, and we prove an internal completeness result giving an explicit construction for such fixed points. The interactive properties of the ludics interpretation of inductive and functional types are then studied. In particular, we identify which higher-order functions types fail to satisfy type safety, and we give a computational explanation.
\end{abstract}

\section{Introduction}

\subsection{Context and Contributions}

\subparagraph{Context} \hspace{0cm} \imp{Ludics} was introduced by Girard \cite{Girard1} as a variant of \imp{game semantics} with interactive types. Game Semantics has successfully provided fully abstract models for various logical systems and programming languages, among which PCF~\cite{HO}. Although very close to Hyland--Ong (HO) games, ludics reverses the approach: in HO games one defines first the interpretation of a type (an arena) before giving the interpretation for the terms of that type (the strategies), while in ludics the interpretation of terms (the \imp{designs}) is primitive and the types (the \imp{behaviours}) are recovered dynamically as well-behaved sets of terms. This approach to types is similar to what exists in realisability \cite{Krivine} or geometry of interaction \cite{Girard2}.

The motivation for such a framework was to reconstruct logic around the dynamics of proofs. Girard provides a ludics model for (a polarised version of) multiplicative-additive linear logic (MALL); a key role in his interpretation of logical connectives is played by the \imp{internal completeness} results, which allow for a direct description of the behaviours' content. As most behaviours are not the interpretation of MALL formulas, an interesting question, raised from the beginning of ludics, is whether these remaining behaviours can give a logical counterpart to computational phenomena. In particular, data and functions~\cite{Terui, Sironi}, and also fixed points~\cite{BDS} have been studied in the setting of ludics. The present work follows this line of research.

Real life (functional) programs usually deal with data, functions over it, functions over functions, etc. \imp{Data types} allow one to present information in a structured way. Some data types are defined \imp{inductively}, for example:
\begin{lstlisting}[caption={Example of inductive types in OCaml}]
> type nat = Zero | Succ of nat ;;
> type 'a list = Nil | Cons of 'a * 'a list ;;
> type 'a tree = Empty | Node of 'a * ('a tree) list ;;
\end{lstlisting}
Upon this basis we can consider \imp{functional types}, which are either first-order -- from data to data -- or higher-order -- i.e., taking functions as arguments or returning functions as a result. This article aims at interpreting constructively the (potentially inductive) data types and the (potentially higher-order) functional types as behaviours of ludics, so as to study their structural properties. Inductive types are defined as (least) \imp{fixed points}. As pointed out by Baelde, Doumane and Saurin~\cite{BDS}, the fact that ludics puts the most constraints on the formation of terms instead of types, conversely to game semantics, makes it a more natural setting for the interpretation of fixed points than HO games \cite{Clairambault}.

\subparagraph{Contributions}
The main contributions of this article are the following:
\begin{itemize}
\item We prove that internal completeness holds for infinite unions of behaviours satisfying particular conditions (Theorem~\ref{thm_union_beh}), leading to an explicit construction of the least fixed points in ludics (Proposition~\ref{prop_reg_union}).
\item Inductive and functional types are interpreted as behaviours, and we prove that such behaviours are \imp{regular} (Corollary~\ref{coro_data_reg} and Proposition~\ref{prop_func_quasi_pure}). Regularity (that we discuss more in \textsection~\ref{sub_tools}) is a property that could be used to characterise the behaviours corresponding to $\mu$MALL formulas \cite{Baelde,BDS} -- i.e., MALL with fixed points.
\item We show that a functional behaviour fails to satisfy \imp{purity}, a property ensuring the safety of all possible executions (further explained in \textsection~\ref{sub_tools}), if and only if it is higher order and takes functions as argument (Proposition~\ref{prop_main}); this is typically the case of $(\beh A \multimap \beh B) \multimap \beh C$. In \textsection~\ref{ex_discuss} we discuss the computational meaning of this result.
\end{itemize}
The present work is conducted in the term-calculus reformulation of ludics by Terui \cite{Terui} restricted to the linear part -- the idea is that programs call each argument at most once.

\subparagraph{Related Work}
 The starting point for our study of inductive types as fixed points in ludics is the work by Baelde, Doumane and Saurin \cite{BDS}. In their article, they provide a ludics model for $\mu$MALL, a variant of multiplicative-additive linear logic with least and greatest fixed points. The existence of fixed points in ludics is ensured by Knaster-Tarski theorem, but this approach does not provide an explicit way to construct the fixed points; we will consider Kleene fixed point theorem instead. Let us also mention the work of Melliès and Vouillon \cite{MV} which introduces a realisability model for recursive (i.e., inductive and coinductive) polymorphic types.

 The representation of both data and functions in ludics has been studied previously. Terui \cite{Terui} proposes to encode them as designs in order to express computability properties in ludics, but data and functions are not considered at the level of behaviours. Sironi \cite{Sironi} describes the behaviours corresponding to some data types: integers, lists, records, etc. as well as first-order function types; our approach generalises hers by considering generic data types and also higher order functions types.

 \subsection{Background} \label{sub_tools}

\subparagraph{Behaviours and Internal Completeness} A behaviour $\beh B$ is a set of designs which pass the same set of tests $\beh B^\perp$, where tests are also designs. $\beh B^\perp$ is called the \imp{orthogonal} of $\beh B$, and behaviours are closed under bi-orthogonal: $\beh B^{\perp\perp} = \beh B$. New behaviours can be formed upon others using various constructors. In this process, internal completeness, which can be seen as a built-in notion of observational equivalence, ensures that two agents reacting the same way to any test are actually equal. From a technical point of view, this means that it is not necessary to apply a $\perp\perp$-closure for the sets constructed to be behaviours.

\subparagraph{Paths: Ludics as Game Semantics} This paper makes the most of the resemblance between ludics and HO game semantics. The connections between them have been investigated in many pieces of work \cite{BF, Faggian, FQ1} where designs are described as (innocent) strategies, i.e., in terms of the traces of their possible interactions. Following this idea, Fouqueré and Quatrini define \imp{paths} \cite{FQ1}, corresponding to legal plays in HO games, and they characterise a behaviour by its set of \imp{visitable paths}. This is the approach we follow. The definitions of regularity and purity rely on paths, since they are properties of the possible interactions of a behaviour.

\subparagraph{Regularity: Towards a Characterisation of {\boldmath$\mu$}MALL?} Our proof that internal completeness holds for an infinite union of increasingly large behaviours (Theorem~\ref{thm_union_beh}) relies in particular on the additional hypothesis of regularity for these behaviours. Intuitively, a behaviour $\beh B$ is regular if every path in a design of $\beh B$ is realised by interacting with a design of $\beh B^\perp$, and vice versa. This property is not actually ad hoc: it was introduced by Fouqueré and Quatrini~\cite{FQ2} to characterise the denotations of MALL formulas as being precisely the regular behaviours satisfying an additional finiteness condition. In this direction, our intuition is that -- forgetting about finiteness -- regularity captures the behaviours corresponding to formulas of $\mu$MALL. Although such a characterisation is not yet achieved, we provide a first step by showing that the \imp{data patterns}, a subset of positive $\mu$MALL formulas, yield only regular behaviours (Proposition~\ref{prop_interp_reg}).

\subparagraph{Purity: Type Safety} Ludics has a special feature for termination which is not present in game semantics: the \imp{daimon} $\daimon$. On a computational point of view, the daimon is commonly interpreted as an error, an exception raised at run-time causing the program to stop (see for example the notes of Curien~\cite{Curien}). Thinking of Ludics as a programming language, we would like to guarantee \imp{type safety}, that is, ensure that ``well typed programs cannot go wrong''~\cite{Milner}. This is the purpose of purity, a property of behaviours: in a pure behaviour, maximal interaction traces are $\daimon$-free, in other words whenever the interaction stops with $\daimon$ it is actually possible to ``ask for more'' and continue the computation. Introduced by Sironi~\cite{Sironi} (and called \imp{principality} in her work), this property is related to the notions of \imp{winning} designs~\cite{Girard1} and \imp{pure} designs~\cite{Terui}, but at the level of a behaviour. As expected, data types are pure (Corollary~\ref{coro_data_pur}), but not always functional types are; we identify the precise cases where impurity arises (Proposition~\ref{prop_main}), and explain why some types are not safe.

\subsection{Outline}

In Section~\ref{sec-designs} we present ludics and we state internal completeness for the logical connectives constructions.
In Section~\ref{sec-paths} we recall the notion of path, so as to define formally regularity and purity and prove their stability under the connectives.
Section~\ref{sec-induct} studies inductive data types, which we interpret as behaviours; Kleene theorem and internal completeness for infinite union allows us to give an explicit and direct construction for the least fixed point, with no need for bi-orthogonal closure; we deduce that data types are regular and pure.
Finally, in Section~\ref{sec-func}, we study functional types, showing in what case purity fails.

\section{Computational Ludics} \label{sec-designs}

This section introduces the ludics background necessary for the rest of the paper, in the formalism of Terui~\cite{Terui}. The \imp{designs} are the primary objects of ludics, corresponding to (polarised) proofs or programs in a Curry-Howard perspective. Cuts between designs can occur, and their reduction is called \imp{interaction}. The \imp{behaviours}, corresponding to the types or formulas of ludics, are then defined thanks to interaction. Compound behaviours can be formed with \imp{logical connectives} constructions which satisfy \imp{internal completeness}.

\subsection{Designs and Interaction} \label{sub-design}

Suppose given a set of variables $\mathcal V_0$ and a set $\mathcal S$, called \defined{signature}, equipped with an arity function $ar: \mathcal S \to \mathbb N$. Elements $a, b, \dots \in \mathcal S$ are called \defined{names}. A \defined{positive action} is either $\daimon$ (daimon), $\Omega$ (divergence), or $\overline a$ with $a \in \mathcal S$; a \defined{negative action} is $a(x_1, \dots, x_n)$ where $a \in \mathcal S$, $\mathrm{ar}(a)=n$ and $x_1, \dots, x_n \in \mathcal V_0$ distinct. An action is \defined{proper} if it is neither $\daimon$ nor $\Omega$.
\begin{definition}
  Positive and negative \defined{designs}\footnote{In the following, the symbols $\design d, \design e, \dots$ refer to designs of any polarity, while $\design p, \design q, \dots$ and $\design m, \design n, \dots$ are specifically for positive and negative designs respectively.} are coinductively defined by:
    \begin{align*}
      &\design p ~::=~ \daimon \gbar \Omega \gbar \posdesdots{x}{a}{\design n_1}{\design n_{\mathrm{ar}(a)}} \gbar \posdesdots{\design n_0}{a}{\design n_1}{\design n_{\mathrm{ar}(a)}} \\
      &\design n ~::=~ \textstyle\negdesdots{a}{x^a_1}{x^a_{\mathrm{ar}(a)}}{\design p_a}
    \end{align*}
\end{definition}

Positive designs play the same role as \imp{applications} in $\lambda$-calculus, and negative designs the role of \imp{abstractions}, where each name $a \in \mathcal S$ binds $\mathrm{ar}(a)$ variables.

Designs are considered up to $\alpha$-equivalence. We will often write $a(\vect x)$ (resp. $\overline a \langle \vect{\design n} \rangle$) instead of $a(x_1, \dots, x_n)$ (resp. $\overline a \langle \design n_1 \dots \design n_n \rangle$). Negative designs can be written as partial sums, for example $a(x, y).\design p + b().\design q$ instead of $a(x, y).\design p + b().\design q + \sum_{c \neq a, c \neq b} c(\vect{z^c}).\Omega$.

  Given a design $\design d$, the definitions of the \defined{free variables} of $\design d$, written $\mathrm{fv}(\design d)$, and the (capture-free) \defined{substitution} of $x$ by a negative design $\design n$ in $\design d$, written $\design d[\design n/x]$, can easily be inferred. The design $\design d$ is \defined{closed} if it is positive and it has no free variable. A \defined{subdesign} of $\design d$ is a subterm of $\design d$. A \defined{cut} in $\design d$ is a subdesign of $\design d$ of the form $\posdes{\design n_0}{a}{\vect{\design n}}$, and a design is \defined{cut-free} if it has no cut.

In the following, we distinguish a particular variable $x_0$, that cannot be bound. A positive design $\design p$ is \defined{atomic} if $\mathrm{fv}(\design p) \subseteq \{x_0\}$; a negative design $\design n$ is \defined{atomic} if $\mathrm{fv}(\design n) = \emptyset$.

A design is \defined{linear} if for every subdesign of the form $\posdes{x}{a}{\vect{\design n}}$ (resp. $\posdes{\design n_0}{a}{\vect{\design n}}$), the sets $\{x\}$, $\mathrm{fv}(\design n_1)$, \dots, $\mathrm{fv}(\design n_{\mathrm{ar}(a)})$ (resp. the sets $\mathrm{fv}(\design n_0)$, $\mathrm{fv}(\design n_1)$, \dots, $\mathrm{fv}(\design n_{\mathrm{ar}(a)})$) are pairwise disjoint. This article focuses on linearity, so in the following when writing ``design'' we mean ``linear design''.

\begin{definition}
  The \defined{interaction} corresponds to reduction steps applied on cuts:
  \[\textstyle\sum_{a \in \mathcal S} a(x^a_1, \dots, x^a_{\mathrm{ar}(a)}).\design p_a~|~ \overline b \langle \design n_1, \dots, \design n_k\rangle \hspace{.6cm} \leadsto \hspace{.6cm} \design p_b[\design n_1/x^b_1, \dots, \design n_k/x^b_k] \]
\end{definition}
We will later describe an interaction as a sequence of actions, a path (Definition~\ref{def-path}).

Let $\design p$ be a design, and let $\leadsto^*$ denote the reflexive transitive closure of $\leadsto$; if there exists a design $\design q$ which is neither a cut nor $\Omega$ and such that $\design p \leadsto^* \design q$, we write $\design p \Downarrow \design q$; otherwise we write $\design p \Uparrow$. The normal form of a design, defined below, exists and is unique \cite{Terui}.

\begin{definition} The \defined{normal form} of a design $\design d$, noted $\normalisation{\design d}$, is defined by:
\begin{align*}
  & \normalisation{\design p} = \daimon \mbox{\hspace{.3cm}~if~} \design p \Downarrow \daimon && \normalisation{\design p} = \posdesdots{x}{a}{\normalisation{\design n_1}}{\normalisation{\design n_n}} \mbox{\hspace{.3cm}~if~} \design p \Downarrow \posdesdots{x}{a}{\design n_1}{\design n_n} \\
  & \normalisation{\design p} = \Omega \mbox{\hspace{.3cm}~if~} \design p \Uparrow && \normalisation{\textstyle \negdes{a}{\vect{x^a}}{\design p_a}} = \textstyle \negdes{a}{\vect{x^a}}{\normalisation{\design p_a}}
\end{align*}
\end{definition}
Note that the normal form of a closed design is either $\daimon$ (convergence) or $\Omega$ (divergence). Orthogonality expresses the convergence of the interaction between two atomic designs, and behaviours are sets of designs closed by bi-orthogonal.

\begin{definition}
Two atomic designs $\design p$ and $\design n$ are \defined{orthogonal}, noted $\design p \perp \design n$, if $\normalisation{\design p[\design n/x_0]} = \daimon$.
\end{definition}
Given an atomic design $\design d$, define $\design d^\perp = \setst{\design e}{\design d \perp \design e}$; if $E$ is a set of atomic designs of same polarity, define $E^\perp = \setst{\design d}{\forall \design e \in E, \design d \perp \design e}$.

\begin{definition}    
  A set $\beh B$ of atomic designs of same polarity is a \defined{behaviour}\footnote{Symbols $\beh A, \beh B, \dots$ will designate behaviours of any polarity, while $\beh M, \beh N \dots$ and $\beh P, \beh Q, \dots$ will be for negative and positive behaviours respectively.} if $\beh B^{\perp\perp} = \beh B$. A behaviour is either positive or negative depending on the polarity of its designs.
\end{definition}
Behaviours could alternatively be defined as the orthogonal of a set $E$ of atomic designs of same polarity -- $E$ corresponds to a set of \imp{tests} or \imp{trials}. Indeed, $E^\perp$ is always a behaviour, and every behaviour $\beh B$ is of this form by taking $E = \beh B^\perp$.

The \imp{incarnation} of a behaviour $\beh B$ contains the cut-free designs of $\beh B$ whose actions are all visited during an interaction with a design in $\beh B^\perp$. Those correspond to the cut-free designs that are minimal for the \defined{stable ordering} $\sqsubseteq$, where $\design d' \sqsubseteq \design d$ if $\design d$ can be obtained from $\design d'$ by substituting positive subdesigns for some occurrences of $\Omega$.

\begin{definition} \label{def-incarn} Let $\beh B$ be a behaviour and $\design d \in \beh B$ cut-free.
  \begin{itemize}
  \item The \defined{incarnation} of $\design d$ in $\beh B$, written $|\design d|_{\beh B}$, is the smallest (for $\sqsubseteq$) cut-free design $\design d'$ such that $\design d' \sqsubseteq \design d$ and $\design d' \in \beh B$. If $|\design d|_{\beh B} = \design d$ we say that $\design d$ is \defined{incarnated} in $\beh B$.
  \item The \defined{incarnation} $|\beh B|$ of $\beh B$ is the set of the (cut-free) incarnated designs of $\beh B$.
  \end{itemize}
\end{definition}

\subsection{Logical Connectives}  \label{sub_connectives}

Behaviour constructors -- the \imp{logical connectives} -- can be applied so as to form compound behaviours. These connectives, coming from (polarised) linear logic, are used for interpreting formulas as behaviours, and will also indeed play the role of type constructors for the types of data and functions. In this subsection, after defining the connectives we consider, we state the \imp{internal completeness} theorem for these connectives.

Let us introduce some notations. In the rest of this article, suppose the signature $\mathcal S$ contains distinct unary names $\symshneg, \symavec_1, \symavec_2$ and a binary name $\sympar$, and write $\symshpos = \overline \symshneg, \symplus_1 = \overline \symavec_1, \symplus_2 = \overline \symavec_2$ and $\symtensor = \overline \sympar$. Given a behaviour $\beh B$ and $x$ fresh, define $\beh B^x = \setst{\design d[x/x_0]}{\design d \in \beh B}$; such a substitution operates a ``delocation'' with no repercussion on the behaviour's inherent properties. Given a $k$-ary name $a \in \mathcal S$, we write $\overline{a}\langle \beh N_1, \dots, \beh N_k \rangle$ or even $\overline{a}\langle \vect{\beh N} \rangle$ for $\setst{\posdes{x_0}{a}{\vect{\design n}}}{\design n_i \in \beh N_i}$, and write $a(\vect x).\beh P$ for $\setst{a(\vect x).\design p}{\design p \in \beh P}$. For a negative design $\design n = \negdes{a}{\vect{x^a}}{\design p_a}$ and a name $a \in \mathcal S$, we denote by $\proj{\design n}{a}$ the design $a(\vect{x^a}).\design p_a$ (that is $a(\vect{x^a}).\design p_a + \sum_{b \neq a} b(\vect{x^b}).\Omega$).

\begin{definition}[Logical connectives]
  \begin{align*}
  & \shpos \beh N = \symshpos \langle \beh N \rangle^{\perp\perp} && \mbox{ (\defined{positive shift}) } \\
  & \shneg \beh P = (\symshneg(x).\beh P^x)^{\perp\perp} \mbox{, with $x$ fresh} && \mbox{ (\defined{negative shift}) }\\
  & \beh M \oplus \beh N = (\symplus_1 \langle \beh M \rangle \cup \symplus_2 \langle \beh N \rangle)^{\perp\perp} && \mbox{ (\defined{plus}) }\\
  & \beh M \otimes \beh N = \symtensor \langle \beh M, \beh N \rangle^{\perp\perp} && \mbox{ (\defined{tensor}) }\\
  & \beh N \multimap \beh P = (\beh N \otimes \beh P^\perp)^\perp && \mbox{ (\defined{linear map}) }
  \end{align*}
\end{definition}

Our connectives $\shpos$, $\shneg$, $\oplus$ and $\otimes$ match exactly those defined by Terui \cite{Terui}, who also proves the following internal completeness theorem stating that connectives apply on behaviours in a constructive way -- there is no need to close by bi-orthogonal. For each connective, we present two versions of internal completeness: one concerned with the full behaviour, the other with the behaviour's incarnation.

\begin{theorem}[Internal completeness for connectives]\label{thm_intcomp_all}
  \begin{align*}
    &\shpos \beh N = \symshpos \langle \beh N \rangle \cup \{\daimon\} && |\shpos \beh N| = \symshpos \langle |\beh N| \rangle \cup \{\daimon\} \\
    &\shneg \beh P = \setst{\design n}{\proj{\design n}{\symshneg} \in \symshneg(x).\beh P^x} && |\shneg \beh P| = \symshneg(x).|\beh P^x| \\
    &\beh M \oplus \beh N = \symplus_1 \langle \beh M \rangle \cup \symplus_2 \langle \beh N \rangle \cup \{\daimon\} && |\beh M \oplus \beh N| = \symplus_1 \langle |\beh M| \rangle \cup \symplus_2 \langle |\beh N| \rangle \cup \{\daimon\} \\
    &\beh M \otimes \beh N = \symtensor \langle \beh M, \beh N \rangle \cup \{\daimon\} && |\beh M \otimes \beh N| = \symtensor \langle |\beh M|, |\beh N| \rangle \cup \{\daimon\}
  \end{align*}
\end{theorem}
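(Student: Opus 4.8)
The plan is to treat each connective by the same two-step scheme, proving the full-behaviour equality first and deducing the incarnation equality afterwards. For the three positive connectives the right-hand side has the shape $E \cup \{\daimon\}$, where $E$ is the defining set whose bi-orthogonal is taken; the negative shift is phrased differently and I treat it separately below. One inclusion is immediate: $E \subseteq E^{\perp\perp}$ always holds, and $\daimon$ lies in every positive behaviour since $\normalisation{\daimon[\design n/x_0]} = \daimon$ for every negative atomic $\design n$, so that $\daimon \in E^{\perp\perp}$. Hence $E \cup \{\daimon\} \subseteq E^{\perp\perp}$, and the whole content of the theorem is the reverse inclusion $E^{\perp\perp} \subseteq E \cup \{\daimon\}$: the bi-orthogonal closure creates no design beyond the expected ones and $\daimon$.

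For the reverse inclusion I would fix $\design p \in E^{\perp\perp}$ with $\design p \neq \daimon$ and show $\design p \in E$. Since a design and its normal form lie in exactly the same behaviours, I may reason on $\normalisation{\design p}$ and assume $\design p$ cut-free; being in a behaviour with nonempty orthogonal it is not $\Omega$ either. Two facts must then be established. First, the \emph{head action} of $\design p$ is the one prescribed by the connective ($\overline{\symshneg}$ for $\shpos$, $\overline{\sympar}$ for $\otimes$, one of $\overline{\symavec_1}, \overline{\symavec_2}$ for $\oplus$). This is a separation argument: if the first proper action of $\design p$ were any other, one could exhibit a negative design in $E^\perp$ that converges against every element of $E$ but diverges against $\design p$, contradicting $\design p \in E^{\perp\perp}$. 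Second, the immediate subdesign(s) belong to the component behaviours. For this I would take an arbitrary counter-design $\design m' \in \beh N^\perp$ and build a test $\design m \in E^\perp$ embedding $\design m'$ so that the interaction of $\design p$ with $\design m$, after consuming the head action, reduces to that of the subdesign with $\design m'$; orthogonality of $\design p$ with $\design m$ then forces orthogonality of the subdesign with $\design m'$. As $\design m'$ ranges over $\beh N^\perp$, the subdesign lies in $\beh N^{\perp\perp} = \beh N$, which is what the right-hand side demands.

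The four cases differ only in bookkeeping. Positive shift and tensor are multiplicative: there is a single head action and one (resp.\ two) subdesigns, and the embedding test is read off directly from the defining set. Plus is additive and carries the one genuine extra subtlety: besides fixing the head action among $\overline{\symavec_1}, \overline{\symavec_2}$, one must rule out a closure design that would respond to both names; this is handled by noting that a test selects a single branch, so a design orthogonal to all of $\symplus_1\langle\beh M\rangle \cup \symplus_2\langle\beh N\rangle$ must commit to exactly one summand. Negative shift is dual, and its statement is phrased through the projection $\proj{\design n}{\symshneg}$: a positive test always opens with $\symshpos = \overline{\symshneg}$, hence only the $\symshneg$-branch of a negative design is ever visited; this is why membership reduces to $\proj{\design n}{\symshneg} \in \symshneg(x).\beh P^x$ with the remaining branches unconstrained. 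Throughout, I would keep track of the delocation $\beh N^x$ and the linearity constraints, which are what make the embedded tests well defined.

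Finally, the incarnation equalities follow from the full descriptions together with $\sqsubseteq$-minimality. Since the stable order acts through the head action componentwise, a design $\posdes{x_0}{\symshneg}{\design n}$ is incarnated in $\shpos\beh N$ exactly when $\design n$ is incarnated in $\beh N$, and likewise for the other connectives; for negative shift the irrelevant branches are precisely the ones collapsed to $\Omega$ by incarnation, yielding $|\shneg\beh P| = \symshneg(x).|\beh P^x|$. I expect the main obstacle to be the reverse inclusion's two technical cores -- pinning down the head action and extracting the subdesigns -- because both require constructing discriminating and embedding designs in $E^\perp$ and checking by hand that the interaction reduces as claimed, with the additive plus-case being the most delicate since it is there that one must exclude designs reacting to several names at once.
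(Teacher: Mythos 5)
The paper does not prove this theorem itself --- it is quoted from Terui's work (``who also proves the following internal completeness theorem'') --- but your sketch is the standard argument for it, and it matches the fragments the paper does carry out (e.g.\ Lemma~\ref{lem-inc-sh} in the appendix performs exactly your embedding-test step for the shift, and Lemma~\ref{lem-inc-shneg} derives the negative-shift description by duality as you suggest). The reasoning is correct; the only imprecision is in the plus case, where the ``commitment to one summand'' is automatic in Terui's term syntax (a proper positive design has a single head action), so the subtlety you flag there dissolves rather than needing a separate argument.
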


\section{Paths and Interactive Properties of Behaviours} \label{sec-paths}

\imp{Paths} are sequences of actions recording the trace of a possible interaction. For a behaviour $\beh B$, we can consider the set of its \imp{visitable paths} by gathering all the paths corresponding to an interaction between a design of $\beh B$ and a design of $\beh B^\perp$. This notion is needed for defining \imp{regularity} and \imp{purity} and proving that those two properties of behaviours are stable under (some) connectives constructions.

\subsection{Paths} \label{sub-paths}

This subsection adapts the definitions of path and visitable path from \cite{FQ1} to the setting of computational ludics. In order to do so, we need first to recover \imp{location} in actions so as to consider sequences of actions.

Location is a primitive idea in Girard's ludics \cite{Girard1} in which the places of a design are identified with \imp{loci} or \imp{addresses}, but this concept is not visible in Terui's presentation of designs-as-terms. We overcome this by introducing actions with more information on location, which we call \imp{located actions}, and which are necessary to:
\begin{itemize}
\item represent cut-free designs as trees -- actually, forests -- in a satisfactory way,
\item define views and paths.
\end{itemize}

\begin{definition} \label{loc-ac}
  A \defined{located action}\footnote{Located actions will often be denoted by symbol $\kappa$, sometimes with its polarity: $\kappa^+$ or $\kappa^-$.} $\kappa$ is one of: $\daimon \gbar \posdesdots{x}{a}{x_1}{x_{\mathrm{ar}(a)}} \gbar a_x(x_1, \dots, x_{\mathrm{ar}(a)})$ \\
where in the last two cases (\defined{positive proper} and \defined{negative proper} respectively), $a \in \mathcal S$ is the \defined{name} of $\kappa$, the variables $x, x_1, \dots, x_{\mathrm{ar}(a)}$ are distinct, $x$ is the \defined{address} of $\kappa$ and $x_1, \dots, x_{\mathrm{ar}(a)}$ are the \defined{variables bound by} $\kappa$.
\end{definition}
In the following, ``action'' will always refer to a located action. Similarly to notations for designs, $\posdes{x}{a}{\vect x}$ stands for $\posdesdots{x}{a}{x_1}{x_n}$ and $a_x(\vect x)$ for $a_x(x_1, \dots, x_n)$.

\begin{example} \label{ex-tree}
  We show how cut-free designs can be represented as trees of located actions in this example. Let $a^2, b^2, c^1, d^0 \in \mathcal S$, where exponents stand for arities. The following design is represented by the tree of Fig.~\ref{fig-ex-path}.
  \[\design d = a(x_1,x_2).\textcolor{red}(\posdesblue{x_2}{b}{a(x_3, x_4).\daimon + c(y_1).\textcolor{orange}{(}\posdes{y_1}{d}{}\textcolor{orange}{)}\textcolor{blue}{,} c(y_2).\textcolor{green}{(}\posdes{x_1}{d}{}\textcolor{green}{)}}\textcolor{red})\]
\end{example}
  \begin{figure}
  \centering
    \begin{tikzpicture}[grow=up, level distance=1cm, sibling distance = 2.5cm]
      \draw[-]
      node (a) {$a_{x_0}(x_1, x_2)$}
      child{
        node (b) {$x_2|\overline b \langle z_1,z_2 \rangle$}
        child {
          node (c) {$c_{z_2}(y_2)$}
          child {
            node (d) {$x_1|\overline d\langle\rangle$}
          }
        }
        child	{
          node (c') {$c_{z_1}(y_1)$}
          child	{
            node (d') {$y_1|\overline d \langle\rangle$}
          }
        }
        child	{
          node (a') {$a_{z_1}(x_3,x_4)$}
          child	{
            node (dai) {$\daimon$}
          }
        }
      };
        \draw[->, violet, thick, rounded corners] ($(a)+(-.9,-.2)$) -- ($(b)+(-.9,0)$) -- ($(a')+(-.9,-.2)$) -- ($(dai)+(-.9,.3)$) ;
        \node[violet] at ($(a)+(-1.7,0)$) {a view};
        \draw[-, red, thick, rounded corners] ($(a)+(.9,-.2)$) -- ($(d')+(.9,.3)$) -- ($(d')+(1,.1)$);
        \draw[-,red, dashed, thick] ($(d')+(1,.1)$) -- ($(c)+(-.8,-.1)$);
        \draw[->, red, thick, rounded corners] ($(c)+(-.8,-.1)$) -- ($(c)+(-.7,-.2)$) -- ($(d)+(-.7,.4)$) ;
        \node[red] at ($(a)+(1.6,0)$) {a path};
    \end{tikzpicture}
    \caption{Representation of design $\design d$ from Example~\ref{ex-tree}, with a path and a view of $\design d$.}
    \label{fig-ex-path}
  \end{figure}
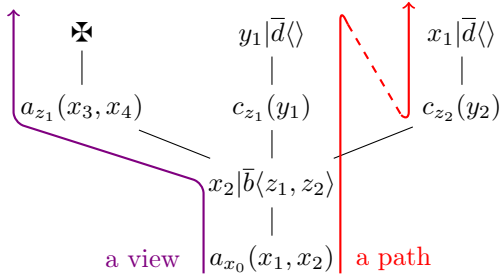
Such a representation is in general a forest: a negative design $\negdes{a}{\vect{x^a}}{\design p_a}$ gives as many trees as there is $a \in \mathcal S$ such that $\design p_a \neq \Omega$. The distinguished variable $x_0$ is given as address to every negative root of a tree, and fresh variables are picked as addresses for negative actions bound by positive ones. This way, negative actions from the same subdesign, i.e., part of the same sum, are given the same address. A tree is indeed to be read bottom-up: a proper action $\kappa$ is \defined{justified} if its address is bound by an action of opposite polarity appearing below $\kappa$ in the tree; otherwise $\kappa$ is called \defined{initial}. Except the root of a tree, which is always initial, every negative action is justified by the only positive action immediately below it. If $\kappa$ and $\kappa'$ are proper, $\kappa$ is \defined{hereditarily justified} by $\kappa'$ if there exist actions $\kappa_1, \dots, \kappa_n$ such that $\kappa = \kappa_1$, $\kappa' = \kappa_n$ and for all $i$ such that $1 \le i < n$, $\kappa_i$ is justified by $\kappa_{i+1}$.

Before giving the definitions of \imp{view} and \imp{path}, let us give an intuition. On Fig.~\ref{fig-ex-path} are represented a view and a path of design $\design d$. Views are branches in the tree representing a cut-free design (reading bottom-up), while paths are particular ``promenades'' starting from the root of the tree; not all such promenades are paths, though. Views correspond to \imp{chronicles} in original ludics \cite{Girard1}.

For every positive proper action $\kappa^+ = \posdes{x}{a}{\vect{y}}$ define $\overline{\kappa^+} = a_x(\vect{y})$, and similarly if $\kappa^- = a_x(\vect{y})$ define $\overline{\kappa^-} = \posdes{x}{a}{\vect{y}}$. Given a finite sequence of proper actions $\pathLL s = \kappa_1 \dots \kappa_n$, define $\overline{\pathLL s} = \overline{\kappa_1} \dots \overline{\kappa_n}$. Suppose now that if $\pathLL s$ contains an occurrence of $\daimon$, it is necessarily in last position; the \defined{dual} of $\pathLL s$, written $\dual{\pathLL s}$, is the sequence defined by:
  \begin{itemize}
  \item $\dual{\pathLL s} = \overline{\pathLL s}\daimon$ if $\pathLL s$ does not end with $\daimon$,
  \item $\dual{\pathLL s} = \overline{\pathLL s'}$ if $\pathLL s = \pathLL s' \daimon$.
  \end{itemize}
Note that $\dual{\dual{\pathLL s}} = \pathLL s$. The notions of \defined{justified}, \defined{hereditarily justified} and \defined{initial} actions also apply in sequences of actions.
\begin{definition} \label{aj-seq} An \defined{alternated justified sequence} (or \defined{aj-sequence}) $\pathLL s$ is a finite sequence of actions such that:
  \begin{itemize}
  \item (Alternation) Polarities of actions alternate.
  \item (Daimon) If $\daimon$ appears, it is the last action of $\pathLL s$.
  \item (Linearity) Each variable is the address of at most one action in $\pathLL s$.
  \end{itemize}
\end{definition}
The (unique) justification of a justified action $\kappa$ in an aj-sequence is noted $\mathrm{just}(\kappa)$, when there is no ambiguity on the sequence we consider.

\begin{definition}
A \defined{view} $\viewseq v$ is an aj-sequence such that each negative action which is not the first action of $\viewseq v$ is justified by the immediate previous action. Given a cut-free design $\design d$, $\viewseq v$ is a \defined{view of} $\design d$ if it is a branch in the representation of $\design d$ as a tree (modulo $\alpha$-equivalence).
\end{definition}
The way to extract \defined{the view} of an aj-sequence is given inductively by:
  \begin{itemize}
    \item $\view{\epsilon} = \epsilon$, where $\epsilon$ is the empty sequence,
    \item $\view{\pathLL s\kappa^+} = \view{\pathLL s}\kappa^+$,
    \item $\view{\pathLL s\kappa^-} = \view{\pathLL s_0}\kappa^-$ where $\pathLL s_0$ is the prefix of $\pathLL s$ ending on $\mathrm{just}(\kappa^-)$, or $\pathLL s_0 = \epsilon$ if $\kappa^-$ initial.
  \end{itemize}
The \defined{anti-view} of an aj-sequence, noted $\antiview{\pathLL s}$, is defined symmetrically by reversing the role played by polarities; equivalently $\antiview{\pathLL s} = \dual{\view{\dual{\pathLL s}}}$.

\begin{definition} \label{def-path}
  A \defined{path} $\pathLL s$ is a positive-ended aj-sequence satisfying:
  \begin{itemize}
  \item (P-visibility) For all prefix $\pathLL s' \kappa^+$ of $\pathLL s$, $\mathrm{just}(\kappa^+) \in \view{\pathLL s'}$
  \item (O-visibility) For all prefix $\pathLL s' \kappa^-$ of $\pathLL s$, $\mathrm{just}(\kappa^-) \in \antiview{\pathLL s'}$
  \end{itemize}
  Given a cut-free design $\design d$, a path $\pathLL s$ is a \defined{path of} $\design d$ if for all prefix $\pathLL s'$ of $\pathLL s$, $\view{\pathLL s'}$ is a view of $\design d$.
\end{definition}
Remark that the dual of a path is a path.

Paths are aimed at describing an interaction between designs. If $\design d$ and $\design e$ are cut-free atomic designs such that $\design d \perp \design e$, there exists a unique path $\pathLL s$ of $\design d$ such that $\dual{\pathLL s}$ is a path of $\design e$. We write this path $\interseq{\design d}{\design e}$, and the good intuition is that it corresponds to the sequence of actions followed by the interaction between $\design d$ and $\design e$ on the side of $\design d$. An alternative way defining orthogonality is then given by the following proposition.
\begin{proposition} \label{perp-path0}
  $\design d \perp \design e$ if and only if there exists a path $\pathLL s$ of $\design d$ such that $\dual{\pathLL s}$ is a path of $\design e$.
\end{proposition}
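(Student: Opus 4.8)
The plan is to show that the reduction sequence computing $\normalisation{\design d[\design e/x_0]}$ corresponds, step by step, to the construction of an alternated justified sequence, and that this sequence together with its dual forms a pair of paths exactly when the interaction converges to $\daimon$. Since $\design d \perp \design e$ is only meaningful when one design is positive and the other negative, I assume without loss of generality that $\design d$ is positive and $\design e$ negative, so that $\design d[\design e/x_0]$ is closed and $\design d \perp \design e$ amounts to $\normalisation{\design d[\design e/x_0]} = \daimon$; the opposite polarity assignment is symmetric.

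For the direction $(\Rightarrow)$, assume $\design d \perp \design e$. The closed positive design $\design d[\design e/x_0]$ reduces to $\daimon$, so finitely many cuts are fired. Each reduction step $\sum_a a(\vect{x^a}).\design p_a \mid \overline b\langle\vect{\design n}\rangle \leadsto \design p_b[\vect{\design n}/\vect{x^b}]$ matches one positive action $\overline b\langle\cdots\rangle$ against the negative action $b(\cdots)$ selected in the head sum; I record these fired actions, reconstructing located actions by tracking which variable occurrence is consumed at each step (linearity guaranteeing that each variable serves as address at most once). Read on $\design d$'s side, this produces an alternating sequence $\pathLL s$ recording $\design d$'s own positive actions and the negative actions it answers, terminated by $\daimon$ if $\design d$ plays it; its dual $\dual{\pathLL s}$ is the symmetric trace on $\design e$'s side. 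By induction on the number of reduction steps I check that $\pathLL s$ is an aj-sequence — alternation and the daimon-last condition are immediate from the shape of the reduction rule, and linearity follows from the linearity of designs — and that P-visibility and O-visibility hold, since the justifier of each fired action is exactly the binder of the consumed variable, which lies in the current view, respectively anti-view. Each prefix view of $\pathLL s$ traces a branch of $\design d$, so $\pathLL s$ is a path of $\design d$, and symmetrically $\dual{\pathLL s}$ is a path of $\design e$; this $\pathLL s$ is precisely the path $\interseq{\design d}{\design e}$.

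For $(\Leftarrow)$, suppose $\pathLL s$ is a path of $\design d$ and $\dual{\pathLL s}$ a path of $\design e$, and note that exactly one of them ends in $\daimon$, marking the side that signals convergence. I prove $\normalisation{\design d[\design e/x_0]} = \daimon$ by induction on the number of proper actions of $\pathLL s$. If $\pathLL s = \daimon$ then $\design d = \daimon$ and convergence is immediate. Otherwise $\pathLL s = \kappa^+ \pathLL s'$ begins with the positive action at the root of $\design d$, whose dual is the matching negative action at the root of $\design e$; these form a cut that fires, and after this step the residual positive design and the corresponding residual negative subdesigns admit the tails of $\pathLL s$ and $\dual{\pathLL s}$ as a pair of dual paths — here P- and O-visibility guarantee that the justifiers needed to continue remain available after substitution. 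The induction hypothesis then yields convergence of the residual interaction, hence of the original.

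The technical heart, and the step I expect to be the main obstacle, is the bookkeeping of locations together with the verification of the visibility conditions. Because Terui's term syntax carries no explicit addresses, located actions must be reconstructed along the reduction, and one must check that the justifier of each fired action is exactly the binder of the variable consumed at that step; P-visibility and O-visibility are precisely the conditions ensuring that this binder is present in the current view, respectively anti-view, so proving that a converging interaction satisfies them, and conversely that they suffice for the reduction to proceed without diverging into $\Omega$, is where the real work lies. Linearity is used throughout to ensure each address is consumed at most once, so that the recorded trace is genuinely an aj-sequence rather than merely an alternating sequence.
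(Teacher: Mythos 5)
Your overall strategy --- read the interaction trace off the reduction sequence for $(\Rightarrow)$, and show by induction that the interaction follows $\pathLL s$ down to $\daimon$ for $(\Leftarrow)$ --- is the same as the paper's. The genuine gap is in how the induction is set up. After one reduction step $\sum_{a} a(\vect{x^a}).\design p_a \;|\; \overline b\langle \design m_1,\dots,\design m_k\rangle \leadsto \design p_b[\design m_1/x^b_1,\dots,\design m_k/x^b_k]$, the residual configuration is no longer a cut between one design on $\design d$'s side and one design on $\design e$'s side: it is the single positive subdesign $\design p_b$ (coming from $\design e$) cut against the \emph{family} of negative designs $\design m_1,\dots,\design m_k$ (coming from $\design d$), each plugged at its own variable, and at later steps both sides accumulate several designs. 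Your induction hypothesis is stated for a pair $(\design d,\design e)$ of designs carrying a pair of dual paths, so it cannot be applied to this residual. This is precisely why the paper lifts the entire statement to \emph{multi-designs} (Definition~\ref{multi-des}) and proves the generalisation Proposition~\ref{perp-path}: the class of two-sided cuts between compatible multi-designs is closed under one reduction step (Lemma~\ref{lem-norm}), and the notion of ``path of a multi-design'' is what makes the inductive invariant expressible. Your phrase ``the corresponding residual negative subdesigns'' shows you see the plurality, but without generalising the statement the induction does not close; setting up that generalisation is the bulk of Section~\ref{multi}, not mere bookkeeping. The same issue silently affects your $(\Rightarrow)$ direction, where the claim that prefix views of the trace are views of $\design d$ must likewise be proved for residual multi-design configurations (Proposition~\ref{pref-inter-path}).

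A second, smaller point for $(\Leftarrow)$: you must also know that the deterministic interaction actually follows $\pathLL s$ at \emph{every} step, not just the first. For the first step this is immediate, since the first action of a path of a positive design is its root action; for later steps, in the multi-design setting, the paper uses the fact that two paths of the same multi-design cannot differ on a positive action (Proposition~\ref{prefix-norm}) together with a reformulation of the hypothesis as an extension property (Proposition~\ref{path-perp}), which also handles finiteness of the interaction. Once the statement is generalised to multi-designs, your argument goes through essentially as the paper's does.
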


At the level a behaviour $\beh B$, the set of visitable paths describes all the possible interactions between a design of $\beh B$ and a design of $\beh B^\perp$.

\begin{definition}
A path $\pathLL s$ is \defined{visitable} in a behaviour $\beh B$ if there exist cut-free designs $\design d \in \beh B$ and $\design e \in \beh B^\perp$ such that $\pathLL s = \interseq{\design d}{\design e}$. The set of visitable paths of $\beh B$ is written $\visit B$.
\end{definition}
Note that for every behaviour $\beh B$, $\dual{\visit B} = \visit{B^\perp}$.

\subsection{Regularity, Purity and Connectives} \label{sub-regpur}

The meaning of regularity and purity has been discussed in the introduction. After giving the formal definitions, we prove that regularity is stable under all the connectives constructions. We also show that purity may fail with $\multimap$, and only a weaker form called \imp{quasi-purity} is always preserved.

\begin{definition}\label{reg}
  $\beh B$ is \defined{regular} if the following conditions are satisfied:
  \begin{itemize}
  \item for all $\design d \in |B|$ and all path $\pathLL s$ of $\design d$, $\pathLL s \in \visit{B}$,
  \item for all $\design d \in |B^\perp|$ and all path $\pathLL s$ of $\design d$, $\pathLL s \in \visit{B^\perp}$,
  \item The sets $\visit{B}$ and $\visit{B^\perp}$ are stable under shuffle.
  \end{itemize}
  where the operation of \defined{shuffle} ($\shuffle$) on paths corresponds to an interleaving of actions respecting alternation of polarities, and is defined below. 
\end{definition}
Let $\proj{\pathLL s}{\pathLL s'}$ refer to the subsequence of $\pathLL s$ containing only the actions that occur in $\pathLL s'$. Let $\pathLL s$ and $\pathLL t$ be paths of same polarity, let $S$ and $T$ be sets of paths of same polarity. We define:
\begin{itemize}
\item $\pathLL s \shuffle \pathLL t = \setst{\pathLL u \mbox{ path formed with actions from } \pathLL s \mbox{ and } \pathLL t}{\proj{\pathLL u}{\pathLL s} = \pathLL s \mbox{ and } \proj{\pathLL u}{\pathLL t} = \pathLL t}$ if $\pathLL s, \pathLL t$ negative,
\item $\pathLL s \shuffle \pathLL t = \setst{\kappa^+ \pathLL u \mbox{ path}}{\pathLL u \in \pathLL s' \shuffle \pathLL t'}$ if $\pathLL s = \kappa^+ \pathLL s'$ and $\pathLL t = \kappa^+ \pathLL t'$ positive with same first action,
\item $S \shuffle T = \setst{\pathLL u \mbox{ path}}{\exists \pathLL s \in S, \exists \pathLL t \in T \mbox{ such that } \pathLL s \shuffle \pathLL t \mbox{ is defined and } \pathLL u \in \pathLL s \shuffle \pathLL t}$,
\end{itemize}

In fact, a behaviour $\beh B$ is regular if every path formed with actions of the incarnation of $\beh B$, even mixed up, is a visitable path of $\beh B$, and similarly for $\beh B^\perp$. Remark that regularity is a property of both a behaviour and its orthogonal since the definition is symmetrical: $\beh B$ is regular if and only if $\beh B^\perp$ is regular.

\begin{definition}
A behaviour $\beh B$ is \defined{pure} if every $\daimon$-ended path $\pathLL s \daimon \in \visit B$ is \defined{extensible}, i.e., there exists a proper positive action $\kappa^+$ such that $\pathLL s \kappa^+ \in \visit{B}$.
\end{definition}
Purity ensures that when an interaction encounters $\daimon$, this does not correspond to a real error but rather to a partial computation, as it is possible to continue this interaction. Note that daimons are necessarily present in all behaviours since the converse property is always true: if $\pathLL s \kappa^+ \in \visit{B}$ then $\pathLL s \daimon \in \visit{B}$.

\begin{proposition} \label{prop_reg_stable}
Regularity is stable under $\shpos$, $\shneg$, $\oplus$, $\otimes$ and $\multimap$.
\end{proposition}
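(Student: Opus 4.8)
The plan is to prove stability for each connective by combining the explicit descriptions of the compound behaviour and of its incarnation given by Theorem~\ref{thm_intcomp_all} with a description of the visitable paths of the construction, and then checking the three clauses of Definition~\ref{reg} directly. Two cases can be discharged by duality. Since regularity is self-dual and $\beh N\multimap\beh P = (\beh N\otimes\beh P^\perp)^\perp$, once $\otimes$ is treated the map case follows by applying it to the regular behaviours $\beh N$ and $\beh P^\perp$ and using that the orthogonal of a regular behaviour is regular; similarly the negative shift $\shneg$ can be handled by the same method, being the dual of the positive shift $\shpos$. So it suffices to treat $\shpos$, $\oplus$ and $\otimes$, keeping in mind that for each of them both the behaviour and its orthogonal must be controlled. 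Throughout I would first record the auxiliary combinatorial facts that $\shuffle$ is commutative and associative, that it commutes with prepending a common action, and that it is compatible with duality via $\dual{\visit{B}}=\visit{B^\perp}$.

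For the shifts and for $\oplus$ the visitable paths of the construction are obtained by prepending a single fixed proper action to the visitable paths of a component (together with the bare $\daimon$): e.g.\ a visitable path of $\shpos\beh N$ is $\symshpos$ followed by an element of $\visit{N}$, and a visitable path of $\beh M\oplus\beh N$ is $\symplus_1$ (resp.\ $\symplus_2$) followed by an element of $\visit{M}$ (resp.\ $\visit{N}$). Each such identity I would extract from Proposition~\ref{perp-path0} by inspecting the first action of an interaction $\interseq{\design d}{\design e}$ with $\design d$ incarnated. The first two regularity clauses are then immediate: a path of an incarnated design of the construction is the prepended action followed by a path of an incarnated design of a component, which is visitable by regularity of that component, hence the whole path is visitable. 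Shuffle-stability also reduces to that of the components: two positive visitable paths of the construction share their first action, so the positive clause of $\shuffle$ strips it and leaves a shuffle of component paths; on the orthogonal (\emph{with}) side of $\oplus$ the two summands share the root address $x_0$, so linearity forbids interleaving distinct choices and again only same-component shuffles survive.

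The genuine work is the tensor. Here the description of visitable paths involves the shuffle: a visitable path of $\beh M\otimes\beh N$ is $\symtensor$ followed by an element of $\visit{M}\shuffle\visit{N}$ (plus the bare $\daimon$), the interleaving being possible because after the initial $\symtensor$ the two negative subdesigns $\design m$ and $\design n$ are exposed at distinct addresses. The first clause follows since a path of an incarnated $\symtensor\langle\design m,\design n\rangle$ is $\symtensor$ prepended to an interleaving of a path of $\design m\in|\beh M|$ with a path of $\design n\in|\beh N|$, which lies in $\visit{M}\shuffle\visit{N}$ by regularity of the components; shuffle-stability on this side follows from commutativity and associativity of $\shuffle$ together with stability of $\visit{M}$ and $\visit{N}$, as $(\visit{M}\shuffle\visit{N})\shuffle(\visit{M}\shuffle\visit{N}) = (\visit{M}\shuffle\visit{M})\shuffle(\visit{N}\shuffle\visit{N})\subseteq\visit{M}\shuffle\visit{N}$.

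The main obstacle is the orthogonal (\emph{par}) side of the tensor. Unlike the \emph{with}, a single incarnated counter-design of $(\beh M\otimes\beh N)^\perp$ has a single root and may genuinely interleave its interactions with both components, which is precisely why the shuffle in Definition~\ref{reg} is really needed. I must show that $\visit{(M\otimes N)^\perp}$ is exactly the set of such interleavings of visitable paths of $\beh M^\perp$ and $\beh N^\perp$, that every element of this set is \emph{realised} by an actual orthogonal design --- this being the point where the hypothesis that the components are regular, not merely behaviours, is indispensable --- and that the resulting set is shuffle-stable. Controlling this requires careful use of the combinatorial lemmas on $\shuffle$ and of the duality $\dual{\visit{M\otimes N}}=\visit{(M\otimes N)^\perp}$. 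Once the tensor is fully established, $\multimap$ needs no further argument, by the duality reduction above.
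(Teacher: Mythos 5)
Your overall architecture matches the paper's: reduce $\shneg$ and $\multimap$ to $\shpos$ and $\otimes$ by self-duality of regularity, describe the visitable paths of each construction, and isolate the tensor (and its orthogonal) as the crux. But at exactly the points you flag as ``the genuine work'' you name the difficulty without supplying the idea that resolves it, and one of your ``easy'' steps is not easy. First, the identity $\visit{M\otimes N}=\kappa_\symtensor(\visit{M}\shuffle\visit{N})\cup\{\daimon\}$ is not something you can ``extract from Proposition~\ref{perp-path0} by inspecting the first action'': the inclusion $\supseteq$ requires, for an \emph{arbitrary} interleaving $\pathLL s$ of visitable paths of the components, the construction of a counter-design in $(\beh M\otimes\beh N)^\perp$ realising it. The paper does this (Propositions~\ref{visit-tensor} and \ref{visit-tensor-reg}) by showing $\completed{\dual{\pathLL s}}\in(\beh M\otimes\beh N)^\perp$ via a minimal-counterexample argument, and the passage from the general characterisation to the clean shuffle formula is precisely where regularity of $\beh M$ and $\beh N$ enters. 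Note that this realisation problem already blocks your ``first clause'' for the tensor, not only the par side. Second, even the decomposition you take for granted --- that a path of $\symtensor\langle\design m,\design n\rangle$ projects to \emph{paths} of $\design m$ and $\design n$ --- is not free: O-visibility of the projections can fail a priori, and the paper needs a de-shuffling lemma (Lemma~\ref{lem:dual_inverse_shuffle}) to rule this out.

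The largest gap is anti-shuffle stability of $\visit{M\otimes N}$ (equivalently, shuffle stability of $\visit{(M\otimes N)^\perp}$), for which you give no argument at all. The paper's device, absent from your proposal, is the reformulation of regularity in terms of \emph{trivial views} (Proposition~\ref{reg2}, resting on the decomposition of every path as a shuffle of anti-shuffles of trivial views, Lemma~\ref{triv-view-path}): regularity becomes ``positive-ended trivial views are visitable, plus closure under $\shuffle$ and $\antishuffle$''. This reduces all the realisation questions to extending a visitable path by a single action whose trivial view is already known to be visitable, and it is what drives the induction in Proposition~\ref{reg-tensor} (together with a P-visibility argument showing that a positive action of one component cannot immediately follow a negative action of the other). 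Without this reformulation, or an equivalent substitute, your plan for the tensor does not go through; the shift and plus cases, and the duality reductions, are fine as stated.
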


\begin{proposition} \label{prop_pure_stable}
Purity is stable under $\shpos$, $\shneg$, $\oplus$ and $\otimes$.
\end{proposition}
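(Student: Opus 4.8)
The plan is to reduce, for each of the four connectives, the purity of the compound behaviour to the purity of its component(s), once the shape of the $\daimon$-ended visitable paths has been made explicit. The key input is internal completeness (Theorem~\ref{thm_intcomp_all}), which gives the incarnated designs of each compound behaviour; reading these designs as trees of located actions and using the orthogonality/path correspondence (Proposition~\ref{perp-path0}) then yields the form of their visitable paths, which are exactly the descriptions underlying the regularity argument (Proposition~\ref{prop_reg_stable}). Concretely, every visitable path of $\shpos\beh N$ (resp.\ $\shneg\beh P$, resp.\ $\beh M\oplus\beh N$) is either $\daimon$ or is obtained by prefixing the structural action $\symshpos$ (resp.\ $\symshneg$, resp.\ $\symplus_1$ or $\symplus_2$) to a visitable path of the unique component; and every visitable path of $\beh M\otimes\beh N$ is either $\daimon$ or of the form $\symtensor\,\pathLL u$ with $\pathLL u\in\pathLL s\shuffle\pathLL t$ for some $\pathLL s\in\visit{M}$ and $\pathLL t\in\visit{N}$.

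With these shapes in hand there are two kinds of $\daimon$-ended visitable path to extend. The first is the bare path $\daimon$, which arises only when the compound behaviour is positive, since a negative design cannot play $\daimon$ as its first action; this case is therefore absent for $\shneg\beh P$. When it does occur I extend $\daimon$ by the leading structural action of the connective: the one-action paths $\symshpos$, $\symplus_i$ and $\symtensor$ are all visitable, as witnessed by an explicit pair of designs — for instance $\symshpos$ is realised by $\posdes{x_0}{\symshneg}{\design n}$ against $\symshneg(z).\daimon$, and the $\oplus$ and $\otimes$ cases are analogous. The second kind is a path $\pathLL s\daimon$ whose final $\daimon$ occurs strictly inside a component: after the leading structural action, this $\daimon$ is played by one of the component designs. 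Stripping the leading action (and, for the tensor, projecting onto the active component) leaves a $\daimon$-ended visitable path $\pathLL s_0\daimon$ of that component behaviour; by purity of the component there is a proper positive action $\kappa^+$ with $\pathLL s_0\kappa^+$ visitable there, and reinserting $\kappa^+$ in place of the final $\daimon$ produces the required proper extension.

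The shifts and the plus are immediate from this scheme, since they involve a single component and a single leading action, so reinserting $\kappa^+$ after $\symshpos$, $\symshneg$ or $\symplus_i$ again gives a visitable path by the corresponding characterisation. The delicate case, and the main obstacle, is the tensor. Here the terminal $\daimon$ of $\symtensor\,\pathLL u\daimon$ belongs to one of the two components, say $\beh M$; because $\daimon$ must be last, the other projection $\pathLL t\in\visit{N}$ contains no $\daimon$, while the $\beh M$-projection is $\daimon$-ended, $\pathLL s=\pathLL s_0\daimon\in\visit{M}$. Applying purity of $\beh M$ to $\pathLL s_0\daimon$ yields $\kappa^+$ with $\pathLL s_0\kappa^+\in\visit{M}$, and since $\kappa^+$ is positive just like the $\daimon$ it replaces, the interleaving witnessing $\pathLL u\in\pathLL s\shuffle\pathLL t$ extends to an interleaving of $\pathLL s_0\kappa^+$ and $\pathLL t$; hence $\symtensor\,\pathLL u\kappa^+$ lies in $\symtensor\,\bigl((\pathLL s_0\kappa^+)\shuffle\pathLL t\bigr)$ and is a visitable path of $\beh M\otimes\beh N$. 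The real work is to check that this local surgery preserves alternation, P-visibility, O-visibility and linearity of the global path — i.e.\ that replacing the last action of a shuffle by another action of the same polarity, coming from a genuine visitable extension of one argument, again lands in the shuffle — and that it is realised by an actual pair of orthogonal designs (replace the $\beh M$-component design by the one provided by its purity and adapt the test accordingly). This bookkeeping, together with justifying the shuffle characterisation of the tensor's visitable paths from Theorem~\ref{thm_intcomp_all}, is where essentially all the effort lies; and it is precisely this step that would fail for $\multimap$, consistently with purity not being preserved by the linear map.
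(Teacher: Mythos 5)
Your treatment of the shifts and of $\oplus$ is correct and matches the paper: by Propositions~\ref{visit-sh} and \ref{visit-pl} the visitable paths of $\shpos\beh N$, $\shneg\beh P$ and $\beh M\oplus\beh N$ are exactly a structural action prefixed to a visitable path of the component (plus $\daimon$ or $\epsilon$), and extensibility transfers immediately; the bare-$\daimon$ case is handled the same way in the paper (extend by $\kappa_\symshpos$, $\kappa_{\symplus_i}$, $\kappa_\symtensor$).

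For the tensor, however, there is a genuine gap. You rely on the characterisation $\visit{M\otimes N}=\kappa_\symtensor(\visit{M}^x\shuffle\visit{N}^y)\cup\{\daimon\}$, but this equality is only proved under the assumption that $\beh M$ and $\beh N$ are \emph{regular} (Proposition~\ref{visit-tensor-reg}), whereas Proposition~\ref{prop_pure_stable} carries no regularity hypothesis. In general only the inclusion $\subseteq$ holds; the converse requires the second condition of Proposition~\ref{visit-tensor}: for every $\pathLL t\in\visit{M}^x\shuffle\visit{N}^y$ and every negative $\kappa^-$ such that $\overline{\kappa_\symtensor\pathLL t\kappa^-}$ is a path of $\completed{\dual{\pathLL s}}$, one must have $\pathLL t\kappa^-\daimon\in\visit{M}^x\shuffle\visit{N}^y$. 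So exhibiting $\pathLL s'\kappa^+$ as an element of $\kappa_\symtensor(\visit{M}^x\shuffle\visit{N}^y)$ does not yet make it visitable, and your suggestion to ``replace the $\beh M$-component design by the one provided by its purity and adapt the test accordingly'' does not go through directly: a design of $(\beh M\otimes\beh N)^\perp$ is a single test interleaving its exploration of both components, not a pair of independent tests for $\beh M$ and $\beh N$ — this is exactly the phenomenon that makes the second condition necessary and that breaks purity for $\multimap$. The paper closes this gap by observing that $\completed{\dual{\pathLL s'\kappa^+}}=\completed{\overline{\pathLL s'\kappa^+}\daimon}=\completed{\overline{\pathLL s'}}=\completed{\dual{\pathLL s}}$, so condition (2) for the extended path is inherited verbatim from the (already visitable) path $\pathLL s$; it also verifies explicitly that $\pathLL s'\kappa^+$ is a path by checking that $\mathrm{just}(\kappa^+)$ lies in $\view{\pathLL s_1'}=\view{\pathLL s'}$, a point you flag but defer. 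Without the transfer of condition (2) (or a restriction to regular behaviours), your tensor argument is incomplete.
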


Unfortunately, when $\beh N$ and $\beh P$ are pure, $\beh N \multimap \beh P$ is not necessarily pure, even under regularity assumption. However, a weaker form of purity holds for $\beh N \multimap \beh P$.

\begin{definition}
  A behaviour $\beh B$ is \defined{quasi-pure} if all the $\daimon$-ended \imp{well-bracketed} paths in $V_{\beh B}$ are extensible.
\end{definition}
We recall that a path $\pathLL s$ is \defined{well-bracketed} if, for every justified action $\kappa$ in $\pathLL s$, when we write $\pathLL s = \pathLL s_0 \kappa' \pathLL s_1 \kappa \pathLL s_2$ where $\kappa'$ justifies $\kappa$, all the actions in $\pathLL s_1$ are hereditarily justified by $\kappa'$.

\begin{proposition} \label{prop_arrow_princ}
  If $\beh N$ and $\beh P$ are quasi-pure and regular then $\beh N \multimap \beh P$ is quasi-pure.
\end{proposition}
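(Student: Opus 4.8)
The plan is to reduce quasi-purity of $\beh N \multimap \beh P$ to quasi-purity of its components by analysing the shape of its visitable paths. Since $\beh N \multimap \beh P = (\beh N \otimes \beh P^\perp)^\perp$ and $\dual{\visit B} = \visit{B^\perp}$ for every behaviour, we have $\visit{N \multimap P} = \dual{\visit{N \otimes P^\perp}}$, so the description of the visitable paths of a tensor that underlies the proof of Proposition~\ref{prop_reg_stable} applies here. Concretely, every visitable path of $\beh N \multimap \beh P$ starts with the negative action dual to $\symtensor$ (binding two variables) and then interleaves two independent sub-interactions: an \emph{argument thread}, made of the actions hereditarily justified through the first bound variable, whose local trace is the dual of a visitable path of $\beh N$ (so an element of $\visit{N^\perp}$), and a \emph{result thread}, hereditarily justified through the second variable, whose local trace is a visitable path of $\beh P$. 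Regularity of $\beh N$ and $\beh P$ enters twice: it lets us recognise visitable paths as the paths of incarnated designs, and it supplies stability under shuffle, which is what allows the two threads to be recombined.

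First I would fix a $\daimon$-ended well-bracketed path $\pathLL s\daimon \in \visit{N \multimap P}$ and write $\pathLL s$ for its $\daimon$-free prefix, which necessarily ends on a negative action $\kappa^-$. This $\kappa^-$ lies in exactly one of the two threads, and the terminal $\daimon$ prolongs that same thread. The crucial step is to project $\pathLL s\daimon$ onto the thread carrying $\kappa^-$ and to check, using the well-bracketing of the whole path, that the projection is again a $\daimon$-ended well-bracketed visitable path of the corresponding component. Well-bracketing is indispensable precisely here: it forces the sub-computation that reaches $\daimon$ to be properly nested inside a single thread, so the $\daimon$ cannot sit at an unresolved switch between the argument and the result --- and it is exactly such dangling switches that make $\beh N \multimap \beh P$ fail plain purity even when $\beh N$ and $\beh P$ are pure.

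In the result-thread case the projection is directly a $\daimon$-ended well-bracketed path of $\beh P$, so quasi-purity of $\beh P$ provides a proper positive action extending it; stability under shuffle (from regularity) then lets me re-interleave this lengthened result thread with the untouched argument thread to obtain a proper positive extension $\pathLL s\kappa^+ \in \visit{N \multimap P}$. The argument-thread case is the delicate one and I expect it to be the main obstacle. There the local trace lives in $\visit{N^\perp}$ rather than $\visit{N}$, and the dualisation that carries it into $\beh N$ deletes the trailing $\daimon$, so quasi-purity of $\beh N$ cannot be invoked verbatim. The plan is to use well-bracketing once more to argue that, at the point where the interaction reaches $\daimon$, the associated argument-side path of $\beh N$ is non-maximal, hence admits a legal proper continuation that can be lifted back through the shuffle; turning this into a clean application of the hypotheses on $\beh N$ --- regularity to pass between visitable paths and incarnated designs, together with quasi-purity applied to the suitably completed sub-path --- requires careful bookkeeping of polarities and justifiers, and this is where the real difficulty lies.

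In either case the recombined sequence is a proper positive extension of $\pathLL s$ that is still visitable in $\beh N \multimap \beh P$, which is exactly extensibility of $\pathLL s\daimon$. Since $\pathLL s\daimon$ was an arbitrary $\daimon$-ended well-bracketed visitable path, this shows that $\beh N \multimap \beh P$ is quasi-pure.
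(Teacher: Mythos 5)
Your overall framing --- decompose a visitable path of $\beh N \multimap \beh P$ into an argument thread and a result thread via the regular form of the visitable paths of $\multimap$, extend one thread, and re-interleave --- is the right starting point, but the proposal has two genuine gaps. First, the case analysis on ``which thread carries the terminal $\daimon$'' is a misstep. Writing $\dual{\pathLL s\daimon} = \kappa_\symtensor\pathLL s'$ with $\pathLL s' \in \pathLL t \shuffle \pathLL u$, where $\pathLL t \in \visit{N}^x$ and $\pathLL u \in \dual{\visit{P}^y}$, the component $\pathLL u$ is always $\daimon$-free, so $\dual{\pathLL u} = \overline{\pathLL u}\daimon$ is always a $\daimon$-ended visitable path of $\beh P^y$, no matter which thread the last proper action of $\pathLL s$ belongs to. The paper therefore always works on the $\beh P$ side: either $\dual{\pathLL u}$ is not well-bracketed (whence neither is $\pathLL s\daimon$), or quasi-purity of $\beh P$ yields a proper positive action $\kappa_{\pathLL u}^+$ with $\overline{\pathLL u}\kappa_{\pathLL u}^+ \in \visit{P}^y$. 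The ``argument-thread case'' that you single out as the main difficulty, and leave unresolved, simply does not arise; correspondingly, quasi-purity of $\beh N$ is never needed, only its regularity.

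Second, and more seriously, the step you treat as routine --- ``stability under shuffle then lets me re-interleave this lengthened result thread \dots to obtain a proper positive extension $\pathLL s\kappa^+ \in \visit{N \multimap P}$'' --- is exactly where the content of the proposition lies. Shuffle-stability guarantees that \emph{some} interleavings of the lengthened threads that happen to be paths are visitable; it does not guarantee that the specific sequence $\pathLL s\kappa_{\pathLL u}^+$, which appends $\kappa_{\pathLL u}^+$ after the whole of $\pathLL s$, is even a path: $\kappa_{\pathLL u}^+$ may fail P-visibility because its justifier does not occur in $\view{\pathLL s}$. The heart of the paper's proof is precisely the analysis of this situation: it shows, by tracking how the view of $\pathLL s$ can miss the justifier of $\kappa_{\pathLL u}^+$ only by jumping across actions of the other thread, that any such visibility failure forces a crossing of justification pointers between the two threads, so that $\pathLL s\daimon$ is not well-bracketed. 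That is the only place the well-bracketing hypothesis is actually spent. Your proposal instead spends well-bracketing on assigning the daimon to a thread and on the (vacuous) argument-thread case, and never confronts the visibility obstruction, so the extensibility claim in your ``result-thread case'' is unsupported as written.
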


\section{Inductive Data Types} \label{sec-induct}

Some important contributions are presented in this section. We interpret inductive data types as positive behaviours, and we prove an internal completeness result allowing us to make explicit the structure of fixed points. Regularity and purity of data follows.

Abusively, we denote the positive behaviour $\{\daimon\}$ by $\daimon$ all along this section.

\subsection{Inductive Data Types as Kleene Fixed Points} \label{data_types_kleene}

We define the \imp{data patterns} via a type language and interpret them as behaviours, in particular $\mu$ is interpreted as a least fixed point. \imp{Data behaviours} are the interpretation of \imp{steady} data patterns.

Suppose given a countably infinite set $\mathcal V$ of second-order variables: $X, Y, \dots \in \mathcal V$. Let $\mathcal S' = \mathcal S \setminus \{\symshneg, \symavec_1, \symavec_2, \sympar\}$ and define the set of \defined{constants} $\mathrm{Const} = \setst{\beh C_a}{a \in \mathcal S'}$ which contains a behaviour $\beh C_a = \{\posdes{x_0}{a}{\vect{\Omega^-}}\}^{\perp\perp}$ (where $\Omega^- := \negdes{a}{\vect{x^a}}{\Omega}$) for each $a \in \mathcal S'$, i.e., such that $a$ is not the name of a connective. Remark that $V_{\beh C_a} = \{\daimon \hspace{.1cm} , \hspace{.1cm} \posdes{x_0}{a}{\vect x}\}$, thus $\beh C_a$ is regular and pure.

\begin{definition}
The set $\mathcal P$ of \defined{data patterns} is generated by the inductive grammar:
\[ A, B ::=  X \in \mathcal V \gbar a \in \mathcal S' \gbar A \oplus^+ B \gbar A \otimes^+ B \gbar \mu X. A\]
\end{definition}
The set of free variables of a data pattern $A \in \mathcal P$ is denoted by $\mathrm{FV}(A)$.

\begin{example} \label{ex-patterns}
Let $b, n, l, t \in \mathcal S'$ and $X \in \mathcal V$. The data types given as example in the introduction can be written in the language of data patterns as follows:
\begin{align*}
  \mathrm{\mathbb Bool} & = b \oplus^+ b \hspace{1.8cm} \mathrm{\mathbb Nat} = \mu X.( n \oplus^+ X) \hspace{1.8cm} \mathrm{\mathbb List}_{A} = \mu X.(l \oplus^+ (A \otimes^+ X)) \\
  & \mathrm{\mathbb Tree}_A = \mu X.(t \oplus^+ (A \otimes^+ \mathrm{\mathbb List}_X)) = \mu X.(t \oplus^+ (A \otimes^+ \mathbb \mu Y.(l \oplus^+ (X \otimes^+ Y))))
\end{align*}
\end{example}

Let $\mathcal B^+$ be the set of positive behaviours. Given a data pattern $A \in \mathcal P$ and an environment $\sigma$, i.e., a function that maps free variables to positive behaviours, the interpretation of $A$ in the environment $\sigma$, written $\interpret{A}^{\sigma}$, is the positive behaviour defined by:
\begin{align*}
  & \interpret{X}^{\sigma} = \sigma(X) & \interpret{A \oplus^+ B}^{\sigma} & = (\shneg \interpret{A}^{\sigma}) \oplus (\shneg \interpret{B}^{\sigma}) \\
  & \interpret{a}^{\sigma} = \beh C_a & \interpret{A \otimes^+ B}^{\sigma} & = (\shneg \interpret{A}^{\sigma}) \otimes (\shneg \interpret{B}^{\sigma}) \\
  & \interpret{\mu X. A}^{\sigma} = \mathrm{lfp}(\phi^A_{\sigma})
\end{align*}
where $\mathrm{lfp}$ stands for the least fixed point, and the function $\phi^A_{\sigma}: \mathcal B^+ \to \mathcal B^+, \beh P \mapsto \interpret{A}^{\sigma, X\mapsto \beh P}$ is well defined and has a least fixed point by Knaster-Tarski fixed point theorem, as shown by Baelde, Doumane and Saurin \cite{BDS}. Abusively we may write $\oplus^+$ and $\otimes^+$, instead of $(\shneg \cdot) \oplus (\shneg \cdot)$ and $(\shneg \cdot) \otimes (\shneg \cdot)$ respectively, for behaviours. We call an environment $\sigma$ regular (resp. pure) if its image contains only regular (resp. pure) behaviours. The notation $\sigma, X\mapsto \beh P$ stands for the environment $\sigma$ where the image of $X$ has been changed to $\beh P$.

In order to understand the structure of fixed point behaviours that interpret the data patterns of the form $\mu X.A$, we need a constructive approach, thus Kleene fixed point theorem is best suited than Knaster-Tarski. We now prove that we can apply this theorem.

Recall the following definitions and theorem. A partial order is a \defined{complete partial order} (CPO) if each directed subset has a supremum, and there exists a smallest element, written $\bot$. A function $f : E \to F$ between two CPOs is \defined{Scott-continuous} (or simply continuous) if for every directed subset $D \subseteq E$ we have $\bigvee_{x \in D} f(x) = f(\bigvee_{x \in D}x)$.

\begin{theorem}[Kleene fixed point theorem]
  Let $L$ be a CPO and let $f:L \to L$ be Scott-continuous. The function $f$ has a least fixed point, defined by
  \[\mathrm{lfp}(f) = \bigvee_{n \in \mathbb N}f^n(\bot)\]
\end{theorem}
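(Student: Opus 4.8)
The plan is to follow the classical argument for Kleene's theorem: build the \emph{Kleene chain} $\bot \leq f(\bot) \leq f^2(\bot) \leq \cdots$, take its supremum in the CPO $L$, and check that this supremum is exactly the least fixed point of $f$. Before anything else, I would establish that $f$ is monotone, i.e. $x \leq y$ implies $f(x) \leq f(y)$. This is the one point requiring a little care, because the stated continuity equation $\bigvee_{x \in D} f(x) = f(\bigvee_{x \in D} x)$ tacitly presupposes that the image of a directed set admits a supremum, which holds once the image is known to be directed. To obtain monotonicity directly from the definition, given $x \leq y$ I would observe that $\{x, y\}$ is directed with supremum $y$, and apply continuity to this two-element directed set: this yields $f(y) = \bigvee\{f(x), f(y)\}$, whence $f(x) \leq f(y)$.

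Next I would show by induction on $n$ that $f^n(\bot) \leq f^{n+1}(\bot)$. The base case $\bot \leq f(\bot)$ holds because $\bot$ is the least element, and the inductive step follows by applying monotonicity to the inequality $f^n(\bot) \leq f^{n+1}(\bot)$. Consequently the set $D_0 = \{\, f^n(\bot) \mid n \in \mathbb N \,\}$ is a chain, hence directed, and since $L$ is a CPO its supremum $x := \bigvee_{n \in \mathbb N} f^n(\bot)$ exists.

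I would then verify that $x$ is a fixed point by invoking continuity on the directed set $D_0$: we have $f(x) = f(\bigvee_n f^n(\bot)) = \bigvee_n f(f^n(\bot)) = \bigvee_n f^{n+1}(\bot)$, and since adjoining the extra element $f^0(\bot) = \bot$, which is below every member of the chain, does not alter the supremum, this equals $\bigvee_n f^n(\bot) = x$. For leastness, I would take an arbitrary fixed point $y$ (so $f(y) = y$) and prove $f^n(\bot) \leq y$ for all $n$ by induction: $\bot \leq y$ because $\bot$ is least, and if $f^n(\bot) \leq y$ then $f^{n+1}(\bot) = f(f^n(\bot)) \leq f(y) = y$ by monotonicity. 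Thus $y$ is an upper bound of $D_0$, and therefore $x = \bigvee_n f^n(\bot) \leq y$, so $x = \mathrm{lfp}(f)$.

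The only genuine (and mild) obstacle is the monotonicity step, since the paper's definition of Scott-continuity does not stipulate monotonicity outright and the continuity equation is only meaningful once one knows the relevant images are directed; deriving monotonicity from the two-element directed set as above dispatches this cleanly. Everything remaining is a pair of straightforward inductions together with direct uses of the CPO axioms and the continuity equation.
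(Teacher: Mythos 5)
Your proof is correct: it is the standard textbook argument for Kleene's theorem (derive monotonicity from continuity via the two-element directed set, show the iterates form a chain, use continuity to see the supremum is a fixed point, and use monotonicity to see it is below every fixed point), and your handling of the subtlety that the paper's continuity equation presupposes the image supremum exists is appropriate. Note that the paper itself does not prove this statement --- it merely \emph{recalls} the theorem as a classical result before verifying its hypotheses (Scott-continuity of $\phi^A_\sigma$ in Proposition~\ref{prop_scott_conti}) --- so there is no in-paper proof to compare against; your argument is exactly the expected one.
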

The set $\mathcal B^+$ ordered by $\subseteq$ is a CPO, with least element $\daimon$; indeed, given a subset $\mathbb P \subseteq \mathcal B^+$, it is directed and we have $\bigvee \mathbb P = (\bigcup \mathbb P)^{\perp\perp}$. Hence next proposition proves that we can apply the theorem.

\begin{proposition} \label{prop_scott_conti}  Given a data pattern $A \in \mathcal P$, a variable $X \in \mathcal V$ and an environment $\sigma : \mathrm{FV}(A) \setminus \{X\} \to \mathcal B^+$, the function $\phi^{A}_{\sigma}$ is Scott-continuous.
\end{proposition}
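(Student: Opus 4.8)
The plan is to prove the statement by structural induction on the data pattern $A$, establishing simultaneously that $\phi^A_\sigma$ is monotone and Scott-continuous for every distinguished variable $X$ and every environment $\sigma$ on $\mathrm{FV}(A)\setminus\{X\}$. Monotonicity of all the constructions involved is immediate, and for any monotone $f$ on the CPO $\mathcal B^+$ one automatically has $\bigvee_{\beh P\in D}f(\beh P)\subseteq f(\bigvee D)$ for directed $D$ (each $\beh P\subseteq\bigvee D$ gives $f(\beh P)\subseteq f(\bigvee D)$). So it suffices to prove the reverse inclusion $f(\bigvee D)\subseteq\bigvee_{\beh P\in D}f(\beh P)$. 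Throughout I would use that $\bigvee D=(\bigcup D)^{\perp\perp}$, the closure identity $E^{\perp\perp\perp}=E^\perp$, and its consequence $(\bigcup_i E_i^{\perp\perp})^{\perp\perp}=(\bigcup_i E_i)^{\perp\perp}$. The base cases are direct: for $A=X$ the map $\phi^A_\sigma$ is the identity, and for $A=Y$ with $Y\neq X$ or $A=a\in\mathcal S'$ it is the constant map $\sigma(Y)$, resp.\ $\beh C_a$, which is continuous since a directed $D$ is nonempty.

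For the connective cases, $A=B\oplus^+C$ gives $\phi^A_\sigma=\oplus\circ(\shneg\times\shneg)\circ\langle\phi^B_\sigma,\phi^C_\sigma\rangle$, and similarly with $\otimes$ for $A=B\otimes^+C$. Since continuity is preserved by composition, by pairing along the diagonal, and by binary products of CPOs (noting that $\mathcal B^-$ is a CPO symmetrically to $\mathcal B^+$), and since $\phi^B_\sigma,\phi^C_\sigma$ are continuous by the induction hypothesis, the whole case reduces to one lemma: the connectives $\shneg:\mathcal B^+\to\mathcal B^-$ and $\oplus,\otimes:\mathcal B^-\times\mathcal B^-\to\mathcal B^+$ are continuous. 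As a monotone binary operation that is separately continuous in each argument is jointly continuous, I would treat $\shneg$ and each connective one argument at a time. Here internal completeness (Theorem~\ref{thm_intcomp_all}) is essential: it strips the bi-orthogonal closure from the definitions, so that e.g.\ $\beh M\oplus\beh N=\symplus_1\langle\beh M\rangle\cup\symplus_2\langle\beh N\rangle\cup\{\daimon\}$ and the raw constructions $\symshneg(x).(\cdot)^x$, $\symplus_i\langle\cdot\rangle$, $\symtensor\langle\cdot,\cdot\rangle$ visibly commute with unions. The only remaining point, writing $U=\bigcup D$ and $E$ for the relevant raw construction, is the identity $E(U^{\perp\perp})^{\perp\perp}=E(U)^{\perp\perp}$; its nontrivial inclusion follows by observing that a test orthogonal to $E(U)$ restricts, along the single top action of the connective, to a test $\design r$ with $U\subseteq\{\design r\}^\perp$, whence $U^{\perp\perp}\subseteq\{\design r\}^\perp$ by $E^{\perp\perp\perp}=E^\perp$, so the test is orthogonal to $E(U^{\perp\perp})$ as well.

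For the fixed-point case $A=\mu Y.B$: if $X\notin\mathrm{FV}(A)$ (in particular if $Y=X$, since then $X$ is bound) the map $\phi^A_\sigma$ is constant, hence continuous. Otherwise $Y\neq X$ and $\phi^A_\sigma(\beh P)=\mathrm{lfp}\big(\beh Q\mapsto g(\beh P,\beh Q)\big)$ with $g(\beh P,\beh Q)=\interpret{B}^{\sigma,X\mapsto\beh P,Y\mapsto\beh Q}$. Applying the induction hypothesis to $B$ twice — once with distinguished variable $X$ and environment $\sigma,Y\mapsto\beh Q$, once with distinguished variable $Y$ and environment $\sigma,X\mapsto\beh P$ — yields that $g$ is continuous in each argument separately, so being monotone it is jointly continuous. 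Continuity of $\phi^A_\sigma$ is then the standard domain-theoretic fact that a parametrized least fixed point of a jointly continuous map is continuous in its parameter: expanding $\mathrm{lfp}(g(\beh P,-))=\bigvee_n g(\beh P,-)^n(\daimon)$ by Kleene's formula and exchanging the two directed suprema gives $\phi^A_\sigma(\bigvee D)=\bigvee_{\beh P\in D}\phi^A_\sigma(\beh P)$.

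The hard part is the connective-continuity lemma, i.e.\ pushing a directed join through the bi-orthogonal closure built into each connective: without internal completeness one cannot eliminate that closure and commute with unions, and the elementary closure law $E^{\perp\perp\perp}=E^\perp$ is exactly what absorbs the residual closure coming from $\bigvee D=(\bigcup D)^{\perp\perp}$. By contrast, the fixed-point step is routine once joint continuity of the two-parameter interpretation has been extracted from the induction hypothesis, and the structural combination of cases is purely formal.
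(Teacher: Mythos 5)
Your proof is correct and follows essentially the same route as the paper's: induction on $A$, with the connective cases reduced via internal completeness to the fact that the connectives depend on their arguments only up to bi-orthogonal (the paper's Lemma~\ref{lem_intcomp_set}, which your identity $E(U^{\perp\perp})^{\perp\perp}=E(U)^{\perp\perp}$ restates and proves by the same test-restriction/associativity argument), and the fixed-point case handled by Kleene's formula plus an exchange of directed suprema. The only difference is presentational: you factor the connective cases through general CPO closure properties (composition, pairing, separate-implies-joint continuity), where the paper inlines the same computation, using directedness of $\mathbb P$ explicitly via $\beh P = \beh P' \vee \beh P''$.
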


\begin{corollary} \label{coro_kleene_sup} For every $A \in \mathcal P$, $X \in \mathcal V$ and $\sigma : \mathrm{FV}(A) \setminus \{X\} \to \mathcal B^+$,
  \[\interpret{\mu X.A}^{\sigma} = \bigvee_{n \in \mathbb N} (\phi^{A}_{\sigma})^n(\daimon) = (\bigcup_{n \in \mathbb N} (\phi^{A}_{\sigma})^n(\daimon))^{\perp\perp}\]
\end{corollary}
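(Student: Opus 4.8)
The plan is to assemble the result directly from Kleene's fixed point theorem, the Scott-continuity established in Proposition~\ref{prop_scott_conti}, and the CPO structure of $(\mathcal B^+, \subseteq)$ recalled just before the statement. First I would unfold the definition of the interpretation, namely $\interpret{\mu X.A}^{\sigma} = \mathrm{lfp}(\phi^A_\sigma)$. The poset $\mathcal B^+$ ordered by inclusion is a CPO whose least element is $\daimon$ (abusively, $\{\daimon\}$), and Proposition~\ref{prop_scott_conti} guarantees that $\phi^A_\sigma$ is Scott-continuous on it. Kleene's theorem then applies with $\bot = \daimon$ and yields at once
\[\interpret{\mu X.A}^{\sigma} = \mathrm{lfp}(\phi^A_\sigma) = \bigvee_{n \in \mathbb N}(\phi^A_\sigma)^n(\daimon),\]
which is the first equality.

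For the second equality I would show that the iterates form a directed family and then invoke the formula $\bigvee \mathbb P = (\bigcup \mathbb P)^{\perp\perp}$ stated for $\mathcal B^+$. Scott-continuity entails monotonicity of $\phi^A_\sigma$; since $\daimon$ is the least element we have $\daimon \subseteq \phi^A_\sigma(\daimon)$, and a routine induction then gives $(\phi^A_\sigma)^n(\daimon) \subseteq (\phi^A_\sigma)^{n+1}(\daimon)$ for every $n$. Hence $\{(\phi^A_\sigma)^n(\daimon) \mid n \in \mathbb N\}$ is an increasing chain, in particular directed, and applying the supremum formula to this directed family produces
\[\bigvee_{n \in \mathbb N}(\phi^A_\sigma)^n(\daimon) = \Bigl(\bigcup_{n \in \mathbb N}(\phi^A_\sigma)^n(\daimon)\Bigr)^{\perp\perp}.\]

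Since this is a corollary put together from results we may assume, I do not anticipate any genuine difficulty: all the real content sits in Proposition~\ref{prop_scott_conti}. The only points needing a little care are matching Kleene's abstract bottom $\bot$ with $\daimon$ under the stated abuse of notation, and verifying that the iterates genuinely form a directed set before using the $(\bigcup \cdot)^{\perp\perp}$ characterisation of suprema, which is the characterisation guaranteed for directed families in $\mathcal B^+$.
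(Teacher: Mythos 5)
Your proposal is correct and is exactly the argument the paper intends: the corollary is a direct application of Kleene's theorem to the Scott-continuous map $\phi^A_\sigma$ on the CPO $(\mathcal B^+,\subseteq)$ with least element $\daimon$, followed by the characterisation $\bigvee \mathbb P = (\bigcup \mathbb P)^{\perp\perp}$ of suprema. Your extra care in checking that the iterates form an increasing chain is harmless and even slightly more explicit than the paper, which asserts directedness of arbitrary subsets of $\mathcal B^+$ outright.
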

This result gives an explicit formulation for least fixed points. However, the $\perp\perp$-closure might add new designs which were not in the union, making it difficult to know the exact content of such a behaviour. The point of next subsection will be to give an internal completeness result proving that the closure is actually not necessary.

Let us finish this subsection by defining a restricted set of data patterns so as to exclude the degenerate ones. Consider for example ${\mathbb List_{A}}' = \mu X. (A \otimes^+ X)$, a variant of $\mathbb List_{A}$ (see Example~\ref{ex-patterns}) which misses the base case. It is degenerate in the sense that the base element, here the empty list, is interpreted as the design $\daimon$. This is problematic: an interaction going through a whole list will end with an error, making it impossible to explore a pair of lists for example. The pattern $\mathbb Nat' = \mu X.X$ is even worse since $\interpret{\mathbb Nat'} = \daimon$. The point of steady data patterns is to ensure the existence of a basis; this will be formalised in Lemma~\ref{lem_cb_basis}.

\begin{definition}
  The set of \defined{steady} data patterns is the smallest subset $\mathcal P^s \subseteq \mathcal P$ such that:
  \begin{itemize}
  \item $\mathcal S' \subseteq \mathcal P^s$
  \item If $A \in \mathcal P^s$ and $B$ is such that $\interpret{B}^\sigma$ is pure if $\sigma$ is pure, then $A \oplus^+ B\in \mathcal P^s$ and $B \oplus^+ A \in \mathcal P^s$
  \item If $A \in \mathcal P^s$ and $B\in \mathcal P^s$ then $A \otimes^+ B \in \mathcal P^s$
  \item If $A \in \mathcal P^s$ then $\mu X.A \in \mathcal P^s$
  \end{itemize}
\end{definition}

The condition on $B$ in the case of $\oplus^+$ admits data patterns which are not steady, possibly with free variables, but ensuring the preservation of purity, i.e., type safety; the basis will come from side $A$. We will prove (\textsection~\ref{reg_pure_data}) that behaviours interpreting steady data patterns are pure, thus in particular a data pattern of the form $\mu X.A$ is steady if the free variables of $A$ all appear on the same side of a $\oplus^+$ and under the scope of no other $\mu$ (since purity is stable under $\shpos, \shneg, \oplus, \otimes$). We claim that steady data patterns can represent every type of finite data.
\begin{definition}
  A \defined{data behaviour} is the interpretation of a closed steady data pattern.
\end{definition}

\subsection{Internal Completeness for Infinite Union}

Our main result is an internal completeness theorem, stating that an infinite union of \imp{simple} regular behaviours with increasingly large incarnations is a behaviour: $\perp\perp$-closure is useless.

\begin{definition}
  \begin{itemize}
  \item A \defined{slice} is a design in which all negative subdesigns are either $\Omega^-$ or of the form $a(\vect x).\design p_a$, i.e., at most unary branching. $\design c$ is a \defined{slice of} $\design d$ if $\design c$ is a slice and $\design c \sqsubseteq \design d$. A slice $\design c$ of $\design d$ is \defined{maximal} if for any slice $\design c'$ of $\design d$ such that $\design c \sqsubseteq \design c'$, we have $\design c = \design c'$.
  \item A behaviour $\beh B$ is \defined{simple} if for every design $\design d \in |\beh B|$:
    \begin{enumerate}
    \item $\design d$ has a finite number of maximal slices, and
    \item every positive action of $\design d$ is justified by the immediate previous negative action.
    \end{enumerate}
  \end{itemize}
\end{definition}

Condition (2) of simplicity ensures that, given $\design d \in |\beh B|$ and a slice $\design c \sqsubseteq \design d$, one can find a path of $\design c$ containing all the positive proper actions of $\design c$ until a given depth; thus by condition (1), there exists $k \in \mathbb N$ depending only on $\design d$ such that $k$ paths can do the same in $\design d$.

Now suppose $(\beh A_n)_{n \in \mathbb N}$ is an infinite sequence of simple regular behaviours such that for all $n \in \mathbb N$, $|\beh A_n| \subseteq |\beh A_{n+1}|$ (in particular we have $\beh A_n \subseteq \beh A_{n+1}$).

\begin{theorem} \label{thm_union_beh}
  The set $\bigcup_{n \in \mathbb N} \beh A_n$ is a behaviour.
\end{theorem}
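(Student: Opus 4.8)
The plan is to set $\beh U = \bigcup_{n \in \mathbb N} \beh A_n$ and establish the two inclusions making up $\beh U^{\perp\perp} = \beh U$. The inclusion $\beh U \subseteq \beh U^{\perp\perp}$ is just the fact that $(\cdot)^{\perp\perp}$ is a closure operator, so all the work lies in $\beh U^{\perp\perp} \subseteq \beh U$. I would first record that $\beh U^\perp = \bigcap_{n} \beh A_n^\perp$ is a decreasing intersection, since the $\beh A_n$ increase.

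Next I would isolate a monotonicity observation that trivialises two reductions: if $\design c \sqsubseteq \design d$ and $\design c \in \beh B$ for a behaviour $\beh B$, then $\design d \in \beh B$. Indeed, for any $\design e \in \beh B^\perp$ the orthogonality $\design c \perp \design e$ means $\normalisation{\design c[\design e/x_0]} = \daimon$; a converging interaction reaches $\daimon$ in finitely many steps and so never reduces at a position carrying $\Omega$, which would force divergence. Since $\design d$ differs from $\design c$ only at such $\Omega$-positions, the interaction $\design d[\design e/x_0]$ follows the very same reduction and also converges, whence $\design d \perp \design e$; as $\design e$ is arbitrary, $\design d \in \beh B^{\perp\perp} = \beh B$. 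Two reductions follow. As behaviours are invariant under normalisation, I may assume the tested design is cut-free; and if $\design d \in \beh U^{\perp\perp}$ is cut-free with incarnation $\design d' = |\design d|_{\beh U^{\perp\perp}} \sqsubseteq \design d$, then once $\design d' \in \beh A_n$ is known for some $n$, the observation with $\beh B = \beh A_n$ gives $\design d \in \beh A_n \subseteq \beh U$. Hence it suffices to place every incarnated design of $\beh U^{\perp\perp}$ into some $\beh A_n$.

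The heart of the argument is then the \emph{core claim}: each $\design d' \in |\beh U^{\perp\perp}|$ belongs to some $\beh A_n$. Because $\design d'$ is incarnated, every one of its actions is visited by an interaction with a design of $\beh U^\perp$, and I would use simplicity to turn this pointwise data into finitely many paths. Condition (2) of simplicity (positive actions immediately justified) lets a single path run down an entire branch of a slice to any depth, and condition (1) (finitely many maximal slices) then bounds by some $k$ the number of paths $\pathLL s_1, \dots, \pathLL s_k \in V_{\beh U^{\perp\perp}}$ needed to visit all of $\design d'$; write $\pathLL s_i = \interseq{\design d'}{\design e_i}$ with $\design e_i \in \beh U^\perp$. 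The remaining task is to exhibit a single index $n$ for which $\design d'$ passes every test of $\beh A_n^\perp$, i.e. $\design d' \in \beh A_n$. Here I would invoke the regularity of the $\beh A_n$ (paths of incarnated designs are visitable, and $V_{\beh A_n}$, $V_{\beh A_n^\perp}$ are stable under shuffle) together with the accumulation hypothesis $|\beh A_n| \subseteq |\beh A_{n+1}|$: the finitely many duals of the $\pathLL s_i$ are each realised against designs that, by the increasing incarnations, all appear at a common finite stage $n = \max_i n_i$, and shuffle-stability recombines them into the single interaction certifying $\design d' \in \beh A_n$.

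I expect the main obstacle to be precisely the passage from $\beh U^\perp$ to the strictly larger orthogonals $\beh A_n^\perp$. Orthogonality of $\design d'$ to every design of $\beh U^\perp = \bigcap_n \beh A_n^\perp$ is genuinely weaker than orthogonality to every design of a fixed $\beh A_n^\perp$, which contains tests absent from the intersection. Bridging this gap is what forces both hypotheses into play: regularity is what allows membership in $\beh A_n$ to be certified from visitable paths and their shuffles rather than from the full orthogonal, while simplicity and finiteness are what guarantee that a single stage $n$ suffices rather than an unbounded family — without finiteness one could fear that $\design d'$ requires ever larger $n$ and never settles inside one $\beh A_n$.
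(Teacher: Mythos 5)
Your overall skeleton --- reduce via monotonicity to showing $|\beh U^{\perp\perp}| \subseteq \beh U$, then place each incarnated design into some $\beh A_n$ --- is exactly the paper's strategy, and your preliminary reductions are sound. But the core claim, which you rightly identify as the heart of the matter, has two genuine gaps where your sketch elides precisely the work the paper has to do.

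First, you apply simplicity directly to $\design d' \in |\beh U^{\perp\perp}|$, but simplicity is a hypothesis on the $\beh A_n$, not on $\beh U^{\perp\perp}$; whether $\design d'$ has finitely many maximal slices is not given and is in fact part of what must be proved. The paper devotes a whole branch of its case analysis to ruling out a $\design d'$ with infinitely many maximal slices, by constructing an explicit counter-design (a completion $\completed{\dual{\viewseq v}}$ of an infinite $\infty$-view) that lies in $\beh U^\perp$ yet fails to be orthogonal to $\design d'$ because of infinite chattering. Condition (2) of simplicity does transfer to $\design d'$ (its views are views of designs of $|\beh A|$), but condition (1) does not transfer for free. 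Second, even granting finitely many slices, $\design d'$ may be infinitely deep, so the $k$ covering objects are infinite $\infty$-paths rather than elements of $V_{\beh U^{\perp\perp}}$; your phrase ``all appear at a common finite stage $n = \max_i n_i$'' presupposes that each of them is realised inside some single design of some $\beh A_{n_i}$, which is exactly the assertion that can fail. The paper isolates this as its hardest subcase: if some covering $\infty$-path $\pathLL s_i$ lives in no design of $\beh A$, then $\completed{\dual{\pathLL s_i}}$ belongs to $\beh A^\perp$ (because every finite truncation is visitable) while $\design d' \not\perp \completed{\dual{\pathLL s_i}}$ by infinite chattering --- a contradiction that your argument never reaches. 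Finally, your closing step ``shuffle-stability recombines them into the single interaction certifying $\design d' \in \beh A_n$'' is not a proof of membership: regularity gives visitability of paths, not the converse implication from path data to membership. The paper instead supposes $\design d' \notin \beh A_N$, takes a witness $\design e \in \beh A_N^\perp$ with $\design d' \not\perp \design e$, and shows that either cause of divergence (failure after finitely many steps, or infinite chattering) is traced by linearity into a single slice and hence into a single $\design d_i \in \beh A_N$ orthogonal to $\design e$, a contradiction. So the route is the right one, but the decisive steps --- handling infinite designs and converting path information into orthogonality --- are missing.
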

A union of behaviours is not a behaviour in general. In particular, counterexamples are easily found if releasing either the inclusion of incarnations or the simplicity condition. Moreover, our proof for this theorem relies strongly on regularity. Under the same hypotheses we can prove $V_{\bigcup_{n \in \mathbb N} \beh A_n} = \bigcup_{n \in \mathbb N} V_{\beh A_n}$ and $|\bigcup_{n \in \mathbb N} \beh A_n| = \bigcup_{n \in \mathbb N} |\beh A_n|$, hence the following corollary.

\begin{corollary} \label{coro_reg_pur}
  \begin{itemize}
  \item $\bigcup_{n \in \mathbb N} \beh A_n$ is simple and regular;
  \item if moreover all the $\beh A_n$ are pure then $\bigcup_{n \in \mathbb N} \beh A_n$ is pure.
  \end{itemize}
\end{corollary}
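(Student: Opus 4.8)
The plan is to lean on the two identities recorded just before the statement. Write $\beh A := \bigcup_{n \in \mathbb N} \beh A_n$, which is a behaviour by Theorem~\ref{thm_union_beh}, and take as granted that $\visit{A} = \bigcup_{n \in \mathbb N} \visit{A_n}$ and $|\beh A| = \bigcup_{n \in \mathbb N} |\beh A_n|$. Three of the required properties are then immediate. \emph{Simplicity}: any $\design d \in |\beh A|$ lies in some $|\beh A_n|$, and both simplicity conditions are properties of the design itself, hence inherited from the simplicity of $\beh A_n$. \emph{First regularity condition}: for $\design d \in |\beh A_n|$ and $\pathLL s$ a path of $\design d$, regularity of $\beh A_n$ gives $\pathLL s \in \visit{A_n} \subseteq \visit{A}$. \emph{Purity}: a $\daimon$-ended path $\pathLL s\daimon \in \visit{A} = \bigcup_n \visit{A_n}$ lies in some $\visit{A_n}$, and purity of $\beh A_n$ produces a proper extension $\pathLL s\kappa^+ \in \visit{A_n} \subseteq \visit{A}$.

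Before the two remaining regularity conditions I would record a \emph{monotonicity} fact. For any regular behaviour, the visitable paths are exactly the paths of its incarnated designs: the inclusion $\supseteq$ is the first regularity condition, and $\subseteq$ holds in general because $\interseq{\design d}{\design e}$ is always a path of the incarnation $|\design d|_{\beh B}$. Combined with $|\beh A_n| \subseteq |\beh A_{n+1}|$ this yields $\visit{A_n} \subseteq \visit{A_{n+1}}$, and by duality ($\visit{A_n^\perp} = \dual{\visit{A_n}}$, with $\dual{\cdot}$ an inclusion-preserving bijection) also $\visit{A_n^\perp} \subseteq \visit{A_{n+1}^\perp}$. \emph{Shuffle stability} of $\visit{A}$ then follows directly: two paths of $\bigcup_n \visit{A_n}$ both lie in some $\visit{A_N}$, which is shuffle-stable, and $\visit{A_N} \subseteq \visit{A}$; the identical argument with $\visit{A_n^\perp}$ (using $\visit{A^\perp} = \bigcup_n \visit{A_n^\perp}$, obtained from the first identity by duality) handles $\visit{A^\perp}$.

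The main obstacle is the \emph{second regularity condition} for $\beh A$, namely that every path of a design $\design e \in |\beh A^\perp|$ is in $\visit{A^\perp}$, because $\beh A^\perp = \bigcap_n \beh A_n^\perp$ is a decreasing intersection whose incarnation does not split as a union of the $|\beh A_n^\perp|$. My plan is to show that the incarnations $|\design e|_{\beh A_n^\perp}$ increase with $n$ and exhaust $\design e$ action by action. Each proper action $\kappa$ of the incarnated design $\design e$ is visited by some interaction $\interseq{\design e}{\design d}$ with $\design d \in \beh A$; this path lies in $\visit{A^\perp} = \bigcup_n \visit{A_n^\perp}$, hence in some $\visit{A_{n(\kappa)}^\perp}$. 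Its dual is then a path of an incarnated design $\design d'' \in |\beh A_{n(\kappa)}| \subseteq \beh A_{n(\kappa)}$, so by Proposition~\ref{perp-path0} we get $\design e \perp \design d''$ with $\interseq{\design e}{\design d''}$ equal to that path; since this is an interaction against a design of $\beh A_{n(\kappa)}$ that visits $\kappa$, the action $\kappa$ survives in $|\design e|_{\beh A_{n(\kappa)}^\perp}$, and in all later incarnations by monotonicity. Given a path $\pathLL s$ of $\design e$, which uses only finitely many actions $\kappa_1, \dots, \kappa_r$, set $N = \max_i n(\kappa_i)$: all of them survive in $|\design e|_{\beh A_N^\perp}$, so $\pathLL s$ is a path of $|\design e|_{\beh A_N^\perp} \in |\beh A_N^\perp|$, and regularity of $\beh A_N^\perp$ gives $\pathLL s \in \visit{A_N^\perp} \subseteq \visit{A^\perp}$.

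The delicate step I expect to need care is precisely this last one: verifying that once all actions occurring in $\pathLL s$ belong to the single incarnated design $|\design e|_{\beh A_N^\perp}$, the sequence $\pathLL s$ is genuinely a path of that design, i.e. each of its views is a branch of the pruned tree. This rests on the standard description of $|\design e|_{\beh B}$ as the subdesign of $\design e$ obtained by keeping exactly the visited actions (replacing the rest by $\Omega$), so that a branch of $\design e$ made only of surviving actions is a branch of the incarnation. With that and the realisation of paths via Proposition~\ref{perp-path0} in hand, the four properties combine to give that $\bigcup_{n \in \mathbb N} \beh A_n$ is simple, regular, and pure when all the $\beh A_n$ are.
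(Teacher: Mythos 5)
Your proof is correct, and on most points it coincides with what the paper does: simplicity, the first regularity condition, purity, and shuffle-stability of $\visit{A}$ and $\visit{A^\perp}$ all reduce to the two identities stated before the corollary, to the monotonicity $\visit{A_n}\subseteq\visit{A_{n+1}}$ (which you prove exactly as Lemma~\ref{0lem_visit_hier}(1) does), and to the observation that finitely many paths or designs land in a single $\beh A_N$. The genuine difference is in the orthogonal side of regularity. The paper (Lemmas~\ref{0lem_visit_double} and~\ref{0lem_union_reg}) switches to the trivial-view characterisation of Proposition~\ref{reg2}, decomposes a path of $\design e\in|\beh A^\perp|$ into a shuffle of anti-shuffles of trivial views (Lemma~\ref{triv-view-path}), realises each trivial view by an interaction against some $\design d_i\in\beh A$ using minimality of $\design e$, pushes all the $\design d_i$ into one $\beh A_n$, and reassembles the path inside $V_{\beh A_n}$ by shuffle-stability. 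You instead verify the original definition directly, tracking the increasing incarnations $|\design e|_{\beh A_n^{\perp}}$ and showing that the finitely many actions of a given path of $\design e$ all survive in $|\design e|_{\beh A_N^{\perp}}$ for $N$ large, after which the second regularity condition of $\beh A_N$ applies verbatim. Both arguments hinge on the same two facts --- minimality of $\design e$ in $\beh A^\perp$ forces every action to be realised by an interaction with some design of $\beh A$, and a path is finite --- but yours bypasses the shuffle/anti-shuffle machinery. Two remarks on the step you flag as delicate: the survival of each individual action $\kappa$ in $|\design e|_{\beh A_{n(\kappa)}^{\perp}}$ is exactly Lemma~\ref{lem_visit_path_inc} applied to $\beh B=\beh A_{n(\kappa)}^{\perp}$ and the interaction path through $\kappa$, so it is already available; the remaining claim, that a path of $\design e$ all of whose actions survive in $|\design e|_{\beh A_N^{\perp}}\sqsubseteq\design e$ is a path of that incarnation, is sound because the views of a path consist only of actions of that path, hence never enter a pruned positive subdesign. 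Your route also leans on the characterisation of incarnated designs as those whose every action is visited, but the paper's own proof of Lemma~\ref{0lem_visit_double} invokes the same fact, so you are on equal footing there.
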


\subsection{Regularity and Purity of Data} \label{reg_pure_data}

The goal of this subsection is to show that the interpretation of data patterns of the form $\mu X.A$ can be expressed as an infinite union of behaviours $(\beh A_n)_{n \in \mathbb N}$ satisfying the hypotheses of Theorem~\ref{thm_union_beh}, in order to deduce regularity and purity. We will call an environment $\sigma$ simple if its image contains only simple behaviours.

\begin{lemma} \label{lem_incarn_hier}
For all $A \in \mathcal P$, $X \in \mathcal V$, $\sigma : \mathrm{FV}(A) \setminus \{X\} \to \mathcal B^+$ simple and regular\footnote{The hypothesis ``simple and regular'' has been added, compared to the CSL version of this article, for correction.}, and $n \in \mathbb N$ we have
\[|(\phi^{A}_{\sigma})^n(\daimon)| \subseteq |(\phi^{A}_{\sigma})^{n+1}(\daimon)|\]
\end{lemma}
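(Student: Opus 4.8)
The plan is to prove the inclusion by induction on $n$, exploiting the fact that $\phi^A_\sigma$ is monotone (it is Scott-continuous by Proposition~\ref{prop_scott_conti}) together with internal completeness for the connectives (Theorem~\ref{thm_intcomp_all}), which gives an explicit description of the incarnations of compound behaviours. The base case $n=0$ amounts to showing $|(\phi^A_\sigma)^0(\daimon)| = |\daimon| \subseteq |(\phi^A_\sigma)(\daimon)|$. Since $\daimon$ is the least behaviour and $|\daimon| = \{\daimon\}$, and every positive behaviour contains $\daimon$ in its incarnation, this case should be immediate. The real content is the inductive step.

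For the inductive step, I would assume $|(\phi^A_\sigma)^n(\daimon)| \subseteq |(\phi^A_\sigma)^{n+1}(\daimon)|$ and try to propagate this through the structure of the data pattern $A$. The natural move is to prove a more general statement by induction on the structure of an arbitrary data pattern $A$: namely that if $\beh P \subseteq \beh Q$ with $|\beh P| \subseteq |\beh Q|$ (and both simple and regular), then $|\interpret{A}^{\sigma, X \mapsto \beh P}| \subseteq |\interpret{A}^{\sigma, X \mapsto \beh Q}|$. Setting $\beh P = (\phi^A_\sigma)^n(\daimon)$ and $\beh Q = (\phi^A_\sigma)^{n+1}(\daimon)$ then yields the result, using the outer induction hypothesis to supply $|\beh P| \subseteq |\beh Q|$. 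For the structural induction: the variable case $X$ is exactly the hypothesis $|\beh P| \subseteq |\beh Q|$; the constant case $a$ is trivial since $\interpret{a}^\sigma = \beh C_a$ does not depend on $X$; and the cases $A \oplus^+ B$, $A \otimes^+ B$ follow from the incarnation equalities $|\shneg \beh P| = \symshneg(x).|\beh P^x|$, $|\beh M \oplus \beh N| = \symplus_1\langle|\beh M|\rangle \cup \symplus_2\langle|\beh N|\rangle \cup \{\daimon\}$, and $|\beh M \otimes \beh N| = \symtensor\langle|\beh M|,|\beh N|\rangle \cup \{\daimon\}$ in Theorem~\ref{thm_intcomp_all}, which turn the set-theoretic inclusion of incarnations into a monotone operation on the component incarnations. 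The nested fixed-point case $\mu Y.A$ requires care: I would use Corollary~\ref{coro_kleene_sup} to write $\interpret{\mu Y.A}$ as the $\perp\perp$-closure of a union of iterates, combine the inner structural induction hypothesis with the identity $|\bigcup_n \beh A_n| = \bigcup_n |\beh A_n|$ from the discussion following Theorem~\ref{thm_union_beh}, and conclude by a monotonicity argument on the whole union.

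The main obstacle I anticipate is the $\mu Y.A$ case, and more broadly making the incarnation computations rigorous. Unlike the full behaviours, whose inclusion $(\phi^A_\sigma)^n(\daimon) \subseteq (\phi^A_\sigma)^{n+1}(\daimon)$ follows directly from monotonicity of $\phi^A_\sigma$ applied to $\daimon \subseteq \phi^A_\sigma(\daimon)$, the inclusion of \emph{incarnations} is more delicate: the incarnation operator $|\cdot|$ is not automatically monotone, since a smaller behaviour can force designs to be ``more incarnated'' (have fewer $\Omega$'s). This is precisely why internal completeness is essential — it lets me compute incarnations of compound behaviours componentwise rather than reasoning abstractly about the stable ordering $\sqsubseteq$. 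The footnote flagging that the ``simple and regular'' hypothesis was added for correctness strongly suggests that these assumptions are needed exactly to invoke $|\bigcup_n \beh A_n| = \bigcup_n |\beh A_n|$ (which requires Theorem~\ref{thm_union_beh} and its corollary, both of which need simplicity and regularity) in the fixed-point case, and to guarantee that the behaviours appearing in the recursion remain simple and regular so that the induction can be carried through. The bookkeeping to verify that simplicity and regularity are preserved at each structural step — so that the inductive hypotheses remain applicable — is where the argument will demand the most attention.
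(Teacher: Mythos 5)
Your proposal is correct and follows essentially the same route as the paper: the paper proves, by structural induction on $A$, a bundle of mutually dependent statements whose first item is exactly your generalised monotonicity claim ($|\beh P|\subseteq|\beh P'|$ implies $|\phi^A_\sigma(\beh P)|\subseteq|\phi^A_\sigma(\beh P')|$ for simple regular $\beh P,\beh P'$), alongside preservation of simplicity and regularity, the explicit union form of $\interpret{\mu X.A}^\sigma$, and $|\interpret{\mu X.A}^\sigma|=\bigcup_{n}|(\phi^A_\sigma)^n(\daimon)|$, with the connective and fixed-point cases handled exactly as you describe via internal completeness and the union identities. The only detail worth making fully explicit in the $\mu Y.A_0$ case is the inner induction on $n$ establishing $|(\phi^{A_0}_{\sigma,X\mapsto\beh P})^n(\daimon)|\subseteq|(\phi^{A_0}_{\sigma,X\mapsto\beh P'})^n(\daimon)|$, which requires applying the structural induction hypothesis twice at each step (once in the $Y$ slot and once in the $X$ slot).
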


\begin{proposition}  \label{prop_interp_reg}
For all $A \in \mathcal P$ and simple regular environment $\sigma$, $\interpret{A}^{\sigma}$ is simple regular.
\end{proposition}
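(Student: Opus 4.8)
The plan is to argue by structural induction on the data pattern $A$, establishing for every $A \in \mathcal P$ the statement: for all simple regular environments $\sigma$ defined on $\mathrm{FV}(A)$, the behaviour $\interpret{A}^{\sigma}$ is simple and regular. In each case regularity of the compound behaviour will be delegated to the stability results already available (Proposition~\ref{prop_reg_stable} and Corollary~\ref{coro_reg_pur}), so the real content is to verify simplicity, which I read off from the explicit descriptions of incarnations provided by internal completeness (Theorem~\ref{thm_intcomp_all}). The base cases are immediate: if $A = X$ then $\interpret{X}^{\sigma} = \sigma(X)$ is simple regular by assumption on $\sigma$; and if $A = a \in \mathcal S'$ then $\interpret{a}^{\sigma} = \beh C_a$ is regular (as already remarked) and is simple because $|\beh C_a| = \{\posdes{x_0}{a}{\vect{\Omega^-}}\}$ has a single maximal slice, its unique proper positive action being the initial root.

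For the connective cases $A = B \oplus^+ C$ and $A = B \otimes^+ C$, the induction hypothesis gives that $\interpret{B}^{\sigma}$ and $\interpret{C}^{\sigma}$ are simple and regular, so regularity of $\interpret{A}^{\sigma}$ follows at once from Proposition~\ref{prop_reg_stable}. For simplicity I would use Theorem~\ref{thm_intcomp_all} directly. The formula $|\shneg \beh P| = \symshneg(x).|\beh P^x|$ merely prepends one unary negative action below the positive root of each design, so it preserves both the (finite) number of maximal slices and condition (2): the shift $\symshneg(x)$ becomes exactly the negative action immediately below the first positive action of $\beh P^x$. The connective $\oplus$ tags each incarnated design with $\symplus_1$ or $\symplus_2$ and adds $\daimon$, again keeping one maximal slice per design, whereas for $\otimes$ the incarnated designs are the $\symtensor\langle \design m, \design n\rangle$, whose maximal slices are obtained by choosing independently a maximal slice of $\design m$ and one of $\design n$; their number is therefore the product of two finite numbers, hence finite, and condition (2) still holds since every positive action other than the root lies immediately above a shift inherited from $\interpret{B}^{\sigma}$ or $\interpret{C}^{\sigma}$.

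The fixed point case $A = \mu X. B$ is the crux. By Corollary~\ref{coro_kleene_sup}, $\interpret{\mu X.B}^{\sigma} = \big(\bigcup_{n \in \mathbb N} \beh A_n\big)^{\perp\perp}$ with $\beh A_n = (\phi^{B}_{\sigma})^n(\daimon)$. First I would check by induction on $n$ that every $\beh A_n$ is simple regular: $\beh A_0 = \daimon$ is, and since $\beh A_{n+1} = \interpret{B}^{\sigma, X \mapsto \beh A_n}$ and the environment $\sigma, X \mapsto \beh A_n$ is simple regular, the structural induction hypothesis applied to $B$ gives that $\beh A_{n+1}$ is simple regular. Together with the incarnation hierarchy $|\beh A_n| \subseteq |\beh A_{n+1}|$ provided by Lemma~\ref{lem_incarn_hier}, the sequence $(\beh A_n)_{n \in \mathbb N}$ satisfies the hypotheses of Theorem~\ref{thm_union_beh}; hence $\bigcup_n \beh A_n$ is already a behaviour, the $\perp\perp$-closure is redundant, and Corollary~\ref{coro_reg_pur} concludes that $\interpret{\mu X.B}^{\sigma} = \bigcup_n \beh A_n$ is simple and regular.

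The hard part is the stability of simplicity under the connectives, which -- unlike regularity -- is not supplied as a ready-made lemma and must be extracted from the incarnation formulas. The delicate point is the finiteness of the number of maximal slices under $\otimes$, controlled by the product bound above, together with the bookkeeping needed for condition (2), namely that each negative shift introduced by $\shneg$ ends up immediately below the positive actions it must justify.
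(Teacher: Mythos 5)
Your proposal is correct and follows essentially the same route as the paper's proof: structural induction on $A$, with regularity of the connective cases delegated to Proposition~\ref{prop_reg_stable} and simplicity read off from the incarnation formulas of Theorem~\ref{thm_intcomp_all}, and the fixed-point case handled by iterating $\phi^{A_0}_{\sigma}$ on $\daimon$, invoking Lemma~\ref{lem_incarn_hier}, Theorem~\ref{thm_union_beh} and Corollary~\ref{coro_reg_pur} to discharge the $\perp\perp$-closure from Corollary~\ref{coro_kleene_sup}. The only difference is that you spell out the simplicity bookkeeping (slice counts under $\otimes$, condition (2) via the shifts) that the paper dismisses as ``easy''.
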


\begin{proof}
  By induction on data patterns. If $A = X$ or $A = a$ the conclusion is immediate. If $A = A_1 \oplus^+ A_2$ or $A = A_1 \otimes^+ A_2$ then regularity comes from Proposition~\ref{prop_reg_stable}, and simplicity is easy since the structure of the designs in $\interpret{A}^{\sigma}$ is given by internal completeness for the logical connectives (Theorem~\ref{thm_intcomp_all}). So suppose $A = \mu X.A_0$. By induction hypothesis, for every simple regular behaviour $\beh P \in \mathcal B^+$ we have $\phi^{A_0}_{\sigma}(\beh P) = \interpret{A_0}^{\sigma, X \mapsto \beh P}$ simple regular. From this, it is straightforward to show by induction that for every $n \in \mathbb N$, $(\phi^{A_0}_{\sigma})^n(\daimon)$ is simple regular. Moreover, for every $n \in \mathbb N$ we have $|(\phi^{A_0}_{\sigma})^n(\daimon)| \subseteq |(\phi^{A_0}_{\sigma})^{n+1}(\daimon)|$ by Lemma~\ref{lem_incarn_hier}, thus by Corollary~\ref{coro_kleene_sup} and Theorem~\ref{thm_union_beh}, $\interpret{\mu X.A_0}^{\sigma}  = \bigvee_{n \in \mathbb N} (\phi^{A}_{\sigma})^n(\daimon) = (\bigcup_{n \in \mathbb N}(\phi^{A_0}_{\sigma})^n(\daimon))^{\perp\perp} = \bigcup_{n \in \mathbb N}(\phi^{A_0}_{\sigma})^n(\daimon)$.
Consequently, by Corollary~\ref{coro_reg_pur}, $\interpret{\mu X.A_0}^{\sigma}$ is simple regular.
\end{proof}
Remark that we have proved at the same time, using Theorem~\ref{thm_union_beh}, that behaviours interpreting data patterns $\mu X.A$ admit an explicit construction:
\begin{proposition} \label{prop_reg_union}
  If $A \in \mathcal P$, $X \in \mathcal V$, and $\sigma : \mathrm{FV}(A)\setminus X \to \mathcal B^+$ is simple regular,
  \[\interpret{\mu X.A}^{\sigma} = \bigcup_{n \in \mathbb N} (\phi^{A}_{\sigma})^n(\daimon)\]
\end{proposition}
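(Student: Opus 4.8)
The plan is to recognise that this statement is exactly the assertion that the $\perp\perp$-closure appearing in Corollary~\ref{coro_kleene_sup} is redundant, and to obtain it by feeding the Kleene approximation sequence into Theorem~\ref{thm_union_beh}. Writing $\beh A_n = (\phi^{A}_{\sigma})^n(\daimon)$, Corollary~\ref{coro_kleene_sup} already gives $\interpret{\mu X.A}^{\sigma} = (\bigcup_{n \in \mathbb N} \beh A_n)^{\perp\perp}$, so it suffices to prove that $\bigcup_{n \in \mathbb N} \beh A_n$ is itself a behaviour; then the closure acts as the identity and the displayed equality follows at once.

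To invoke Theorem~\ref{thm_union_beh} I must check its hypotheses for the sequence $(\beh A_n)_{n \in \mathbb N}$. First, each $\beh A_n$ should be simple and regular. I would argue this by induction on $n$: the base case $\beh A_0 = \daimon = \{\daimon\}$ is simple and regular, and for the inductive step, assuming $\beh A_n$ simple regular, the extended environment $\sigma, X \mapsto \beh A_n$ is simple regular (since $\sigma$ is, by hypothesis), so Proposition~\ref{prop_interp_reg} applied to the body $A$ yields that $\beh A_{n+1} = \phi^{A}_{\sigma}(\beh A_n) = \interpret{A}^{\sigma, X \mapsto \beh A_n}$ is simple regular. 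Second, the chain of incarnation inclusions $|\beh A_n| \subseteq |\beh A_{n+1}|$ is precisely Lemma~\ref{lem_incarn_hier}, whose simple-and-regular hypothesis on $\sigma$ is exactly what we are given.

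With both hypotheses verified, Theorem~\ref{thm_union_beh} gives that $\bigcup_{n \in \mathbb N} \beh A_n$ is a behaviour, which concludes the argument. The only point requiring care is the apparent circularity with Proposition~\ref{prop_interp_reg}: its proof in the $\mu$ case already runs this very computation, so rather than re-establishing simplicity and regularity of the $\beh A_n$ from scratch I rely on Proposition~\ref{prop_interp_reg} having been proved beforehand, and crucially it is only ever applied to the strictly smaller body $A$, never to $\mu X.A$ itself, which keeps the dependency well-founded. Beyond this bookkeeping, the statement is an immediate repackaging of the intermediate equality $\interpret{\mu X.A}^{\sigma} = \bigcup_{n \in \mathbb N} \beh A_n$ obtained inside the proof of Proposition~\ref{prop_interp_reg}, so no genuinely new obstacle arises.
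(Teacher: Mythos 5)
Your proof is correct and follows essentially the same route as the paper: the paper obtains this proposition as a byproduct of the proof of Proposition~\ref{prop_interp_reg}, whose $\mu$-case performs exactly your computation (Corollary~\ref{coro_kleene_sup}, an induction showing each iterate $(\phi^{A}_{\sigma})^n(\daimon)$ is simple and regular, Lemma~\ref{lem_incarn_hier} for the incarnation inclusions, then Theorem~\ref{thm_union_beh} to discharge the $\perp\perp$-closure). Your handling of the apparent circularity is also right: invoking Proposition~\ref{prop_interp_reg} only on the strictly smaller body $A$ is just a black-box use of what the paper establishes via its structural induction hypothesis.
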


\begin{corollary} \label{coro_data_reg}
  Data behaviours are regular.
\end{corollary}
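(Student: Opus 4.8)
The plan is to obtain this immediately as a specialisation of Proposition~\ref{prop_interp_reg}. By definition, a data behaviour is $\interpret{A}^{\sigma}$ for some closed steady data pattern $A$. The first step is to observe that since $A$ is closed, $\mathrm{FV}(A) = \emptyset$, so the only admissible environment $\sigma$ is the empty function, whose image is empty. Consequently $\sigma$ is vacuously both simple and regular: the defining conditions of ``simple environment'' and ``regular environment'' quantify over the behaviours occurring in the image of $\sigma$, and there are none to check.

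The second step is then to apply Proposition~\ref{prop_interp_reg} with this empty environment, which yields that $\interpret{A}^{\sigma}$ is simple and regular; in particular it is regular, which is the desired conclusion. I would note in passing that steadiness of the pattern plays no role for regularity: Proposition~\ref{prop_interp_reg} holds for \emph{every} data pattern $A \in \mathcal P$, so regularity in fact holds for the interpretation of any closed data pattern, steady or not. Steadiness only becomes relevant later, when establishing purity.

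The hard work has already been absorbed into Proposition~\ref{prop_interp_reg} — specifically into its $\mu X.A_0$ case, which invokes the internal-completeness result of Theorem~\ref{thm_union_beh} together with the monotonicity of incarnations from Lemma~\ref{lem_incarn_hier} to express the fixed point as an ordinary union $\bigcup_{n \in \mathbb N}(\phi^{A_0}_{\sigma})^n(\daimon)$ and then conclude via Corollary~\ref{coro_reg_pur}. At the level of this corollary there is therefore no genuine obstacle; the only point needing a moment's care is the (routine) verification that the empty environment vacuously satisfies the simple-regular hypothesis, so that the inductive argument of Proposition~\ref{prop_interp_reg} may legitimately be invoked on a closed pattern.
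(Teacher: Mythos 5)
Your proof is correct and follows exactly the route the paper intends: the corollary is an immediate specialisation of Proposition~\ref{prop_interp_reg} to a closed pattern, with the empty environment vacuously simple and regular; your remark that steadiness is irrelevant here is also accurate.
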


We now move on to proving purity. The proof that the interpretation of a steady data pattern $A$ is pure relies on the existence of a basis for $A$ (Lemma~\ref{lem_cb_basis}). Let us first widen (to $\daimon$-free paths) and express in a different way (for $\daimon$-ended paths) the notion of extensible visitable path.

\begin{definition} Let $\beh B$ be a behaviour.
  \begin{itemize}
  \item A $\daimon$-free path $\pathLL s \in \visit B$ is \defined{extensible} if there exists $\pathLL t \in \visit B$ of which $\pathLL s$ is a strict prefix.
  \item A $\daimon$-ended path $\pathLL s\daimon \in \visit B$ is \defined{extensible} if there exists a positive action $\kappa^+$ and $\pathLL t \in \visit B$ of which $\pathLL s\kappa^+$ is a prefix.
  \end{itemize}
\end{definition}
Write $V_{\beh B}^{max}$ for the set of maximal, i.e., non extensible, visitable paths of $\beh B$.
  
\begin{lemma} \label{lem_cb_basis}
  Every steady data pattern $A \in \mathcal P^s$ has a basis, i.e., a simple regular behaviour $\beh B$ such that for all simple regular environment $\sigma$ we have
  \begin{itemize}
  \item $\beh B \subseteq \interpret{A}^{\sigma}$,
  \item for every path $\pathLL s \in V_{\beh B}$, there exists $\pathLL t \in V_{\beh B}^{max}$ $\daimon$-free extending $\pathLL s$ (in particular $\beh B$ pure),
  \item $V_{\beh B}^{max} \subseteq V_{\interpret{A}^{\sigma}}^{max}$.
    \end{itemize}
\end{lemma}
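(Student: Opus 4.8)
The plan is to argue by structural induction on the inductive definition of $\mathcal P^s$, exhibiting in each case a single behaviour $\beh B$ (independent of $\sigma$) and verifying the three conditions for every simple regular environment $\sigma$. Simplicity and regularity of the constructed $\beh B$ will in all cases come from Proposition~\ref{prop_reg_stable} together with the structural description of designs given by internal completeness (Theorem~\ref{thm_intcomp_all}), so the real content lies in the inclusion and the two path conditions. For the base case $A = a \in \mathcal S'$ I would take $\beh B = \beh C_a$. Here $\interpret{a}^\sigma = \beh C_a$, so the inclusion is an equality, and from $\visit{C_a} = \{\daimon, \posdes{x_0}{a}{\vect x}\}$ one reads off that $V_{\beh C_a}^{max} = \{\posdes{x_0}{a}{\vect x}\}$ is a single $\daimon$-free path; the extension condition and $V_{\beh B}^{max} \subseteq V_{\interpret a^\sigma}^{max}$ are then immediate.

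For the connective cases I would build the basis from the bases supplied by the induction hypothesis. For $A \oplus^+ B$ with $A$ steady (and symmetrically for $B \oplus^+ A$), take the left injection of the negatively shifted basis $\beh B_A$ of $A$; by Theorem~\ref{thm_intcomp_all} this is a behaviour whose visitable paths are exactly those of $\beh B_A$ prefixed by the injection (and shift) actions, so the three conditions lift from $\beh B_A$. Maximality is preserved because a $\oplus$ commits to one side, so the unused right branch can never extend a left-path; this is also why the purity hypothesis on $B$ plays no role here and is needed only for the global purity corollary. For $A \otimes^+ B$ with $A, B$ steady, take the tensor of the two shifted bases. A visitable path of the tensor is a $\symtensor$-action followed by a shuffle of visitable paths of the two components, so I would invoke shuffle-stability (part of regularity, Definition~\ref{reg}) to see that a shuffle of two $\daimon$-free maximal paths is again $\daimon$-free maximal, which yields conditions 2 and 3. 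In both cases the inclusion $\beh B \subseteq \interpret{A \oplus^+ B}^\sigma$ (resp. $\otimes^+$) follows from monotonicity of the connectives applied to $\beh B_A \subseteq \interpret{A}^\sigma$ (resp. and $\beh B_B \subseteq \interpret{B}^\sigma$).

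The delicate case is $\mu X. A_0$ with $A_0$ steady, and I expect it to be the technical heart of the proof. I would take as basis the basis $\beh B_0$ of $A_0$ furnished by the induction hypothesis: a fixed behaviour valid for every simple regular environment, in particular for $\tau = \sigma, X \mapsto \daimon$. The inclusion is then easy, since $\beh B_0 \subseteq \interpret{A_0}^{\tau} = \phi^{A_0}_\sigma(\daimon) \subseteq \interpret{\mu X.A_0}^\sigma$ by Corollary~\ref{coro_kleene_sup}, and condition 2 is intrinsic to $\beh B_0$. The real work is condition 3: a $\daimon$-free path $\pathLL t \in V_{\beh B_0}^{max}$ is maximal in $\interpret{A_0}^\tau$, and I must show it stays maximal in the full fixed point $\bigcup_{n} (\phi^{A_0}_\sigma)^n(\daimon)$. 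The key observation is that $V_{\{\daimon\}} = \{\daimon\}$: a $\daimon$-free path can never enter a slot interpreting $X$, since doing so under $X \mapsto \daimon$ would end the path with $\daimon$. Hence $\pathLL t$ lives entirely in the part of $A_0$ that is structurally identical across all unfoldings.

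Given this, I would rule out any proper extension of $\pathLL t$ inside a larger $(\phi^{A_0}_\sigma)^n(\daimon) = \interpret{A_0}^{\sigma, X \mapsto \beh P}$ (with $\beh P = (\phi^{A_0}_\sigma)^{n-1}(\daimon)$) by examining its first new action. Either it is an action of the shared part not probing $X$ — in which case, by regularity, it would already be realised by an incarnated design of $\interpret{A_0}^\tau$ and hence be a genuine visitable extension of $\pathLL t$ there, contradicting maximality — or it is the first action of a freshly entered $X$-slot, in which case under $X \mapsto \daimon$ the same opening would already make $\pathLL t$ extensible in $\interpret{A_0}^\tau$, the same contradiction. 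Thus no extension exists and $\pathLL t \in V_{\interpret{\mu X.A_0}^\sigma}^{max}$. I expect this ``freezing of $X$ at $\daimon$'' argument — crucially using regularity to convert the mere structural availability of an action into actual visitability in the smaller behaviour — to be the main obstacle; by comparison, the connective cases are routine once the shuffle bookkeeping for $\otimes^+$ is set up.
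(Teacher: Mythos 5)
Your choice of basis in every case coincides with the paper's ($\beh C_a$ for constants, $\symplus_1\langle\shneg\beh B_1\rangle$ for $\oplus^+$, $\beh B_1\otimes^+\beh B_2$ for $\otimes^+$, and the basis of $A_0$ for $\mu X.A_0$), and your treatment of the base and connective cases is essentially the paper's. The divergence, and the gap, is in condition 3 for $\mu X.A_0$. The induction hypothesis you have set up is quantified over \emph{all} simple regular environments, and the paper exploits exactly this: writing $\sigma_n = \sigma, X\mapsto(\phi^{A_0}_\sigma)^n(\daimon)$, each $\sigma_n$ is simple regular (Proposition~\ref{prop_interp_reg}), so the IH applied to $A_0$ at $\sigma_n$ gives $V_{\beh B_0}^{max}\subseteq V_{\interpret{A_0}^{\sigma_n}}^{max}$ directly for every $n$; combined with $V_{\interpret{\mu X.A_0}^{\sigma}} = \{\daimon\}\cup\bigcup_n V_{\interpret{A_0}^{\sigma_n}}$ (internal completeness for the union, Lemma~\ref{0lem_visit_union}), a path maximal at every stage is maximal in the union, and the case closes. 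You instead invoke only the instance $\tau = \sigma, X\mapsto\daimon$ and try to propagate maximality up the unfoldings by hand.

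That propagation argument has a real hole. The step ``either the first new action does not probe $X$ --- in which case, by regularity, it would already be realised by an incarnated design of $\interpret{A_0}^{\tau}$ and hence be a visitable extension of $\pathLL t$ there'' presupposes that the visitable paths of $\interpret{A_0}^{\sigma,X\mapsto\beh P}$ which avoid the $X$-positions coincide with those of $\interpret{A_0}^{\tau}$. Nothing in the paper gives this: regularity of $\interpret{A_0}^{\tau}$ only converts paths of designs in $|\interpret{A_0}^{\tau}|$ into visitable paths, and you have not shown that the witnessing design (or even the action $\kappa$) lives in that incarnation rather than only in the larger behaviour. Establishing such a ``shared part'' lemma would itself require an induction on $A_0$, together with a precise definition of which actions of a path ``enter an $X$-slot'' --- delicate once $X$ sits under nested $\oplus^+$, $\otimes^+$, shifts and further $\mu$'s, where the relevant addresses are generated by delocation. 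So as written the $\mu$ case is not proved; the repair is not to strengthen your structural argument but to discard it and apply your own induction hypothesis to the environments $\sigma_n$.
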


\begin{proof}[Proof (idea)]
  If $A = a$, a basis is $\beh C_a$. If $A = A_1 \oplus^+ A_2$, and $A_i$ is steady with basis $\beh B_i$, then $\otimes_i \shneg \beh B_i:=\symplus_i\langle\shneg \beh B_i\rangle$ is a basis for $A$. If $A = A_1 \otimes^+ A_2$, a basis is $\beh B_1 \otimes^+ \beh B_2$ where $\beh B_1$ and $\beh B_2$ are basis of $A_1$ and $A_2$ respectively. If $A = \mu X.A_0$, its basis is the same as $A_0$.
\end{proof}

\begin{proposition} \label{prop_basis}
  If $A \in \mathcal P^s$ of basis $\beh B$, $X \in \mathcal V$, and $\sigma : \mathrm{FV}(A)\setminus X \to \mathcal B^+$ simple regular,
  \[\interpret{\mu X.A}^{\sigma} = \bigcup_{n \in \mathbb N} (\phi^{A}_{\sigma})^n(\beh B)\]
\end{proposition}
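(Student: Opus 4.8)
The plan is to prove the two inclusions separately, reducing everything to Proposition~\ref{prop_reg_union}, which already establishes $\interpret{\mu X.A}^{\sigma} = \bigcup_{n \in \mathbb N} (\phi^{A}_{\sigma})^n(\daimon)$ for simple regular $\sigma$. The only difference with the present statement is the replacement of the starting point $\daimon$ by the basis $\beh B$, so the task is to show that iterating $\phi^{A}_{\sigma}$ from $\beh B$ yields the same union as iterating from $\daimon$.

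For the inclusion $\interpret{\mu X.A}^{\sigma} \subseteq \bigcup_{n \in \mathbb N} (\phi^{A}_{\sigma})^n(\beh B)$, I would use that $\daimon = \{\daimon\}$ is the least element of $\mathcal B^+$, so $\daimon \subseteq \beh B$. Since $\phi^{A}_{\sigma}$ is Scott-continuous (Proposition~\ref{prop_scott_conti}), it is monotone, whence $(\phi^{A}_{\sigma})^n(\daimon) \subseteq (\phi^{A}_{\sigma})^n(\beh B)$ for every $n$. Taking the union over $n$ and applying Proposition~\ref{prop_reg_union} to rewrite the left-hand side as $\interpret{\mu X.A}^{\sigma}$ gives the inclusion.

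For the reverse inclusion, the crucial step is to establish $\beh B \subseteq \interpret{\mu X.A}^{\sigma}$. Here I would argue that $\interpret{\mu X.A}^{\sigma}$ is simple regular by Proposition~\ref{prop_interp_reg}, and since $\sigma$ is simple regular by hypothesis, the extended environment $\sigma, X \mapsto \interpret{\mu X.A}^{\sigma}$ is again simple regular and defined on all of $\mathrm{FV}(A)$. Lemma~\ref{lem_cb_basis}, applied to this environment, then yields $\beh B \subseteq \interpret{A}^{\sigma, X \mapsto \interpret{\mu X.A}^{\sigma}}$; but the right-hand side is precisely $\phi^{A}_{\sigma}(\interpret{\mu X.A}^{\sigma})$, which equals $\interpret{\mu X.A}^{\sigma}$ by the fixed point property. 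From $\beh B \subseteq \interpret{\mu X.A}^{\sigma}$, monotonicity of $\phi^{A}_{\sigma}$ together with the fixed point identity gives, by induction on $n$, that $(\phi^{A}_{\sigma})^n(\beh B) \subseteq (\phi^{A}_{\sigma})^n(\interpret{\mu X.A}^{\sigma}) = \interpret{\mu X.A}^{\sigma}$, so the whole union is contained in $\interpret{\mu X.A}^{\sigma}$. Combining the two inclusions closes the argument.

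I expect the main subtlety to lie in the second inclusion, specifically in the bookkeeping needed to invoke Lemma~\ref{lem_cb_basis}: one must verify that the extended environment $\sigma, X \mapsto \interpret{\mu X.A}^{\sigma}$ falls within the hypotheses of that lemma (simple and regular, with $X$ now assigned a value), and then correctly use the fixed point identity $\interpret{A}^{\sigma, X \mapsto \interpret{\mu X.A}^{\sigma}} = \interpret{\mu X.A}^{\sigma}$. Once $\beh B \subseteq \interpret{\mu X.A}^{\sigma}$ is secured, the remaining manipulations are routine monotonicity and do not require any further interactive analysis.
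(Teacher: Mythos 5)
Your proof is correct. The forward inclusion is exactly the paper's argument: $\daimon \subseteq \beh B$ plus monotonicity of $\phi^{A}_{\sigma}$ and Proposition~\ref{prop_reg_union}. For the reverse inclusion the paper takes a slightly leaner route: it applies the basis property to the environment $\sigma, X \mapsto \daimon$ to get $\beh B \subseteq \interpret{A}^{\sigma, X \mapsto \daimon} = \phi^{A}_{\sigma}(\daimon)$, and monotonicity then yields the sandwich $(\phi^{A}_{\sigma})^n(\daimon) \subseteq (\phi^{A}_{\sigma})^n(\beh B) \subseteq (\phi^{A}_{\sigma})^{n+1}(\daimon)$, so the two unions coincide with $\bigcup_{n \in \mathbb N}(\phi^{A}_{\sigma})^n(\daimon) = \interpret{\mu X.A}^{\sigma}$ in one stroke, using only Kleene's formula. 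You instead instantiate Lemma~\ref{lem_cb_basis} at the environment $\sigma, X \mapsto \interpret{\mu X.A}^{\sigma}$ --- legitimate, since Proposition~\ref{prop_interp_reg} guarantees that this environment is simple and regular --- and then invoke the fixed point identity $\phi^{A}_{\sigma}(\interpret{\mu X.A}^{\sigma}) = \interpret{\mu X.A}^{\sigma}$. Both routes rest on the same two ingredients (the basis inclusion $\beh B \subseteq \interpret{A}^{\sigma'}$ for a suitable simple regular $\sigma'$, and monotonicity of $\phi^{A}_{\sigma}$); the paper's choice of $\sigma' = \sigma, X \mapsto \daimon$ avoids any appeal to Proposition~\ref{prop_interp_reg} or to the fixed point property, whereas yours is marginally longer but equally sound.
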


\begin{proof}
  Since $\beh B$ is a basis for $A$ we have $\daimon \subseteq \beh B \subseteq \interpret{A}^{\sigma, X \to \daimon} = \phi^{A}_{\sigma}(\daimon)$. The Scott-continuity of the function $\phi^{A}_{\sigma}$ implies that it is increasing, thus $(\phi^{A}_{\sigma})^n(\daimon) \subseteq (\phi^{A}_{\sigma})^n(\beh B) \subseteq (\phi^{A}_{\sigma})^{n+1}(\daimon)$ for all $n \in \mathbb N$.
  Hence $\interpret{A}^{\sigma} = \bigcup_{n \in \mathbb N}(\phi^{A}_{\sigma})^n(\daimon) = \bigcup_{n \in \mathbb N} (\phi^{A}_{\sigma})^n(\beh B)$.
\end{proof}

\begin{proposition}
   For all $A \in \mathcal P^s$ and simple regular pure environment $\sigma$, $\interpret{A}^\sigma$ is pure.
\end{proposition}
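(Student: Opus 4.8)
The plan is to argue by induction on the structure of the steady data pattern $A\in\mathcal P^s$, following the four clauses of the inductive definition of $\mathcal P^s$, and to reduce purity to Proposition~\ref{prop_pure_stable} in the non-recursive cases and to Corollary~\ref{coro_reg_pur} in the fixed-point case. It is crucial that the statement is quantified over \emph{all} simple regular pure environments, because the inductive hypothesis will be reused for the same subpattern under different environments obtained by updating $\sigma$ on the recursion variable.

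For the base case $A=a\in\mathcal S'$ we have $\interpret{a}^\sigma=\beh C_a$, which is pure as already noted. For $A=A_1\oplus^+ A_2$, whichever clause produced $A$, both $\interpret{A_1}^\sigma$ and $\interpret{A_2}^\sigma$ are pure: the steady component by the inductive hypothesis, and the other one by the side condition of the $\oplus^+$ clause together with the fact that $\sigma$ is pure. Since $\interpret{A}^\sigma=(\shneg\interpret{A_1}^\sigma)\oplus(\shneg\interpret{A_2}^\sigma)$, purity follows from stability of purity under $\shneg$ and $\oplus$ (Proposition~\ref{prop_pure_stable}). The case $A=A_1\otimes^+ A_2$ is identical, using that both components are steady, that both interpretations are pure by the inductive hypothesis, and that purity is stable under $\shneg$ and $\otimes$.

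The fixed-point case $A=\mu X.A_0$ with $A_0\in\mathcal P^s$ is the heart of the argument. Let $\beh B$ be the basis of $A_0$ furnished by Lemma~\ref{lem_cb_basis}; it is simple, regular, and pure, and taking $\tau=(\sigma,X\mapsto\beh B)$ (which is simple regular) the first bullet of that lemma gives $\beh B\subseteq\interpret{A_0}^\tau=\phi^{A_0}_\sigma(\beh B)$. By Proposition~\ref{prop_basis}, $\interpret{\mu X.A_0}^\sigma=\bigcup_{n}(\phi^{A_0}_\sigma)^n(\beh B)$. I would first show by induction on $n$ that each iterate $(\phi^{A_0}_\sigma)^n(\beh B)$ is simple regular pure: indeed, whenever $\beh P$ is simple regular pure the environment $(\sigma,X\mapsto\beh P)$ is again simple regular pure, so $\phi^{A_0}_\sigma(\beh P)=\interpret{A_0}^{\sigma,X\mapsto\beh P}$ is pure by the structural inductive hypothesis applied to $A_0$, and simple regular by Proposition~\ref{prop_interp_reg}. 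It then remains to verify that these iterates form a chain with increasing incarnations, i.e. $|(\phi^{A_0}_\sigma)^n(\beh B)|\subseteq|(\phi^{A_0}_\sigma)^{n+1}(\beh B)|$, so that Theorem~\ref{thm_union_beh} applies and Corollary~\ref{coro_reg_pur} yields purity of the union $\interpret{\mu X.A_0}^\sigma$.

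The main obstacle is precisely this incarnation inclusion, since Lemma~\ref{lem_incarn_hier} is stated only for the base $\daimon$. I expect it to go through by the very argument of that lemma: the underlying fact is that $\phi^{A_0}_\sigma$ is monotone on incarnations of simple regular behaviours, and the base step $|\beh B|\subseteq|\phi^{A_0}_\sigma(\beh B)|$ can be extracted from the basis properties of Lemma~\ref{lem_cb_basis}. Concretely, $\beh B\subseteq\phi^{A_0}_\sigma(\beh B)$ together with $V_{\beh B}^{max}\subseteq V_{\interpret{A_0}^\tau}^{max}$ guarantees that every action of a design incarnated in $\beh B$ is still visited by interaction within $\phi^{A_0}_\sigma(\beh B)$, so the design remains incarnated there; monotonicity then propagates the inclusion along the chain. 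Once this chain condition is secured, the sequence $(\phi^{A_0}_\sigma)^n(\beh B)$ satisfies the hypotheses of Theorem~\ref{thm_union_beh} and each of its terms is pure, so Corollary~\ref{coro_reg_pur} gives that $\interpret{\mu X.A_0}^\sigma$ is pure, closing the induction.
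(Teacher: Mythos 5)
Your proof is correct and follows essentially the same route as the paper's: induction on $A$ with Proposition~\ref{prop_pure_stable} for the connective cases, and for $\mu X.A_0$ the decomposition $\interpret{\mu X.A_0}^\sigma=\bigcup_n(\phi^{A_0}_\sigma)^n(\beh B)$ from Proposition~\ref{prop_basis}, an induction on $n$ showing each iterate is simple regular pure, the incarnation chain obtained by redoing Lemma~\ref{lem_incarn_hier} with $\daimon$ replaced by the basis, and Corollary~\ref{coro_reg_pur} to conclude. You are in fact slightly more explicit than the paper on the two delicate points (the non-steady side of $\oplus^+$ being handled by the side condition rather than the induction hypothesis, and the base step $|\beh B|\subseteq|\phi^{A_0}_\sigma(\beh B)|$ of the incarnation chain), which the paper leaves implicit.
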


\begin{proof}
  By induction on $A$. The base cases are immediate and the connective cases are solved using Proposition~\ref{prop_pure_stable}. Suppose now $A = \mu X.A_0$, where $A_0$ is steady with basis $\beh B_0$. We have $\interpret{A}^{\sigma} = \bigcup_{n \in \mathbb N} (\phi^{A_0}_{\sigma})^n(\beh B_0)$ by Proposition~\ref{prop_basis}, let us prove it satisfies the hypotheses needed to apply Corollary~\ref{coro_reg_pur}(2). By induction hypothesis and Proposition~\ref{prop_interp_reg}, for every simple, regular and pure behaviour $\beh P \in \mathcal B^+$ we have $\phi^{A_0}_{\sigma}(\beh P) = \interpret{A_0}^{\sigma, X \mapsto \beh P}$ simple, regular and pure, hence it is easy to show by induction that for every $n \in \mathbb N$, $(\phi^{A_0}_{\sigma})^n(\beh B_0)$ is as well. Moreover, for every $n \in \mathbb N$ we prove that $|(\phi^{A_0}_{\sigma})^n(\beh B_0)| \subseteq |(\phi^{A_0}_{\sigma})^{n+1}(\beh B_0)|$ similarly to Lemma~\ref{lem_incarn_hier}, replacing $\daimon$ by the basis $\beh B_0$. Finally, by Corollary~\ref{coro_reg_pur}, $\interpret{A}^{\sigma}$ is pure.
\end{proof}

\begin{corollary} \label{coro_data_pur}
  Data behaviours are pure.
\end{corollary}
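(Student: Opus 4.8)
The plan is to obtain this corollary directly as the closed-pattern instance of the proposition just above, which asserts that $\interpret{A}^\sigma$ is pure whenever $A \in \mathcal P^s$ and $\sigma$ is a simple regular pure environment. By definition a data behaviour is $\interpret{A}^\sigma$ for some \emph{closed} steady data pattern $A$, so I would begin by unfolding this definition and then invoke the preceding result. No induction or new construction is needed at this level: the entire inductive argument on steady patterns, including the treatment of the $\mu$ case via the explicit union of Proposition~\ref{prop_basis} and the preservation of purity under unions given by Corollary~\ref{coro_reg_pur}, has already been carried out inside that proposition.

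The only thing to check is that the hypotheses on the environment are satisfied. Since $A$ is closed we have $\mathrm{FV}(A) = \emptyset$, so the relevant environment $\sigma$ has empty domain and hence empty image. An environment whose image is empty is vacuously simple, regular and pure, because each of these conditions only constrains the behaviours actually occurring in the image. Applying the preceding proposition to $A$ together with this empty environment then yields that $\interpret{A}^\sigma$ is pure, which is exactly the claim. I expect no genuine obstacle here: all the substance lies in the preceding proposition and in the results it rests on (in particular Lemma~\ref{lem_cb_basis}, Proposition~\ref{prop_basis} and Corollary~\ref{coro_reg_pur}), while the corollary merely records that for closed patterns the side conditions on $\sigma$ hold trivially.
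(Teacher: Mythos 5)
Your proposal is correct and coincides with the paper's (implicit) argument: the corollary is stated without proof precisely because it follows immediately from the preceding proposition, since a data behaviour is by definition $\interpret{A}^\sigma$ for a closed steady data pattern $A$, and the empty environment is vacuously simple, regular and pure. Nothing further is needed.
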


\begin{remark}
Although here the focus is on the interpretation of data patterns, we should say a word about the interpretation of (polarised) $\mu$MALL formulas, which are a bit more general. These formulas are generated by:
  \begin{align*}
    P, Q & \hspace{.3cm} ::= \hspace{.3cm} X_P \gbar X_N^\perp \gbar 1 \gbar 0 \gbar M \oplus N \gbar M \otimes N \gbar \shpos N \gbar \mu X.P \\
    M, N & \hspace{.3cm} ::= \hspace{.3cm} P^\perp
  \end{align*}
  where the usual involutive negation hides the negative connectives and constants, through the dualities $1/\bot$, $0/\top$, $\oplus/\with$, $\otimes/\parr$, $\shpos/\shneg$, $\mu/\nu$ . The interpretation as ludics behaviours, given in~\cite{BDS}, is as follows: $1$ is interpreted as a constant behaviour $\beh C_a$, $0$ is the daimon $\daimon$, the positive connectives match their ludics counterparts, $\mu$ is interpreted as the least fixed point of a function $\phi^A_\sigma$ similarly to data patterns, and the negation corresponds to the orthogonal. Since in ludics constants and $\daimon$ are regular, and since regularity is preserved by the connectives (Proposition~\ref{prop_reg_stable}) and by orthogonality, the only thing we need in order to prove that all the behaviours interpreting $\mu$MALL formulas are regular is a generalisation of regularity stability under fixed points (for now we only have it in our particular case: Corollary~\ref{coro_reg_pur} together with Proposition~\ref{prop_reg_union}).
  
Note however that interpretations of $\mu$MALL formulas are not all pure. Indeed, as we will see in next section, orthogonality (introduced through the connective $\multimap$) does not preserve purity in general.
\end{remark}

\section{Functional Types} \label{sec-func}

In this section we define \imp{functional behaviours} which combine data behaviours with the connective $\multimap$. A behaviour of the form $\beh N \multimap \beh P$ is the set of designs such that, when interacting with a design of type $\beh N$, outputs a design of type $\beh P$; this is exactly the meaning of its definition $\beh N \multimap \beh P :=  (\beh N \otimes \beh P^\perp)^\perp$.
  We prove that some particular higher-order functional types -- where functions are taken as arguments, typically $(A \multimap B) \multimap C$ -- are exactly those who fail at being pure, and we interpret this result from a computational point of view.

\subsection{Where Impurity Arises} \label{sub_func_impur}

We have proved that data behaviours are regular and pure. However, if we introduce functional behaviours with the connective $\multimap$, purity does not hold in general. Proposition~\ref{prop_func_quasi_pure} indicates that a weaker property, quasi-purity, holds for functional types, and Proposition~\ref{prop_main} identifies exactly the cases where purity fails.

Let us write $\mathcal D$ for the set of data behaviours.

\begin{definition}
A \defined{functional behaviour} is a behaviour inductively generated by the grammar below, where $\beh P \multimap^+ \beh Q$ stands for $\shpos((\shneg \beh P) \multimap \beh Q)$.
 \[ \beh P, \beh Q ::= \beh P_0 \in \mathcal D \gbar \beh P \oplus^+ \beh Q \gbar \beh P \otimes^+ \beh Q \gbar \beh P \multimap^+ \beh Q \]
\end{definition}
 
From Propositions~\ref{prop_reg_stable}, \ref{prop_pure_stable} and \ref{prop_arrow_princ} we easily deduce the following result.
\begin{proposition} \label{prop_func_quasi_pure}
  Functional behaviours are regular and quasi-pure.
\end{proposition}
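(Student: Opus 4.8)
The plan is to proceed by structural induction on the grammar generating functional behaviours, carrying the conjunction ``regular and quasi-pure'' as the invariant. The base case is a data behaviour $\beh P_0 \in \mathcal D$, which is regular by Corollary~\ref{coro_data_reg} and pure — hence \emph{a fortiori} quasi-pure — by Corollary~\ref{coro_data_pur}. It remains to treat the three constructors, recalling that each unfolds into primitive connectives: $\beh P \oplus^+ \beh Q = (\shneg \beh P)\oplus(\shneg \beh Q)$, $\beh P \otimes^+ \beh Q = (\shneg \beh P)\otimes(\shneg \beh Q)$ and $\beh P \multimap^+ \beh Q = \shpos((\shneg \beh P)\multimap \beh Q)$.

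Regularity is the straightforward half. By the induction hypothesis $\beh P$ and $\beh Q$ are regular, and Proposition~\ref{prop_reg_stable} asserts that regularity is stable under each of $\shpos, \shneg, \oplus, \otimes, \multimap$; applying it to the unfolded forms immediately yields that $\beh P \oplus^+ \beh Q$, $\beh P \otimes^+ \beh Q$ and $\beh P \multimap^+ \beh Q$ are all regular, with no obstacle.

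For quasi-purity the decisive case is $\multimap^+$. Granting that $\shneg$ preserves both regularity and quasi-purity, the behaviour $\shneg \beh P$ is regular and quasi-pure, as is $\beh Q$ by induction; Proposition~\ref{prop_arrow_princ} then gives that $(\shneg \beh P)\multimap \beh Q$ is quasi-pure, and a final positive shift preserves quasi-purity, so $\beh P \multimap^+ \beh Q$ is quasi-pure. For $\oplus^+$ and $\otimes^+$ one would like to invoke Proposition~\ref{prop_pure_stable}, but that proposition only asserts stability of \emph{purity}, whereas the components here may genuinely fail to be pure — typically when $\beh P$ already contains a higher-order $\multimap^+$. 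Thus the argument really needs quasi-purity, not purity, to be stable under $\shpos, \shneg, \oplus, \otimes$.

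The main obstacle is therefore to upgrade Proposition~\ref{prop_pure_stable} from purity to quasi-purity for the connectives $\shpos, \shneg, \oplus, \otimes$. I expect this to follow by re-running the proof of Proposition~\ref{prop_pure_stable} with every occurrence of ``$\daimon$-ended visitable path'' replaced by ``$\daimon$-ended \emph{well-bracketed} visitable path'': the internal completeness descriptions of Theorem~\ref{thm_intcomp_all} let one project a visitable path of the composite onto visitable paths of the components, and well-bracketing is inherited by these projections, so extensibility of the projected well-bracketed paths lifts back. The only genuinely interactive step is $\otimes$, where a visitable path is a shuffle of component paths; there one uses regularity — stability of the visitable paths under shuffle — to reassemble an extension of a well-bracketed $\daimon$-ended path from extensions of its projections. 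Once quasi-purity is known to be stable under $\shpos,\shneg,\oplus,\otimes$, the $\oplus^+$ and $\otimes^+$ cases close exactly as the regularity cases did, and the induction is complete.
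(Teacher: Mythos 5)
Your proof is correct and follows essentially the same route as the paper, which disposes of this proposition in one line by citing Propositions~\ref{prop_reg_stable}, \ref{prop_pure_stable} and \ref{prop_arrow_princ}. The substantive point where you go beyond the paper is your observation that Proposition~\ref{prop_pure_stable} as stated concerns \emph{purity}, whereas the components fed into $\oplus^+$ and $\otimes^+$ (and the behaviour under the outer $\shpos$ in $\multimap^+$) may only be quasi-pure --- e.g.\ when a subcomponent already has the shape $(\beh Q_1 \multimap^+ \beh Q_2) \multimap^+ \beh R$ and is therefore impure by Proposition~\ref{prop_main}. A quasi-purity analogue of Proposition~\ref{prop_pure_stable} for $\shpos, \shneg, \oplus, \otimes$ is thus genuinely needed, and the paper leaves it implicit. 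Your sketch of that analogue is sound: since every behaviour in the induction is regular, the visitable paths of the compounds are exactly (duals of) prefixed shuffles of visitable paths of the components (Propositions~\ref{visit-sh}, \ref{visit-pl}, \ref{visit-tensor-reg}); the projection of a well-bracketed path onto a component is again well-bracketed, because the actions of the component lying strictly between an action and its justifier in the projection form a subsequence of those in the original path, all hereditarily justified by that justifier; and the extension supplied by quasi-purity of the component lifts back via shuffle-stability, the P-visibility check being the same as in the paper's proof of Proposition~\ref{prop_pure_stable} for $\otimes$. This fills a real, if routine, gap in the paper's one-line argument.
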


For next proposition, consider \defined{contexts} defined inductively as follows (where $\beh P$ is a functional behaviour):
\[\mathcal C ::= [~] \gbar \mathcal C \oplus^+ \beh P \gbar \beh P \oplus^+ \mathcal C \gbar \mathcal C \otimes^+ \beh P \gbar \beh P \otimes^+ \mathcal C \gbar \beh P \multimap^+ \mathcal C\]

\begin{proposition} \label{prop_main}
  A functional behaviour $\beh P$ is impure if and only if there exist contexts $\mathcal C_1, \mathcal C_2$ and functional behaviours $\beh Q_1, \beh Q_2, \beh R$ with $\beh R \notin \mathrm{Const}$ such that
  \[\beh P = \mathcal C_1[~\mathcal C_2[\beh Q_1 \multimap^+ \beh Q_2] \multimap^+ \beh R~]\]
\end{proposition}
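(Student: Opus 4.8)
The whole argument hinges on the fact that functional behaviours are already quasi-pure (Proposition~\ref{prop_func_quasi_pure}): every \emph{well-bracketed} $\daimon$-ended path of $\visit{P}$ is extensible. Consequently $\beh P$ is impure if and only if $\visit{P}$ contains a $\daimon$-ended maximal path that is \emph{not} well-bracketed. My plan is therefore to prove that such a pathological path exists exactly when $\beh P$ has the displayed shape, organising everything as an induction on the structure of $\beh P$ and using the contexts to track which occurrences of $\multimap^+$ are reachable from the root without crossing a domain (the \emph{accessible} arrows). Note first that the decomposition $\beh P = \mathcal C_1[\mathcal C_2[\beh Q_1 \multimap^+ \beh Q_2] \multimap^+ \beh R]$ is equivalent to saying that $\beh P$ contains an accessible arrow $\beh S \multimap^+ \beh R$ with $\beh R \notin \mathrm{Const}$ and whose domain $\beh S$ itself contains an arrow (equivalently, $\beh S$ is not a data behaviour, since any arrow in $\beh S$ forces an accessible one); here $\mathcal C_1$ records the route from the root to that arrow and $\mathcal C_2$ the route to an arrow inside $\beh S$.

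First I would set up the easy propagation. For the base case $\beh P = \beh P_0 \in \mathcal D$ there is no arrow and $\beh P$ is pure by Corollary~\ref{coro_data_pur}, matching the absence of the bad shape. For $\beh P = \beh P_1 \oplus^+ \beh P_2$ and $\beh P = \beh P_1 \otimes^+ \beh P_2$, internal completeness (Theorem~\ref{thm_intcomp_all}) describes $\visit{P}$ from $\visit{P_1}$ and $\visit{P_2}$ by prefixing with the additive/multiplicative action, so a maximal non-well-bracketed $\daimon$-path of a component lifts to one of $\beh P$ and conversely; together with stability of purity under $\oplus,\otimes$ (Proposition~\ref{prop_pure_stable}) this yields ``$\beh P$ impure iff some component impure'', mirroring $\mathcal C_1 = \mathcal C_1' \oplus^+ \beh P_2$, etc. The entire content of the proposition is thus concentrated in the case $\beh P = \beh P_1 \multimap^+ \beh P_2$.

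For $\beh P = \beh P_1 \multimap^+ \beh P_2$ I would prove the key claim: $\beh P$ is impure if and only if either the arrow is itself bad ($\beh P_1$ contains an arrow and $\beh P_2 \notin \mathrm{Const}$), or $\beh P_2$ is impure. Crucially, impurity never propagates from the domain $\beh P_1$, which is exactly why contexts must not descend into domains. The ``if'' direction splits in two: impurity of $\beh P_2$ lifts along the codomain just as in the additive/multiplicative cases (giving $\mathcal C_1 = \beh P_1 \multimap^+ \mathcal C_1'$), by relating $\visit{P}$ to $\visit{P_2}$; while for a bad arrow I would exhibit an explicit witness, namely incarnated designs $\design d \in \beh P$ and $\design e \in \beh P^\perp$ whose interaction first calls the higher-order argument (the arrow inside $\beh P_1$), opening a pending question, then plays a proper positive action of the non-constant codomain $\beh R$ out of turn — this is where $\beh R \notin \mathrm{Const}$ is used, since a constant offers no such inverting action — and finally closes the pending call with $\daimon$; one then checks that the resulting path is visitable, non-well-bracketed, and admits no proper positive extension, hence is a maximal $\daimon$-path. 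For the ``only if'' direction I would take a maximal non-well-bracketed $\daimon$-path of $\beh P$, locate the bracketing violation, and argue from the shape of designs given by internal completeness that a violation can only be created at an accessible arrow whose argument is genuinely used as a function and whose codomain branches, i.e. $\beh R \notin \mathrm{Const}$.

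The main obstacle is the $\multimap^+$ case, and within it two points. On the impurity side it is the bookkeeping needed to build the witnessing designs and to verify, against the definitions of path, P-/O-visibility and visitability, that the constructed $\daimon$-path is truly realised by designs of $\beh P$ and $\beh P^\perp$ and is non-extensible. On the purity side the delicate step is that the good arrows really stay pure: when $\beh P_1$ is a data behaviour every visitable path is well-bracketed, so quasi-purity (Proposition~\ref{prop_arrow_princ}) upgrades to purity; but when $\beh P_2 \in \mathrm{Const}$ one must instead show that \emph{every} $\daimon$-ended visitable path — even a non-well-bracketed one — remains extensible, precisely because the trivial codomain removes the out-of-order answer that could make the $\daimon$ final. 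Isolating exactly these two sufficient conditions, and proving that their simultaneous failure forces the displayed decomposition, is the technical heart of the proof.
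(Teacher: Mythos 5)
Your overall architecture coincides with the paper's: an induction on $\beh P$ whose content is concentrated in the case $\beh P = \beh P_1 \multimap^+ \beh P_2$, reduced to two purity lemmas for the ``good'' arrows ($\beh P_1 \in \mathcal D$, or $\beh P_2 \in \mathrm{Const}$) together with an explicit maximal $\daimon$-ended witness for a ``bad'' arrow, and a final lifting of impurity through $\mathcal C_1$ (the paper's Lemma~\ref{contra-pur}). The witness you describe is essentially the one the paper builds, and its non-extensibility is indeed a P-visibility failure.

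There is, however, a genuine gap in your justification of the delicate purity step. You claim that when $\beh P_1$ is a data behaviour every visitable path of $\beh P_1 \multimap^+ \beh P_2$ is well-bracketed, so that quasi-purity (Proposition~\ref{prop_arrow_princ}) upgrades to purity. That premise is false. By Corollary~\ref{visit-arrow-reg} the visitable paths of $(\shneg\beh P_1)\multimap \beh P_2$ are the duals of $\kappa_\symtensor(\visit{\shneg P_1} \shuffle \dual{\visit{P_2}})$, and a shuffle is free to interleave the two components: between an action of the $\beh P_2$-side and its justifier one may insert a block of actions coming from the $\beh P_1$-side, and these are not hereditarily justified by that justifier. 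Already for $\mathrm{\mathbb Nat} \multimap^+ \mathrm{\mathbb Nat}$ the interaction that alternately consumes one input digit and produces one output digit is visitable (regularity makes the whole shuffle visitable) yet not well-bracketed. So well-bracketedness is not what saves the data-domain case. The paper's Lemma~\ref{lem_datafunc_pure} instead argues directly that the positive extension $\kappa^+$ supplied by purity of $\beh P_2$ remains P-visible: in paths of a data behaviour every proper positive action is justified by the immediately preceding negative action, so the actions from $\shneg\beh P_1$ occur in adjacent justified pairs and the view computation never jumps over $\mathrm{just}(\kappa^+)$. This is precisely the point that breaks when $\beh P_1$ itself contains an arrow, and it is the argument your proof still needs to supply. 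Your treatment of the other good case, $\beh P_2 \in \mathrm{Const}$ (every $\daimon$-ended visitable path, well-bracketed or not, is extensible by the unique constant action, which is justified by $\kappa_\symtensor$ and hence always visible), is correct and matches Lemma~\ref{cons-arrow-pure}.
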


\subsection{Example and Discussion} \label{ex_discuss}

Proposition~\ref{prop_main} states that a functional behaviour which takes functions as argument is not pure: some of its visitable paths end with a daimon $\daimon$, and there is no possibility to extend them. In terms of proof-search, playing the daimon is like giving up; on a computational point of view, the daimon appearing at the end of an interaction expresses the sudden interruption of the computation. In order to understand why such an interruption can occur in the specific case of higher-order functions, consider the following example which illustrates the proposition.

\begin{example} \label{ex-final}  
  Let $\beh Q_1, \beh Q_2, \beh 1$ be functional behaviours, with $\beh 1 \in \mathrm{Const}$. Define $\beh{Bool} = \beh 1 \oplus^+ \beh 1$ and consider the behaviour $\beh P = (\beh Q_1 \multimap^+ \beh Q_2) \multimap^+ \beh{Bool}$: this is a type of functions which take a function as argument and output a boolean. Let $\alpha_1, \alpha_2, \beta$ be respectively the first positive action of the designs of $\beh Q_1, \beh Q_2, \beh 1$. It is possible to exhibit a design $\design p \in \beh P$ and a design $\design n \in \beh P^\perp$ such that the visitable path $\pathLL s = \interseq{\design p}{\design n}$ is $\daimon$-ended and maximal in $\visit P$, in other words $\pathLL s$ is a witness of the impurity of $\beh P$. The path $\pathLL s$ contains the actions $\alpha_1$ and $\overline{\alpha_2}$ in such a way that it cannot be extended with $\beta$ without breaking the P-visibility condition, and there is no other available action in designs of $\beh P$ to extend it. Reproducing the designs $\design p$ and $\design n$ and the path $\pathLL s$ here would be of little interest since those objects are too large to be easily readable ($\pathLL s$ visits the entire design $\design p$, which contains 11 actions). We however give an intuition in the style of game semantics: Fig.~\ref{ho-play} represents $\pathLL s$ as a legal play in a strategy of type $\beh P = (\beh Q_1 \multimap^+ \beh Q_2) \multimap^+ \beh{Bool}$ (note that only one ``side'' $\oplus_1 \shneg \beh 1$ of $\beh{Bool}$ is represented, corresponding for example to \texttt{True}, because we cannot play in both sides). This analogy is informal, it should stand as an intuition rather than as a precise correspondence with ludics; for instance, and contrary to the way it is presented in game semantics, the questions are asked on the connectives, while the answers are given in the sub-types of $\beh P$. On the right are given the actions in $\pathLL s$ corresponding to the moves played. The important thing to remark is the following: if a move $b$ corresponding to action $\beta$ were played instead of $\daimon$ at the end of this play, it would break the P-visibility of the strategy, since this move would be justified by move $q_{\shneg}$.

\begin{figure}
  \centering
  \includegraphics[scale=1]{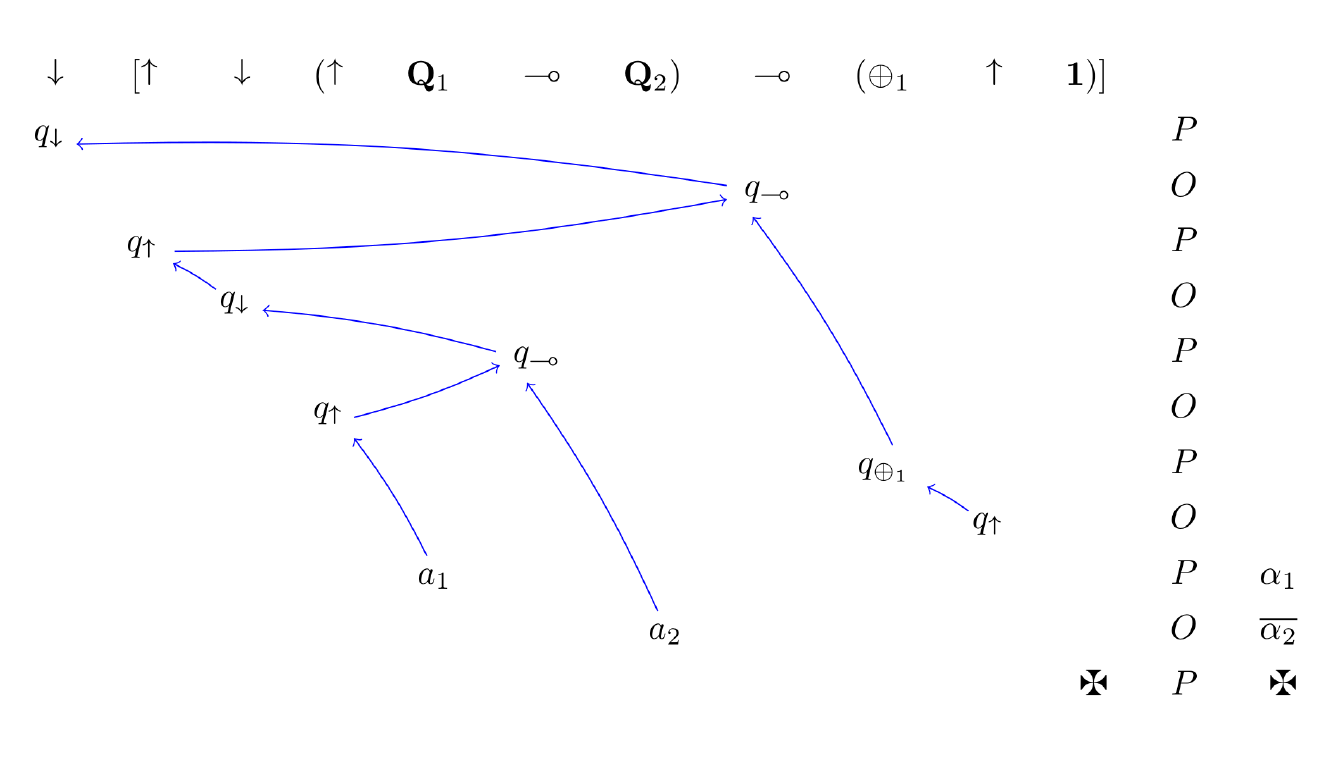}
   \caption{Representation of path $\pathLL s$ from Example~\ref{ex-final} in the style of a legal play}
   \label{ho-play}
\end{figure}
  
The computational interpretation of the $\daimon$-ended interaction between $\design p$ and $\design n$ is the following: a program $p$ of type $\beh P$ launches a child process $p'$ to compute the argument of type $\beh Q_1 \to \beh Q_2$, but $p$ starts to give a result in $\beh{Bool}$ before the execution of $p'$ terminates, leading to a situation where $p$ cannot compute the whole data in $\beh{Bool}$. The interaction outputs $\daimon$, i.e., the answer given in $\beh{Bool}$ by $p$ is incomplete.

Moreover by Proposition~\ref{prop_func_quasi_pure} functional behaviours are quasi-pure, therefore the maximal $\daimon$-ended visitable paths are necessarily not well-bracketed. This is indeed the case of $\pathLL s$: remark for example that the move $q_{\oplus_1}$ appears between $a_1$ and its justification $q_{\shneg}$ in the sequence, but $q_{\oplus_1}$ is not hereditarily justified by $q_{\shneg}$.
In HO games, well-bracketedness is a well studied notion, and relaxing it introduces control operators in program. If we extend such an argument to ludics, this would mean that the appearance of $\daimon$ in the execution of higher-order functions can only happen in the case of programs with control operators such as \imp{jumps}, i.e. programs which are not purely functional.
\end{example}

\section{Conclusion}

This article is a contribution to the exploration of the behaviours of linear ludics in a computational perspective. Our focus is on the behaviours representing data types and functional types. Inductive data types are interpreted using the logical connectives constructions and a least fixed point operation. Adopting a constructive approach, we provide an internal completeness result for fixed points, which unveils the structure of data behaviours. This leads us to proving that such behaviours are regular -- the key notion for the characterisation of MALL in ludics --  and pure -- that is, type safe.
But behaviours interpreting types of functions taking functions as argument are impure; for well-bracketed interactions, corresponding to the evaluation of purely functional programs, safety is however guaranteed.

\subparagraph{Further Work} Two directions for future research arise naturally:
\begin{itemize}
\item Extending our study to greatest fixed points $\nu X.A$, i.e., coinduction, is the next objective. Knaster--Tarski ensures that such greatest fixed point behaviours exist \cite{BDS}, but Kleene fixed point theorem does not apply here, hence we cannot find an explicit form for coinductive behaviours the same way we did for the inductive ones. However it is intuitively clear that, compared to least fixed points, greatest ones add the infinite ``limit'' designs in (the incarnation of) behaviours. For example, if $\mathbb Nat_{\omega} = \nu X. (1 \oplus X)$ then we should have $|\interpret{\mathbb Nat_{\omega}}| = |\interpret{\mathbb Nat}| \cup \{\design d_\omega\}$ where $\design d_\omega = \mathrm{succ}(\design d_\omega) = x_0|\symplus_2\langle\shneg(x).{\design d_\omega}^{x}\rangle$.
\item Another direction would be to get a complete characterisation of $\mu$MALL in ludics, by proving that a behaviour is regular -- and possibly satisfying a supplementary condition -- if and only if it is the denotation of a $\mu$MALL formula.
\end{itemize}

\subparagraph{Acknowledgement} I thank Claudia Faggian, Christophe Fouqueré, Thomas Seiller and the anonymous referees for their wise and helpful comments.

\bibliographystyle{plainurl}
\bibliography{references}

\appendix

In the appendix, we adopt Barendregt's variable convention; that is, among objects in a given context, we will always assume that:
\begin{enumerate}
\item no variable appears both free and bound, and
\item bound variables have all distinct names.
\end{enumerate}
This affects designs, multi-designs, representations of designs as trees, and paths.

\section{Proof of Proposition~\ref{perp-path0}} \label{multi}

The purpose of this section is to lift the framework to \imp{multi-designs}, in order to prove properties of the path recording the interaction between multi-designs (thus in particular, between designs). We show:
\begin{itemize}
\item the existence and uniqueness of the interaction path between two orthogonal multi-designs (Proposition~\ref{inter-unique}),
\item the equivalence between the existence of such a path and the orthogonality of two multi-designs (Proposition~\ref{perp-path}, a generalisation of Proposition~\ref{perp-path0}),
\item an associativity theorem for paths (Proposition~\ref{asso-path}).
\end{itemize}
These results are needed for next section. Their proofs require a lot of supplementary formalism, so the reader intuitively convinced may jump directly to next section.

\subsection{Multi-Designs}

The notion of \imp{multi-design} introduced below generalises the one of \imp{anti-design} given by Terui \cite{Terui}, thus in particular it generalises designs. Interaction between two \imp{compatible} multi-designs $\design D$ and $\design E$ corresponds to eliminating the cuts in another multi-design $\cut{\design D}{\design E}$. Several well-known notions of Ludics can be extended to this setting.

\begin{definition} \label{multi-des} \ 
  \begin{itemize}
  \item A \defined{negative multi-design} is a set $\{(x_1, \design n_1), \dots, (x_n, \design n_n)\}$ where $x_1 , \dots , x_n$ are distinct variables and $\design n_1, \dots, \design n_n$ are negative designs, such that for all ${1 \le i \le n}$, $\mathrm{fv}(\design n_i) \cap \{x_1 , \dots , x_n\} = \emptyset$, and for all $j \neq i$, $\mathrm{fv}(\design n_i) \cap \mathrm{fv}(\design n_j) = \emptyset$.
  \item A \defined{positive multi-design} is a set $\{\design p, (x_1, \design n_1), \dots, (x_n, \design n_n)\}$ where $\{(x_1, \design n_1), \dots, (x_n, \design n_n)\}$ is a negative multi-design and $\design p$ is a positive design such that $\mathrm{fv}(\design p) \cap \{x_1, \dots, x_n\} = \emptyset$, and for all $1 \le i \le n$, $\mathrm{fv}(\design p) \cap \mathrm{fv}(\design n_i) = \emptyset$.
  \end{itemize}
\end{definition}

We will use $\design D, \design E, \dots$ to denote multi-designs of any polarity, $\design M, \design N, \dots$ for negative ones and $\design P, \design Q, \dots$ for positive ones. A pair $(x, \design n)$ in a multi-design will be denoted by $\design n/x$ or $(\design n/x)$; hence a negative multi-design will be written $\{\design n_1/x_1, \dots, \design n_n/x_n\}$ (or even $\{\vect{\design n/x}\}$), a positive one $\{\design p, \design n_1/x_1, \dots, \design n_n/x_n\}$, and we will write $(\design n/x) \in \design D$ instead of $(x, \design n) \in \design D$.
This notation makes the parallel with substitution: if $\design N = \{\design n_1/x_1, \dots, \design n_n/x_n\}$ and $\design d$ is a design, then we will allow to write $\design d[\design N]$ for the substitution $\design d[\design n_1/x_1, \dots, \design n_n/x_n]$. By abuse, we might also write $\design n \in \design D$ when the variable associated to $\design n$ in the multi-design $\design D$ does not matter; thus when writing ``let $\design d \in \design D$'', the design $\design d$ can be either positive or negative associated with a variable in $\design D$.

A design can be viewed as a multi-design: a positive design $\design p$ corresponds to the positive multi-design $\{\design p\}$, and a negative design $\design n$ to the negative multi-design $\{\design n/x_0\}$, where $x_0$ is the same distinguished variable we introduced for atomic designs. Notations $\design p$ and $\design n$ will be used instead of $\{\design p\}$ and $\{\design n/x_0\}$ respectively.

Note that if $\design D$ and $\design E$ are multi-designs, $\design D \cup \design E$ is not always a multi-design.

\begin{definition} \label{multi-normal}
  Let $\design D$ be a multi-design. Its \defined{normal form} is the cut-free multi-design defined by
  \[\normalisation{\design D} = \setst{(\normalisation{\design n}/x)}{(\design n/x) \in \design D} \cup \setst{\normalisation{\design p}}{\design p \in \design D}\]
\end{definition}

\begin{definition}
  Let $\design D$ be a multi-design.
  \begin{itemize}
  \item The \defined{free variables} of $\design D$ are $\mathrm{fv}(\design D) = \bigcup_{\design d \in \design D}\mathrm{fv}(\design d)$
  \item The \defined{negative places} of $\design D$ are $\mathrm{np}(\design D) = \setst{x}{\exists \design n \ (\design n/x) \in \design D}$.
  \end{itemize}
\end{definition}

In Definition~\ref{multi-des}, the condition ``for all ${1 \le i \le n}$, $\mathrm{fv}(\design n_i) \cap \{x_1 , \dots , x_n\} = \emptyset$'' (adding the similar condition for $\design p$ in the positive case) can thus be rephrased as ``$\mathrm{fv}(\design D) \cap \mathrm{np}(\design D) = \emptyset$''.
When two multi-designs $\design D$ and $\design E$ interact, this condition will ensure that a substitution specified in $\design D$ or in $\design E$ creates a cut between a design from $\design D$ and a design from $\design E$, and never between two designs on the same side. This is exactly the form of interaction we want in the following: an interaction with two distinct sides. But in order to talk about interaction between two multi-designs, we must first determine when two multi-designs are \imp{compatible}, i.e., when we can define substitution between them in a unique way, without ambiguity, which is not the case in general.

\begin{definition}
  Let $\design D$ and $\design E$ be multi-designs.
  \begin{itemize}
  \item $\design D$ and $\design E$ are \defined{compatible} if they satisfy the following conditions:
    \begin{itemize}
    \item $\mathrm{fv}(\design D) \cap \mathrm{fv}(\design E) = \mathrm{np}(\design D) \cap \mathrm{np}(\design E) = \emptyset$
    \item either they are both negative and there exists $x \in \mathrm{np}(\design D) \cup \mathrm{np}(\design E)$ such that $x \notin \mathrm{fv}(\design D) \cup \mathrm{fv}(\design E)$, or they are of opposite polarities
    \end{itemize}
  \item $\design D$ and $\design E$ are \defined{closed-compatible} if they are of opposite polarities, compatible, and satisfying $\mathrm{fv}(\design D) = \mathrm{np}(\design E)$ and $\mathrm{fv}(\design E) = \mathrm{np}(\design D)$.
  \end{itemize}
\end{definition}

Intuitively, \imp{compatible} means that we are able to define the multi-design of the interaction between $\design D$ and $\design E$, and \imp{closed-compatible} means that this multi-design is a closed design.

\begin{definition} \label{cut}
  Let $\design D$ and $\design E$ be compatible multi-designs. $\design{Cut}_{\design D|\design E}$ is a multi-design defined by induction on the number of designs in $\design E$ by:
    \begin{align}
      \label{1} \design{Cut}_{\design D|\emptyset} & = \design D \\
      \label{2} \design{Cut}_{\design D|\design E' \cup \{\design p\}} & = \design{Cut}_{(\design D \setminus S) \cup \{\design p[S]\}|\design E'} \\
      \label{3} \design{Cut}_{\design D|\design E' \cup \{\design n/x\}} & = \design{Cut}_{(\design D \setminus S) \cup \{\design n[S]/x\}|\design E'} & \mbox{ if } x \notin \mathrm{fv}(\design D) \\
      \label{4} \design{Cut}_{\design D|\design E' \cup \{\design n/x\}} & = \design{Cut}_{(\design D \setminus S)[\design n[S]/x]|\design E'} & \mbox{ if } x \in \mathrm{fv}(\design D)
    \end{align}
    \begin{tabular}{ll}
    where $S$ & $= \setst{(\design m/y) \in \design D}{y \in \mathrm{fv}(\design p)}$ in (\ref{2}) \\
    & $= \setst{(\design m/y) \in \design D}{y \in \mathrm{fv}(\design n)}$ in (\ref{3}) and (\ref{4}).
    \end{tabular}
\end{definition}

The successive pairs of compatible (resp. closed-compatible) multi-designs stay compatible (resp. closed-compatible) after one step of the definition, thus this is well defined. Moreover, if $\design D$ and $\design E$ are closed-compatible then, according to the base case, $\design{Cut}_{\design D|\design E}$ will be a closed design.
In particular, if $\design p$ and $\design n$ are atomic designs then $\design{Cut}_{\design p|\design n} = \design p[\design n/x_0]$.

In order to prove an \imp{associativity} theorem for multi-designs, recall first the original theorem on designs:
\begin{theorem}[Associativity] \label{assoc}
  Let $\design d$ be a design and $\design n_1, \dots, \design n_k$ be negative designs.
  \[\normalisation{\design d[\design n_1/y_1, \dots, \design n_k/y_k]} = \normalisation{\normalisation{\design d}[\normalisation{\design n_1}/y_1, \dots, \normalisation{\design n_k}/y_k]}.\]
\end{theorem}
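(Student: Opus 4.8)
The plan is to prove the associativity theorem by splitting it into two independent commutation properties of normalisation with substitution, and then composing them. Writing $[\vect{\design n}/\vect y]$ for the simultaneous capture-free substitution $[\design n_1/y_1, \dots, \design n_k/y_k]$, the two properties I would isolate are
\[\normalisation{\design d[\vect{\design n}/\vect y]} = \normalisation{\normalisation{\design d}[\vect{\design n}/\vect y]} \qquad (\mathrm{I})\]
\[\normalisation{\design c[\vect{\design n}/\vect y]} = \normalisation{\design c[\vect{\normalisation{\design n}}/\vect y]} \qquad (\mathrm{II})\]
where (II) is stated for an arbitrary design $\design c$. Granting both, the theorem follows by first applying (I) to pull the normalisation of $\design d$ outside, and then (II) with $\design c := \normalisation{\design d}$ to replace each $\design n_i$ by $\normalisation{\design n_i}$. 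The only real work is in (I) and (II). Throughout, I would keep in mind that designs are defined coinductively and that $\normalisation{\cdot}$ is a corecursive definition whose value may be an infinite cut-free design; hence arguments must not rely on induction on the number of reduction steps, and I would instead use the uniqueness of the normal form (quoted in the excerpt) together with coinduction.

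For (I), the intuition is that the cuts already occurring inside $\design d$ and the new cuts created at the addresses $y_i$ by the substitution occupy independent positions, so they may be fired in any order. Concretely, I would first check a one-step commutation: the basic reduction
\[\textstyle\sum_{a} a(\vect x).\design p_a ~|~ \overline b\langle \vect{\design n}\rangle \leadsto \design p_b[\vect{\design n}/\vect x]\]
commutes with an outer substitution $[\vect{\design n}/\vect y]$ at disjoint addresses, i.e.\ $\design p \leadsto \design p'$ implies $\design p[\vect{\design n}/\vect y] \leadsto \design p'[\vect{\design n}/\vect y]$, since the redex is internal and the substituted variables $y_i$ are distinct from those bound by the fired action (linearity and $\alpha$-renaming guarantee no interference). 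This gives a local-confluence square between ``reducing a cut internal to $\design d$'' and ``reducing a cut touching a $y_i$''. Lifting this to full normal forms via the uniqueness statement yields (I): reducing $\design d$'s internal cuts first cannot change the final cut-free result, nor can it turn convergence into divergence or vice versa.

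For (II), which is the subtler direction, I would argue coinductively on the resulting cut-free designs. Two cut-free designs are equal exactly when they have the same head action and, recursively, equal immediate subdesigns; so it suffices to exhibit a bisimulation relating $\normalisation{\design c[\vect{\design n}/\vect y]}$ and $\normalisation{\design c[\vect{\normalisation{\design n}}/\vect y]}$. I would case-split on the behaviour of the head: whether $\design c[\vect{\design n}/\vect y] \Downarrow \daimon$, $\Downarrow \posdes{x}{a}{\vect{\design m}}$, or $\Uparrow$. The key point is that whenever the computation descends into a copy of some $\design n_i$ placed at $y_i$, the head of $\design n_i$ is reached by exactly the same reductions that compute the head of $\normalisation{\design n_i}$, and the immediate subdesigns again match an instance of (II), so the bisimulation closes corecursively.

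The main obstacle I anticipate is precisely the coinductive, possibly non-terminating nature of normalisation: one must match the divergence case ($\Uparrow$, producing $\Omega$) on the two sides, showing that $\design c[\vect{\design n}/\vect y]$ diverges if and only if $\design c[\vect{\normalisation{\design n}}/\vect y]$ does. Neither direction is immediate, because pre-normalising an argument $\design n_i$ can shorten an interaction (if $\normalisation{\design n_i}$ is reached in fewer steps) yet must neither create nor destroy an infinite reduction. This is exactly where I would lean on the uniqueness of the normal form and on the commutation established for (I), treating divergence as the ``limit'' case of the finite-approximation argument, so that convergence and divergence are transported faithfully across the two substitution orders.
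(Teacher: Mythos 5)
The paper does not actually prove Theorem~\ref{assoc}: it is imported from the literature (Girard, and in this exact form Basaldella--Terui), so there is no in-paper proof to compare yours against. Judged on its own, your decomposition into (I) (normalise the main design first) and (II) (normalise the arguments first) is a sensible and standard way to organise the statement, and the final composition step is valid \emph{if} both lemmas hold for all designs. But the two lemmas are not independent in the way your plan assumes. In the only interesting case of (I) -- when $\design d$ head-normalises to $\posdes{y_i}{a}{\vect{\design m}}$ -- the comparison you are left with is between $\normalisation{\design p_a[\vect{\design m[\sigma]}/\vect x]}$ and $\normalisation{\design p_a[\vect{\normalisation{\design m}[\sigma]}/\vect x]}$, which is a (II)-shaped goal; and symmetrically, the key case of (II) -- when the head variable of $\design c$ is some $y_i$, so that the reduction dives into a subdesign $\design p_a^{(i)}$ of $\design n_i$ on one side and into $\normalisation{\design p_a^{(i)}}$ on the other -- is an instance of the full associativity statement again. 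So the decomposition does not reduce the problem; one must prove the full statement by a single simultaneous argument.

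The second, more serious gap is the tool you invoke to close that argument. Uniqueness of the normal form does essentially no work here: reduction in computational ludics is deterministic head reduction and $\normalisation{\cdot}$ is a function, so uniqueness is automatic for a single term and gives you no leverage for relating the reduction sequences of the two \emph{different} terms $\design d[\vect{\design n}/\vect y]$ and $\normalisation{\design d}[\vect{\normalisation{\design n}}/\vect y]$. What is needed is a simulation between these two reduction sequences, and your proposed bisimulation is not guarded: between one head action of the normal form and the next, an unbounded -- possibly infinite -- number of $\leadsto$ steps may occur, which is precisely the divergence case you flag at the end and then defer. Matching $\Uparrow$ on both sides is the crux of the theorem (it is why associativity in ludics is a genuine theorem and not a formality), and ``treating divergence as the limit case'' needs an actual mechanism -- e.g.\ finite approximants ordered by $\preceq$ together with monotonicity and a continuity argument, or an explicit step-indexed simulation -- none of which is supplied. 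As it stands the proposal is a plausible plan with the hard content still missing.
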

This result was first established by Girard \cite{Girard1}. The theorem, in the form given above, was proved by Basaldella and Terui \cite{BT}. Associativity naturally extends to multi-designs as follows:

\begin{theorem}[Multi-associativity] \label{assoc2}
Let $\design D$ be a multi-design and $\design n_1, \dots, \design n_k$ be negative designs.
\[\normalisation{\design D[\design n_1/y_1, \dots, \design n_k/y_k]} = \normalisation{\normalisation{\design D}[\normalisation{\design n_1}/y_1, \dots, \normalisation{\design n_k}/y_k]}.\]
\end{theorem}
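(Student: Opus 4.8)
The plan is to reduce the statement to the single-design associativity theorem (Theorem~\ref{assoc}) by exploiting the fact that every operation involved — substitution by the $\design n_i$ and normalisation — acts on a multi-design \emph{componentwise}. Write $\design D = \{\design d_1, \dots, \design d_m\}$, where each $\design d_j$ is either the distinguished positive design or a pair $\design n/x$ (in the latter case I regard substitution as acting on $\design n$, leaving the slot $x$ untouched). The defining conditions of a multi-design (Definition~\ref{multi-des}) guarantee $\mathrm{fv}(\design d_i) \cap \mathrm{fv}(\design d_j) = \emptyset$ for $i \neq j$, so each substituted variable $y_\ell$ lies in the free variables of \emph{at most one} component. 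Consequently the multi-design substitution decomposes as
\[\design D[\design n_1/y_1, \dots, \design n_k/y_k] = \setst{\design d_j[\design n_1/y_1, \dots, \design n_k/y_k]}{1 \le j \le m},\]
each inner substitution being vacuous on the variables $y_\ell$ absent from $\design d_j$.

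First I would record the only genuinely needed auxiliary fact: normalisation does not create free variables, i.e.\ $\mathrm{fv}(\normalisation{\design d}) \subseteq \mathrm{fv}(\design d)$ for every design $\design d$. This follows by inspection of the definition of normal form, since $\normalisation{\cdot}$ only replaces cuts and their residuals by subdesigns already present, or by $\daimon$ and $\Omega$, never introducing a fresh variable. With this in hand, the componentwise free-variable disjointness of $\design D$ is inherited by $\normalisation{\design D}$, so $\normalisation{\design D}$ is again a bona fide multi-design and the substitution $\normalisation{\design D}[\normalisation{\design n_1}/y_1, \dots, \normalisation{\design n_k}/y_k]$ likewise splits componentwise. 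Moreover, since the normal form of a multi-design is defined componentwise (Definition~\ref{multi-normal}), both sides of the claimed identity are obtained by assembling, over $1 \le j \le m$, the respective sides of the single-design equation
\[\normalisation{\design d_j[\design n_1/y_1, \dots, \design n_k/y_k]} = \normalisation{\normalisation{\design d_j}[\normalisation{\design n_1}/y_1, \dots, \normalisation{\design n_k}/y_k]}.\]

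This last equation is exactly Theorem~\ref{assoc} applied to the component $\design d_j$ (for the positive component directly; for a pair $\design n/x$ the slot $x$ is inert under both substitution and normalisation, so it reduces to the identity for the negative design $\design n$). Reassembling the componentwise equalities via Definition~\ref{multi-normal} yields the theorem. The only points requiring care — and hence where I expect the real, though shallow, work to lie — are the bookkeeping that each $y_\ell$ meets a single component and that this persists after normalisation (which is precisely where $\mathrm{fv}(\normalisation{\design d}) \subseteq \mathrm{fv}(\design d)$ is used), together with the observation that substituting for a variable absent from a given component is harmless, so that applying the \emph{full} substitution $[\design n_1/y_1, \dots, \design n_k/y_k]$ componentwise agrees with applying only the relevant restriction on either side of the equation.
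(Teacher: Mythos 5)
Your proof is correct and follows essentially the same route as the paper, which simply observes that the normal form of a multi-design is defined componentwise (Definition~\ref{multi-normal}) and invokes single-design associativity (Theorem~\ref{assoc}) on each component. The extra bookkeeping you supply — pairwise disjointness of the components' free variables, $\mathrm{fv}(\normalisation{\design d}) \subseteq \mathrm{fv}(\design d)$, and the harmlessness of substituting for absent variables — is a sound elaboration of what the paper leaves implicit in its one-line proof.
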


\begin{proof}
Immediate from the definition of the normal form of a multi-design (Definition~\ref{multi-normal}) and simple associativity (Theorem~\ref{assoc}).
\end{proof}

\begin{corollary} \label{multi-assoc}
\[\normalisation{\cut{\design D}{\design E}} = \normalisation{\cut{\normalisation{\design D}}{\normalisation{\design E}}}\]
\end{corollary}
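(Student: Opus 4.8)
The plan is to read the corollary as the multi-design incarnation of simple associativity: when $\design D$ and $\design E$ are single atomic designs $\design p$ and $\design n$, the operation $\cut{\design D}{\design E}$ is just the substitution $\design p[\design n/x_0]$, and the claimed equation is then literally Theorem~\ref{assoc}. Since Definition~\ref{cut} builds $\cut{\design D}{\design E}$ as a finite cascade of such substitutions (one per design of $\design E$), I would prove the corollary by induction on the number of designs in $\design E$, using Theorem~\ref{assoc2} to commute normalisation across one step and the induction hypothesis to handle the rest. The bookkeeping is aided by the fact that normalisation acts component-wise (Definition~\ref{multi-normal}) and preserves the index set of negative places, so the decomposition $\design E = \design E' \cup \{\design d\}$ matches $\normalisation{\design E} = \normalisation{\design E'} \cup \{\normalisation{\design d}\}$ and the same component can be consumed on both sides.

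For the base case $\design E = \emptyset$ we have $\cut{\design D}{\emptyset} = \design D$ and $\normalisation{\emptyset} = \emptyset$, so the equation reduces to $\normalisation{\design D} = \normalisation{\normalisation{\design D}}$, which holds because $\normalisation{\design D}$ is cut-free (Definition~\ref{multi-normal}) and normalisation is idempotent on cut-free multi-designs. For the inductive step I would select the design of $\design E$ consumed by the first clause of Definition~\ref{cut} and split on clauses (\ref{2}), (\ref{3}), (\ref{4}). Each clause replaces $\design D$ by a multi-design obtained through a single substitution $\design d[S]$, where $\design d$ is the consumed design and $S$ is the sub-multi-design of $\design D$ feeding it. Applying Theorem~\ref{assoc2} to this one substitution gives $\normalisation{\design d[S]} = \normalisation{\normalisation{\design d}[\normalisation{S}]}$, and applying the induction hypothesis to the pair with one fewer design in the second argument then collapses the two sides to the same normal form.

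The delicate point — and the step I expect to be the main obstacle — is that $S$ is read off from the free variables of $\design d$, whereas on the normalised side the corresponding set $S'$ is read off from the free variables of $\normalisation{\design d}$, and normalisation can erase free variables (a subdesign reducing to $\Omega$ drops the variables occurring in it). Consequently the clause of Definition~\ref{cut} triggered on the two sides, and the sets $S$ and $S'$, need not coincide: a piece $(\design m/y)$ of $\design D$ may be pulled into $S$ while being left in place on the normalised side. Reconciling this is exactly what single-step associativity provides: substituting a design into a subterm that normalises to $\Omega$ does not change the overall normal form, so the substitutions present in $S$ but absent from $S'$ are invisible after normalisation and are absorbed by Theorem~\ref{assoc2}. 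Making this precise — tracking which pieces of $\design D$ migrate into $S$ versus stay, and checking that the two resulting cut multi-designs have equal normal forms rather than being syntactically equal — is where the real work of the proof concentrates; everything else is a routine transcription of Definition~\ref{cut} into substitutions.
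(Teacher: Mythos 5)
Your proof follows the paper's argument exactly: induction on the number of designs in $\design E$, one case per clause of Definition~\ref{cut}, with Theorem~\ref{assoc2} used to commute normalisation across the single substitution $\design d[S]$ and the induction hypothesis closing both ends, plus the trivial base case $\design E=\emptyset$. The delicate point you flag --- that $\mathrm{fv}(\normalisation{\design d})$ can be strictly smaller than $\mathrm{fv}(\design d)$, so the triggering sets $S$ and $S'$ need not coincide and an inert component $(\design m/y)$ may be consumed on one side but left in place as garbage on the other --- is genuine, but the paper's own proof passes over it silently in its final step (justified only ``by Def.~\ref{multi-normal} and \ref{cut}''), so your plan is, if anything, more careful than the source on the one step that actually needs care.
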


\begin{proof}
By induction on $\design E$:
\begin{itemize}
\item If $\design E = \emptyset$ then $\normalisation{\design{Cut}_{\design D|\emptyset}} = \normalisation{\design D} = \normalisation{\cut{\normalisation{\design D}}{\emptyset}} = \normalisation{\cut{\normalisation{\design D}}{\normalisation{\emptyset}}}$.
\item If $\design E = \design E' \cup \{\design p\}$, let $S = \{\design m_1/y_1, \dots, \design m_k/y_k\} = \setst{(\design m/y) \in \design D}{y \in \mathrm{fv}(\design p)}$.
  By definitions of the normal form of multi-designs (Definition~\ref{multi-normal}) and of $\cut{.}{.}$ (Definition~\ref{cut}), and using associativity (Theorem~\ref{assoc2}), we have:
  \begin{align*}
    \normalisation{\design{Cut}_{\design D|\design E}} & = \normalisation{\design{Cut}_{(\design D \setminus S) \cup \{\design p[S]\}|\design E'}} & \mbox{by Def.~\ref{cut}} \\
    & = \normalisation{\design{Cut}_{\normalisation{(\design D \setminus S) \cup \{\design p[S]\}}|\normalisation{\design E'}}} & \mbox{ by induction hypothesis}\\
    & = \normalisation{\design{Cut}_{\normalisation{\design D \setminus S} \cup \{\normalisation{\normalisation{\design p}[\normalisation{\design m_1}/y_1, \dots, \normalisation{\design m_k}/y_k]}\}|\normalisation{\design E'}}} & \mbox{by Def.~\ref{multi-normal} and Thm.~\ref{assoc2}} \\ 
    & = \normalisation{\design{Cut}_{\normalisation{\normalisation{\design D \setminus S} \cup \{\normalisation{\design p}[\normalisation{\design m_1}/y_1, \dots, \normalisation{\design m_k}/y_k]\}}|\normalisation{\design E'}}} & \mbox{by Def.~\ref{multi-normal}} \\
    & = \normalisation{\design{Cut}_{\normalisation{\design D \setminus S} \cup \{\normalisation{\design p}[\normalisation{\design m_1}/y_1, \dots, \normalisation{\design m_k}/y_k]\}|\normalisation{\design E'}}} & \mbox{ by induction hypothesis}\\
    & = \normalisation{\cut{\normalisation{\design D}}{\normalisation{\design E}}} & \mbox{by Def.~\ref{multi-normal} and \ref{cut}}
  \end{align*}
\item If $\design E = \design E' \cup \{\design n/x\}$ with $x \notin \mathrm{fv}(\design D)$, similar as above with $S = \setst{(\design m/y) \in \design D}{y \in \mathrm{fv}(\design n)}$.
\item If $\design E = \design E' \cup \{\design n/x\}$ with  $x \in \mathrm{fv}(\design D)$, let $S = \{\design m_1/y_1, \dots, \design m_k/y_k\} = \setst{(\design m/y) \in \design D}{y \in \mathrm{fv}(\design n)}$.
  We have:
    \begin{align*}
    \normalisation{\design{Cut}_{\design D|\design E}} & = \normalisation{\design{Cut}_{(\design D \setminus S)[\design n[S]/x]|\design E'}} & \mbox{by Def.~\ref{cut}} \\
    & = \normalisation{\design{Cut}_{\normalisation{(\design D \setminus S)[\design n[S]/x]}|\normalisation{\design E'}}} & \mbox{ by induction hypothesis}\\
    & = \normalisation{\design{Cut}_{\normalisation{\normalisation{\design D \setminus S}[\normalisation{\normalisation{\design n}[\normalisation{\design m_1}/y_1, \dots, \normalisation{\design m_k}/y_k]}/x]}|\normalisation{\design E'}}} & \mbox{by using Thm.~\ref{assoc2} twice} \\ 
    & = \normalisation{\design{Cut}_{\normalisation{\normalisation{\design D \setminus S}[\normalisation{\design n}[\normalisation{\design m_1}/y_1, \dots, \normalisation{\design m_k}/y_k]/x]}|\normalisation{\design E'}}} & \mbox{by Thm.~\ref{assoc2}} \\
    & =  \normalisation{\design{Cut}_{\{\normalisation{\design D \setminus S}[\normalisation{\design n}[\normalisation{\design m_1}/y_1, \dots, \normalisation{\design m_k}/y_k]/x]|\normalisation{\design E'}}} & \mbox{ by induction hypothesis}\\
    & = \normalisation{\cut{\normalisation{\design D}}{\normalisation{\design E}}} & \mbox{by Def.~\ref{multi-normal} and \ref{cut}}
  \end{align*}
\end{itemize}
\end{proof}

\begin{lemma} \label{cut-commut}
$\design{Cut}_{\design D|\design E} = \design{Cut}_{\design E|\design D}$.
\end{lemma}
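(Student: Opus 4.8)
The plan is to prove the statement by exhibiting an explicit, \emph{manifestly symmetric} description of $\cut{\design D}{\design E}$ and then checking that Definition~\ref{cut} computes it; commutativity becomes immediate. First I would record the consequence of the disjointness conditions built into the notion of multi-design, together with the linearity of designs: since $\mathrm{fv}(\design D) \cap \mathrm{np}(\design D) = \emptyset$ and symmetrically for $\design E$, while $\mathrm{fv}(\design D)\cap\mathrm{fv}(\design E) = \mathrm{np}(\design D)\cap\mathrm{np}(\design E) = \emptyset$, every negative place occurring in $\design D \cup \design E$ is a free variable of \emph{at most one} design, and that design necessarily lies on the opposite side. Hence the relation ``$\design d$ has a free variable that is the place of $\design d'$'' defines a forest on $\design D \cup \design E$ whose edges alternate sides, so the hereditary substitution $\widehat{\design d}$ — obtained by replacing, recursively, each free variable $y$ of $\design d$ that is the place of some opposite-side $\design d'$ by $\widehat{\design d'}$ — is well defined and terminating.

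Next I would define the candidate symmetric value $\Theta(\design D,\design E)$ as the multi-design of fully substituted \emph{surviving} designs: the positive design $\widehat{\design p}$ if one is present, together with every $\widehat{\design n}/x$ such that the place $x$ is \emph{unmatched}, i.e.\ $x \notin \mathrm{fv}(\design D)\cup\mathrm{fv}(\design E)$. This description is symmetric in $\design D$ and $\design E$ by construction, so it suffices to establish $\cut{\design D}{\design E} = \Theta(\design D,\design E)$, after which the symmetry of $\Theta$ gives $\cut{\design D}{\design E} = \Theta(\design D,\design E) = \Theta(\design E,\design D) = \cut{\design E}{\design D}$.

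The core of the argument is this equality, which I would prove by induction on the number of designs in $\design E$, following the clauses of Definition~\ref{cut}. For the base case $\design E = \emptyset$ one has $\cut{\design D}{\emptyset} = \design D$; here no place of $\design D$ is matched and each $\widehat{\design d} = \design d$, so $\Theta(\design D,\emptyset) = \design D$ as well. In the inductive step, peeling a design $\design d$ off $\design E$ collects $S = \setst{(\design m/y)\in\design D}{y \in \mathrm{fv}(\design d)}$ and forms $\design d[S]$; I would check that $\design d[S]$ is exactly the first layer of $\widehat{\design d}$, and that the new accumulator — $(\design D\setminus S)\cup\{\design d[S]\}$ in clauses (\ref{2}),(\ref{3}) or $(\design D\setminus S)[\design d[S]/x]$ in clause (\ref{4}) — has the same surviving designs and the same hereditary substitutions as the original pair, so that the induction hypothesis identifies the result with $\Theta$.

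The delicate point, and the main obstacle, is the case distinction between clauses (\ref{3}) and (\ref{4}): whether the place $x$ of the peeled negative design is already free in the current accumulator records precisely whether $x$ is matched, and forming the cut in clause (\ref{4}) is what removes $x$ as a surviving place. One must verify — again using the disjointness conditions to rule out any spurious reintroduction of $x$ as a free variable by the intermediate substitutions $[S]$ — that this purely local test agrees with the global notion of ``matched'' used in $\Theta$, so that matched places become cuts while unmatched ones survive. (A more computational alternative, closer to the surrounding proofs, is a direct double induction on $|\design D|+|\design E|$ together with a confluence sublemma asserting that a single matched pair can be reduced from either orientation with equal outcome; this avoids introducing $\widehat{\cdot}$ but relocates the same bookkeeping into the confluence step.)
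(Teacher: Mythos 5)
Your proof is essentially correct, but it takes a genuinely different route from the paper's. The paper proves $\cut{\design D}{\design E} = \cut{\design E}{\design D}$ directly, by induction on the number of variables in $(\mathrm{fv}(\design D) \cap \mathrm{np}(\design E)) \cup (\mathrm{fv}(\design E) \cap \mathrm{np}(\design D))$, i.e.\ on the number of matched places still to be cut: the base case gives $\design D \cup \design E$ on both sides, and the inductive step peels one design off one side, applies the induction hypothesis to swap the arguments, and then re-reads the resulting expression backwards through the clauses of Definition~\ref{cut}, with a case split on whether the set $S$ of designs pulled across is empty. Your argument instead exhibits a manifestly symmetric invariant $\Theta(\design D,\design E)$ built from hereditary substitution along the alternating justification forest, and shows that the recursion of Definition~\ref{cut} computes it; symmetry of $\cut{\cdot}{\cdot}$ is then immediate. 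What your approach buys is conceptual clarity and, as a by-product, the fact that $\cut{\design D}{\design E}$ is independent of the order in which designs are peeled off $\design E$ (something Definition~\ref{cut} implicitly assumes and the paper does not separately address); what it costs is the extra machinery of $\widehat{\cdot}$ and the verification that the one-step accumulator updates preserve both the matched/unmatched status of places and the hereditary substitutions -- in particular that no free variable of a design in $S$ can coincide with the place $x$ being processed, which you correctly reduce to the well-definedness (preservation of compatibility) already asserted for Definition~\ref{cut}. The paper's proof avoids all of this by staying purely syntactic within the clauses of the definition, at the price of a more intricate bookkeeping of forward and backward rewriting steps. Both are valid; to turn yours into a complete proof you would need to write out the compositionality check for $\widehat{\cdot}$ under the substitution $[S]$ in each of clauses (\ref{2})--(\ref{4}), which is exactly the ``delicate point'' you identify.
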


\begin{proof} By induction on the number $n$ of variables in $(\mathrm{fv}(\design D) \cap \mathrm{np}(\design E)) \cup (\mathrm{fv}(\design E) \cap \mathrm{np}(\design D))$.
  \begin{itemize}
  \item If $n = 0$ then $\design{Cut}_{\design D|\design E} = \design{Cut}_{\design E|\design D} = \design D \cup \design E$.
  \item Let $n > 0$ and suppose the property is satisfied for all $k < n$. Without loss of generality suppose there exists $x \in (\mathrm{fv}(\design D) \cap \mathrm{np}(\design E))$. Thus $\design E$ is of the form $\design E = \design E' \cup \{\design n/x\}$. Let $S = \setst{(\design m/y) \in \design D}{y \in \mathrm{fv}(\design n)}$. 
    \begin{itemize}
    \item If $S = \emptyset$, let $\design d \in \design D$ be the design such that $x \in \mathrm{fv}(\design d)$, and let us write $\design D = \design D' \cup \{\design d\}$. If $\design d$ is positive then:
      \begin{align*}
        \cut{\design D}{\design E} & = \cut{\design D' \cup \{\design d[\design n/x]\}}{\design E'} & \mbox{ by one step~\ref{4} of Def.~\ref{cut}}  \\
        & = \cut{\design E'}{\design D' \cup \{\design d[\design n/x]\}} & \mbox{ by induction hypothesis} \\
        & = \cut{(\design E'\setminus T') \cup \{\design d[\design n/x, T']\}}{\design D'} & \mbox{ by one step~\ref{2} of Def.~\ref{cut}}
      \end{align*}
        where $T' = \setst{(\design m/y) \in \design E'}{y \in \mathrm{fv}(\design d[\design n/x])}$. Let $T = \setst{(\design m/y) \in \design E}{y \in \mathrm{fv}(\design d)}$, we have $T = T' \cup \{\design n/x\}$, indeed: $\mathrm{fv}(\design d[\design n/x]) = (\mathrm{fv}(\design d) \setminus \{x\}) \cup \mathrm{fv}(\design n)$, where $\mathrm{fv}(\design n) \cap \mathrm{np}(\design E) = \emptyset$ by definition of a multi-design, thus also $\mathrm{fv}(\design n) \cap \mathrm{np}(\design E') = \emptyset$. Therefore:
        \[\cut{(\design E'\setminus T') \cup \{\design d[\design n/x, T']\}}{\design D'} = \cut{(\design E\setminus T) \cup \{\design d[T]\}}{\design D'} = \cut{\design E}{\design D}\]
by one step~\ref{2} of Def.~\ref{cut} backwards, hence the result. The reasoning is similar if $\design d$ is negative and $\design D = \design D' \cup \{\design d/y\}$, we just have to distinguish between the cases $y \in \mathrm{fv}(\design E')$ and $y \notin \mathrm{fv}(\design E')$.
    \item Otherwise, let $S' = \setst{(\design m/y) \in \design E}{y \in \mathrm{fv}(S)}$ and $S'' = \setst{(\design m/y) \in \design D}{y \in \mathrm{fv}(S')}$; note that $S' \subseteq \design E'$ and $S'' \subseteq (\design D \setminus S)$. We have: 
      \begin{align*}
        \cut{\design E}{\design D} & = \cut{(\design E'\setminus S') \cup \{\design n[S[S']]\}}{\design D \setminus S} & \mbox{ by several steps~\ref{4} of Def.~\ref{cut}}  \\
        & = \cut{\design D \setminus S}{(\design E'\setminus S') \cup \{\design n[S[S']]\}} & \mbox{ by induction hypothesis, since } S \neq \emptyset \\
        & = \cut{(\design D \setminus (S \cup S''))[\design n[S[S'[S'']]]/x]}{\design E'\setminus S'} & \mbox{ by one step~\ref{4} of Def.~\ref{cut}} \\
        & = \cut{\design D}{\design E} & \mbox{ by steps~\ref{4} of Def.~\ref{cut} backwards}
      \end{align*}
      The last equality is obtained by moving successively, from left to right, all the designs from $S'$, and finally the design $\design n$.
    \end{itemize}
  \end{itemize}
\end{proof}

\begin{lemma} \label{cutcut}
Let $\design D_1$, $\design D_2$ and $\design E$ be multi-designs such that $\design D_1 \cup \design D_2$ is a multi-design, and $\design E$ is compatible with $\design D_1 \cup \design D_2$. We have:
\[ \cut{\design D_1 \cup \design D_2}{\design E} = \cut{\design D_1}{\cut{\design E}{\design D_2}} \]
\end{lemma}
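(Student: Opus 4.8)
The plan is to prove the identity by induction on the number of designs in $\design D_2$, peeling off one design of $\design D_2$ at a time and matching it against the single ``reservoir'' multi-design $\design E$ on both sides of the equation. The guiding observation is that a design of $\design D_2$ can only form cuts with $\design E$, never with $\design D_1$: since $\design D_1 \cup \design D_2$ is a multi-design we have $\mathrm{fv}(\design D_1) \cap \mathrm{np}(\design D_2) = \mathrm{fv}(\design D_2) \cap \mathrm{np}(\design D_1) = \emptyset$ and all designs occurring in $\design D_1 \cup \design D_2$ have pairwise disjoint free variables; hence the substitution triggered when we process such a design is entirely determined by $\design E$ and is insensitive to the presence of $\design D_1$.

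For the base case $\design D_2 = \emptyset$ we have $\cut{\design E}{\design D_2} = \cut{\design E}{\emptyset} = \design E$ by line~(\ref{1}) of Definition~\ref{cut}, so both sides reduce to $\cut{\design D_1}{\design E}$. For the inductive step, write $\design D_2 = \design D_2' \cup \{\design d\}$ for a single design $\design d$; by well-definedness of $\cut{\cdot}{\cdot}$ we may assume $\design d$ is processed first. On the right-hand side I apply one step of Definition~\ref{cut} to the inner cut $\cut{\design E}{\design D_2' \cup \{\design d\}}$: this substitutes into $\design d$ the set $S = \setst{(\design m/y) \in \design E}{y \in \mathrm{fv}(\design d)}$ and then relocates $\design d[S]$ into the first argument — adjoining it (lines~(\ref{2}) and~(\ref{3})) or, in the negative case where $\design d$'s place is free in $\design E$, substituting it (line~(\ref{4})). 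Call the resulting reservoir $\design E'$, so the right-hand side becomes $\cut{\design D_1}{\cut{\design E'}{\design D_2'}}$. On the left-hand side I first use Lemma~\ref{cut-commut} to rewrite $\cut{\design D_1 \cup \design D_2}{\design E} = \cut{\design E}{\design D_1 \cup \design D_2}$, then apply the same step of Definition~\ref{cut} to $\design d$, which now sits in the second argument $(\design D_1 \cup \design D_2') \cup \{\design d\}$. Because both $S$ and the branching condition depend only on $\design E$ and $\design d$, this step produces exactly the same $\design E'$, giving $\cut{\design E'}{\design D_1 \cup \design D_2'} = \cut{\design D_1 \cup \design D_2'}{\design E'}$ after one more use of Lemma~\ref{cut-commut}. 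The induction hypothesis applied to $\design D_2'$ and $\design E'$ now yields $\cut{\design D_1 \cup \design D_2'}{\design E'} = \cut{\design D_1}{\cut{\design E'}{\design D_2'}}$, which is precisely the reduced right-hand side.

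Along the way I must check that every cut written down is between compatible multi-designs, so that the induction hypothesis genuinely applies: that $\design D_1 \cup \design D_2'$ is still a multi-design (it is a subset of $\design D_1 \cup \design D_2$), that $\design E'$ is compatible with $\design D_1 \cup \design D_2'$ and with $\design D_2'$, and that $\design D_1$ is compatible with $\cut{\design E'}{\design D_2'}$. These follow from the fact — already invoked after Definition~\ref{cut} — that compatibility is preserved by one reduction step, together with the disjointness conditions defining a multi-design.

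The main obstacle is the verification at the heart of the inductive step that the two one-step reductions literally coincide: I must argue that the substitution set $S$ pulled out of $\design E$ for $\design d$, and the case distinction on whether $\design d$'s negative place is free in $\design E$, are unchanged when the second argument is enlarged from $\design D_2' \cup \{\design d\}$ (in the inner cut) to $(\design D_1 \cup \design D_2') \cup \{\design d\}$ (on the commuted left-hand side). This is exactly where the multi-design disjointness hypotheses are used: no design of $\design D_1$ shares a bound place with $\design d$ nor a free variable with any design of $S$, so $\design D_1$ stays inert while $\design d$ is processed and the two reductions produce the identical reservoir $\design E'$. Once this is established, the matching of the two sides via the induction hypothesis and Lemma~\ref{cut-commut} is purely formal.
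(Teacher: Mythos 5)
Your proof is correct and follows essentially the same route as the paper's: induction on the number of designs in $\design D_2$, peeling off one design, matching the one-step unfolding of Definition~\ref{cut} on both sides (with the same substitution set $S$ and the same case split, since these depend only on $\design E$ and the peeled design), and closing with Lemma~\ref{cut-commut} and the induction hypothesis. The only difference is presentational: you treat the three cases (positive, negative with free place, negative with bound place) uniformly by always commuting first, whereas the paper handles them with slightly different bookkeeping, using the commutation lemma explicitly only in the last case.
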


\begin{proof}
By induction on $\design D_2$:
\begin{itemize}
\item If $\design D_2 = \emptyset$ then $\cut{\design E}{\design D_2} = \design E$ hence the result.
\item If $\design D_2 = \design D_2' \cup \{\design p\}$ then $\cut{\design E}{\design D_2} = \cut{(\design E \setminus S) \cup \{\design p[S]\}}{\design D_2'}$ where $S = \setst{(\design m/y) \in \design E}{y \in \mathrm{fv}(\design p)}$. Thus by induction hypothesis:
\begin{align*}
\cut{\design D_1}{\cut{\design E}{\design D_2}} & = \cut{\design D_1 \cup \design D_2'}{(\design E \setminus S) \cup \{\design p[S]\}} & \\
& = \cut{((\design D_1 \cup \design D_2')\setminus S') \cup \{\design p[S[S']]\}}{\design E \setminus S} & \mbox{ by one step~\ref{2} of Def.~\ref{cut}} \\
& = \cut{((\design D_1 \cup \design D_2) \setminus S')[S[S']]}{\design E \setminus S} &
\end{align*}
where $S' = \setst{(\design m/y) \in (\design D_1 \cup \design D_2)}{y \in \mathrm{fv}(S)}$.
 Finally, by several steps~\ref{4} backwards of Definition~\ref{cut}, this is equal to $\cut{\design D_1 \cup \design D_2}{\design E}$.
\item If $\design D_2 = \design D_2' \cup \{\design n/x\}$ and $x \notin \mathrm{fv}(\design E)$, then similar to previous case.
\item If $\design D_2 = \design D_2' \cup \{\design n/x\}$ and $x \in \mathrm{fv}(\design E)$, then $\cut{\design E}{\design D_2} = \cut{(\design E \setminus S)[\design n[S]/x]}{\design D_2'}$ where $S = \setst{(\design m/y) \in \design E}{y \in \mathrm{fv}(\design n)}$. Thus by induction hypothesis:
\begin{align*}
\cut{\design D_1}{\cut{\design E}{\design D_2}} & = \cut{\design D_1 \cup \design D_2'}{(\design E \setminus S)[\design n[S]/x]} & \\
& = \cut{(\design E \setminus S)[\design n[S]/x]}{\design D_1 \cup \design D_2'} & \mbox{ by Lemma~\ref{cut-commut}} \\
& = \cut{\design E}{\design D_1 \cup \design D_2} & \mbox{ by one step~\ref{4} backwards of Def.~\ref{cut}} \\
& = \cut{\design D_1 \cup \design D_2}{\design E} & \mbox{ by Lemma~\ref{cut-commut}}
\end{align*}
\end{itemize}
\end{proof}

We now extend the notion of orthogonality to multi-designs.

\begin{definition} \label{ortho}
  Let $\design D$ and $\design E$ be closed-compatible multi-designs. $\design D$ and $\design E$ are \defined{orthogonal}, written $\design D \perp \design E$, if $\design{Cut}_{\design D|\design E} \Downarrow \daimon$.
\end{definition}

\subsection{Paths and Multi-Designs}

Recall that we write $\epsilon$ for the empty sequence.

\begin{definition}
  Let $\design D$ be a multi-design.
  \begin{itemize}
  \item A \defined{view of} $\design D$ is a view of a design in $\design D$.
  \item A \defined{path of} $\design D$ is a path $\pathLL s$ of same polarity as $\design D$ such that for all prefix $\pathLL s'$ of $\pathLL s$, $\view{\pathLL s'}$ is a view of $\design D$.
  \end{itemize}
\end{definition}

We are now interested in a particular form of closed interactions, where we can identify two sides of the multi-design: designs are separated into two groups such that there are no cuts between designs of the same group. This corresponds exactly to the interaction between two closed-compatible multi-designs.

\begin{definition} \label{inter-path-multi}
  Let $\design D$ and $\design E$ be closed-compatible multi-designs such that $\design D \perp \design E$. The \defined{interaction path} of $\design D$ with $\design E$ is the unique path $\pathLL s$ of $\design D$ such that $\dual{\pathLL s}$ is a path of $\design E$.
\end{definition}

But nothing ensures the existence and uniqueness of such a path: this will be proved in the rest of this subsection. We will moreover show that, if $\design D \perp \design E$, this path corresponds to the \imp{interaction sequence} defined below. For the purpose of giving an inductive definition of the interaction sequence, we define it not only for a pair of closed-compatible multi-designs but for a larger class of pairs of multi-designs.

\begin{definition} \label{inter-seq}
  Let $\design D$ and $\design E$ be multi-designs of opposite polarities, compatible, and satisfying $\mathrm{fv}(\design D) \subseteq \mathrm{np}(\design E)$ and $\mathrm{fv}(\design E) \subseteq \mathrm{np}(\design D)$. The \defined{interaction sequence} of $\design D$ with $\design E$, written $\interseq{\design D}{\design E}$, is the sequence of actions followed by interaction on the side of $\design D$. More precisely, if we write $\design p$ for the only positive design of $\design D \cup \design E$, the interaction sequence is defined recursively by:
  \begin{itemize}
  \item If $\design p = \daimon$ then:
    \begin{itemize}
    \item $\interseq{\design D}{\design E} = \daimon$ if $\daimon \in \design D$
    \item $\interseq{\design D}{\design E} = \epsilon$ if $\daimon \in \design E$.
    \end{itemize}
  \item If $\design p = \Omega$ then $\interseq{\design D}{\design E} = \epsilon$.
  \item If $\design p = \posdes{x}{a}{\vect{\design m}}$ then there exists $\design n$ such that $(\design n/x) \in \design E$ if $\design p \in \design D$, $(\design n/x) \in \design D$ otherwise. Let us write $\design n = \negdes{b}{\vect{y^b}}{\design p_b}$. We have $\interseq{\design D}{\design E} = \kappa \interseq{\design D'}{\design E'}$ where:
    \begin{itemize}
    \item if $\design p \in \design D$, then $\kappa = \posdes{x}{a}{\vect{y^a}}$, $\design D' = (\design D \setminus \{\design p\}) \cup \{\vect{\design m/y^a}\}$ and $\design E' = {(\design E \setminus \{\design n/x\}) \cup \{\design p_a\}}$.
    \item otherwise $\kappa = a_x(\vect{y^a})$, $\design D' = (\design D \setminus \{\design n/x\}) \cup \{\design p_a\}$ and $\design E' = (\design E \setminus \{\design p\}) \cup \{\vect{\design m/y^a}\}$.
    \end{itemize}
  \end{itemize}
\end{definition}
Note that this applies in particular to two closed-compatible multi-designs. Remark also that this definition follows exactly the interaction between $\design D$ and $\design E$: indeed, in the inductive case of the definition, the multi-designs $\design D'$ and $\design E'$ are obtained from $\design D$ and $\design E$ similarly to the following lemma. In particular the interaction sequence is finite whenever the interaction between $\design D$ and $\design E$ is finite.

\begin{lemma} \label{lem-norm}
  Let $\design D$ and $\design E$ be closed-compatible multi-designs of opposite polarities. Suppose the only positive design $\design p \in \design D$ is of the form $\design p = \posdes{x}{a}{\vect{\design n}}$, and suppose moreover there exists $\design n_0$ such that $(\design n_0/x) \in \design E$, say $\design n_0 = \negdes{b}{\vect{x^b}}{\design p_b}$. Then:
  \[\cut{\design D}{\design E} 
		\leadsto 
	\cut{\design D'}{\design E'} \setminus \setst{(\design m/x_i^a)}{x_i^a \notin \mathrm{fv}(\design p_a)} \]
  where $\design D' = (\design D \setminus \{\design p\}) \cup \{\vect{\design n/x^a}\}$ and $\design E' = (\design E \setminus \{\design n_0/x\}) \cup \{\design p_a\}$.
\end{lemma}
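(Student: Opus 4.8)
The plan is to exhibit the top-level cut of the closed design $\cut{\design D}{\design E}$, fire it with a single interaction step, and recognise the result as $\cut{\design D'}{\design E'}$ once the discarded arguments have been removed. Write $\design D = \{\design p\} \cup \design M$ and $\design E = \{\design n_0/x\} \cup \design N$, where $\design M$ and $\design N$ are the respective negative parts. First I would use Lemma~\ref{cutcut} to isolate $\design p$, rewriting $\cut{\design D}{\design E} = \cut{\{\design p\}}{\cut{\design E}{\design M}}$. Setting $\design F := \cut{\design E}{\design M}$, linearity guarantees that $x \notin \mathrm{fv}(\design M)$, so the pair $(\design n_0/x)$ survives in $\design F$ as some $(\design n_0^\ast/x)$ with $\design n_0^\ast$ a full negative design $\negdes{b}{\vect{x^b}}{\design q_b}$, while the remaining pairs of $\design F$ only resolve the arguments $\design n_i$. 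Unfolding the definition of $\cut{\{\design p\}}{\design F}$ then shows that this closed design is $\posdes{\design n_0^\ast}{a}{\vect{\design n^\ast}}$, a single top cut on the name $a$, where each $\design n_i^\ast$ is $\design n_i$ with its free variables (all lying in $\mathrm{np}(\design N)$ by linearity) resolved by $\design N$.

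Next I would fire this cut. By the definition of interaction, $\posdes{\design n_0^\ast}{a}{\vect{\design n^\ast}} \leadsto \design q_a[\design n_1^\ast/x_1^a, \dots, \design n_k^\ast/x_k^a]$, where $\design q_a$ is the resolved $a$-branch body, namely $\design p_a$ with the free variables coming from $\design n_0$ (all in $\mathrm{np}(\design M)$) resolved by $\design M$.

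It then remains to identify this with $\cut{\design D'}{\design E'} \setminus \setst{(\design m/x_i^a)}{x_i^a \notin \mathrm{fv}(\design p_a)}$. Here I would apply Lemma~\ref{cut-commut} and Lemma~\ref{cutcut} symmetrically, writing $\cut{\design D'}{\design E'} = \cut{\{\design p_a\}}{\cut{\design D'}{\design N}}$ and unfolding the outer $\mathfrak{Cut}$ against $\design p_a$. The crucial point is the dichotomy in the definition of $\mathfrak{Cut}$: a pair $(\design n_i/x_i^a)$ is substituted into $\design p_a$ by clause (\ref{4}) exactly when $x_i^a \in \mathrm{fv}(\design p_a)$, and otherwise (clause (\ref{3})) it is left untouched in the multi-design. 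These leftover pairs are precisely the discarded-argument set removed on the right-hand side. After deleting them, the positive design produced is $\design p_a$ with $x_i^a \mapsto \design n_i^\ast$ for the used indices and its $\design n_0$-variables resolved by $\design M$, which is literally $\design q_a[\vect{\design n^\ast}/\vect{x^a}]$, since substituting for an unused $x_i^a$ has no effect.

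The main obstacle is bookkeeping rather than conceptual: one must check that both sides resolve each subdesign by the same context, and this is where linearity is essential. Linearity forces $\{x\}, \mathrm{fv}(\design n_1), \dots, \mathrm{fv}(\design n_k)$ to be pairwise disjoint, so the arguments $\design n_i$ are resolved solely by $\design N$ and the branch body $\design p_a$ solely by $\design M$, with no interference; this is exactly what makes the pieces $\design q_a, \design n_i^\ast$ appearing after firing coincide with those produced by $\cut{\design D'}{\design E'}$. A secondary point is to verify that the compatibility side-conditions needed to invoke Lemmas~\ref{cutcut} and~\ref{cut-commut} hold at each step; this follows from closed-compatibility of $\design D$ and $\design E$, together with the observation that $\design D'$ and $\design E'$ fail to be closed-compatible only through the unused places $x_i^a$, which account exactly for the removed garbage.
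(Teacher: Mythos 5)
Your proof is essentially correct, but it takes a genuinely different route from the paper's. The paper proves the lemma by a direct induction on the number of designs in $\design E$: the base case $\design E = \{\design n_0/x\}$ collapses $\cut{\design D}{\design E}$ to the single design $\design p[\design n_0[\design M]/x]$, fires the cut, and re-reads the result as $\cut{\design D'}{\design E'}\setminus S'$ via a backwards step~(\ref{2}); the inductive case peels off one extra pair $(\design n_1/z)$ of $\design E$ with a forward and a backward step~(\ref{3}) or~(\ref{4}). You instead delegate all of that induction to Lemmas~\ref{cutcut} and~\ref{cut-commut}, isolating the unique top-level cut as $\cut{\{\design p\}}{\cut{\design E}{\design M}} = \posdes{\design n_0^\ast}{a}{\vect{\design n^\ast}}$, firing it once, and reassembling the result symmetrically as $\cut{\{\design p_a\}}{\cut{\design D'}{\design N}}$ minus the garbage. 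This is more modular and arguably more illuminating (it makes visible that there is exactly one cut and that linearity routes the arguments $\design n_i$ to $\design N$ and the branch bodies to $\design M$ without interference), at the price of having to verify the compatibility side-conditions for each invocation of Lemmas~\ref{cutcut} and~\ref{cut-commut}, which you correctly flag and which do hold here.

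One shared loose end, which is present in the paper's own proof as well and so is not a gap you introduced: the leftover pairs of $\cut{\design D'}{\design E'}$ are not only the $(\design n_i/x_i^a)$ with $x_i^a\notin\mathrm{fv}(\design p_a)$, but also the pairs of $\design M$ sitting at places in $\mathrm{fv}(\design n_0)\setminus\mathrm{fv}(\design p_a)$, i.e.\ the designs that were feeding the discarded branches $\design p_b$, $b\neq a$, of $\design n_0$. On the left-hand side these vanish inside the erased branches when the cut fires; on the right-hand side they survive as unconsumed pairs, so strictly they too should be subtracted. Your claim that the removed set accounts ``exactly'' for the failure of closed-compatibility of $\design D'$ and $\design E'$ is therefore slightly too strong, exactly as in the paper's step~(\ref{2})-backwards in its base case; the fix is the same in both arguments (enlarge the subtracted set, or restrict to the sub-multi-designs actually used, as Proposition~\ref{path-perp} already does downstream).
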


\begin{proof}
  First notice that as $\design D$ and $\design E$ are closed-compatible, $\cut{\design D}{\design E}$ is a design, and since this design has cuts we can indeed apply one step of reduction to it. Let $S' = \setst{(\design m/x_i^a)}{x_i^a \notin \mathrm{fv}(\design p_a)})$. We have to prove $\cut{\design D}{\design E} \leadsto \cut{\design D'}{\design E'} \setminus S'$. The proof is done by induction on the number of designs in $\design E$.
  \begin{itemize}
  \item If $\design E = \{\design n_0/x\}$, then $\design E' = \{\design p_a\}$.
    In this case let $S= \setst{(\design m/y) \in \design D}{y \in \mathrm{fv}(\design n_0)}$, and remark that, as $\design E$ and $\design D$ are closed-compatible, $S = \design D \setminus \{\design p\}$.
    Thus:
    \begin{align*}
      \cut{\design D}{\design E} & = \cut{(\design D \setminus S)[\design n_0[S]/x]}{\emptyset} & \mbox{ by one step~\ref{4} of Def.~\ref{cut}} \\
      & = \design p[\design n_0[S]/x] \\
      & \leadsto \design p_a[S][\vect{\design n/x^a}] \\
      & = \design p_a[\design D'] \\
      & = \{\design p_a[\design D' \setminus S'_0]\} \cup S'_0 \setminus S' & \mbox{where } S'_0 = \proj{S'}{\design D'} \\
      & = \cut{S'_0 \cup \{\design p_a[\design D' \setminus S'_0]\}}{\emptyset} \setminus S'\\
      & = \cut{\design D'}{\design p_a} \setminus S' & \mbox{ by one step~\ref{2} backwards of Def.~\ref{cut}} \\
      & = \cut{\design D'}{\design E'} \setminus S'
    \end{align*}

  \item Otherwise there exists $(\design n_1/z) \in \design E$ such that $x \neq z$. Suppose $z \notin \mathrm{fv}(\design D)$ (resp. $z \in \mathrm{fv}(\design D)$).
    Define:
    \begin{itemize}
    \item $S = \setst{(\design m/y) \in \design D}{y \in \mathrm{fv}(\design n_1)}$, and remark that $S = \setst{(\design m/y) \in \design D'}{y \in \mathrm{fv}(\design n_1)}$.
    \item $\design D'' = (\design D' \setminus S)\cup \{(\design n_1[S]/z)\}$ (resp. $\design D'' = (\design D' \setminus S)[\design n_1[S]/z]$)
    \item $\design E'' = \design E' \setminus \{(\design n_1/z)\}$.
  \end{itemize}
    We have:
    \begin{align*}
      \cut{\design D}{\design E} & = \cut{(\design D \setminus S)\cup \{(\design n_1[S]/z)\}}{\design E \setminus \{\design n_1/z\}} & \mbox{by one step~\ref{3} of Def.~\ref{cut}} \\
      \mbox{( \hspace{.2cm} resp. } & = \cut{(\design D \setminus S)[(\design n_1[S]/z)]}{\design E \setminus \{\design n_1/z\}} & \mbox{by one step~\ref{4} of Def.~\ref{cut} \hspace{.2cm})} \\
      & \leadsto \cut{\design D''}{\design E''} \setminus S' & \mbox{by induction hypothesis} & \\
      & = \cut{\design D'}{\design E'} \setminus S' & \mbox{by step~\ref{3} (resp. \ref{4}) of Def.~\ref{cut} backwards}
    \end{align*}
  \end{itemize}
\end{proof}

\begin{lemma} \label{inter-dual}
If $\daimon \in \normalisation{\cut{\design D}{\design E}}$ (in particular if $\design D \perp \design E$) then $\interseq{\design D}{\design E} = \dual{\interseq{\design E}{\design D}}$. Otherwise $\interseq{\design D}{\design E} = \overline{\interseq{\design E}{\design D}}$.
\end{lemma}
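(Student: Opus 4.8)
The plan is to argue by induction following the recursive structure of Definition~\ref{inter-seq}, exploiting that the clause determining $\interseq{\design D}{\design E}$ and the one determining $\interseq{\design E}{\design D}$ are driven by the \emph{same} positive design $\design p$ of $\design D \cup \design E$. First I would dispose of the base cases. If $\design p = \Omega$, then both interaction sequences equal $\epsilon$ and $\daimon \notin \normalisation{\cut{\design D}{\design E}}$, so the ``otherwise'' clause applies and indeed $\epsilon = \overline{\epsilon}$. If $\design p = \daimon$, then $\daimon \in \normalisation{\cut{\design D}{\design E}}$, and exactly one of $\interseq{\design D}{\design E}$, $\interseq{\design E}{\design D}$ equals $\daimon$ while the other equals $\epsilon$, depending on the side carrying the $\daimon$; the claim then reduces to the two identities $\dual{\epsilon} = \overline{\epsilon}\daimon = \daimon$ and $\dual{\daimon} = \epsilon$, both immediate from the definition of $\dual{\cdot}$.

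For the inductive step, $\design p = \posdes{x}{a}{\vect{\design m}}$, the key observation is a symmetry in Definition~\ref{inter-seq}: writing $\kappa$ for the first action of $\interseq{\design D}{\design E}$, unfolding $\interseq{\design E}{\design D}$ with the roles of the two multi-designs exchanged produces the dual action $\overline{\kappa}$ together with the \emph{same} residuals in swapped order, that is $\interseq{\design D}{\design E} = \kappa\,\interseq{\design D'}{\design E'}$ and $\interseq{\design E}{\design D} = \overline{\kappa}\,\interseq{\design E'}{\design D'}$, where $\design D'$ and $\design E'$ are exactly the residual multi-designs of Definition~\ref{inter-seq}. One checks this directly in both sub-cases $\design p \in \design D$ and $\design p \in \design E$, using $\overline{\overline{\kappa}} = \kappa$.

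I would then transfer the daimon condition to the residual pair. By Lemma~\ref{lem-norm} a single reduction step gives $\cut{\design D}{\design E} \leadsto \cut{\design D'}{\design E'} \setminus S'$ for a set $S'$ of negative pairs; since reduction preserves the normal form and the membership $\daimon \in \normalisation{\cdot}$ refers to the positive head, which the removal of the negative pairs in $S'$ cannot affect, we get $\daimon \in \normalisation{\cut{\design D}{\design E}}$ if and only if $\daimon \in \normalisation{\cut{\design D'}{\design E'}}$. Applying the induction hypothesis to $(\design D', \design E')$ and using the elementary identities $\dual{\overline{\kappa}\,\pathLL s} = \kappa\,\dual{\pathLL s}$ and $\overline{\overline{\kappa}\,\pathLL s} = \kappa\,\overline{\pathLL s}$ (the first obtained by a short case split on whether $\pathLL s$ ends with $\daimon$, the second directly from $\overline{\cdot}$ distributing over concatenation) then closes each branch: in the converging case $\dual{\interseq{\design E}{\design D}} = \dual{\overline{\kappa}\,\interseq{\design E'}{\design D'}} = \kappa\,\dual{\interseq{\design E'}{\design D'}} = \kappa\,\interseq{\design D'}{\design E'} = \interseq{\design D}{\design E}$, and symmetrically for the diverging case with $\overline{\cdot}$.

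I expect the main obstacle to be bookkeeping rather than conceptual: making the symmetry of Definition~\ref{inter-seq} fully precise across both sub-cases, and justifying the preservation of the daimon condition along the single reduction step, where the small gap between the closed-compatible hypothesis of Lemma~\ref{lem-norm} and the slightly larger class of Definition~\ref{inter-seq}, as well as the harmless removal of the unused negative pairs in $S'$, must be treated carefully. A secondary point is that the interaction sequence may be infinite; in the ``otherwise'' branch the induction is really a positionwise (coinductive) argument, with the step correspondence above applied at every position so that $\interseq{\design D}{\design E} = \overline{\interseq{\design E}{\design D}}$ holds action by action.
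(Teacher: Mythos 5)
Your proof is correct and is essentially a careful formalisation of the paper's own argument: both rest on the observation that Definition~\ref{inter-seq} is symmetric under exchanging its two arguments (unfolding one step produces the dual action $\overline{\kappa}$ together with the same residual pair in swapped order), and that $\daimon$ appears at the end of exactly one of the two sequences precisely when $\daimon \in \normalisation{\cut{\design D}{\design E}}$. The paper dispatches this in two sentences as ``clear from the definition''; your explicit induction on the recursive clauses, the transfer of the daimon condition along the reduction step of Lemma~\ref{lem-norm}, and the positionwise reading needed when the diverging interaction is infinite are sound elaborations of the same route rather than a different one.
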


\begin{proof}
It is clear from the definition of the interaction sequence that the proper actions in $\interseq{\design D}{\design E}$ are the opposite of those in $\interseq{\design E}{\design D}$. Concerning the daimon: since the interaction sequence follows the interaction between $\design D$ and $\design E$, $\daimon$ appears at the end of one of the sequences $\interseq{\design D}{\design E}$ or $\interseq{\design E}{\design D}$ if and only if $\daimon \in \normalisation{\cut{\design D}{\design E}}$, and in this case $\interseq{\design D}{\design E} = \dual{\interseq{\design E}{\design D}}$.
\end{proof}

\begin{proposition} \label{pref-inter-path}
  Every positive-ended prefix of $\interseq{\design D}{\design E}$ is a path of $\design D$. In particular, if $\interseq{\design D}{\design E}$ is finite and positive-ended then it is a path of $\design D$.
\end{proposition}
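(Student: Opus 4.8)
The plan is to prove the statement by induction on the length of the positive-ended prefix $\pathLL t$, run uniformly over all pairs $(\design D, \design E)$ satisfying the hypotheses of Definition~\ref{inter-seq} and following the recursive structure of $\interseq{\design D}{\design E}$. Being a path of $\design D$ amounts to three things: that $\pathLL t$ is an aj-sequence of the same polarity as $\design D$ (alternation, daimon-last, linearity); that $\pathLL t$ satisfies P-visibility and O-visibility; and that for every prefix $\pathLL t'$ of $\pathLL t$ the extracted view $\view{\pathLL t'}$ is a view of $\design D$, i.e.\ a branch in the tree of some design of $\design D$. I would establish these in a deliberate order: first the bookkeeping conditions, then the view (``support'') condition, and only then visibility, which I derive from the support condition so as to avoid circularity.

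The base cases ($\design p = \daimon$ or $\design p = \Omega$) are immediate, the sequence being $\epsilon$ or $\daimon$, and a lone $\daimon$ is a path of $\design D$ exactly because $\daimon \in \design D$. For the inductive step, write $\interseq{\design D}{\design E} = \kappa\,\interseq{\design D'}{\design E'}$ as in Definition~\ref{inter-seq}, so that $\pathLL t = \kappa\,\pathLL t'$ with $\pathLL t'$ a shorter prefix of $\interseq{\design D'}{\design E'}$ (again positive-ended when nonempty), to which the induction hypothesis applies, yielding that $\pathLL t'$ is a path of $\design D'$. The bookkeeping conditions transfer cleanly: the unique positive design switches sides at each step, so the polarity of the emitted action alternates; $\daimon$ is emitted only in a base case and is thus last; and linearity holds because the address of $\kappa$ is that of the (co)design consumed from the opposite side — which is then removed — while the variables $\vect{y^a}$ bound by $\kappa$ are fresh, so no variable is ever the address of two actions. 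The key structural observation is that passing from $(\design D, \design E)$ to $(\design D', \design E')$ replaces the active design $\design p$ (resp.\ the matched negative design $\design n$) by its immediate subdesigns placed at the addresses $\vect{y^a}$ bound by $\kappa$; hence a branch of a design of $\design D'$ rooted at some $y^a_i$ is exactly the continuation, below $\kappa$, of a branch of the corresponding design of $\design D$.

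Using this observation I would prove the support condition by induction: each $\view{\pathLL t'}$ is a branch of a design of $\design D'$, and prepending $\kappa$ via the view-extraction rules recovers a branch of a design of $\design D$ — concretely, $\view{\kappa\,\pathLL t'}$ is $\kappa$ prepended to $\view{\pathLL t'}$ when the latter is rooted at an address bound by $\kappa$, and equals $\view{\pathLL t'}$ otherwise (the case where the branch lies in an unchanged design of $\design D$). P-visibility then follows from the support condition rather than being argued separately: for a prefix $\pathLL t''\kappa^+$, the view extends as $\view{\pathLL t''\kappa^+} = \view{\pathLL t''}\kappa^+$, which is a branch, so the tree-parent of $\kappa^+$ — the negative action binding the address of $\kappa^+$, that is $\mathrm{just}(\kappa^+)$ — is the last action of $\view{\pathLL t''}$ and in particular lies in it.

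For O-visibility I would appeal to the symmetry of the construction together with Lemma~\ref{inter-dual}: since $\antiview{\pathLL s} = \dual{\view{\dual{\pathLL s}}}$ and $\interseq{\design E}{\design D}$ is the dual of $\interseq{\design D}{\design E}$, the O-visibility condition on the negative actions of $\interseq{\design D}{\design E}$ is precisely the P-visibility/support condition for the symmetric instance $\interseq{\design E}{\design D}$ relative to $\design E$, which the same induction — run on both sides simultaneously — establishes. The ``in particular'' clause is then the special case where the prefix is all of $\interseq{\design D}{\design E}$. I expect the main obstacle to be the precise bookkeeping of views and justification pointers across a single step: showing rigorously that $\view{\kappa\,\pathLL t'}$ computed in $\design D$ matches the branch structure of $\design D'$ with $\kappa$ prepended, which forces one to handle the view-extraction rule for negative actions (where the view jumps back to the justifier) and to check that the bound variables $\vect{y^a}$ introduced by $\kappa$ justify exactly the negative actions that open the subdesigns relocated into $\design D'$. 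Getting this correspondence exactly right is what makes the support condition genuinely hold, and hence what makes the visibility arguments non-circular.
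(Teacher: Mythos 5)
Your proposal is correct and follows essentially the same route as the paper's proof: first the aj-sequence conditions, then an induction along the recursive structure of $\interseq{\design D}{\design E}$ showing that the views of prefixes are views of $\design D$, from which P-visibility follows, with O-visibility obtained by symmetry via Lemma~\ref{inter-dual}. One small imprecision: $\mathrm{just}(\kappa^+)$ is the negative action binding the address of $\kappa^+$, which need not be its tree-parent (i.e.\ the last action of $\view{\pathLL t''}$), but it does occur below $\kappa^+$ on the branch and hence still lies in $\view{\pathLL t''}$, so the argument stands.
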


\begin{proof}
  First remark that every (finite) prefix of $\interseq{\design D}{\design E}$ is an aj-sequence. Indeed, since $\design D$ and $\design E$ are well shaped multi-designs the definition of interaction sequence ensures that an action cannot appear before its justification, and all the conditions of the definition of an aj-sequence are satisfied: \imp{Alternation} and \imp{Daimon} are immediate from the definition of interaction sequence, while \imp{Linearity} is indeed satisfied as variables are disjoint in $\design D$ and $\design E$ (by Barendregt's convention).

  By definition, for every prefix $\pathLL s$ of $\interseq{\design D}{\design E}$, $\view{\pathLL s}$ is a view. We show that it is a view of $\design D$ by induction on the length of $\pathLL s$:
  \begin{itemize}
  \item If $\pathLL s = \epsilon$ then $\view{\epsilon} = \epsilon$ is indeed a view of $\design D$.
  \item If $\pathLL s = \daimon$ then $\interseq{\design D}{\design E} = \daimon$. From the definition of interaction sequence, we know that in this case $\daimon \in \design D$, hence $\view{\daimon} = \daimon$ is a view of $\design D$.
  \item If $\pathLL s = \kappa\pathLL s'$ where $\kappa$ is proper, then $\interseq{\design D}{\design E} = \kappa\interseq{\design D'}{\design E'}$ where $\design D'$ and $\design E'$ are as in Definition~\ref{inter-seq}, and $\pathLL s'$ is a prefix of $\interseq{\design D'}{\design E'}$. By induction hypothesis, $\view{\pathLL s'}$ is a view of $\design D'$. Two possibilities:
    \begin{itemize}
    \item Either $\kappa = \kappa^+$ is positive.
From the definition of the interaction sequence, this means $\design p := \posdes{x}{a}{\vect{\design m}} \in \design D$, $\kappa^+ = \posdes{x}{a}{\vect{y^a}}$ and $\design D' = (\design D \setminus \{\design p\}) \cup \{\vect{\design m/y^a}\}$. We have $\view{\pathLL s} = \view{\kappa^+ \pathLL s'}$ and either $\view{\kappa^+ \pathLL s'} = \kappa^+ \view{\pathLL s'}$ if the first negative action of $\view{\pathLL s'}$ is justified by $\kappa^+$ (i.e., $\exists i$ such that $\view{\pathLL s'}$ is a view of $\design m_i/y^a_i$), or $ \view{\kappa^+ \pathLL s'} = \view{\pathLL s'}$ otherwise (i.e., $\view{\pathLL s'}$ is a view of $\design D \setminus \{\design p\}$). In the second case, there is nothing more to show; in the first one, by definition of the views of a design, $\kappa^+ \view{\pathLL s'}$ is a view of $\design p = \posdes{x}{a}{\vect{\design m}}$.
    \item Or $\kappa = \kappa^-$ is negative. Hence there exists a design $\design n = \negdes{b}{\vect{y^b}}{\design p_b}$ such that $(\design n/x) \in \design D$, $\kappa^- = a_x(\vect{y^a})$, and $\design D' = (\design D  \setminus \{\design n/x\}) \cup \{\design p_a\}$. We have $\view{\pathLL s} = \view{\kappa^- \pathLL s'}$ and either $\view{\kappa^- \pathLL s'} = \kappa^- \view{\pathLL s'}$ if the first action of $\view{\pathLL s'}$ is positive (i.e., $\view{\pathLL s'}$ is a view of $\design p_a$), or $ \view{\kappa^- \pathLL s'} = \view{\pathLL s'}$ otherwise (i.e., $\view{\pathLL s'}$ is a view of $\design D' \setminus \{\design p_a\} \subseteq \design D$). In the second case, there is nothing to do; in the first one, note that $\kappa^- \view{\pathLL s'}$ is a view of $(\design n/x)$, hence the result.
    \end{itemize}
  \end{itemize}
We have proved that $\view{\pathLL s}$ is a view of $\design D$. This implies in particular that $\interseq{\design D}{\design E}$ satisfies P-visibility, indeed: given a prefix $\pathLL s \kappa^+$ of $\interseq{\design D}{\design E}$, the action $\kappa^+$ is either initial or it is justified in $\pathLL s$ by the same action that justifies it in $\design D$; since $\view{\pathLL s}$ is a view of $\design D$, the justification of $\kappa^+$ is in it, thus P-visibility is satisfied. Similarly, we can prove that $\view{\pathLL t}$ is a view of $\design E$ whenever $\pathLL t$ is a prefix of $\interseq{\design E}{\design D}$, therefore $\interseq{\design E}{\design D}$ also satisfies P-visibility; by Lemma~\ref{inter-dual} either $\interseq{\design E}{\design D} = \dual{\interseq{\design D}{\design E}}$ or $\interseq{\design E}{\design D} = \overline{\interseq{\design D}{\design E}}$, thus this implies that $\interseq{\design D}{\design E}$ satisfies O-visibility. Hence every positive-ended prefix of $\interseq{\design D}{\design E}$ is a path, and since the views of its prefixes are views of $\design D$, it is a path of $\design D$.
\end{proof}

\begin{remark}\label{differ-neg}
  If $\pathLL s\kappa_1^+$ and $\pathLL s\kappa_2^+$ are views (resp. paths) of a multi-design $\design D$ then $\kappa_1^+ = \kappa_2^+$. Indeed, if $\pathLL s\kappa_1^+$ and $\pathLL s\kappa_2^+$ are views of $\design D$, the result is immediate by definition of the views of a design; if they are paths of $\design D$, just remark that $\view{\pathLL s\kappa_1^+} = \view{\pathLL s}\kappa_1^+$ and $\view{\pathLL s\kappa_2^+} = \view{\pathLL s}\kappa_2^+$ are views of $\design D$, hence the conclusion.
\end{remark}

\begin{proposition} \label{prefix-norm}
  Suppose $\design D \perp \design E$, $\pathLL s$ is a path of $\design D$ and $\overline{\pathLL s}$ is a path of $\design E$. The path $\pathLL s$ is a prefix of $\interseq{\design D}{\design E}$.
\end{proposition}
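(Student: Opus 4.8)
The plan is to compare $\pathLL s$ directly with the interaction sequence $\interseq{\design D}{\design E}$ and to exploit the determinism of interaction: I will first show that $\pathLL s$ and $\interseq{\design D}{\design E}$ agree on their common prefix, and then that $\pathLL s$ cannot be strictly longer than $\interseq{\design D}{\design E}$. The three ingredients are Proposition~\ref{pref-inter-path} (every positive-ended prefix of $\interseq{\design D}{\design E}$ is a path of $\design D$, and symmetrically for $\interseq{\design E}{\design D}$ and $\design E$), Remark~\ref{differ-neg} (in a multi-design, a positive action extending a given view is unique), and Lemma~\ref{inter-dual}, which yields $\interseq{\design E}{\design D} = \dual{\interseq{\design D}{\design E}}$ since $\design D \perp \design E$. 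The guiding intuition is that $\pathLL s$, being a path of $\design D$ whose polarity-reversal is a path of $\design E$, describes a joint play that both sides are willing to follow; as interaction is deterministic, such a play must coincide with the real one.

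Write $\pathLL s = \kappa_1 \cdots \kappa_m$ and $\interseq{\design D}{\design E} = \kappa'_1 \cdots \kappa'_{m'}$. I would prove by induction on $i$ that $\kappa_i = \kappa'_i$ for all $i \le \min(m, m')$. Assuming the claim below $i$, the common prefix $\pathLL u = \kappa_1 \cdots \kappa_{i-1}$ is shared. If $\kappa_i$ is positive (a move on the side of $\design D$), then $\view{\pathLL u \kappa_i} = \view{\pathLL u}\kappa_i$ is a view of $\design D$ because $\pathLL s$ is a path of $\design D$, while $\view{\pathLL u \kappa'_i} = \view{\pathLL u}\kappa'_i$ is a view of $\design D$ by Proposition~\ref{pref-inter-path}; as both extend the same view $\view{\pathLL u}$ by a positive action, Remark~\ref{differ-neg} forces $\kappa_i = \kappa'_i$. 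If $\kappa_i$ is negative, it corresponds to a positive move on the side of $\design E$, and applying the same argument to $\overline{\pathLL s}$, a path of $\design E$, together with $\interseq{\design E}{\design D} = \dual{\interseq{\design D}{\design E}}$ (Lemma~\ref{inter-dual}), uniqueness in $\design E$ gives $\overline{\kappa_i} = \overline{\kappa'_i}$, hence $\kappa_i = \kappa'_i$.

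It remains to rule out $m > m'$, i.e. that $\pathLL s$ strictly extends $\interseq{\design D}{\design E}$. Because $\design D \perp \design E$, the interaction converges and $\interseq{\design D}{\design E}$ is finite, terminating exactly when the current head positive design is $\daimon$. If that $\daimon$ lies on $\design D$'s side, then by Definition~\ref{inter-seq} $\interseq{\design D}{\design E}$ ends with $\daimon$; but in an aj-sequence $\daimon$ can only occur last, so $\pathLL s$ has no action after $\kappa'_{m'} = \daimon$, contradicting $m > m'$. If the $\daimon$ lies on $\design E$'s side, then $\kappa'_{m'}$ is a positive move of $\design D$ and the next action $\kappa_{m'+1}$ of $\pathLL s$ is negative, corresponding to a positive extension of $\dual{\interseq{\design D}{\design E}}$ in $\design E$; but convergence means the head positive design reached in $\design E$ at this point is $\daimon$, so the only positive action of $\design E$ extending the current view is $\daimon$, whereas $\overline{\kappa_{m'+1}}$ is proper, again contradicting uniqueness (Remark~\ref{differ-neg}). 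Hence $m \le m'$, and with the first part $\pathLL s$ is a prefix of $\interseq{\design D}{\design E}$.

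The main obstacle is the careful bookkeeping of polarities: one must make precise that positive actions of $\pathLL s$ are governed by $\design D$ while negative ones are governed by $\design E$, translating statements about $\pathLL s$ into statements about $\overline{\pathLL s}$ and $\dual{\interseq{\design D}{\design E}}$, and paying attention to the discrepancy between $\overline{\cdot}$ and $\dual{\cdot}$ at the $\daimon$ (which is precisely what makes the two termination cases asymmetric). The second delicate point, used implicitly throughout, is that the current head positive design at each step is determined by the view of the common prefix; this rests on the visibility conditions of paths together with Proposition~\ref{pref-inter-path}, and is what legitimises invoking Remark~\ref{differ-neg} at every step.
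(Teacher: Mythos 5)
Your proof is correct and is essentially the paper's argument in a slightly different dress: the paper proceeds by contradiction on the longest common prefix of $\pathLL s$ and $\interseq{\design D}{\design E}$, whereas you run an explicit induction plus a length comparison, but both rest on the same ingredients — uniqueness of the positive action extending a path of a multi-design (Remark~\ref{differ-neg}), applied on the $\design D$ side for positive actions and, via $\interseq{\design E}{\design D}=\dual{\interseq{\design D}{\design E}}$, on the $\design E$ side for negative actions, with the overshoot case settled by the position of the terminal $\daimon$. No gaps.
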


\begin{proof}
  Suppose $\pathLL s$ is not a prefix of $\interseq{\design D}{\design E}$. Let $\pathLL t$ be the longest common prefix of $\pathLL s$ and $\interseq{\design D}{\design E}$ (possibly $\epsilon$). Without loss of generality, we can assume there exist actions of same polarity $\kappa_1$ and $\kappa_2$ such that $\kappa_1 \neq \kappa_2$, $\pathLL t \kappa_1$ is a prefix of $\pathLL s$ and $\pathLL t \kappa_2$ is a prefix of $\interseq{\design D}{\design E}$: indeed, if there are no such actions, it is because $\interseq{\design D}{\design E}$ is a strict prefix of $\pathLL s$; in this case, it suffices to consider $\interseq{\design E}{\design D}$ and $\overline{\pathLL s}$ instead.
  \begin{itemize}
  \item If $\kappa_1$ and $\kappa_2$ are positive, then $\pathLL t \kappa_1$ and $\pathLL t \kappa_2$ are paths of $\design D$, and by previous remark $\kappa_1 = \kappa_2$: contradiction.
  \item If $\kappa_1$ and $\kappa_2$ are negative, a contradiction arises similarly from the fact that $\overline{\pathLL t \kappa_1}$ and $\overline{\pathLL t \kappa_2}$ are paths of $\design E$ where $\overline{\kappa_1}$ and $\overline{\kappa_2}$ are positive.
  \end{itemize}
Hence the result.
\end{proof}

The following result ensures that the interaction path is well defined.

\begin{proposition} \label{inter-unique}
  If $\design D \perp \design E$, there exists a unique path $\pathLL s$ of $\design D$ such that $\dual{\pathLL s}$ is a path of $\design E$.
\end{proposition}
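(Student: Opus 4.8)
The plan is to show that the \defined{interaction sequence} $\interseq{\design D}{\design E}$ of Definition~\ref{inter-seq} is exactly the path we are looking for; all the ingredients have been prepared in the preceding results, so the work is mostly in assembling them and in the careful $\daimon$-bookkeeping needed for uniqueness.

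For existence, I would first note that $\design D \perp \design E$ means $\cut{\design D}{\design E} \Downarrow \daimon$ (Definition~\ref{ortho}), so the interaction is finite and, as observed right after Definition~\ref{inter-seq}, $\interseq{\design D}{\design E}$ is finite too. To invoke Proposition~\ref{pref-inter-path} I must check that $\interseq{\design D}{\design E}$ is positive-ended. The clean route is through Lemma~\ref{inter-dual}: since $\daimon \in \normalisation{\cut{\design D}{\design E}}$ we have $\interseq{\design D}{\design E} = \dual{\interseq{\design E}{\design D}}$, and the dual of any sequence whose only $\daimon$ (if present) is final is always positive-ended -- if the sequence does not end with $\daimon$ then $\dual{\cdot}$ appends $\daimon$, and if it ends with $\daimon$ then $\dual{\cdot}$ strips it, exposing the conjugate of a negative action, which is positive. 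Thus $\interseq{\design D}{\design E}$ is a finite positive-ended sequence, hence a path of $\design D$ by Proposition~\ref{pref-inter-path}. The same argument gives that $\interseq{\design E}{\design D}$ is a finite positive-ended path of $\design E$, and $\dual{\interseq{\design D}{\design E}} = \interseq{\design E}{\design D}$ by Lemma~\ref{inter-dual} together with $\dual{\dual{\cdot}} = \cdot$. So $\pathLL s := \interseq{\design D}{\design E}$ witnesses existence.

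For uniqueness, let $\pathLL s$ be any path of $\design D$ with $\dual{\pathLL s}$ a path of $\design E$, and compare it with $\pathLL t := \interseq{\design D}{\design E}$ through their longest common prefix $\pathLL r$. If they first diverge on positive actions $\kappa_1^+ \neq \kappa_2^+$, then $\pathLL r\kappa_1^+$ and $\pathLL r\kappa_2^+$ are positive-ended paths of $\design D$, and Remark~\ref{differ-neg} forces $\kappa_1^+ = \kappa_2^+$, a contradiction; if they diverge on negative actions, then $\overline{\pathLL r\kappa_1^-}$ and $\overline{\pathLL r\kappa_2^-}$ are positive-ended prefixes of $\dual{\pathLL s}$ and $\dual{\pathLL t}$, hence paths of $\design E$, and Remark~\ref{differ-neg} applied to $\design E$ again yields a contradiction (this is essentially the content of Proposition~\ref{prefix-norm}, which can be cited to conclude directly that $\pathLL s$ is a prefix of $\pathLL t$ once the $\overline{\cdot}$/$\dual{\cdot}$ conventions are matched). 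So one of $\pathLL s,\pathLL t$ is a prefix of the other. To exclude a strict prefix I would use that in a path $\daimon$ occurs only in last position: if $\pathLL s$ is a strict prefix of $\pathLL t$ it must end on a proper positive action $\kappa^+$ (a $\daimon$ inside $\pathLL t$ being impossible), but then the active design on $\design E$'s side continues past $\overline{\kappa^+}$ with a proper positive action, so $\dual{\pathLL s} = \overline{\pathLL s}\daimon$ would play $\daimon$ where $\design E$ offers a proper action -- contradicting that $\dual{\pathLL s}$ is a path of $\design E$. The symmetric case is excluded because then either $\pathLL s$ or $\dual{\pathLL s}$ would carry a $\daimon$ in a non-final position. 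Hence $\pathLL s = \pathLL t$.

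I expect the exclusion of a strict prefix in the uniqueness argument to be the main obstacle, since it rests on the delicate distinction between the \emph{dual} $\dual{\cdot}$, which administers the terminal $\daimon$, and the plain conjugate $\overline{\cdot}$, together with the fact that a positive-ended path may stop on a proper positive action or on $\daimon$ but never in between. The divergence analysis itself is routine determinism (Remark~\ref{differ-neg}), so the crux is verifying that the interaction genuinely continues past any strict prefix and that this forbids placing a $\daimon$ there.
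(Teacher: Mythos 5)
Your proof is correct and follows the same route as the paper: existence of $\interseq{\design D}{\design E}$ as the witness via Lemma~\ref{inter-dual} and Proposition~\ref{pref-inter-path}, and uniqueness via the longest-common-prefix/determinism argument that is exactly the content of Proposition~\ref{prefix-norm}. The paper's own proof consists of just these two citations; your explicit exclusion of strict prefixes through the $\daimon$-bookkeeping spells out the step the paper leaves implicit when passing from ``is a prefix of'' to equality.
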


\begin{proof}
Lemma~\ref{inter-dual} and Proposition~\ref{pref-inter-path} show that such a path exists, namely $\interseq{\design D}{\design E}$. Unicity follows from Proposition~\ref{prefix-norm}.
\end{proof}
Conversely, we prove that the existence of such a path implies the orthogonality of multi-designs (Proposition~\ref{perp-path}).

\begin{proposition} \label{path-perp}
  Let $\design P$ and $\design N$ be closed-compatible multi-designs such that $\Omega \notin \design P$ and such that their interaction is finite.
 Suppose that for every path $\pathLL s\kappa^+$ of $\design P$ such that $\kappa^+$ is proper and $\overline{\pathLL s}$ is a path of $\design N$, $\overline{\pathLL s\kappa^+}$ is a path of $\design N$, and suppose also that the same condition is satisfied when reversing $\design P$ and $\design N$. Then $\design P \perp \design N$.
\end{proposition}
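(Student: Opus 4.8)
The goal is to establish orthogonality in the sense of Definition~\ref{ortho}, that is, $\cut{\design P}{\design N} \Downarrow \daimon$. The plan is to argue by contradiction through the interaction sequence $\pathLL t = \interseq{\design P}{\design N}$ of Definition~\ref{inter-seq}. Since $\design P$ and $\design N$ are closed-compatible, $\cut{\design P}{\design N}$ is a closed design, so its normal form is either $\daimon$ or $\Omega$; assume for contradiction it is $\Omega$, i.e.\ $\daimon \notin \normalisation{\cut{\design P}{\design N}}$. By hypothesis the interaction is finite, hence $\pathLL t$ is finite, and by Lemma~\ref{inter-dual} we moreover have $\pathLL t = \overline{\interseq{\design N}{\design P}}$ with both $\pathLL t$ and $\overline{\pathLL t}$ free of $\daimon$. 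Following the recursion of Definition~\ref{inter-seq}, a $\daimon$-free finite interaction sequence can only terminate by reaching a selected branch equal to $\Omega$; thus $\pathLL t$ ends with a proper action $\kappa$, and the next positive design produced by the interaction is $\Omega$. Since $\design P$ is positive and $\Omega \notin \design P$, the sequence $\pathLL t$ is nonempty, positive-started and alternating, so $\kappa$ is positive when $\pathLL t$ has odd length and negative when it has even length.

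I would then split on the polarity of $\kappa$. If $\pathLL t = \pathLL s \kappa^+$ is positive-ended, then by Proposition~\ref{pref-inter-path} it is a path of $\design P$, and the positive-ended prefix $\overline{\pathLL s}$ of $\interseq{\design N}{\design P} = \overline{\pathLL t}$ is a path of $\design N$. The hypothesis (in the $\design P,\design N$ direction) then yields that $\overline{\pathLL s \kappa^+} = \overline{\pathLL s}\,\overline{\kappa^+}$ is a path of $\design N$; in particular the negative action $\overline{\kappa^+} = a_x(\vect y)$ occurs in a view of $\design N$, which by the tree representation forces the $a$-branch of the negative subdesign of $\design N$ located at address $x$ to differ from $\Omega$. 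But that branch is precisely the one the interaction selected to reach $\Omega$ right after $\kappa^+$ — a contradiction. The even-length case is symmetric: $\pathLL t = \pathLL s' \kappa^-$ gives that $\overline{\pathLL t} = \overline{\pathLL s'}\,\overline{\kappa^-}$ is a path of $\design N$ with $\overline{\kappa^-}$ positive proper and $\pathLL s' = \overline{\overline{\pathLL s'}}$ a path of $\design P$, so applying the hypothesis with the roles of $\design P$ and $\design N$ reversed (to $\pathLL u = \overline{\pathLL s'}$ and $\lambda^+ = \overline{\kappa^-}$) gives that $\pathLL s' \kappa^-$ is a path of $\design P$, again contradicting the fact that $\design P$'s selected branch was $\Omega$. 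In both cases we reach a contradiction, so the normal form is $\daimon$ and $\design P \perp \design N$.

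The main obstacle is the faithful translation between the operational event \emph{``the interaction reaches $\Omega$ immediately after recording $\kappa$''} and the combinatorial statement \emph{``the dual action $\overline{\kappa}$ fails to occur in any view of the opposite multi-design''}. Pinning down exactly which reduction step produces the $\Omega$ — and hence which negative branch is absent — is where I expect to lean on Lemma~\ref{lem-norm}, which describes a single reduction step of $\cut{\design P}{\design N}$ and makes explicit how the substitutions attached to variables not occurring in the selected branch (in particular those of an $\Omega$-branch) are discarded. The remaining ingredients — finiteness feeding termination of $\pathLL t$, Lemma~\ref{inter-dual} for the $\daimon$-free duality between $\pathLL t$ and $\overline{\pathLL t}$, and Proposition~\ref{pref-inter-path} identifying the positive-ended prefixes of $\interseq{\design P}{\design N}$ and $\interseq{\design N}{\design P}$ as paths of $\design P$ and $\design N$ respectively — are then routine bookkeeping.
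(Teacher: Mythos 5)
Your proof is correct in outline but takes a genuinely different route from the paper's. The paper argues by induction on the number of interaction steps: at each step it applies the hypothesis only to the one-action path $\kappa^+$ of the head positive design (with $\pathLL s = \epsilon$), concludes that the selected branch $\design p_a$ is not $\Omega$, and then must check that the path-condition is inherited by the residual multi-designs $\design P''$, $\design N''$ produced by Lemma~\ref{lem-norm} before invoking the induction hypothesis. You instead run the whole interaction, assume divergence, observe that a finite $\daimon$-free interaction sequence must end at the action that selects an $\Omega$-branch, and apply the hypothesis once to the full sequence $\pathLL t = \interseq{\design P}{\design N}$ via Proposition~\ref{pref-inter-path} and Lemma~\ref{inter-dual}. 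This trades the paper's obligation (hypothesis preservation under residuals, which the paper itself only sketches) for a different one, which you correctly flag as the crux: showing that ``the interaction hits $\Omega$ right after $\kappa$'' implies ``$\overline{\kappa}$ does not extend $\overline{\pathLL s}$ to a path of the opposite multi-design''. That step is real work: it requires identifying the negative design $(\design n/x)$ sitting in the \emph{current state} of the $\design N$-side with the negative subdesign of the \emph{original} $\design N$ located at address $x$ in its tree, so that $\design p_a = \Omega$ translates into the absence of the node $a_x$ from every view of $\design N$. Note that Lemma~\ref{lem-norm} is not quite the right tool here --- it describes the residual cut multi-design, not the tree correspondence; what you actually need is the invariant maintained in the proof of Proposition~\ref{pref-inter-path} (views of prefixes of the interaction sequence are views of the original multi-design), read at the point of failure. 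With that correspondence made explicit, both your positive-ended and negative-ended cases go through, and the empty-prefix case ($\pathLL s = \epsilon$) is handled the same way the paper handles its base application of the hypothesis.
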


\begin{proof}
  By induction on the number $n$ of steps of the interaction before divergence/convergence:
  \begin{itemize}
  \item If $n = 0$, then we must have $\design P = \daimon$, since $\Omega \notin \design P$. Hence the result.
  \item If $n > 0$ then $\design p \in \design P$ is of the form $\design p = \posdes{x}{a}{\vect{\design n}}$ and there exists $\design n_0 = \negdes{b}{\vect{x^b}}{\design p_b}$ such that $(\design n_0/x) \in \design N$. Let $\kappa^+ = \posdes{x}{a}{\vect{x^a}}$ and remark that $\kappa^+$ is a path of $\design p$. By hypothesis, $\overline{\kappa^+} = a_x(\vect{x^a})$ is a path of $\design N$, thus a path of $\design n_0$, and this implies $\design p_a \neq \Omega$. By Lemma~\ref{lem-norm}, we have $\cut{\design P}{\design N} \leadsto \cut{\design P'}{\design N'} \setminus \setst{(\design m/x^a_i)}{x_i \notin \mathrm{fv}(\design p_a)}$ where $\design P' = (\design P \setminus \{\design p\}) \cup \{\vect{\design n/x^a}\}$ and $\design N' = (\design N \setminus \{\design n_0/x\}) \cup \{\design p_a\}$. This corresponds to a cut-net between two closed-compatible multi-designs $\design P'' \subseteq \design P'$ (negative) and $\design N'' \subseteq \design N'$ (positive), where:
    \begin{itemize}
    \item $\Omega \notin \design N''$ because $\design p_a \neq \Omega$;
    \item their interaction is finite and takes $n-1$ steps;
    \item the condition on paths stated in the proposition is satisfied for $\design P''$ and $\design N''$, because it is for $\design P$ and $\design N$: indeed, the paths of $\design P''$ (resp. $\design N''$) are the paths $\pathLL t$ such that $\kappa^+ \pathLL t$ is a path of $\design P$ (resp. $\overline{\kappa^+} \pathLL t$ is a path of $\design N$), unless such a path $\pathLL t$ contains a negative initial action whose address is not the address of a positive action on the other side, but this restriction is harmless with respect to our condition.
    \end{itemize}
    We apply the induction hypothesis to get $\design P'' \perp \design N''$. Finally $\design P \perp \design N$. 
  \end{itemize}
\end{proof}

\begin{proposition} \label{perp-path}
  Let  $\design D$ and $\design E$ be closed-compatible multi-designs. $\design D \perp \design E$ if and only if there exists a path $\pathLL s$ of $\design D$ such that $\dual{\pathLL s}$ is a path of $\design E$.
\end{proposition}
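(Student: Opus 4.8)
This is an equivalence, and the plan is to treat the two implications separately, exploiting the symmetry that the dual of a path is again a path, with $\dual{\dual{\pathLL s}} = \pathLL s$, so that the roles of $\design D$ and $\design E$ are interchangeable. The forward implication requires no work: it is exactly Proposition~\ref{inter-unique}, which produces, whenever $\design D \perp \design E$, the interaction path $\interseq{\design D}{\design E}$ as a path of $\design D$ whose dual is a path of $\design E$. So I would concentrate entirely on the converse.

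For the converse, suppose $\pathLL s$ is a path of $\design D$ with $\dual{\pathLL s}$ a path of $\design E$. Since closed-compatible multi-designs have opposite polarities, I would relabel them so that $\design P$ is the positive one and $\design N$ the negative one among $\{\design D, \design E\}$, and take $\pathLL t \in \{\pathLL s, \dual{\pathLL s}\}$ to be the matching path that is a path of $\design P$ with dual a path of $\design N$. Orthogonality would then follow from Proposition~\ref{path-perp} once its hypotheses are checked. That $\Omega \notin \design P$ is immediate, since a path of a positive multi-design is nonempty and positive-ended, its first action being the first action of the head positive design, which hence cannot be $\Omega$.

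The two substantial hypotheses --- finiteness of the interaction and the universally quantified extension condition --- both rest on determinism. By Remark~\ref{differ-neg} the positive continuation of a path of $\design P$ after a given prefix is unique, and negative actions are forced as duals of the preceding positive ones; there is therefore essentially a single play. I would prove by induction on its length that $\interseq{\design P}{\design N}$ follows $\pathLL t$ step by step, the action prescribed at each stage by Definition~\ref{inter-seq} being available precisely because $\pathLL t$ is a path of $\design P$ and $\dual{\pathLL t}$ a path of $\design N$. As $\pathLL t$ is finite and one of $\pathLL t$, $\dual{\pathLL t}$ ends with $\daimon$, the interaction converges in finitely many steps. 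The same determinism gives the extension condition: any path of $\design P$ whose dual is playable against $\design N$ must be an initial segment of this unique play, so whenever $\design P$ extends it by a proper positive action the forced dual remains playable against $\design N$, and symmetrically with the polarities exchanged.

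With the hypotheses in place, Proposition~\ref{path-perp} yields $\design P \perp \design N$, which is exactly $\design D \perp \design E$ since the cut operator is symmetric (Lemma~\ref{cut-commut}). The hard part will be the passage from the single given matching path to the universally quantified condition of Proposition~\ref{path-perp}: one must argue that determinism collapses all consistent plays into the one witnessed by $\pathLL t$, and keep careful track of the $\daimon$ through the operation $\dual{\cdot}$ so that the notion of path --- which is positive-ended --- is applied consistently on the positive and on the negative side.
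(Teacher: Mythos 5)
Your proposal is correct and follows essentially the same route as the paper: the forward direction is delegated to Proposition~\ref{inter-unique}, and the converse reduces to Proposition~\ref{path-perp} by showing, via the uniqueness of positive continuations (Remark~\ref{differ-neg}), that every consistent play of $\design D$ against $\design E$ is a prefix of the given path $\pathLL s$. You are in fact slightly more careful than the paper, which invokes Proposition~\ref{path-perp} without explicitly discharging its side conditions ($\Omega \notin \design P$ and finiteness of the interaction); your argument that the interaction sequence must track $\pathLL s$ step by step and terminate at the daimon fills that in.
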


\begin{proof}
  \noindent $(\Rightarrow)$ If $\design D \perp \design E$ then result follows from Proposition~\ref{inter-unique}.
  
  \noindent $(\Leftarrow)$ We will prove that the hypothesis of Proposition~\ref{path-perp} is satisfied. Let us show that every path of $\design D$ (resp. of $\design E$) of the form $\pathLL t\kappa^+$ where $\kappa^+$ is proper and $\overline{\pathLL t}$ is a path of $\design E$ (resp. of $\design D$) is a prefix of $\pathLL s$ (resp. of $\overline{\pathLL s}$). By induction on the length of $\pathLL t$, knowing that it is either empty or negative-ended:
  \begin{itemize}
  \item If $\pathLL t$ is empty, $\kappa^+$ is necessarily the first action of the positive design in $\design D$ (resp. in $\design E$), hence the first action of $\pathLL s$ (resp. of $\overline{\pathLL s}$).

  \item If $\pathLL t = \pathLL t_0\kappa^-$, then $\overline{\pathLL t_0\kappa^-}$ is a path of $\design E$ (resp. of $\design D$) and $\pathLL t_0$ is a path of $\design D$ (resp. of $\design E$). By induction hypothesis, $\overline{\pathLL t} = \overline{\pathLL t_0\kappa^-}$ is a prefix of $\overline{\pathLL s}$ (resp. of $\pathLL s$), thus $\pathLL t$ is a prefix of $\pathLL s$ (resp. of $\overline{\pathLL s}$). The path $\pathLL s$ is of the form $\pathLL s = \pathLL t\kappa^{\prime+}\pathLL s'$. But since $\pathLL s$ and $\pathLL t\kappa^+$ are both paths of $\design D$ (resp. $\design E$), they cannot differ on a positive action, hence $\kappa^+ = \kappa^{\prime+}$. Thus $\pathLL t\kappa^+$ is a prefix of $\pathLL s$. 
  \end{itemize}
\end{proof}

\subsection{Associativity for Interaction Paths}

If $\pathLL s$ is a path of a multi-design $\design D$, and $\design E \subseteq \design D$, then we write $\proj{\pathLL s}{\design E}$ for the longest subsequence of $\pathLL s$ that is a path of $\design E$. Note that this is well defined.

\begin{proposition}[Associativity for paths] \label{asso-path}
    Let $\design D$, $\design E$ and $\design F$ be cut-free multi-designs such that $\design E \cup \design F$ is a multi-design, and suppose $\design D \perp (\design E \cup \design F)$. We have:
\[\interseq{\design E}{\normalisation{\cut{\design F}{\design D}}} = \proj{\interseq{\design E \cup \design F}{\design D}}{\design E}\]
\end{proposition}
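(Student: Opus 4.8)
The plan is to reduce the statement to two already-available ingredients: the algebraic associativity of cuts, and the uniqueness of the interaction path. First I would check that both sides are well defined and, in particular, that $\design E \perp \normalisation{\cut{\design F}{\design D}}$. Applying Lemma~\ref{cutcut} with $\design D_1 = \design E$, $\design D_2 = \design F$ and the lemma's third multi-design equal to $\design D$, followed by the commutation Lemma~\ref{cut-commut}, gives $\cut{\design E \cup \design F}{\design D} = \cut{\design E}{\cut{\design F}{\design D}}$. Normal-form associativity (Corollary~\ref{multi-assoc}), together with $\normalisation{\design E} = \design E$ since $\design E$ is cut-free, then yields $\normalisation{\cut{\design E \cup \design F}{\design D}} = \normalisation{\cut{\design E}{\normalisation{\cut{\design F}{\design D}}}}$. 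As $\design D \perp \design E \cup \design F$ means the left-hand side converges to $\daimon$, so does the right-hand side; after the routine check of closed-compatibility (the surviving free variables and negative places of $\normalisation{\cut{\design F}{\design D}}$ are exactly those facing $\design E$, by the bookkeeping $\mathrm{fv}(\design E) \cup \mathrm{fv}(\design F) = \mathrm{np}(\design D)$ and $\mathrm{fv}(\design D) = \mathrm{np}(\design E) \cup \mathrm{np}(\design F)$), this establishes $\design E \perp \normalisation{\cut{\design F}{\design D}}$.

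With orthogonality in hand, I would invoke Proposition~\ref{inter-unique} to avoid matching the two interaction sequences action by action. Set $\pathLL s = \proj{\interseq{\design E \cup \design F}{\design D}}{\design E}$. By the very definition of $\proj{\cdot}{\design E}$ it is a (positive-ended) path of $\design E$. Proposition~\ref{inter-unique} tells us that there is a unique path of $\design E$ whose dual is a path of $\normalisation{\cut{\design F}{\design D}}$, and that this unique path is exactly $\interseq{\design E}{\normalisation{\cut{\design F}{\design D}}}$. Hence it suffices to prove the single statement that $\dual{\pathLL s}$ is a path of $\normalisation{\cut{\design F}{\design D}}$: the placement of the terminal $\daimon$ and the precise polarity pattern are then forced, and the desired equality follows without any explicit control over where $\daimon$ is played.

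To establish that $\dual{\pathLL s}$ is a path of $\normalisation{\cut{\design F}{\design D}}$, I would run an induction on the number of reduction steps of the interaction between $\design E \cup \design F$ and $\design D$, maintaining a decomposition of the evolving left component into an $\design E$-lineage and an $\design F$-lineage (arguments produced by a step inherit the lineage of the firing design, and a continuation inherits the lineage of the negative design it comes from). Each step either involves an $\design E$-lineage design or is internal to the $\design F$/$\design D$ subsystem. An internal step is a reduction inside $\cut{\design F}{\design D}$, so by Corollary~\ref{multi-assoc} it leaves $\normalisation{\cut{\design F}{\design D}}$ unchanged and contributes an action dropped by $\proj{\cdot}{\design E}$; an $\design E$-step contributes exactly one action of $\pathLL s$, whose dual I would show to extend the current view of $\normalisation{\cut{\design F}{\design D}}$, transferring P- and O-visibility from the combined interaction via Proposition~\ref{pref-inter-path}. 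Equivalently, one could phrase the whole of step three directly through Proposition~\ref{perp-path}, characterising the paths of $\normalisation{\cut{\design F}{\design D}}$ whose dual is a path of $\design E$ as the prefixes of $\dual{\pathLL s}$.

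The main obstacle is the interleaving. A single action of the combined interaction may ``chain through $\design D$'', passing from the $\design E$-lineage to the $\design F$-lineage via a design of $\design D$ whose address faces $\design E$ and whose free variables face $\design F$. Consequently the $\design F$/$\design D$-internal steps do not form a prefix block that can be normalised away once and for all; they are scattered among the $\design E$-steps, and the invariant relating the current residual of $\cut{\design F}{\design D}$ to the fixed object $\normalisation{\cut{\design F}{\design D}}$ must be stated carefully. This is exactly where the detour through Proposition~\ref{inter-unique} pays off: rather than deciding on which side the terminating $\daimon$ is played in each formulation — delicate, because an $\design F$-lineage $\daimon$ that is only triggered after an $\design E$-action survives as a residual of $\normalisation{\cut{\design F}{\design D}}$ and is replayed on the $\design E$-side there — I only need the structural fact that $\dual{\pathLL s}$ traces a legal view-and-visibility pattern through $\normalisation{\cut{\design F}{\design D}}$, and uniqueness supplies the rest.
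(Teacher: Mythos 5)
Your proposal is correct in outline and, in substance, follows the same route as the paper's own proof. The opening step --- combining Lemmas~\ref{cutcut} and \ref{cut-commut} with Corollary~\ref{multi-assoc} to obtain $\normalisation{\cut{\design E \cup \design F}{\design D}} = \normalisation{\cut{\design E}{\normalisation{\cut{\design F}{\design D}}}}$ and hence $\design E \perp \normalisation{\cut{\design F}{\design D}}$ --- is exactly the paper's first move, and your third step, the induction along the interaction with a dichotomy according to whether the active design lies in the $\design F$-lineage (a step absorbed into $\normalisation{\cut{\design F}{\design D}}$ via Lemma~\ref{lem-norm}) or the $\design E$-lineage (a step contributing one action on each side), is precisely the paper's reduction/commutation induction. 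The only genuine divergence is the final packaging: you prove the ostensibly weaker fact that $\dual{\pathLL s}$ is a path of $\normalisation{\cut{\design F}{\design D}}$ and conclude by uniqueness (Proposition~\ref{inter-unique}), whereas the paper's induction delivers the equality of the two sequences directly. This is legitimate, but the advertised payoff is smaller than you claim: to invoke Proposition~\ref{inter-unique} you must show that $\dual{\pathLL s}$ \emph{including its terminal $\daimon$ when $\pathLL s$ is $\daimon$-free} is a path of $\normalisation{\cut{\design F}{\design D}}$, so the bookkeeping of which side plays the final $\daimon$ is relocated into the hypothesis of the uniqueness lemma rather than eliminated. Two further points your invariant must absorb, as the paper's does: the residual pairs arising in the induction are not closed-compatible (substitutions $\design m/x_i^a$ with $x_i^a \notin \mathrm{fv}(\design p_a)$ are discarded by Lemma~\ref{lem-norm}), so the induction hypothesis must be weakened from orthogonality to the condition $\daimon \in \normalisation{\cut{\design E \cup \design F}{\design D}}$ plus the variable inclusions; and in the commutation steps one must still identify $\normalisation{\cut{\design F}{\design D}}$ with $\normalisation{\cut{\design F}{\design D'}}$ up to the exchanged designs, which is the same computation the paper performs.
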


\begin{proof}
  We will prove the result for a larger class of multi-designs. Instead of the assumption $\design D \perp (\design E \cup \design F)$, suppose that $\design D$ and $\design E \cup \design F$ are:
  \begin{itemize}
  \item of opposite polarities
  \item compatible
  \item satisfying $\mathrm{fv}(\design D) \subseteq \mathrm{np}(\design E \cup \design F)$ and $\mathrm{fv}(\design E \cup \design F) \subseteq \mathrm{np}(\design D)$
  \item and such that $\daimon \in \normalisation{\cut{\design E \cup \design F}{\design D}}$ (in particular their interaction is finite).
  \end{itemize}
  First remark that $\design F$ and $\design D$ are compatible, hence it is possible to define $\cut{\design F}{\design D}$. Then since $\daimon \in \normalisation{\cut{\design E \cup \design F}{\design D}}$, we have $\daimon \in \normalisation{\cut{\design E}{\normalisation{\cut{\design F}{\design D}}}}$, indeed:
  \begin{align*}
    \normalisation{\cut{\design E \cup \design F}{\design D}} & = \normalisation{\cut{\design E}{\cut{\design F}{\design D}}} & \mbox{ by Lemmas~\ref{cutcut} and \ref{cut-commut}} \\
    & = \normalisation{\cut{\design E}{\normalisation{\cut{\design F}{\design D}}}} & \mbox{ by Corollary~\ref{multi-assoc}}
  \end{align*}
This also shows that $\design E$ and $\normalisation{\cut{\design F}{\design D}}$ are compatible. As they are of opposite polarities and they satisfy the condition on variables, $\langle \design E \leftarrow \normalisation{\cut{\design F}{\design D}} \rangle$ is defined.
  
Let $\pathLL s = \interseq{\design E \cup \design F}{\design D}$, and let us show the result (i.e., $\proj{\pathLL s}{\design E} = \interseq{\design E}{\normalisation{\cut{\design F}{\design D}}}$) by induction on the length of $\pathLL s$, which is finite because the interaction between $\design D$ and $\design E \cup \design F$ is finite.
\begin{itemize}
\item If $\pathLL s = \epsilon$ then necessarily $\daimon \in \design D$ thus also $\daimon \in \normalisation{\cut{\design F}{\design D}}$. Hence $\proj{\pathLL s}{\design E} = \epsilon = \langle \design E \leftarrow \normalisation{\cut{\design F}{\design D}} \rangle$.
\item If $\pathLL s = \daimon$ then $\daimon \in \design E \cup \design F$. If $\daimon \in \design E$ then $\langle  \design E \leftarrow \normalisation{\cut{\design F}{\design D}} \rangle = \daimon = \proj{\pathLL s}{\design E}$. Otherwise $\daimon \in \design F$, thus $\daimon \in \normalisation{\cut{\design F}{\design D}}$ and $\interseq{\design E}{\normalisation{\cut{\design F}{\design D}}} = \epsilon = \proj{\pathLL s}{\design E}$.
\item If $\pathLL s = \kappa^+ \pathLL s'$ where $\kappa^+ = \posdes{x}{a}{\vect{x^a}}$ is a proper positive action, then $\design E \cup \design F$ is a positive multi-design such that its only positive design is of the form $\design p = \posdes{x}{a}{\vect{\design m}}$. Thus $\design D$ is negative, and there exists $\design n$ such that $(\design n/x) \in \design D$ of the form $\design n = \negdes{b}{\vect{x^b}}{\design p_b}$, where $\design p_a \neq \Omega$ because the interaction converges. Let $\design D' = (\design D \setminus \{\design n/x\}) \cup \{\design p_a\}$.
  \begin{itemize}
  \item Either $\design p \in \design F$ [\textbf{reduction step}]. \\ In this case, we have $\proj{\pathLL s}{\design E} = \proj{\pathLL s'}{\design E}$, so let us show that $\proj{\pathLL s'}{\design E} = \langle \design E \leftarrow \normalisation{\cut{\design F}{\design D}} \rangle$. By definition of the interaction sequence, we have $\pathLL s' = \interseq{\design E \cup \design F'}{\design D'}$ where $\design F' = (\design F \setminus \{\design p\}) \cup \{\vect{(\design m/x^a)}\}$. Thus by induction hypothesis $\proj{\pathLL s'}{\design E} = \interseq{\design E}{\normalisation{\cut{\design F'}{\design D'}}}$. But by Lemma~\ref{lem-norm}, $\interseq{\design E}{\normalisation{\cut{\design F'}{\design D'}}} = \interseq{\design E}{\normalisation{\cut{\design F}{\design D}}}$ because the negatives among $\vect{(\design m/x^a)}$ in $\normalisation{\cut{\design F'}{\design D'}}$ will not interfere in the interaction with $\design E$, since the variables $\vect{x^a}$ do not appear in $\design E$. Hence the result.
  \item Or $\design p \in \design E$ [\textbf{commutation step}]. \\ In this case, we have $\proj{\pathLL s}{\design E} = \kappa^+(\proj{\pathLL s'}{\design E})$, and by definition of the interaction sequence $\pathLL s' = \interseq{\design E' \cup \design F}{\design D'}$ where $\design E' = (\design E \setminus \{\design p\}) \cup \{\vect{(\design m/x^a)}\}$. Thus by induction hypothesis $\proj{\pathLL s'}{\design E} = \proj{\pathLL s'}{\design E'} = \interseq{\design E'}{\normalisation{\cut{\design F}{\design D'}}}$. 
    But we have
    \begin{align*}
      \interseq{\design E}{\normalisation{\cut{\design F}{\design D}}} 
      &= \interseq{\design E}{\normalisation{\cut{\design F}{\design D' \cup \{(\design n/x)\} \setminus \{\design p_a\}}}} \\
      &= \interseq{\design E}{\normalisation{\cut{\design F}{\design D'}}\cup \{(\design n'/x)\} \setminus \{\design p'_a\}} \\
      &= \kappa^+\interseq{\design E'}{\normalisation{\cut{\design F}{\design D'}}}
    \end{align*}
    where $\design n'$ is the only negative design of $\normalisation{\cut{\design F}{\design D}}$ on variable $x$, and $\design p_a'$ the only positive design of $\normalisation{\cut{\design F}{\design D'}}$. 
    Hence $\interseq{\design E}{\normalisation{\cut{\design F}{\design D}}} = \kappa^+ (\proj{\pathLL s'}{\design E}) = \proj{\pathLL s}{\design E}$.
  \end{itemize}
\item If $\pathLL s = \kappa^- \pathLL s'$ where $\kappa^- = a_x(\vect{x^a})$, then $\design D$ is positive with only positive design of the form $\design p = \posdes{x}{a}{\vect{\design m}}$, and there exists a negative design $\design n$ such that $(\design n/x) \in \design E \cup \design F$, with $\design n$ of the form $\design n = \negdes{b}{\vect{x^b}}{\design p_b}$ where $\design p_a \neq \Omega$. By definition of the interaction sequence, we have $\pathLL s' = \interseq{((\design E \cup \design F) \setminus \{\design n/x\}) \cup \{\design p_a\}}{\design D'}$ where $\design D' = (\design D \setminus \{\design p\}) \cup \{\vect{(\design m/x^a)}\}$.
  \begin{itemize}\item Either $\design n \in \design F$ [\textbf{reduction step}]. \\ In this case, we have $\proj{\pathLL s}{\design E} = \proj{\pathLL s'}{\design E}$, so let us show that $\proj{\pathLL s'}{\design E} = \langle \design E \leftarrow \normalisation{\cut{\design F}{\design D}} \rangle$. By induction hypothesis $\proj{\pathLL s'}{\design E} = \interseq{\design E}{\normalisation{\cut{\design F'}{\design D'}}}$ where $\design F' = (\design F \setminus \{\design n/x\}) \cup \{\design p_a\}$, and by Lemma~\ref{lem-norm} we deduce $\proj{\pathLL s'}{\design E} = \interseq{\design E}{\normalisation{\cut{\design F}{\design D}}}$, hence the result.
  \item Or $\design n \in \design E$ [\textbf{commutation step}]. \\ In this case, we have $\proj{\pathLL s}{\design E} = \kappa^-(\proj{\pathLL s'}{\design E})$. By induction hypothesis $\proj{\pathLL s'}{\design E} = \proj{\pathLL s'}{\design E'} = \interseq{\design E'}{\normalisation{\cut{\design F}{\design D'}}}$ where $\design E' = (\design E \setminus \{\design n/x\}) \cup \{\design p_a\}$.
    But we have
    \begin{align*}
      \interseq{\design E}{\normalisation{\cut{\design F}{\design D}}} 
	&= \interseq	{\design E}
					{\normalisation
						{\cut	{\design F}
								{\design D' \cup \{\design p\} \setminus \{\vect{(\design m/x^a)}\}}
						}
					} \\
	&= \interseq	{\design E}
					{\normalisation
						{\cut	{\design F}
								{\design D'}
						}  
					  \cup \{\design p'\} \setminus \{\vect{(\design m'/x^a)}\}
					} \\
	&= \kappa^-\interseq	{\design E'}
							{ \normalisation {\cut {\design F}{\design D'}}}
\end{align*}
where $\design p'$ is the only positive design of $\normalisation{\cut{\design F}{\design D}}$, and for each $i \le ar(a)$, $\design m'_i$ is the only negative design of $\normalisation{\cut{\design F}{\design D'}}$ on variable $x^a_i$.
Therefore $\interseq{\design E}{\normalisation{\cut{\design F}{\design D}}} = \kappa^- (\proj{\pathLL s'}{\design E}) = \proj{\pathLL s}{\design E}$, which concludes the proof.
  \end{itemize}
\end{itemize}
\end{proof}

\section{Proofs of Subsection~\ref{sub-regpur}} \label{sec-paths-bis}

We now come back to (non ``multi-'') designs, and we prove:
\begin{itemize}
\item the form of visitable paths for each connective (\textsection~\ref{sec-inc-visit}), which is needed for next point;
\item that (some) connectives preserve regularity (Propositions~\ref{reg-sh}, \ref{reg-tensor}, \ref{reg-arrow}, corresponding to Proposition~\ref{prop_reg_stable}), purity (Proposition~\ref{prop_pure_stable}) and quasi-purity (Proposition~\ref{prop_arrow_princ}).
\end{itemize}

\subsection{Preliminaries}

\subsubsection{Observational Ordering and Monotonicity} \label{ord-mono}

We consider the \defined{observational ordering} $\preceq$ over designs:
$\design d' \preceq \design d$ if $\design d$ can be obtained from $\design d'$ by substituting:
\begin{itemize}
\item positive subdesigns for some occurrences of $\Omega$.
\item $\daimon$ for some positive subdesigns.
\end{itemize}
Remark in particular that for all positive designs $\design p$ and $\design p'$, we have $\Omega \preceq \design p \preceq \daimon$, and if $\design p \sqsubseteq \design p'$ then $\design p \preceq \design p'$. We can now state the \imp{monotonicity} theorem, an important result of ludics. A proof of the theorem formulated in this form is found in \cite{Terui}.
\begin{theorem}[Monotonicity] \label{thm_mono}
  \begin{itemize}
  \item If $\design d \preceq \design e$ and $\design m \preceq \design n$, then $\design d[\design m/x] \preceq \design e[\design n/x]$
  \item If $\design d \preceq \design e$ then $\normalisation{\design d} \preceq \normalisation{\design e}$
  \end{itemize}
\end{theorem}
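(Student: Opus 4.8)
The plan is to handle the two items in order, using the first (substitution monotonicity) as a lemma inside the proof of the second.

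For the substitution part, I would first record that $\preceq$ is the reflexive--transitive--contextual closure of two basic moves: \emph{(M1)} replacing an occurrence of $\Omega$ by a positive subdesign, and \emph{(M2)} replacing a positive subdesign by $\daimon$. By transitivity of $\preceq$ it then suffices to prove monotonicity in each argument separately, namely $\design d[\design m/x] \preceq \design e[\design m/x]$ and $\design e[\design m/x] \preceq \design e[\design n/x]$, and to compose them. The first follows by induction on the number of basic moves from $\design d$ to $\design e$: each move occurs at a positive position, whose location is fixed by $[\design m/x]$ (since $x$ is substituted only at its addresses), so the move survives the substitution. Concretely, an \emph{(M1)} step filling an $\Omega$ with $\design q$ becomes the step filling the same $\Omega$ with $\design q[\design m/x]$, and an \emph{(M2)} step collapsing $\design p$ becomes the step collapsing $\design p[\design m/x]$; both are legal because $\Omega[\design m/x]=\Omega$ and $\daimon[\design m/x]=\daimon$. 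The second statement follows because $\design m \preceq \design n$ lifts through the context $\design e[\,\cdot\,/x]$: each basic move turning $\design m$ into $\design n$ is again a basic move when performed inside $\design e$ at the (unique, by linearity) occurrence of $x$. Both halves are routine once this decomposition is in place.

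For the normalisation part, the key idea I would use is to split $\preceq$ as $\sqsubseteq$ followed by a $\daimon$-collapse: whenever $\design d \preceq \design e$ there is $\design f$ with $\design d \sqsubseteq \design f$ (only \emph{(M1)} moves, i.e.\ the stable ordering) and $\design f$ related to $\design e$ by \emph{(M2)} moves only. The justification is a reordering lemma on move sequences: an \emph{(M1)} step performed inside a subdesign that is later collapsed by \emph{(M2)} is redundant, so all fillings of $\Omega$ may be carried out first. It then remains to prove that normalisation is monotone for each piece. For $\sqsubseteq$ this is the continuity of the normal-form operator: I would stratify $\normalisation{\cdot}$ by head-reduction depth, writing $\normalisation{\design d}_k$ for the $k$-step truncation (with $\Omega$ for what is unresolved), prove $\design d \sqsubseteq \design f \Rightarrow \normalisation{\design d}_k \sqsubseteq \normalisation{\design f}_k$ by induction on $k$ (invoking the substitution part for the substitution created by each reduction step), and pass to the supremum. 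For the $\daimon$-collapse piece I would use that $\daimon$ is the top element and short-circuits interaction: if $\design f \Downarrow \daimon$, or if the collapsed subdesign is never reached in head position, the claim is immediate, and otherwise the reducts stay $\preceq$-related, again by the substitution part and induction on the reduction length.

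The main obstacle is aligning the two reduction sequences in the $\daimon$-collapse case. A head redex that fires in $\design f$ may have no literal counterpart in $\design e$, because the corresponding subdesign has already become $\daimon$ (so $\design e$ converges while $\design f$ keeps computing); symmetrically, under \emph{(M1)} the smaller design $\design d$ may satisfy $\design d \Uparrow$ (yielding $\Omega$) exactly where $\design f$ converges. Both mismatches are reconciled by the facts that $\Omega$ is the bottom and $\daimon$ the top element for $\preceq$, so they can only enlarge the normal form; making this precise is exactly what the stratification by reduction length, together with the final appeal to the $\Downarrow/\Uparrow$ characterisation of $\normalisation{\cdot}$, is designed to control.
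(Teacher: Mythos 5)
First, a point of reference: the paper does not prove this theorem itself --- it is imported from Terui (``A proof of the theorem formulated in this form is found in \cite{Terui}''), so there is no in-paper proof to compare against. Judged on its own, your architecture is the standard and correct one: prove substitution monotonicity first, factor $\preceq$ as $\sqsubseteq$ followed by $\daimon$-collapses, and propagate each piece through normalisation by aligning head-reduction sequences, using that $\Omega$ is bottom and $\daimon$ is top to absorb the points where the two reductions desynchronise. The stratification of $\normalisation{\cdot}$ by reduction depth is also the right device for handling the coinductively defined normal form.

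There is, however, one genuine gap in the foundations of your argument. Designs are defined \emph{coinductively} and may be infinite trees (the paper uses such designs essentially, e.g.\ $\design d_\omega$ and the inhabitants of fixed-point behaviours), and $\preceq$ is defined as a \emph{simultaneous} substitution at ``some occurrences'' --- a possibly infinite set of positions. It is therefore strictly larger than the reflexive--transitive closure of your single-position moves \emph{(M1)} and \emph{(M2)}: two designs may differ at infinitely many positions, and no finite sequence of basic moves connects them. Consequently every step you run ``by induction on the number of basic moves'' --- both halves of the substitution lemma and the reordering lemma underlying the factorisation $\preceq\ =\ \sqsubseteq$ then $\daimon$-collapse --- is unsound as stated. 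The repairs are routine but must be made: prove substitution monotonicity by coinduction on designs (or by directly tracking the sets of modified positions through the substitution), and obtain the intermediate design $\design f$ not by permuting a move sequence but by direct definition --- apply all the $\Omega$-fillings of the witness for $\design d \preceq \design e$ (those not lying under a collapsed position) to get $\design f$ with $\design d \sqsubseteq \design f$, and observe that $\design e$ is then $\design f$ with the collapses applied. With those replacements the rest of your argument, including the treatment of the desynchronisation cases in the second item, goes through.
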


This means that the relation $\preceq$ compares the likelihood of convergence: if $\design d \perp \design e$ and $\design d \preceq \design d'$ then $\design d' \perp \design e$. In particular, if $\beh B$ is a behaviour, if $\design d \in \beh B$ and $\design d \preceq \design d'$ then $\design d' \in \beh B$.

Remark the following important fact: given a path $\pathLL s$ of some design $\design d$, there is a unique design maximal for $\preceq$ such that $\pathLL s$ is a path of it. Indeed, this design $\completed{\pathLL s}$ is obtained from $\design d$ by replacing all positive subdesigns (possibly $\Omega$) whose first positive action is not in $\pathLL s$ by $\daimon$. Note that, actually, the design $\completed{\pathLL s}$ does not depend on $\design d$ but only on the path $\pathLL s$.

\begin{example}
  Consider design $\design d$ and the path $\pathLL s$ below:
  \begin{align*}
    \design d \hspace{.5cm} & = \hspace{.5cm} \posdes{x}{a}{b(y).(\posdes{y}{e}{}),c().\daimon+d(z).(\posdes{z}{e}{})} \\
    \pathLL s \hspace{.5cm} & = \hspace{.5cm} \posdes{x}{a}{x_1, x_2} \hspace{1cm} b_{x_1}(y) \hspace{1cm} \posdes{y}{e}{} \hspace{1cm} c_{x_2}() \hspace{1cm} \daimon
  \end{align*}
\begin{center}
    \begin{tikzpicture}[grow=up, level distance=3cm, sibling distance = 4.5cm, scale=.4]
      \draw[-]
      
        node (a) {$\posdes{x_0}{a}{x_1,x_2}$}
        child {
          node (d) {$d_{x_2}(y)$}
          child {
            node (e2) {$\posdes{z}{e}{}$}
          }
        }
        child	{
          node (c) {$c_{x_2}()$}
          child	{
            node (dai) {$\daimon$}
          }
        }
        child	{
          node (b) {$b_{x_1}(y)$}
          child	{
            node (e1) {$\posdes{y}{e}{}$}
          }
        }
      ;

        \draw[-, red, thick, rounded corners] ($(a)+(-2.8,-.5)$) -- ($(a)+(-2.8,.5)$) -- ($(b)+(-1.7,-.5)$) -- ($(e1)+(-1.7,0)$) -- ($(e1)+(-1.4,-.3)$);
        \draw[-,red, dashed, thick] ($(e1)+(-1.4,-.3)$) -- ($(c)+(-1.7,-.2)$);
        \draw[->, red, thick, rounded corners] ($(c)+(-1.7,-.2)$) -- ($(c)+(-1.4,-.5)$) -- ($(dai)+(-1.4,.5)$) ;

        \node[red] at ($(b)+(-2.4,1.2)$) {$\pathLL s$};

    \end{tikzpicture}
\end{center}
We have $\completed{\pathLL s} = \posdes{x}{a}{b(y).(\posdes{y}{e}{}) + \sum_{f \neq b}f(\vect{x^f}).\daimon , \negdes{f}{\vect{x^f}}{\daimon}}$
\begin{center}
    \begin{tikzpicture}[grow=up, level distance=3cm, sibling distance = 3.5cm, scale=.4]
      \draw[-]
      
        node (a) {$\posdes{x_0}{a}{x_1,x_2}$}
        child {
          node (d) {$\dots$}
          edge from parent[draw=none]
        }
        child {
          node (d) {$f_{x_2}(\vect{x^f})$}
          child {
            node (e2) {$\daimon$}
          }
        }
        child {
          node (d) {$\dots$}
          edge from parent[draw=none]
        }
        child {
          node (d) {$d_{x_2}(y)$}
          child {
            node (e2) {$\daimon$}
          }
        }
        child	{
          node (c) {$c_{x_2}()$}
          child	{
            node (dai) {$\daimon$}
          }
        }
        child	{
          node (b) {$b_{x_1}(y)$}
          child	{
            node (e1) {$\posdes{y}{e}{}$}
          }
        }
        child {
          node (d) {$\dots$}
          edge from parent[draw=none]
        }
        child {
          node (d) {$f_{x_1}(\vect{x^f})$}
          child {
            node (e2) {$\daimon$}
          }
        }
        child {
          node (d) {$\dots$}
          edge from parent[draw=none]
        }
      ;

        \draw[-, red, thick, rounded corners] ($(a)+(-2.5,-.5)$) -- ($(a)+(-2.5,.5)$) -- ($(b)+(-1.4,-.5)$) -- ($(e1)+(-1.4,0)$) -- ($(e1)+(-1.1,-.3)$);
        \draw[-,red, dashed, thick] ($(e1)+(-1.1,-.3)$) -- ($(c)+(-1.5,-.2)$);
        \draw[->, red, thick, rounded corners] ($(c)+(-1.5,-.2)$) -- ($(c)+(-1.2,-.5)$) -- ($(dai)+(-1.2,.5)$) ;

        \node[red] at ($(b)+(-2.4,1.2)$) {$\pathLL s$};

    \end{tikzpicture}
\end{center}
\end{example}

\begin{proposition} \label{cor-mono}
  If $\pathLL s \in \visit{\beh B}$ then $\completed{\pathLL s} \in \beh B$.
\end{proposition}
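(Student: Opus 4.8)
The plan is to derive the result directly from the remark preceding the statement together with the upward-closure of behaviours for the observational ordering $\preceq$. First I would unfold the hypothesis: since $\pathLL s \in \visit{\beh B}$, by definition there exist cut-free designs $\design d \in \beh B$ and $\design e \in \beh B^\perp$ such that $\pathLL s = \interseq{\design d}{\design e}$. In particular $\pathLL s$ is a path of $\design d$, so $\completed{\pathLL s}$ is defined, and by the remark it is obtained from $\design d$ by replacing with $\daimon$ every positive subdesign (including occurrences of $\Omega$) whose first positive action does not appear in $\pathLL s$.

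The key step is to observe that this construction only moves $\design d$ upward for $\preceq$. Comparing with the definition of $\preceq$: replacing a genuine positive subdesign by $\daimon$ is exactly the second allowed operation, while replacing an occurrence of $\Omega$ by $\daimon$ is an instance of the first operation (substituting the positive subdesign $\daimon$ for $\Omega$). Hence every elementary modification turning $\design d$ into $\completed{\pathLL s}$ is an upward move, so $\design d \preceq \completed{\pathLL s}$. I would also note that these replacements only erase free variables, so $\completed{\pathLL s}$ remains atomic of the same polarity as $\design d$ and is cut-free; thus it is a legitimate candidate for membership in $\beh B$.

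Finally I would conclude using the monotonicity remark stated just after Theorem~\ref{thm_mono}: behaviours are upward-closed for $\preceq$, i.e.\ if $\design d \in \beh B$ and $\design d \preceq \design d'$ then $\design d' \in \beh B$. Applying this with $\design d' = \completed{\pathLL s}$ gives $\completed{\pathLL s} \in \beh B$. The proof is therefore essentially immediate once the definitions are in place; the only point requiring care is the \emph{direction} of the inequality---checking that the completion raises $\design d$ rather than lowering it---which is exactly what the comparison with the two clauses defining $\preceq$ confirms.
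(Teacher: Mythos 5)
Your proof is correct and follows exactly the paper's own argument: take $\design d \in \beh B$ with $\pathLL s$ a path of $\design d$, note $\design d \preceq \completed{\pathLL s}$, and conclude by monotonicity (Theorem~\ref{thm_mono}). The extra care you take in checking the direction of $\preceq$ is a sound elaboration of the same one-line proof.
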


\begin{proof}
  There exists $\design d \in \beh B$ such that $\pathLL s$ is a path of $\design d$, thus $\design d \preceq \completed{\pathLL s}$. The result then comes from monotonicity (Theorem~\ref{thm_mono}).
\end{proof}

\subsubsection{More on Paths}

Let $\beh B$ be a behaviour.

\begin{lemma} \label{lem_visit_path_inc}
  If $\design d \in \beh B$ and $\pathLL s \in \visit{B}$ is a path of $\design d$, then $\pathLL s$ is a path of $|\design d|$.
\end{lemma}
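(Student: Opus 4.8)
The plan is to reduce the statement to a single interaction and then play the minimality of the incarnation against the determinism of normalisation.

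First I would recover an explicit counter-design witnessing $\pathLL s$. Since $\pathLL s \in \visit{B}$, there are cut-free $\design e_1 \in \beh B$ and $\design e \in \beh B^\perp$ with $\pathLL s = \interseq{\design e_1}{\design e}$; by the definition of the interaction path this means $\dual{\pathLL s}$ is a path of $\design e$. As $\pathLL s$ is also a path of $\design d$ by hypothesis, Proposition~\ref{perp-path0} gives $\design d \perp \design e$, and the uniqueness of the interaction path (Proposition~\ref{inter-unique}) forces $\pathLL s = \interseq{\design d}{\design e}$. So I may assume that $\pathLL s$ is literally the interaction sequence of $\design d$ against a fixed cut-free $\design e \in \beh B^\perp$.

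Next, set $\design d_0 = |\design d|_{\beh B}$. By Definition~\ref{def-incarn}, $\design d_0$ is cut-free, $\design d_0 \sqsubseteq \design d$, and $\design d_0 \in \beh B$; in particular $\design d_0 \perp \design e$, i.e.\ $\normalisation{\cut{\design d_0}{\design e}} = \daimon$. Recall that $\design d_0 \sqsubseteq \design d$ means $\design d_0$ is obtained from $\design d$ by replacing some positive subdesigns by $\Omega$, and that the positive actions occurring in $\pathLL s = \interseq{\design d}{\design e}$ are exactly the first actions of the positive subdesigns of $\design d$ that the interaction traverses (this is clear from the positive clause of Definition~\ref{inter-seq}). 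The crux is then to compare the two interactions $\interseq{\design d_0}{\design e}$ and $\interseq{\design d}{\design e}$. I would argue, by induction along Definition~\ref{inter-seq}, that they proceed identically as long as the positive subdesign about to be played has \emph{not} been replaced by $\Omega$ in $\design d_0$: at each step the active positive design sits either on $\design e$'s side (identical for both) or on the $\design d$/$\design d_0$ side, where the two designs coincide until the first cut subdesign is met. If $\interseq{\design d_0}{\design e}$ ever reached such a cut point, its active positive design would be $\Omega$, so by the $\Omega$-clause of Definition~\ref{inter-seq} the sequence would halt there with no $\daimon$; by Lemma~\ref{inter-dual} this yields $\normalisation{\cut{\design d_0}{\design e}} \neq \daimon$, contradicting $\design d_0 \perp \design e$. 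Hence no positive subdesign traversed by $\pathLL s$ is cut in $\design d_0$, the two interactions coincide, and $\interseq{\design d_0}{\design e} = \pathLL s$. By Proposition~\ref{pref-inter-path}, $\pathLL s$ is therefore a path of $\design d_0 = |\design d|$, as required.

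The main obstacle is this comparison step: one must state precisely that normalisation is deterministic and depends only on the subdesigns actually visited, so that replacing an unvisited part of $\design d$ by $\Omega$ leaves the interaction with $\design e$ unchanged, while replacing a \emph{visited} positive subdesign forces immediate divergence. This is exactly where $\design d_0 \sqsubseteq \design d$ and the orthogonality $\design d_0 \perp \design e$ are set against each other; everything else — the reduction to $\interseq{\design d}{\design e}$ and the final appeal to Proposition~\ref{pref-inter-path} — is routine.
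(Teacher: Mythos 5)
Your proof is correct, and its opening move --- using Proposition~\ref{perp-path0} and the uniqueness of the interaction path (Proposition~\ref{inter-unique}) to replace the arbitrary witness from the definition of visitability by $\design d$ itself, so that $\pathLL s = \interseq{\design d}{\design e}$ for some $\design e \in \beh B^\perp$ --- is exactly the paper's first step, made explicit where the paper leaves it implicit. Where you genuinely diverge is in how you compare $\interseq{\design d}{\design e}$ with $\interseq{|\design d|}{\design e}$. The paper stays entirely at the level of paths: it notes that $\pathLL t = \interseq{|\design d|}{\design e}$ is also a path of $\design d$ (since $|\design d| \sqsubseteq \design d$) and then rules out every way $\pathLL s$ and $\pathLL t$ could differ --- strict prefix in either direction, or first disagreement on a positive or on a negative action --- using only the fact that two paths of the same design cannot differ on a positive action, applied once on the $\design d$ side and once, after dualising, on the $\design e$ side. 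You instead descend to the operational level, running the two interactions in parallel along Definition~\ref{inter-seq} and arguing that they coincide until $|\design d|$ would expose an occurrence of $\Omega$, which would make $\normalisation{\cut{|\design d|}{\design e}}$ diverge and contradict $|\design d| \perp \design e$. Both arguments are sound, and both are at bottom determinism plus orthogonality of the incarnation; but your route rests on a simulation lemma (normalisation depends only on the subdesigns actually visited, so truncating unvisited positive subdesigns to $\Omega$ leaves the interaction unchanged) that the paper never isolates and that you would have to prove by the induction you sketch, whereas the paper's route reuses path machinery it has already established (Propositions~\ref{pref-inter-path} and~\ref{prefix-norm} and the remark that paths of a design cannot differ on a positive action). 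What your version buys is a clear computational reading of incarnation --- $|\design d|$ is precisely the part of $\design d$ that interactions actually visit --- at the cost of a somewhat longer formalisation.
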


\begin{proof}
  Let $\design e \in \beh B^\perp$ such that $\pathLL s = \interseq{\design d}{\design e}$, and let $\pathLL t = \interseq{|\design d|}{\design e}$.
  \begin{itemize}
  \item Since $|\design d| \sqsubseteq \design d$, the path $\pathLL s$ cannot be a strict prefix of $\pathLL t$, and $\pathLL s$ and $\pathLL t$ cannot differ on a positive action.
  \item If $\pathLL t$ is a strict prefix of $\pathLL s$ then it is positive-ended. So $\dual{\pathLL s}$ and $\dual{\pathLL t}$ are paths of $\design e$ differing on a positive action, which is impossible.
  \item If $\pathLL s$ and $\pathLL t$ differ on a negative action, say $\pathLL u\kappa^-_1$ and $\pathLL u\kappa^-_2$ are respective prefixes of $\pathLL s$ and $\pathLL t$ with $\kappa^-_1 \neq \kappa^-_2$, then $\overline{\pathLL u\kappa^-_1}$ and $\overline{\pathLL u\kappa^-_2}$ are paths of $\design e$ differing on a positive action, which is impossible.
  \end{itemize}
  Thus we must have $\pathLL s = \pathLL t$, hence the result.
\end{proof}

\begin{lemma} \label{daimon_visit}
  Let $\pathLL s \in \visit B$. For every positive-ended (resp. negative-ended) prefix $\pathLL s'$ of $\pathLL s$, we have $\pathLL s' \in \visit B$ (resp. $\pathLL s'\daimon \in \visit B$).
\end{lemma}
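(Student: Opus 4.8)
The plan is to realise each truncated path as the trace of an interaction in which one of the two designs plays the daimon one step too early; all the bookkeeping will be routed through the completed design $\completed{\cdot}$ and the monotonicity theorem. Fix cut-free atomic designs $\design d \in \beh B$ and $\design e \in \beh B^\perp$ with $\pathLL s = \interseq{\design d}{\design e}$, so that $\dual{\pathLL s}$ is a path of $\design e$. I will use three facts established earlier: first, $\completed{\pathLL t}$ is the $\preceq$-maximal design having the path $\pathLL t$ as one of its paths, hence $\design a \preceq \completed{\pathLL t}$ whenever $\pathLL t$ is a path of $\design a$; second, behaviours are upward closed for $\preceq$ (Theorem~\ref{thm_mono}), so replacing a positive subdesign by $\daimon$ keeps a design inside its behaviour; third, Proposition~\ref{perp-path0}, which says that two cut-free atomic designs are orthogonal as soon as one has a path whose dual is a path of the other, that path then being their interaction path. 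The guiding observation is that in $\interseq{\design d}{\design e}$ the final $\daimon$ is recorded on the side that plays it, so the trace ends on $\daimon$ when the $\beh B$-side stops and on a proper positive action when the $\beh B^\perp$-side stops; the two cases of the lemma correspond exactly to these two situations.

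For a positive-ended prefix $\pathLL s'$ (the case $\pathLL s' = \pathLL s$ being trivial, so assume $\pathLL s'$ ends on a proper positive action), note that $\pathLL s'$, as a positive-ended prefix of the path $\pathLL s$ of $\design d$, is itself a path of $\design d$; hence $\design d \preceq \completed{\pathLL s'}$ and $\completed{\pathLL s'} \in \beh B$. On the orthogonal side I would let $\design e''$ be $\design e$ with the positive subdesign first reached after the branch $\overline{\pathLL s'}$ replaced by $\daimon$: this is a single $\preceq$-step, so $\design e \preceq \design e'' \in \beh B^\perp$, and by construction $\dual{\pathLL s'} = \overline{\pathLL s'}\daimon$ is a path of $\design e''$. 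Since $\pathLL s'$ is a path of $\completed{\pathLL s'}$ and $\dual{\pathLL s'}$ is a path of $\design e''$, Proposition~\ref{perp-path0} gives $\completed{\pathLL s'} \perp \design e''$ together with $\interseq{\completed{\pathLL s'}}{\design e''} = \pathLL s'$; thus $\pathLL s' \in \visit B$.

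For a negative-ended prefix $\pathLL s'$ the construction is symmetric, with the roles of the two sides exchanged (the case where $\pathLL s$ continues with $\daimon$, i.e. $\pathLL s'\daimon = \pathLL s$, being trivial, so assume the action of $\pathLL s$ following $\pathLL s'$ is a proper positive one). Here $\overline{\pathLL s'} = \dual{\pathLL s'\daimon}$ is positive-ended and is a positive-ended prefix of the path $\dual{\pathLL s}$ of $\design e$, hence a genuine path of $\design e$; therefore $\design e \preceq \completed{\overline{\pathLL s'}} \in \beh B^\perp$. On the $\beh B$-side I would let $\design d''$ be $\design d$ with the positive subdesign following the $\pathLL s'$-ending negative action replaced by $\daimon$, so that $\design d \preceq \design d'' \in \beh B$ and $\pathLL s'\daimon$ is a path of $\design d''$. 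Proposition~\ref{perp-path0} then yields $\design d'' \perp \completed{\overline{\pathLL s'}}$ and $\interseq{\design d''}{\completed{\overline{\pathLL s'}}} = \pathLL s'\daimon$, so $\pathLL s'\daimon \in \visit B$.

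The step I expect to require the most care is the daimon bookkeeping inside the operator $\dual{\cdot}$: one must check, separately according to whether $\pathLL s$ itself ends on $\daimon$ or on a proper positive action, that $\overline{\pathLL s'}$ (respectively $\dual{\pathLL s'}$) is indeed a prefix of $\dual{\pathLL s}$, and that the single daimon-for-positive-subdesign replacement producing $\design d''$ or $\design e''$ is well defined because the relevant proper positive continuation really occurs in $\design d$ or $\design e$. The asymmetry — that the side whose prefix is positive-ended stops ``normally'' via $\completed{\cdot}$ of a genuine prefix-path, while the other side stops by an explicit $\preceq$-truncation, since the premature $\daimon$ is not itself a view of the original design — is the conceptual point to get right; once it is pinned down, the upward closure of behaviours under $\preceq$ and the uniqueness of the interaction path close both cases immediately.
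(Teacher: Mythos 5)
Your proof is correct and uses essentially the same mechanism as the paper: truncate the design on the appropriate side by replacing a positive subdesign with $\daimon$, invoke monotonicity (Theorem~\ref{thm_mono}) to stay inside the behaviour, and identify the resulting interaction path with the truncated sequence. The only cosmetic differences are that the paper proves the negative-ended case directly and obtains the positive-ended case by dualising to $\beh B^\perp$, whereas you treat both cases symmetrically, and that your replacement of the non-truncated design by $\completed{\cdot}$ is harmless but unnecessary (keeping $\design d$ or $\design e$ itself works).
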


\begin{proof}
  Let $\pathLL s = \interseq{\design d}{\design e}$ where $\design d \in B$ and $\design e \in B^\perp$, and let $\pathLL s'$ be a prefix of $\pathLL s$.
  \begin{itemize}
  \item If $\pathLL s'$ is negative-ended, let $\kappa^+$ be such that $\pathLL s'\kappa^+$ is a prefix of $\pathLL s$. The action $\kappa^+$ comes from $\design d$. Consider design $\design d'$ obtained from $\design d$ by replacing the positive subdesign of $\design d$ starting on action $\kappa^+$ with $\daimon$. Since $\design d \preceq \design d'$, by monotonicity $\design d' \in \beh B$. Moreover $\pathLL s' \daimon = \interseq{\design d'}{\design e}$, hence the result.
  \item If $\pathLL s'$ is positive-ended then either $\pathLL s' = \pathLL s$ and there is nothing to prove or $\pathLL s'$ is a strict prefix of $\pathLL s$, so assume we are in the second case. $\pathLL s'$ is $\daimon$-free, hence $\overline{\pathLL s'}$ is a negative-ended prefix of $\dual{\pathLL s} \in \visit{B^\perp}$. Using the argument above, it comes $\dual{\pathLL s'} = \overline{\pathLL s'} \daimon \in \visit{B^\perp}$, thus $\pathLL s' \in \visit B$.
  \end{itemize}
\end{proof}

\begin{lemma} \label{nec}
Let $\pathLL s \in \visit{B}$. For every prefix $\pathLL s' \kappa^-$ of $\pathLL s$ and every $\design d \in \beh B$ such that $\pathLL s'$ is a path of $\design d$, $\pathLL s'\kappa^-$ is a prefix of a path of $\design d$.
\end{lemma}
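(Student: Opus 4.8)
The plan is to exploit the hypothesis $\pathLL s \in \visit B$ to produce an explicit opponent that forces the move $\kappa^-$, and then to make $\design d$ play against it. Concretely, I would first fix cut-free designs $\design d_0 \in \beh B$ and $\design e_0 \in \beh B^\perp$ with $\pathLL s = \interseq{\design d_0}{\design e_0}$, which exist by definition of $\visit B$. Since $\design d \in \beh B = \beh B^{\perp\perp}$ and $\design e_0 \in \beh B^\perp$, we have $\design d \perp \design e_0$, so by Proposition~\ref{inter-unique} the interaction path $\pathLL t := \interseq{\design d}{\design e_0}$ is a well-defined $\daimon$-ended path of $\design d$ whose interaction converges. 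The whole statement then reduces to showing that $\pathLL s'\kappa^-$ is a prefix of $\pathLL t$.

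Next I would compare the two interactions $\pathLL t = \interseq{\design d}{\design e_0}$ and $\pathLL s = \interseq{\design d_0}{\design e_0}$, which run against the common opponent $\design e_0$, and prove by induction on length (following the inductive description of the interaction sequence, Definition~\ref{inter-seq}) that they coincide all along $\pathLL s'$. At a player step the next action is a positive action forced by the player side; since $\pathLL s'$ is a path of both $\design d$ and $\design d_0$, Remark~\ref{differ-neg} (two paths of the same design cannot differ on a positive action) shows that $\design d$ and $\design d_0$ play the same proper positive action there, namely the one prescribed by $\pathLL s'$, and in particular neither stops on $\daimon$ inside $\pathLL s'$. At an opponent step the next action is a negative action determined solely by $\design e_0$ and the current common view on the $\design e_0$ side, hence it is the same in both interactions; moreover $\design d$ accepts it because this action already belongs to $\pathLL s'$, a path of $\design d$. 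Thus $\pathLL s'$ is a prefix of $\pathLL t$.

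Finally I would treat the decisive step. As $\pathLL s'$ is positive-ended, immediately after it is the opponent's turn, and being the same opponent $\design e_0$ at the same position it requests exactly $\overline{\kappa^-}$, as it did in $\pathLL s$; so the only way $\pathLL t$ could fail to contain $\kappa^-$ would be for $\design d$ to carry an $\Omega$ branch at the addressed negative place, making the interaction diverge there. But $\design d \perp \design e_0$ guarantees convergence of $\interseq{\design d}{\design e_0}$, which excludes divergence at this step; hence $\design d$ does accept $\kappa^-$ and $\pathLL s'\kappa^-$ is a prefix of the path $\pathLL t$ of $\design d$, as required. The \emph{main obstacle} is precisely this last point together with the determinism argument of the second paragraph: one must justify carefully that interaction against a fixed $\design e_0$ is deterministic even though $\design d \neq \design d_0$, the positive moves being pinned down by the shared path $\pathLL s'$ and the negative moves by the shared opponent. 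This is exactly where the hypothesis $\pathLL s \in \visit B$ is used, since it is what supplies the opponent $\design e_0$ able to force $\kappa^-$. The separate degenerate case $\pathLL s' = \epsilon$ with $\beh B$ negative, where $\kappa^-$ is the initial root action of $\design d$ selected by the head of $\design e_0$, is handled by the same convergence argument.
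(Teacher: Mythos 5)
Your proof is correct and follows essentially the same route as the paper's: both fix witnesses $\design d_0 \in \beh B$, $\design e_0 \in \beh B^\perp$ with $\pathLL s = \interseq{\design d_0}{\design e_0}$, run $\design d$ against the same counter-design $\design e_0$, and use the determinism of interaction (two paths of a design cannot differ on a positive action, i.e.\ $\design e_0$'s next move after $\overline{\pathLL s'}$ is forced to be $\overline{\kappa^-}$) together with $\design d \perp \design e_0$ to conclude that $\pathLL s'\kappa^-$ is a prefix of $\interseq{\design d}{\design e_0}$. The only slip is inessential: $\interseq{\design d}{\design e_0}$ need not be $\daimon$-ended (the daimon reached may lie in $\design e_0$), but nothing in your argument depends on this.
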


\begin{proof}
  There exist $\design d_0 \in \beh B$ and $\design e_0 \in \beh B^{\perp}$ such that $\pathLL s = \langle \design d_0 \leftarrow \design e_0 \rangle$. Let $\pathLL s' \kappa^-$ be a prefix of $\pathLL s$, and let $\design d \in \beh B$ such that $\pathLL s'$ is a path of $\design d$. Since $\overline{\pathLL s'}$ is a prefix of a path of $\design e_0$, $\pathLL s'$ is a prefix of $\langle \design d \leftarrow \design e_0 \rangle$. We cannot have $\pathLL s' = \langle \design d \leftarrow \design e_0 \rangle$, otherwise $\dual{\pathLL s'} = \overline{\pathLL s'}\daimon$ and $\overline{\pathLL s' \kappa^-}$ would be paths of $\design e_0$ differing on a positive action, which is impossible.
  Thus there exists $\kappa^{\prime-}$ such that $\pathLL s'\kappa^{\prime-}$ is a prefix of $\langle \design d \leftarrow \design e_0 \rangle$, which is a path of $\design d$, and necessarily $\kappa^- = \kappa^{\prime-}$. Finally $\pathLL s'\kappa^-$ is a prefix of a path of $\design d$.
\end{proof}

\subsubsection{An Alternative Definition of Regularity}

Define the \defined{anti-shuffle} ($\antishuffle$) as the dual operation of shuffle, that is:
\begin{itemize}
\item $\pathLL s \antishuffle \pathLL t = \dual{\dual{\pathLL s} \shuffle \dual{\pathLL t}}$ if $\pathLL s$ and $\pathLL t$ are paths of same polarity;
\item $S \antishuffle T = \dual{\dual{S} \shuffle \dual{T}}$ if $S$ and $T$ are sets of paths of same polarity.
\end{itemize}

\begin{definition}
  \begin{itemize}
  \item A \defined{trivial view} is an aj-sequence such that each proper action except the first one is justified by the immediate previous action. In other words, it is a view such that its dual is a view as well.
  \item The \defined{trivial view of} an aj-sequence is defined inductively by:
  \begin{align*}
    \triv{\epsilon} & =\epsilon && \mbox{ empty sequence} \\
    \triv{\pathLL s\daimon} & =\triv{\pathLL s}\daimon && \\
    \triv{\pathLL s\kappa} & =\kappa && \mbox{ if } \kappa \neq \daimon \mbox{ initial} \\
    \triv{\pathLL s\kappa} & =\triv{\pathLL s_0}\kappa && \mbox{ if } \kappa \neq \daimon \mbox{ justified, where } \pathLL s_0 \mbox{ prefix of } \pathLL s \mbox{ ending on } \mathrm{just}(\kappa)
  \end{align*}
  We also write $\trivv{\kappa}{\pathLL s}$ (or even $\triv{\kappa}$) instead of $\triv{\pathLL s' \kappa}$ when $\pathLL s' \kappa$ is a prefix of $\pathLL s$.
  \item \defined{Trivial views of a design} $\design d$ are the trivial views of its paths (or of its views). In particular, $\epsilon$ is a trivial view of negative designs only.
  \item Trivial views of designs in $|\beh B|$ are called \defined{trivial views of} $\beh B$.
  \end{itemize}
\end{definition}

\begin{lemma} \label{triv-view-path}
  \begin{enumerate}
  \item Every view is in the anti-shuffle of trivial views.
  \item Every path is in the shuffle of views.
  \end{enumerate}
\end{lemma}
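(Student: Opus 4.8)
The plan is to prove both items by induction on the length of the sequence, reading each as a \emph{decomposition-by-interleaving}: item (2) says a play decomposes into its branches (views) recombined by $\shuffle$, while item (1) says a single branch decomposes into its maximal straight justification threads (trivial views) recombined by $\antishuffle$. The two structural ingredients I would lean on are \emph{linearity} — so that sub-objects of a design live over pairwise disjoint sets of variables and hence interleave without interference — and the visibility conditions of Definition~\ref{def-path}, which pin down, for each newly appended action, exactly which sub-object it prolongs.

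For item (2) I would induct on $\pathLL s$ by peeling its first action and splitting on its polarity. If $\pathLL s = \kappa^+ \pathLL s'$ starts with an initial positive action, then by alternation $\pathLL s'$ begins negatively, and every action of $\pathLL s'$ lies in one of the subdesigns $\design n_1, \dots, \design n_k$ rooted at the negative children of $\kappa^+$; by linearity these carry pairwise disjoint variables, so $\pathLL s'$ is a free interleaving (a negative $\shuffle$) of sub-paths $\pathLL s'_1, \dots, \pathLL s'_k$, one in each subtree. Each $\pathLL s'_i$ is a strictly shorter path of a subdesign, hence by induction $\pathLL s'_i \in \shuffle V_i$ for a set $V_i$ of its views; prefixing $\kappa^+$ turns each view of $\pathLL s'_i$ into a view of $\pathLL s$ (every branch passes through the root, so $\view{\kappa^+\pathLL t} = \kappa^+\view{\pathLL t}$), and the positive clause of the $\shuffle$ definition, which factors out the shared first action, yields $\pathLL s = \kappa^+(\pathLL s'_1 \shuffle \cdots \shuffle \pathLL s'_k) \in \shuffle\{\,\kappa^+\viewseq w : \viewseq w \in \bigcup_i V_i\,\}$, a shuffle of views of $\pathLL s$. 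The case where $\pathLL s$ starts negatively is symmetric, the first action opening one tree of the forest and the remainder interleaving the independent trees; a trailing $\daimon$, if present, is appended in a final step exactly as in the $\shuffle$ definition.

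For item (1) I would first observe that it is the dual of the statement \emph{every anti-view is in the shuffle of trivial views}: indeed $S \antishuffle T = \dual{\dual S \shuffle \dual T}$ by definition, the dual of a view is an anti-view, and the class of trivial views is closed under $\dual{(\cdot)}$ (a trivial view being precisely a view whose dual is again a view). I would prove this dual statement by the same induction, now simpler because trivial views are straight threads: appending an action to an anti-view either prolongs the current thread, when the action is justified by the immediately preceding one, or else reopens, at its justifier, the unique thread it continues — all threads sharing the initial action of the branch, since in a (anti)view every action is hereditarily justified by the first one. Transporting the result back through $\dual{(\cdot)}$ then gives item (1).

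The step I expect to be the real obstacle is verifying that the interleavings produced are genuinely \emph{paths} and project back correctly, since $\shuffle$ (resp. $\antishuffle$) collects only those interleavings $\pathLL u$ that are paths with $\proj{\pathLL u}{\pathLL s} = \pathLL s$ on each component. Concretely, I must show $\proj{\pathLL s}{\viewseq v} = \viewseq v$ for every chosen view $\viewseq v$ — that the play's visits to a fixed branch reconstitute exactly that branch, in order, with no repetition and no omitted intermediate action. This is where linearity and visibility combine: disjointness of addresses prevents two distinct branches from sharing a proper action, so projections are unambiguous; Remark~\ref{differ-neg} forbids two paths of a design from differing on a positive action, so the reconstruction is forced; and P-/O-visibility guarantee that each appended action's justifier already lies in the relevant (anti)view, keeping every prefix of the interleaving a legal path. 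I would isolate these facts as a small auxiliary lemma — \emph{an alternation-respecting interleaving of views over disjoint address-sets sharing a common initial action is a path, and its projections recover the views} — and apply it uniformly to both items.
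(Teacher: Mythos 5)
Your item (1) is essentially sound: once you unfold the duality detour, the induction you describe (append an action; either it prolongs the thread ending at its justifier, or it opens a fresh thread rooted at its justifier) is exactly the argument the paper uses, performed on the last action of the sequence and requiring only local checks that each component remains a trivial view. The duality wrapper is unnecessary and forces some care with trailing daimons (views need not be positive-ended, so $\dual{(\cdot)}$ inserts or removes a $\daimon$), but it does not break anything.

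Item (2), however, has a genuine gap. Your decomposition peels the first action $\kappa^+$ and projects $\pathLL s'$ onto the subtrees rooted at the variables bound by $\kappa^+$; to apply the induction hypothesis to each projection $\pathLL s'_i$, and indeed for the expression $\pathLL s'_1 \shuffle \cdots \shuffle \pathLL s'_k$ to be defined at all, each $\pathLL s'_i$ must itself be a \emph{path} — alternating, positive-ended, and above all O-visible. None of this is automatic: O-visibility of a projection is computed with respect to anti-views taken \emph{inside the projection}, which are not prefixes of anti-views of $\pathLL s'$, and alternation and positive-endedness of a subsequence of an alternating sequence also require argument. This is precisely the content of Lemma~\ref{lem:dual_inverse_shuffle}, which the paper proves separately by a delicate minimal-counterexample argument and pointedly does \emph{not} use for Lemma~\ref{triv-view-path}. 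Your proposed auxiliary lemma does not close the gap because it points in the composition direction (given views over disjoint addresses, their interleaving is a path whose projections recover them), whereas your induction needs the decomposition direction (given a path, its projections onto disjoint address sets are paths). The same issue recurs when you recombine: the iterated $n$-ary shuffle is built from binary shuffles whose intermediate results must again be paths. The paper avoids all of this by running the induction on the \emph{last} action and maintaining, at every stage, the full current prefix as the explicit witness of the interleaving, so that only the components — built one action at a time — need to be checked to be views, a purely local verification. If you want to keep your top-down decomposition, you must first prove (a version of) Lemma~\ref{lem:dual_inverse_shuffle}; otherwise the append-last-action induction is the shorter road.
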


\begin{proof} \
  \begin{enumerate}
  \item Let $\viewseq v$ be a view, the result is shown by induction on $\viewseq v$:
    \begin{itemize}
    \item If $\viewseq v = \epsilon$ or $\viewseq v = \kappa$, it is itself a trivial view, hence the result.
    \item Suppose $\viewseq v = \viewseq v'\kappa$ with $\viewseq v' \neq \epsilon$ and $\viewseq v' \in \viewseq t_1 \antishuffle \dots \antishuffle \viewseq t_n$ where the $\viewseq t_i$ are trivial views. 
      \begin{itemize}
      \item If $\kappa$ is negative, 
as $\viewseq v$ is a view, the action $\kappa$ is justified by the last action of $\viewseq v'$, say $\kappa^+$. Hence $\kappa^+$ is the last action of some trivial view $\viewseq t_{i_0}$.
Hence $\viewseq v \in \viewseq t_1 \antishuffle \dots \antishuffle \viewseq t_{i_0-1} \antishuffle (\viewseq t_{i_0}\kappa) \antishuffle \viewseq t_{i_0+1} \antishuffle \dots \antishuffle \viewseq t_n$.
      \item If $\kappa$ is positive, either it is initial and $\viewseq v \in \viewseq t_1 \antishuffle \dots \antishuffle \viewseq t_n \antishuffle \kappa$ with $\kappa$ a trivial view, or it is justified by $\kappa^-$ in $\viewseq v'$. In the last case, there exists a unique $i_0$ such that $\kappa^-$ appears in $\viewseq t_{i_0}$, so let $\viewseq t\kappa^-$ be the prefix of $\viewseq t_{i_0}$ ending with $\kappa^-$. We have that $\viewseq v \in \viewseq t_1 \antishuffle \dots \antishuffle \viewseq t_n \antishuffle (\viewseq t\kappa^-\kappa)$ where $\viewseq t\kappa^-\kappa$ is a trivial view.
      \end{itemize}
    \end{itemize}
  \item Similar reasoning as above, but replacing $\antishuffle$ by $\shuffle$, ``trivial view'' by ``view'', ``view'' by ``path'', and exchanging the role of the polarities of actions.
  \end{enumerate}
\end{proof}

\begin{remark}
  Following previous result, note that every view (resp. path) of a design $\design d$ is in the anti-shuffle of trivial views (resp. in the shuffle of views) of $\design d$.
\end{remark}

\begin{proposition} \label{reg2}
  $\beh B$ is regular if and only if the following conditions hold:
  \begin{itemize}
  \item the positive-ended trivial views of $\beh B$ are visitable in $\beh B$,
  \item $\visit{B}$ and $\visit{B^\perp}$ are stable under $\shuffle$ (i.e., $\visit{B}$ is stable under $\shuffle$ and $\antishuffle$).
  \end{itemize}
\end{proposition}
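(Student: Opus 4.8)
The two lists of conditions share their second clause once we observe, from $\dual{\visit B}=\visit{B^\perp}$ and the definition $S\antishuffle T=\dual{\dual S\shuffle\dual T}$, that ``$\visit B$ stable under $\antishuffle$'' is literally ``$\visit{B^\perp}$ stable under $\shuffle$''; hence both formulations ask exactly that $\visit B$ be closed under $\shuffle$ and $\antishuffle$, and under this assumption I may freely use closure of $\visit B$ under both operations. So the whole content is the equivalence, under shuffle-stability, of the first clause of Definition~\ref{reg} (every path of a design in $|\beh B|$ is in $\visit B$, and symmetrically for $\beh B^\perp$) with the single clause that the positive-ended trivial views of $\beh B$ are in $\visit B$.

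For the direction ``Proposition $\Rightarrow$ Definition~\ref{reg}'' I would reconstruct arbitrary visitable paths from trivial views. Given $\design d\in|\beh B|$ and a path $\pathLL s$ of $\design d$, Lemma~\ref{triv-view-path}(2) together with the following remark writes $\pathLL s$ as an element of a $\shuffle$ of views of $\design d$, and Lemma~\ref{triv-view-path}(1) writes each such view as an element of an $\antishuffle$ of trivial views of $\design d$, which are by definition trivial views of $\beh B$. The positive-ended ones lie in $\visit B$ by hypothesis, so closure of $\visit B$ under $\shuffle$ and $\antishuffle$ yields $\pathLL s\in\visit B$. To obtain the symmetric statement for $\beh B^\perp$ I first upgrade the hypothesis: if $\viewseq t$ is a positive-ended trivial view of some $\design e\in|\beh B^\perp|$, then since $\design e$ is incarnated its last action is visited by an interaction with a design of $|\beh B|$, so $\dual{\viewseq t}=\overline{\viewseq t}\daimon$ is a $\daimon$-ended (hence positive-ended) trivial view of $\beh B$, thus in $\visit B$ by hypothesis, and $\dual{\visit B}=\visit{B^\perp}$ gives $\viewseq t\in\visit{B^\perp}$. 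Re-running the reconstruction with $\beh B^\perp$ in place of $\beh B$ then settles the second half of the first clause.

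For the direction ``Definition~\ref{reg} $\Rightarrow$ Proposition'' the point is to show that a positive-ended trivial view $\viewseq t$ of a design $\design d\in|\beh B|$ is visitable. The subtlety, which I expect to be the main obstacle, is that a trivial view is generally not a branch of $\design d$: because a trivial view forces every positive action to be justified by the immediately preceding one, it ``jumps'' over parts of the branch that the view of $\design d$ actually traverses, so $\viewseq t$ is only a path of its own completed design $\completed{\viewseq t}$ and not of $\design d$. I therefore cannot simply invoke the first clause of Definition~\ref{reg} for $\design d$; instead I would argue that $\completed{\viewseq t}\in\beh B$ and then exhibit an orthogonal design realizing exactly $\viewseq t$. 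Concretely, $\viewseq t=\triv{\pathLL s}$ for a genuine path $\pathLL s$ of $\design d$ (the branch through the last action of $\viewseq t$), and $\pathLL s\in\visit B$ by the first clause, so $\completed{\pathLL s}\in\beh B$ by monotonicity (Proposition~\ref{cor-mono}); the remaining work is to transfer membership from $\completed{\pathLL s}$ to $\completed{\viewseq t}$, using biorthogonal closure of $\beh B$ and the closure of $\visit B$ under $\shuffle$ and $\antishuffle$ supplied by the second clause.

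I expect the bulk of the difficulty to concentrate in this last transfer — passing from the visitability of the honest branch $\pathLL s$ to the visitability of its straightened trivial view $\viewseq t$ — since the two paths realize different tree structures. A secondary, purely bookkeeping difficulty appears in the reconstruction direction: the $\antishuffle$-decomposition of a view can involve negative-ended trivial views, to which the positive-ended hypothesis does not apply directly; these are handled by completing them with $\daimon$ and appealing to Lemma~\ref{daimon_visit} and the always-valid implication $\pathLL s\kappa^+\in\visit B\Rightarrow\pathLL s\daimon\in\visit B$, checking that these completions remain compatible with the shuffle operations.
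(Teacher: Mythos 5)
Your treatment of the direction ``conditions $\Rightarrow$ regular'' is essentially the paper's: decompose a path of a design of $|\beh B|$ into a shuffle of views and each view into an anti-shuffle of trivial views (Lemma~\ref{triv-view-path}), then combine the hypothesis on trivial views with stability under $\shuffle$ and $\antishuffle$. Your ``upgrade'' for the $\beh B^\perp$ side is in the right spirit but stated too strongly: when $\viewseq t$ ends with a proper positive action of $\design e$, the sequence $\overline{\viewseq t}\daimon$ need not be a trivial view \emph{of a design of} $|\beh B|$, since the counter-design generally carries a proper action rather than $\daimon$ at that position; one has to pass through the trivial view of the justifying action and Lemma~\ref{daimon_visit} instead (the paper itself only says ``similarly'' here, so this is a shared soft spot rather than a defect of yours alone).

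The genuine gap is in the converse direction. You correctly spot that a positive-ended trivial view $\viewseq t$ of $\design d\in|\beh B|$ is in general not a path of $\design d$, so regularity cannot be applied to $\design d$ directly --- but you then defer the entire argument to an unspecified ``transfer'' from $\completed{\pathLL s}$ to $\completed{\viewseq t}$, and the tools you name cannot perform it. Even granting $\completed{\viewseq t}\in\beh B$, that does not yield $\viewseq t\in\visit B$: you would still need a design of $\beh B^\perp$ whose interaction follows exactly $\viewseq t$. And shuffle-stability runs the wrong way: it builds a visitable path \emph{from} visitable components, whereas here you would need to extract a component ($\viewseq t$) from a visitable path ($\pathLL s$), a ``de-shuffle'' property you are not given. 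The missing idea --- which is the whole content of the paper's proof of this direction --- is that the dual of a trivial view is a \emph{view of the orthogonal design}. Take the view $\viewseq v$ of $\design d$ ending on the last action of $\viewseq t$; by regularity $\viewseq v\in\visit B$, so $\viewseq v=\interseq{\design d}{\design e}$ for some $\design e\in|\beh B^\perp|$ (Lemma~\ref{lem_visit_path_inc}). Then $\view{\dual{\viewseq v}}=\dual{\viewseq t}$, so $\dual{\viewseq t}$ is a view, hence a path, of $\design e$, and the \emph{second} clause of Definition~\ref{reg} (regularity on the $\beh B^\perp$ side) gives $\dual{\viewseq t}\in\visit{B^\perp}$, i.e.\ $\viewseq t\in\visit B$. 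Your sketch never brings in the orthogonal design or the $\beh B^\perp$ half of regularity, which is precisely what is needed to close the argument.
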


\begin{proof} Let $\beh B$ be a behaviour. \\
  \noindent ($\Rightarrow$) Suppose $\beh B$ is regular, and let $\viewseq t$ be a positive-ended trivial view of $\beh B$. There exists a view $\viewseq v$ of a design $\design d \in |\beh B|$ such that $\viewseq t$ is a subsequence of $\viewseq v$, and such that $\viewseq v$ ends with the same action as $\viewseq t$. Since $\viewseq v$ is a view of $\design d$, $\viewseq v$ is in particular a path of $\design d$, hence by regularity $\viewseq v \in \visit{B}$. There exists $\design e \in \beh B^\perp$ such that $\viewseq v = \interseq{\design d}{\design e}$, and by Lemma~\ref{lem_visit_path_inc} we can even take $\design e \in |\beh B^\perp|$. Since $\dual{\viewseq v}$ is a path of $\design e$, $\view{\dual{\viewseq v}}$ is a view of $\design e$. But notice that $\view{\dual{\viewseq v}} = \view{\dual{\viewseq t}} = \dual{\viewseq t}$ by definition of a view and of a trivial view. We deduce that $\dual{\viewseq t}$ is a view (and in particular a path) of $\design e$, hence $\dual{\viewseq t} \in \visit{B^\perp}$ by regularity. Finally, $\viewseq t \in \visit{B}$. \\
\noindent ($\Leftarrow$) Assume the two conditions of the statement. 
Let $\pathLL s$ be a path of some design of $|\beh B|$. By Lemma~\ref{triv-view-path}, we know that there exist views $\viewseq v_1$, \dots, $\viewseq v_n$ such that $\pathLL s \in \viewseq v_1 \shuffle \dots \shuffle \viewseq v_n$, and for each $\viewseq v_i$ there exist trivial views $\viewseq t_{i,1}$, \dots, $\viewseq t_{i,m_i}$ such that $\viewseq v_i \in \viewseq t_{i,1} \antishuffle \dots \antishuffle \viewseq t_{i,m_i}$.
By hypothesis each $\viewseq t_{i,j}$ is visitable in $\beh B$, hence as $\visit{B}$ is stable by anti-shuffle, $\viewseq v_i \in \visit{B}$. Thus as $\visit{B}$ is stable by shuffle, $\pathLL s \in \visit{B}$.
Similarly the paths of designs of $|\beh B^\perp|$ are visitable in $\beh B^\perp$. Hence the result.
\end{proof}

\subsection{Form of the Visitable Paths} \label{sec-inc-visit}

From internal completeness, we can make explicit the form of the visitable paths for behaviours constructed by logical connectives; such results are necessary for proving the stability of regularity and purity (\textsection~\ref{sub-reg} and  \ref{sub-pur} respectively).

We will use the notations given at the beginning of Subsection~\ref{sub_connectives}, and also the following. Given an action $\kappa$ and a set of sequences $V$, we write $\kappa V$ for $\setst{\kappa\pathLL s}{\pathLL s \in V}$. Let us note $\kappa_\symshpos = x_0|\symshpos\langle x\rangle$, $\kappa_\symshneg = \symshneg_{x_0}(x)$, $\kappa_\symtensor = x_0|\symtensor \langle x, y \rangle$ and $\kappa_{\symplus_i} = x_0|\symplus_i \langle x_i\rangle$ for $i \in \{1, 2\}$.

In this section are proved the following results:
  \begin{itemize}
  \item $\visit{\shpos N} = \kappa_\symshpos \visit{N}^x \cup \{\daimon\}$ and $\visit{\shneg P} = \kappa_\symshneg \visit{P}^x \cup \{\epsilon\}$ (Proposition~\ref{visit-sh}),
  \item $\visit{M \oplus N} = \kappa_{\symplus_1} \visit{M}^{x_1} \cup \kappa_{\symplus_2} \visit{N}^{x_2} \cup \{\daimon\}$ (Proposition~\ref{visit-pl}),
  \item $\visit{M \otimes N} = \kappa_\symtensor (\visit{M}^x \shuffle \visit{N}^y) \cup \{\daimon\}$ if $\beh M$ and $\beh N$ are regular (Proposition~\ref{visit-tensor-reg}),
  \item the general form of the visitable paths of $\beh M \otimes \beh N$, not as simple (Proposition~\ref{visit-tensor}),
  \item finally, the case of $\multimap$ easily deduced from $\otimes$ (Corollaries~\ref{visit-arrow} and \ref{visit-arrow-reg}).
  \end{itemize}

\subsubsection{Shifts}

\begin{lemma} \ \label{lem-inc-sh}
  \begin{enumerate}
    \item \label{lem1} $(\symshneg(x).(\beh N^\perp)^x)^\perp \subseteq \symshpos\langle\beh N\rangle \cup \{\daimon\}$.
    \item \label{lem2} $\symshneg(x).(\beh N^\perp)^x \subseteq \symshpos\langle\beh N\rangle^\perp$.
  \end{enumerate}
\end{lemma}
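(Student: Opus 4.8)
The plan is to prove both inclusions from a single head-reduction computation, using only the definition of orthogonality ($\design p \perp \design n$ iff $\normalisation{\design p[\design n/x_0]} = \daimon$) and the interaction rule. The pivotal fact is that for any positive $\design q$ and any atomic negative $\design n$, writing $\design m = \symshneg(x).\design q[x/x_0]$ for the corresponding element of $\symshneg(x).(\beh N^\perp)^x$, the head cut reduces in one step:
\[ \posdes{\design m}{\symshneg}{\design n} ~\leadsto~ \design q[\design n/x_0], \]
because $\symshneg$ is the only non-$\Omega$ branch of $\design m$ and its body is $\design q[x/x_0]$. Consequently $x_0|\symshpos\langle\design n\rangle \perp \design m$ if and only if $\design q \perp \design n$; everything else is bookkeeping around this equivalence.

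For inclusion (\ref{lem2}) I fix $\design m = \symshneg(x).\design q[x/x_0]$ with $\design q \in \beh N^\perp$ and take an arbitrary $\design p = x_0|\symshpos\langle\design n\rangle \in \symshpos\langle\beh N\rangle$, so $\design n \in \beh N$. Since $\design n \in \beh N$ and $\design q \in \beh N^\perp$ we have $\design q \perp \design n$, whence by the reduction above $\normalisation{\design p[\design m/x_0]} = \normalisation{\design q[\design n/x_0]} = \daimon$; thus $\design p \perp \design m$. As $\design p$ was arbitrary, $\design m \in \symshpos\langle\beh N\rangle^\perp$.

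For inclusion (\ref{lem1}) I take $\design p$ in the orthogonal of $\symshneg(x).(\beh N^\perp)^x$ and determine its shape; since orthogonality depends only on $\normalisation{\design p}$, it suffices to read off the normal form. The generating set is nonempty (as $\daimon \in \beh N^\perp$ gives the test $\symshneg(x).\daimon$), so $\normalisation{\design p} \neq \Omega$; if $\normalisation{\design p} = \daimon$ we land in the $\{\daimon\}$ summand. Otherwise $\normalisation{\design p} = x_0|\overline a\langle\vect{\design n}\rangle$, and testing against $\symshneg(x).\daimon$ forces $a = \symshneg$, since any other head name meets an $\Omega$-branch and the cut diverges. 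As $\symshneg$ is unary this gives $\normalisation{\design p} = x_0|\symshpos\langle\design n\rangle$ with a single argument, which is an atomic negative design because $\design p$ is atomic. Finally, running the head reduction against $\symshneg(x).\design q[x/x_0]$ for each $\design q \in \beh N^\perp$ yields $\design q \perp \design n$ for all such $\design q$, i.e. $\design n \in (\beh N^\perp)^\perp = \beh N$; hence $\design p \in \symshpos\langle\beh N\rangle \cup \{\daimon\}$.

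The routine direction is (\ref{lem2}); the real work is the shape analysis in (\ref{lem1}). I expect the main obstacle to be the careful case distinction on the head action, together with the bookkeeping needed to pass from $\design p$ to its normal form and to verify that the extracted argument $\design n$ is indeed atomic negative — precisely the points where the conventions on $\Omega$-branches of partial sums and on free variables of atomic designs must be invoked.
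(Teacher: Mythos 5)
Your proof is correct and follows essentially the same route as the paper's: both inclusions are reduced to the single one-step computation $\symshneg(x).\design q[x/x_0]\,|\,\symshpos\langle\design n\rangle \leadsto \design q[\design n/x_0]$, with (2) obtained by testing a fixed $\design m$ against all of $\symshpos\langle\beh N\rangle$, and (1) by a shape analysis on $\design p$ followed by the closure $\beh N^{\perp\perp}=\beh N$. The only difference is that you spell out the shape analysis (non-divergence and the forcing of the head name $\symshneg$ via the test $\symshneg(x).\daimon$) which the paper simply asserts; this is a welcome refinement, not a divergence.
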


\begin{proof}
  Let $E = \symshpos\langle\beh N\rangle$, and let $F = \symshneg(x).(\beh N^\perp)^x$. To show the lemma, we must show $F^\perp \subseteq E \cup \{\daimon\}$ and $F \subseteq E^\perp$.
  \begin{enumerate}
  \item Let $\design q \in F^\perp$. If $\design q \neq \daimon$, $\design q$ is necessarily of the form $\symshpos\langle\design n\rangle$ where $\design n$ is a negative atomic design. For every design $\design p \in \beh N^\perp$, we have $\symshneg(x).\design p^x \in F$ and  $\design q[\symshneg(x).\design p^x/x_0] \leadsto \design p[\design n/x_0]$, thus $\normalisation{\design q[\symshneg(x).\design p^x/x_0]} = \normalisation{\design p[\design n/x_0]} = \daimon$ since $\design q \perp \symshneg(x).\design p^x$. We deduce $\design n \in \beh N$, hence $\design q \in E$.
  \item Let $\design m = \symshneg(x).\design p^x \in F$. For every design $\design n \in \beh N$, we have $\symshpos\langle\design n\rangle[\design m/x_0] \leadsto \design p[\design n/x_0]$, thus $\normalisation{\symshpos\langle\design n\rangle[\design m/x_0]} = \normalisation{\design p[\design n/x_0]} = \daimon$ since $\design p \in \beh N^\perp$ and $\design n \in \beh N$. Hence $\design m \in E^\perp$.
  \end{enumerate}
\end{proof}

\begin{lemma} \label{lem-inc-shneg}
  $\shneg \beh P = (\shpos \beh P^\perp)^\perp$.
\end{lemma}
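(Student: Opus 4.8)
The plan is to rewrite both sides in terms of their generating sets and then read off the two inclusions directly from Lemma~\ref{lem-inc-sh}, instantiated at $\beh N = \beh P^\perp$ (so that $\beh N^\perp = \beh P^{\perp\perp} = \beh P$). Writing $A = \symshneg(x).\beh P^x$ and $B = \symshpos\langle\beh P^\perp\rangle$, the definitions of the shifts give $\shneg\beh P = A^{\perp\perp}$ and $\shpos \beh P^\perp = B^{\perp\perp}$; using the general identity $X^{\perp\perp\perp} = X^\perp$, the right-hand side becomes $(\shpos \beh P^\perp)^\perp = B^{\perp\perp\perp} = B^\perp$. So everything reduces to proving the single equality $A^{\perp\perp} = B^\perp$.

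For the inclusion $A^{\perp\perp} \subseteq B^\perp$: Lemma~\ref{lem-inc-sh}(\ref{lem2}), read with $\beh N = \beh P^\perp$, states exactly $A \subseteq B^\perp$. Since $B^\perp$ is an orthogonal it is a behaviour, hence bi-orthogonally closed; applying the standard monotonicity and idempotence of the closure $(\cdot)^{\perp\perp}$ yields $A^{\perp\perp} \subseteq (B^\perp)^{\perp\perp} = B^\perp$.

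For the reverse inclusion $B^\perp \subseteq A^{\perp\perp}$: Lemma~\ref{lem-inc-sh}(\ref{lem1}) gives $A^\perp \subseteq B \cup \{\daimon\}$. Taking orthogonals reverses this to $(B \cup \{\daimon\})^\perp \subseteq A^{\perp\perp}$, and since $(\cdot)^\perp$ turns unions into intersections, $(B \cup \{\daimon\})^\perp = B^\perp \cap \{\daimon\}^\perp$. The key observation — and the only point requiring care — is that the extra component $\{\daimon\}$ is harmless: the design $\daimon$ contains no free occurrence of $x_0$, so $\normalisation{\daimon[\design n/x_0]} = \daimon$ for every negative atomic design $\design n$, i.e. $\daimon$ is orthogonal to all negatives and $\{\daimon\}^\perp$ is the whole set of negative atomic designs. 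As $B^\perp$ consists of negative designs, $B^\perp \subseteq \{\daimon\}^\perp$, whence $(B \cup \{\daimon\})^\perp = B^\perp$ and therefore $B^\perp \subseteq A^{\perp\perp}$.

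Combining the two inclusions gives $A^{\perp\perp} = B^\perp$, that is $\shneg \beh P = (\shpos \beh P^\perp)^\perp$. No genuine obstacle arises beyond the bookkeeping of orthogonals; the one substantive step is recognising that adjoining $\daimon$ on the positive side leaves the orthogonal unchanged, which is exactly what lets the clean duality $A^{\perp\perp} = B^\perp$ survive the $\{\daimon\}$ appearing in Lemma~\ref{lem-inc-sh}(\ref{lem1}).
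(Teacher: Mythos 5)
Your proof is correct and follows essentially the same route as the paper: both reduce the claim to $F^{\perp\perp} = E^\perp$ (your $A^{\perp\perp} = B^\perp$) via the two inclusions of Lemma~\ref{lem-inc-sh} instantiated at $\beh N = \beh P^\perp$, using that adjoining $\daimon$ does not change the orthogonal. Your write-up merely makes explicit the justification that $\{\daimon\}^\perp$ is all negative atomic designs, which the paper leaves implicit.
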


\begin{proof}
  If we take $\beh N = \beh P^\perp$, Lemma~\ref{lem-inc-sh} gives us:  
  \begin{enumerate}
  \item \label{lem3} $(\symshneg(x).\beh P^x)^\perp \subseteq \symshpos\langle\beh P^\perp\rangle \cup \{\daimon\}$ and
  \item \label{lem4} $\symshneg(x).\beh P^x \subseteq \symshpos\langle\beh P^\perp\rangle^\perp$.
  \end{enumerate}
  Let $E = \symshpos\langle\beh P^\perp\rangle$, and let $F = \symshneg(x).\beh P^x$. By definition $\shneg \beh P = F^{\perp\perp}$. From (\ref{lem4}) we deduce $F^{\perp\perp} \subseteq E^\perp$, and from (\ref{lem3}) $E^\perp = (E \cup \{\daimon\})^\perp \subseteq  F^{\perp\perp}$. Hence $\shneg \beh P = F^{\perp\perp} = E^\perp = (\shpos \beh P^\perp)^\perp$.
\end{proof}

\begin{proposition}\  \label{visit-sh}
  \begin{enumerate}
  \item $\visit{\shpos N} = \kappa_\symshpos \visit{N}^x \cup \{\daimon\}$
  \item $\visit{\shneg P} = \kappa_\symshneg \visit{P}^x \cup \{\epsilon\}$
  \end{enumerate}
\end{proposition}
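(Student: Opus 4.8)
The plan is to establish~(1) by a direct analysis of the interaction between the cut-free designs of $\shpos\beh N$ and those of its orthogonal, and then to obtain~(2) from~(1) purely by duality. For~(1) I first pin down the two sides. By internal completeness (Theorem~\ref{thm_intcomp_all}) the cut-free designs of $\shpos\beh N$ are exactly $\daimon$ and the designs $x_0|\symshpos\langle\design n\rangle$ with $\design n\in\beh N$. For the orthogonal I invoke Lemma~\ref{lem-inc-shneg} with $\beh P:=\beh N^\perp$, which yields $(\shpos\beh N)^\perp=\shneg\beh N^\perp$; hence a cut-free $\design e\in(\shpos\beh N)^\perp$ has a $\symshneg$-branch $\proj{\design e}{\symshneg}=\symshneg(x).\design p$ with $\design p\in(\beh N^\perp)^x$, the other branches being irrelevant to the interaction.

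Next I compute $\interseq{\design d}{\design e}$ for orthogonal cut-free $\design d,\design e$. If $\design d=\daimon$ the interaction converges immediately and the path is $\daimon$. If $\design d=x_0|\symshpos\langle\design n\rangle$, its first action $\kappa_\symshpos$ is dual to $\symshneg$, so by the inductive definition of the interaction sequence (Definition~\ref{inter-seq}) a single step consumes $\kappa_\symshpos$ and cuts $\design n$, delocated to $x$, against $\design p$; thus $\interseq{\design d}{\design e}=\kappa_\symshpos\,\interseq{\design n^x}{\design p}$. Since $\design d[\design e/x_0]$ reduces to $\design p[\design n^x/x]$, orthogonality $\design d\perp\design e$ is equivalent to $\design n^x\perp\design p$, so $\interseq{\design n^x}{\design p}$ is a visitable path of the delocation of $\beh N$, i.e.\ an element of $\visit{N}^x$. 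This gives $\visit{\shpos N}\subseteq\kappa_\symshpos\visit{N}^x\cup\{\daimon\}$. Conversely, given $\pathLL s\in\visit{N}^x$ I write $\pathLL s=\interseq{\design n^x}{\design p}$ with $\design n\in\beh N$ and $\design p\in(\beh N^\perp)^x$ cut-free and orthogonal, and set $\design d=x_0|\symshpos\langle\design n\rangle\in\shpos\beh N$ and $\design e=\symshneg(x).\design p+\sum_{b\neq\symshneg}b(\vect{x^b}).\Omega\in\shneg\beh N^\perp=(\shpos\beh N)^\perp$; then $\design d\perp\design e$ and $\interseq{\design d}{\design e}=\kappa_\symshpos\pathLL s$, while $\daimon$ is realised by $\design d=\daimon$ against any element of the (nonempty) orthogonal. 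This proves~(1).

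For~(2) I apply Lemma~\ref{lem-inc-shneg} directly to obtain $(\shneg\beh P)^\perp=\shpos\beh P^\perp$, so that~(1) gives $\visit{(\shneg P)^\perp}=\visit{\shpos P^\perp}=\kappa_\symshpos\visit{P^\perp}^x\cup\{\daimon\}$. Using the identity $\dual{\visit{B}}=\visit{B^\perp}$ (hence $\visit{B}=\dual{\visit{B^\perp}}$), it remains to dualise this set. A short case split on whether a path ends with $\daimon$, together with $\overline{\kappa_\symshpos}=\kappa_\symshneg$, shows $\dual{\kappa_\symshpos\pathLL t}=\kappa_\symshneg\dual{\pathLL t}$ for every path $\pathLL t$, while $\dual{\daimon}=\epsilon$. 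Since $\dual{\visit{P^\perp}^x}=\visit{P}^x$, I conclude $\visit{\shneg P}=\dual{\visit{\shpos P^\perp}}=\kappa_\symshneg\visit{P}^x\cup\{\epsilon\}$.

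The step I expect to require the most care is the forward factorisation in~(1): checking that one reduction step really splits $\interseq{\design d}{\design e}$ as the single action $\kappa_\symshpos$ followed by the sub-interaction $\interseq{\design n^x}{\design p}$, keeping the delocation and the binding of $x$ consistent, and verifying that $\design d\perp\design e$ holds exactly when $\design n^x\perp\design p$. Once this is in place both inclusions are symmetric, and the duality computation in~(2) is routine provided the $\daimon$/$\epsilon$ correspondence is tracked precisely.
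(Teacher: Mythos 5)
Your proposal is correct and follows essentially the same route as the paper: internal completeness plus Lemma~\ref{lem-inc-shneg} to identify both sides, a one-step factorisation $\interseq{\design d}{\design e}=\kappa_\symshpos\interseq{\design n^x}{\design p^x}$ for the two inclusions of~(1), and duality for~(2). The only cosmetic difference is that the paper justifies discarding the non-$\symshneg$ branches of $\design e$ by passing to its incarnation via Lemma~\ref{lem_visit_path_inc}, whereas you argue directly that the first action selects the $\symshneg$ branch; both are fine.
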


\begin{proof} \ 
  \begin{enumerate}
  \item $(\subseteq)$ Let $\design q \in \shpos \beh N$ and $\design m \in (\shpos \beh N)^\perp$, let us show that $\langle \design q \leftarrow \design m\rangle \in \kappa_\symshpos\visit{N}^x \cup \{\daimon\}$. By Lemma~\ref{lem-inc-shneg}, $\design m \in \shneg \beh N^\perp$. If $\design q = \daimon$ then $\langle \design q \leftarrow \design m \rangle = \daimon$. Otherwise, by Theorem~\ref{thm_intcomp_all}, $\design q = \symshpos\langle\design n\rangle$ with $\design n \in \beh N$. We have $\langle \design q \leftarrow \design m \rangle = \langle \design q \leftarrow |\design m| \rangle$ by Lemma~\ref{lem_visit_path_inc} , where $ |\design m| \in \symshneg(x).|(\beh N^\perp)^x|$ by Theorem~\ref{thm_intcomp_all}, hence $|\design m|$ is of the form $|\design m| = \symshneg(x).\design p^x$ with $\design p \in \beh N^\perp$. By definition $\langle \design q \leftarrow |\design m| \rangle = \kappa_\symshpos \langle \design n^x \leftarrow \design p^x \rangle$, where $\langle \design n^x \leftarrow \design p^x \rangle \in \visit{N}^x$.
    
    $(\supseteq)$ Indeed $\daimon \in \visit{\shpos N}$. Now let $\pathLL s \in \kappa_\symshpos \visit{N}^x$. There exist $\design n \in \beh N$ and $\design p \in \beh N^\perp$ such that $\pathLL s = \kappa_\symshpos \langle \design n^x \leftarrow \design p^x \rangle$. Note that $\symshpos\langle\design n\rangle \in \symshpos\langle\beh N\rangle$ and $\symshneg(x).\design p^x \in \symshneg(x).(\beh N^\perp)^x$. By Lemma~\ref{lem-inc-shneg}, $\shneg \beh N^\perp = (\shpos \beh N)^\perp$, hence $\symshpos\langle\design n\rangle \perp \symshneg(x).\design p^x$. Moreover $\interseq{\symshpos\langle\design n\rangle}{\symshneg(x).\design p^x} = \kappa_\symshpos \langle \design n^x \leftarrow \design p^x \rangle = \pathLL s$, therefore $\pathLL s \in \visit{\shpos N}$.

  \item By Lemma~\ref{lem-inc-shneg} and previous item, and remarking that $\visit B = \dual{\visit{B^\perp}}$ for every behaviour $\beh B$, we have: 
    $\visit{\shneg P} = \dual{\visit{(\shneg P)^\perp}} = \dual{\visit{\shpos P^\perp}} = \dual{(\kappa_\symshpos \visit{ P^\perp}^x \cup \{\daimon\})} = \kappa_\symshneg \dual{\visit{P^\perp}^x} \cup \{\epsilon\} = \kappa_\symshneg \visit{P}^x \cup \{\epsilon\}$. 
  \end{enumerate}
\end{proof}

\subsubsection{Plus}

\begin{proposition}\  \label{visit-pl}
  $\visit{M \oplus N} = \kappa_{\symplus_1} \visit{M}^{x_1} \cup \kappa_{\symplus_2} \visit{N}^{x_2} \cup \{\daimon\}$
\end{proposition}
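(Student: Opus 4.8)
The plan is to mirror the proof of Proposition~\ref{visit-sh}(1), establishing the two inclusions separately, using the internal completeness description of $\beh M \oplus \beh N$ given in Theorem~\ref{thm_intcomp_all} together with a direct one-step analysis of the designs in $(\beh M \oplus \beh N)^\perp$. Throughout I track the located actions $\kappa_{\symplus_1} = x_0|\symplus_1\langle x_1\rangle$ and $\kappa_{\symplus_2} = x_0|\symplus_2\langle x_2\rangle$ and the delocations $(-)^{x_1}, (-)^{x_2}$ carefully, since $\symavec_1, \symavec_2$ are the (unary) names carried by these actions.

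For the inclusion $(\subseteq)$, I would take cut-free $\design p \in \beh M \oplus \beh N$ and cut-free $\design e \in (\beh M \oplus \beh N)^\perp$ and compute $\interseq{\design p}{\design e}$. By Theorem~\ref{thm_intcomp_all}, $\design p$ is either $\daimon$ (yielding the path $\daimon$), or $\symplus_1\langle\design m\rangle$ with $\design m \in \beh M$, or $\symplus_2\langle\design n\rangle$ with $\design n \in \beh N$; I treat the $\symplus_1$ case, the other being symmetric. The interaction opens with the positive action $\kappa_{\symplus_1}$, whose dual forces $\design e$ to answer on the name $\symavec_1$, so write its $\symavec_1$-component as $\symavec_1(z).\design q$. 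Since $\design p \perp \design e$ the interaction converges, so one reduction step gives $\design q[\design m/z]$, and the point is that $\design e \perp \symplus_1\langle\design m'\rangle$ for \emph{every} $\design m' \in \beh M$ forces $\design q[x_0/z] \in \beh M^\perp$ (a single-step reduction computation). Hence the remainder of the interaction is exactly $\interseq{\design m^{x_1}}{\design q[x_1/z]}$, the delocation of a visitable path of $\beh M$, so $\interseq{\design p}{\design e} = \kappa_{\symplus_1}(\interseq{\design m}{\design q[x_0/z]})^{x_1} \in \kappa_{\symplus_1}\visit{M}^{x_1}$.

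For $(\supseteq)$, the path $\daimon$ is visitable in any behaviour. Given $\pathLL s = \kappa_{\symplus_1}\pathLL t^{x_1}$ with $\pathLL t = \interseq{\design m}{\design p'} \in \visit{M}$ (for cut-free $\design m \in \beh M$, $\design p' \in \beh M^\perp$), I would realise it with $\design p = \symplus_1\langle\design m\rangle \in \symplus_1\langle\beh M\rangle$ and with the negative design $\design e$ whose $\symavec_1$-branch is $\design p'$ (delocated to $x_1$), whose $\symavec_2$-branch is some fixed $\design p'' \in \beh N^\perp$ (for instance $\daimon$, available since $\daimon \in \beh N^\perp$), and all of whose other branches are $\Omega$. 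Checking $\design e \in (\symplus_1\langle\beh M\rangle \cup \symplus_2\langle\beh N\rangle)^\perp = (\beh M \oplus \beh N)^\perp$ reduces to a one-step computation on each branch, and then $\interseq{\design p}{\design e} = \kappa_{\symplus_1}\pathLL t^{x_1} = \pathLL s$; the $\symplus_2$ case is symmetric.

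The step to get right is the additive analogue of Lemma~\ref{lem-inc-shneg}, namely the behaviour of the $\symavec_i$-components of designs in the orthogonal: in $(\subseteq)$ that $\design q[x_0/z] \in \beh M^\perp$, and in $(\supseteq)$ the observation that the $\symavec_2$-branch must \emph{still} be orthogonal to all of $\beh N$, so one cannot naively put $\Omega$ there and must instead pick a witness in $\beh N^\perp$. Both are elementary once the interaction is unfolded, so I expect the only genuine subtlety to be the bookkeeping of bound variables under $\alpha$-equivalence and delocation when matching the continuation with $\visit{M}^{x_1}$ and $\visit{N}^{x_2}$.
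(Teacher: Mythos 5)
Your proof is correct and follows essentially the same route as the paper, which simply observes that $\beh M \oplus \beh N$ decomposes as a union of two shift-like behaviours (with $(\beh M \oplus \beh N)^\perp$ the corresponding intersection) and then transfers the argument of Proposition~\ref{visit-sh}(1); your explicit unfolding of the interaction step, and in particular your remark that the $\symavec_2$-branch of the counter-design must carry a witness of $\beh N^\perp$ such as $\daimon$ rather than $\Omega$, is exactly the content of the paper's remark that designs of $|\with_1\beh M^\perp|$ must be completed to elements of $(\beh M \oplus \beh N)^\perp$.
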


\begin{proof}
  Remark that $\beh M \oplus \beh N = (\symplus_1 \langle \beh M \rangle \cup \{\daimon\}) \cup (\symplus_2 \langle \beh N \rangle \cup \{\daimon\})$ is the union of behaviours $\oplus_1 \beh M$ and $\oplus_2 \beh N$, which correspond respectively to $\shpos \beh M$ and $\shpos \beh N$ with a different name for the first action. Moreover, $(\beh M \oplus \beh N)^\perp = \setst{\design n}{\proj{\design n}{\pi_1} \in \pi_1(x).(\beh M^\perp)^x \mbox{ and } \proj{\design n}{\pi_2} \in \pi_2(x).(\beh N^\perp)^x} = (\with_1 \beh M^\perp) \cap (\with_2 \beh N^\perp)$, where the behaviours $\with_1 \beh M^\perp$ and $\with_2 \beh N^\perp$ correspond to $\shneg \beh M^\perp$ and $\shneg \beh N^\perp$ with different names; note also that for every $\design d \in |\with_1\beh M^\perp|$ (resp. $|\with_2\beh N^\perp|$) there exists $\design d' \in (\beh M \oplus \beh N)^\perp$ such that $\design d \sqsubseteq \design d'$, in other words such that $\design d = |\design d'|_{\with_1\beh M^\perp}$ (resp. $|\design d'|_{\with_2\beh N^\perp}$). Therefore the proof can be conducted similarly to the one of Proposition~\ref{visit-sh}(1).
\end{proof}

\subsubsection{Tensor and Linear Map}

The following proposition is a joint work with Fouquer\'e and Quatrini; in \cite{FQ2}, they prove a similar result in the framework of original Ludics.

\begin{proposition} \label{visit-tensor}
$\pathLL s \in \visit{M \otimes N}$ if and only if the two conditions below are satisfied:
\begin{enumerate}
\item $\pathLL s \in \kappa_\symtensor(\visit{M}^x \shuffle \visit{N}^y) \cup \{\daimon\}$,
\item for all $\pathLL t \in \visit{M}^x \shuffle \visit{N}^y$, for all $\kappa^-$ such that $\overline{\kappa_\symtensor \pathLL t \kappa^-}$ is a path of $\completed{\dual{\pathLL s}}$, $\pathLL t \kappa^- \daimon \in \visit{M}^x \shuffle \visit{N}^y$.
\end{enumerate}
\end{proposition}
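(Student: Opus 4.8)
The plan is to prove both directions by reducing the interaction on $\beh M \otimes \beh N$ to the two component interactions through the associativity theorem for paths (Proposition~\ref{asso-path}), while passing freely between a visitable path and its maximal realiser $\completed{\cdot}$ by monotonicity (Proposition~\ref{cor-mono}). I will use internal completeness (Theorem~\ref{thm_intcomp_all}), which describes the incarnated designs of $\beh M \otimes \beh N$ as $\daimon$ or $x_0|\symtensor\langle\design m,\design n\rangle$ with $\design m \in |\beh M|$, $\design n \in |\beh N|$, together with $(\beh M \otimes \beh N)^\perp = \symtensor\langle\beh M,\beh N\rangle^\perp$, so that a test amounts essentially to a positive design $\design q$ on free variables $x,y$ probing $\beh M$ at $x$ and $\beh N$ at $y$.

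For the forward implication, suppose $\pathLL s = \interseq{\design d}{\design e} \in \visit{M \otimes N}$; the case $\pathLL s = \daimon$ is immediate. Otherwise $\design d = x_0|\symtensor\langle\design m,\design n\rangle$ (incarnated, via Lemma~\ref{lem_visit_path_inc}), so $\pathLL s = \kappa_\symtensor \pathLL s'$ with $\pathLL s' = \interseq{\{\design m/x,\design n/y\}}{\design q}$. Applying Proposition~\ref{asso-path} gives $\proj{\pathLL s'}{x} = \interseq{\design m}{\normalisation{\cut{\design n/y}{\design q}}}$; since $\design e$ is orthogonal to $x_0|\symtensor\langle\design m',\design n\rangle$ for every $\design m' \in \beh M$, the projected test lies in $(\beh M^x)^\perp$, whence $\proj{\pathLL s'}{x} \in \visit{M}^x$ and, symmetrically, $\proj{\pathLL s'}{y} \in \visit{N}^y$. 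As $\pathLL s'$ interleaves its two projections, $\pathLL s' \in \visit{M}^x \shuffle \visit{N}^y$, which is condition~(1). For condition~(2) I use $\completed{\dual{\pathLL s}} \in (\beh M \otimes \beh N)^\perp$ (Proposition~\ref{cor-mono}): given $\pathLL t$ and $\kappa^-$ as in the statement, projecting $\completed{\dual{\pathLL s}}$ onto the component carrying $\kappa^-$ produces (again by Proposition~\ref{asso-path}) a test of, say, $\beh M$ offering $\overline{\kappa^-}$ after $\overline{\proj{\pathLL t}{x}}$, and interacting it with $\completed{\proj{\pathLL t}{x}} \in \beh M^x$ yields the path $\proj{\pathLL t}{x}\kappa^-\daimon \in \visit{M}^x$ (the completed design answering $\daimon$ to the off-path move); hence $\pathLL t\kappa^-\daimon \in \visit{M}^x \shuffle \visit{N}^y$.

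For the converse, assume (1) and (2) and exhibit witnesses. Condition~(1) forces $\completed{\pathLL s} = x_0|\symtensor\langle\completed{\proj{\pathLL s'}{x}},\completed{\proj{\pathLL s'}{y}}\rangle$, and monotonicity places the two components in $\beh M$, $\beh N$, so $\design d := \completed{\pathLL s} \in \symtensor\langle\beh M,\beh N\rangle \subseteq \beh M \otimes \beh N$. Set $\design e := \completed{\dual{\pathLL s}}$. The heart of the proof is to show $\design e \in (\beh M \otimes \beh N)^\perp$, i.e.\ $\design e \perp x_0|\symtensor\langle\design m',\design n'\rangle$ for all $\design m' \in \beh M$, $\design n' \in \beh N$. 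I verify this with the path criterion of Proposition~\ref{path-perp} (whose finiteness hypothesis holds since $\completed{\dual{\pathLL s}}$ is finite): every proper positive move of $x_0|\symtensor\langle\design m',\design n'\rangle$ is matched because $\design e$ answers $\daimon$ on every branch off $\dual{\pathLL s}$, and every negative move $\kappa^-$ that $\design e$ offers after a reachable shuffle-prefix $\pathLL t$ is matched by $\design m'$ or $\design n'$ — this is where condition~(2) enters: it gives $\pathLL t\kappa^-\daimon \in \visit{M}^x \shuffle \visit{N}^y$, so $\proj{\pathLL t}{x}\kappa^-$ (say) is a prefix of a visitable path of $\beh M$, and Lemma~\ref{nec} then forces every $\design m' \in \beh M$ realising $\proj{\pathLL t}{x}$ to offer $\kappa^-$. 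No interaction gets stuck and the $\daimon$'s of $\design e$ ensure convergence, so $\design e \perp x_0|\symtensor\langle\design m',\design n'\rangle$. Finally $\design d \perp \design e$ together with uniqueness of the interaction path (Proposition~\ref{inter-unique}) gives $\interseq{\design d}{\design e} = \pathLL s$, so $\pathLL s \in \visit{M \otimes N}$.

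The main obstacle is the orthogonality check $\design e \in (\beh M \otimes \beh N)^\perp$ in the backward direction: a single test must interact well against \emph{every} pair $x_0|\symtensor\langle\design m',\design n'\rangle$, not only the realiser of $\pathLL s$, and the danger is that $\design e$ offers, in one component, a negative move whose legitimacy was established using information drawn from the \emph{other} component. Condition~(2) is precisely the hypothesis that domesticates this correlation, and converting it — through Lemma~\ref{nec} — into the statement that no component design can refuse a move the test proposes is the delicate point; the $\daimon$-ended and proper-positive-ended cases of $\pathLL s$, together with the bookkeeping of which component a given $\kappa^-$ belongs to, require the symmetric treatment of $\beh M$ and $\beh N$ to be carried out with care.
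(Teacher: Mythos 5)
Your forward direction follows essentially the same route as the paper: internal completeness to decompose the witnessing design, Proposition~\ref{asso-path} to identify the two projections of $\pathLL s'$ as interaction paths against tests in $\beh M^{x\perp}$ and $\beh N^{y\perp}$, and, for condition~(2), the interaction of $\{\completed{\proj{\pathLL t}{x}}/x,\completed{\proj{\pathLL t}{y}}/y\}$ against $\completed{\dual{\pathLL s'}}$. You are somewhat terse on why that interaction actually realises $\pathLL t\kappa^-\daimon$ (the paper needs Proposition~\ref{prefix-norm} and the uniqueness of positive continuations to show $\pathLL t\kappa^-$ is a prefix of the resulting interaction path), but the idea is the right one.

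The backward direction, however, has a genuine gap at exactly the point you flag as delicate but do not resolve. To show $\design e = \completed{\dual{\pathLL s}} \perp x_0|\symtensor\langle\design m',\design n'\rangle$ for \emph{arbitrary} $\design m'\in\beh M$, $\design n'\in\beh N$, you argue that any negative move $\kappa^-$ offered by $\design e$ after a ``reachable shuffle-prefix $\pathLL t$'' is answered, because condition~(2) gives $\pathLL t\kappa^-\daimon \in \visit{M}^x \shuffle \visit{N}^y$ and Lemma~\ref{nec} then forces $\design m'$ (or $\design n'$) to offer $\kappa^-$. But condition~(2) only applies when $\pathLL t \in \visit{M}^x \shuffle \visit{N}^y$, and nothing in your argument establishes that the prefix reached during the interaction with an \emph{arbitrary} pair $(\design m',\design n')$ has visitable projections: to prove, say, $\proj{\pathLL t}{x} \in \visit{M}^x$ via associativity one needs $\normalisation{\completed{\dual{\pathLL t'}}[\design n'/y]} \in \beh M^{x\perp}$, which in turn requires $\completed{\dual{\pathLL t}} \in (\beh M \otimes \beh N)^\perp$ --- essentially the statement being proved, for a shorter path. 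The phrase ``reachable shuffle-prefix'' presupposes the conclusion. The paper closes this circle with a minimal-counterexample argument: among all pairs non-orthogonal to $\completed{\dual{\pathLL s}}$ it picks one whose divergence-witnessing path $\pathLL t$ is shortest, shows $\completed{\dual{\pathLL t}} \in (\beh M \otimes \beh N)^\perp$ (any failure against it would yield a strictly shorter counterexample), uses that design as the test certifying $\proj{\pathLL t'}{x} \in \visit{M}^x$ and $\proj{\pathLL t'}{y} \in \visit{N}^y$, and only then invokes condition~(2) and Lemma~\ref{nec} to contradict the divergence. Some such induction or minimality device is indispensable here; without it the proof does not go through.
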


The proof of this proposition uses some material on multi-designs introduced in Section~\ref{multi}. Note also that for all negative designs $\design m$ and $\design n$, we will write $\design m \otimes \design n$ instead of $x_0|\symtensor\langle\design m, \design n\rangle$.

\begin{proof} \
  $(\Rightarrow)$ Let $\pathLL s \in \visit{M \otimes N}$. If $\pathLL s = \daimon$ then both conditions are trivial, so suppose $\pathLL s \neq \daimon$. By internal completeness (Theorem~\ref{thm_intcomp_all}), there exist $\design m \in \beh M$, $\design n \in \beh N$ and $\design n_0 \in (\beh M \otimes \beh N)^{\perp}$ such that $\pathLL s = \langle \design m \otimes \design n \leftarrow \design n_0\rangle$. Thus $\design n_0$ must be of the form $\design n_0 = \negdes{a}{\vect{z^a}}{\design p_a}$ with $\design p_\sympar \neq \Omega$ (remember that $\symtensor = \overline \sympar$), and we have $\pathLL s = \kappa_\symtensor \pathLL s'$ where $\pathLL s' = \langle \{\design m^x, \design n^y\} \leftarrow \{\design p_\sympar, \design n_0/x_0\}\rangle = \langle \{\design m^x, \design n^y\} \leftarrow \design p_\sympar\rangle$. Let us prove both properties:
  \begin{enumerate}
  \item By Proposition~\ref{asso-path}, $\proj{\pathLL s'}{\design m^x} = \langle \design m^x \leftarrow \normalisation{\design p_\sympar[\design n/y]} \rangle$, where $\design m^x \in \beh M^x$. Moreover, $\normalisation{\design p_\sympar[\design n/y]} \in \beh M^{x\perp}$, indeed: for any $\design m' \in \beh M$, we have $\normalisation{\normalisation{\design p_\sympar[\design n/y]}[\design m'/x]} = \normalisation{\design p_\sympar[\design n/y, \design m'/x]} = \normalisation{(\design m' \otimes \design n)[\design n_0/x_0]} = \daimon$ using associativity and one reduction step backwards. Thus $\proj{\pathLL s'}{\design m^x} \in \visit{M}^x$. Likewise, $\proj{\pathLL s'}{\design n^y} = \langle \design n^y \leftarrow \normalisation{\design p_\sympar[\design m/x]} \rangle$, so $\proj{\pathLL s'}{\design n^y} \in \visit{N}^y$. Therefore $\pathLL s' \in (\visit{M}^x \shuffle \visit{N}^y)$.
    
  \item Now let $\pathLL t_1 \in \visit{M}^x, \pathLL t_2 \in \visit{N}^y$. Suppose $\pathLL t \in (\pathLL t_1 \shuffle \pathLL t_2)$ and $\kappa^-$ is a negative action such that $\overline {\kappa_\symtensor \pathLL t\kappa^-}$ is a path of $\completed{\dual{\pathLL s}}$. Without loss of generality, suppose moreover that the action $\kappa^-$ comes from $\design m^x$, and let us show that $\pathLL t_1 \kappa^- \daimon \in \visit{M}^x$.
    
    Let $\pathLL t' = \langle \{\completed{\pathLL t_1}/x, \completed{\pathLL t_2}/y\} \leftarrow \completed{\dual{\pathLL s'}} \rangle$. We will show that $\pathLL t_1 \kappa^-$ is a prefix of $\proj{\pathLL t'}{\completed{\pathLL t_1}}$ and that $\proj{\pathLL t'}{\completed{\pathLL t_1}} \in \visit M^x$, leading to the conclusion by Lemma~\ref{daimon_visit}. Note the following facts:
    \begin{enumerate}
    \item $\completed{\dual{\pathLL s}} = \sympar(x, y).\completed{\dual{\pathLL s'}} + \sum_{a \neq \sympar}a(\vect{z^a}).\daimon$, and thus $\completed{\dual{\pathLL s'}} \neq \daimon$ (otherwise a path of the form $\overline {\kappa_\symtensor \pathLL t\kappa^-}$ cannot be path of $\completed{\dual{\pathLL s}}$).
    \item $\pathLL t$ is a path of the multi-design $\{\completed{\pathLL t_1}/x, \completed{\pathLL t_2}/y\}$, and $\overline{\pathLL t}$ is a prefix of a path of $\completed{\dual{\pathLL s'}}$ since $\overline{\kappa_\symtensor \pathLL t\kappa^-}$ is a path of $\completed{\dual{\pathLL s}}$, thus $\pathLL t$ is a prefix of $\pathLL t'$ by Proposition~\ref{prefix-norm}.
    \item Since $\pathLL t$ is a $\daimon$-free positive-ended prefix of $\pathLL t'$, we have that $\overline{\kappa_\symtensor \pathLL t}$ is a strict prefix of $\dual{\kappa_\symtensor \pathLL t'}$. Thus there exists a positive action $\kappa_0^+$ such that $\overline{\kappa_\symtensor \pathLL t}\kappa_0^+$ is a prefix of $\dual{\kappa_\symtensor \pathLL t'}$. The paths $\overline{\kappa_\symtensor \pathLL t\kappa^-}$ and $\overline{\kappa_\symtensor \pathLL t}\kappa^+_0$ are both paths of $\completed{\dual{\pathLL s}}$, hence necessarily $\kappa^+_0 = \overline{\kappa^-}$. We deduce that $\pathLL t \kappa^-$ is a prefix of $\pathLL t'$.
    \item The sequence $\proj{\pathLL t'}{\completed{\pathLL t_1}}$ therefore starts with $\proj{(\pathLL t \kappa^-)}{\completed{\pathLL t_1}}$.
    \item We have $\proj{(\pathLL t \kappa^-)}{\completed{\pathLL t_1}} = (\proj{\pathLL t}{\completed{\pathLL t_1}})\kappa^-$ because, since $\kappa^-$ comes from $\design m^x$, it is hereditarily justified by an initial negative action of address $x$, and thus $\kappa^-$ appears in design $\completed{\pathLL t_1}$. We deduce $\proj{(\pathLL t \kappa^-)}{\completed{\pathLL t_1}} = (\proj{\pathLL t}{\completed{\pathLL t_1}})\kappa^- = \pathLL t_1\kappa^-$.
    \item Moreover, by Proposition~\ref{asso-path} $\proj{\pathLL t'}{\completed{\pathLL t_1}} = \langle \completed{\pathLL t_1} \leftarrow \normalisation{\completed{\dual{\pathLL s'}}[\completed{\pathLL t_2}/y]} \rangle$.
    \end{enumerate}
    Hence (by d, e, f) the sequence $\pathLL t_1\kappa^-$ is a prefix of $\proj{\pathLL t'}{\completed{\pathLL t_1}} = \langle \completed{\pathLL t_1} \leftarrow \normalisation{\completed{\dual{\pathLL s'}}[\completed{\pathLL t_2}/y]} \rangle$. Since $\completed{\pathLL t_1} \in \beh M^x$ (by Proposition~\ref{cor-mono}) and $\normalisation{\completed{\dual{\pathLL s'}}[\completed{\pathLL t_2}/y]} \in \beh M^{x\perp}$ (by associativity, similar reasoning as item $1$), we deduce $\proj{\pathLL t'}{\completed{\pathLL t_1}} \in \visit{M}^x$. Finally $\pathLL t_1\kappa^-\daimon \in \visit{M}^x$ by Lemma~\ref{daimon_visit}.
  \end{enumerate}
  
  \noindent $(\Leftarrow)$ Let $\pathLL s \in \kappa_\symtensor (\visit{M}^x \shuffle \visit{N}^y) \cup \{\daimon\}$ such that the second constraint is also satisfied. If $\pathLL s = \daimon$ then $\pathLL s \in \visit{M \otimes N}$ is immediate, so suppose $\pathLL s = \kappa_\symtensor \pathLL s'$ where $\pathLL s' \in (\visit{M}^x \shuffle \visit{N}^y)$. Consider the design $\completed{\dual{\pathLL s}}$, and note that $\completed{\dual{\pathLL s}} = \sympar(x, y).\completed{\dual{\pathLL s'}} + \sum_{a \neq \sympar}a(\vect{z^a}).\daimon$. We will show by contradiction that $\completed{\dual{\pathLL s}} \in (\beh M \otimes \beh N)^\perp$, leading to the conclusion.

Let $\design m \in \beh M$ and $\design n \in \beh N$ such that $\design m \otimes \design n \not\perp \completed{\dual{\pathLL s}}$. By Proposition~\ref{path-perp} and given the form of design $\completed{\dual{\pathLL s}}$, the interaction with $\design m \otimes \design n$ is finite and the cause of divergence is necessarily the existence of a path $\pathLL t$ and an action $\kappa^-$ such that:
    \begin{enumerate}
    \item $\pathLL t$ is a path of $\design m \otimes \design n$,
    \item $\overline{\pathLL t\kappa^-}$ is a path of $\completed{\dual{\pathLL s}}$
    \item $\pathLL t\kappa^-$ is not a path of $\design m \otimes \design n$.
    \end{enumerate}
    Hence $\pathLL t$ is of the form $\pathLL t = \kappa_\symtensor \pathLL t'$. Choose $\design m$ and $\design n$ such that $\pathLL t$ is of minimal length with respect to all such pairs of designs non orthogonal to $\completed{\dual{\pathLL s}}$. Let $\pathLL t_1 = \proj{\pathLL t'}{\design m^x}$ and $\pathLL t_2 = \proj{\pathLL t'}{\design n^y}$, we have $\pathLL t \in \kappa_\symtensor (\pathLL t_1 \shuffle \pathLL t_2)$. Consider the design $\completed{\dual{\pathLL t}}$, and note that $\completed{\dual{\pathLL t}} = \sympar(x, y).\completed{\dual{\pathLL t'}} + \sum_{a \neq \sympar}a(\vect{z^a}).\daimon$. We prove the following:
    \begin{itemize}
    \item \underline{$\completed{\dual{\pathLL t}} \in (\beh M \otimes \beh N)^\perp$}: By contradiction. Let $\design m' \in \beh M$ and $\design n' \in \beh N$ such that $\design m' \otimes \design n' \not\perp \completed{\dual{\pathLL t}}$. Again using Proposition~\ref{path-perp}, divergence occurs necessarily because there exists a path $\pathLL v$ and a negative action $\kappa^{\prime-}$ such that:
      \begin{enumerate}
      \item $\pathLL v$ is a path of $\design m' \otimes \design n'$,
      \item $\overline{\pathLL v\kappa^{\prime-}}$ is a path of $\completed{\dual{\pathLL t}}$,
      \item $\pathLL v\kappa^{\prime-}$ is not a path of $\design m' \otimes \design n'$.
      \end{enumerate}
      Since the views of $\overline{\pathLL v\kappa^{\prime-}}$ are views of $\overline{\pathLL t}$, $\overline{\pathLL v\kappa^{\prime-}}$ is a path of $\completed{\dual{\pathLL s}}$. Thus $\design m' \otimes \design n' \not\perp \completed{\dual{\pathLL s}}$. Moreover $\pathLL v$ is strictly shorter than $\pathLL t$, indeed: $\pathLL v$ and $\pathLL t$ are $\daimon$-free, and since $\overline{\pathLL v \kappa^{\prime-}}$ is a path of $\completed{\dual{\pathLL t}}$ any action of $\pathLL v \kappa^{\prime-}$ is an action of $\pathLL t$. This contradicts the fact that $\pathLL t$ is of minimum length. We deduce $\completed{\dual{\pathLL t}} \in (\beh M \otimes \beh N)^\perp$.

    \item \underline{$\pathLL t \in \kappa_\symtensor (\visit{M}^x \shuffle \visit{N}^y)$}: We show $\pathLL t_1 \in \visit{M}^x$, the proof of $\pathLL t_2 \in \visit{N}^y$ being similar. Since $\pathLL t$ is a path of $\design m \otimes \design n$ and $\dual{\pathLL t}$ a path of $\completed{\dual{\pathLL t}}$, we have $\pathLL t = \interseq{\design m \otimes \design n}{\completed{\dual{\pathLL t}}} = \kappa_\symtensor \interseq{\{\design m^x, \design n^y\}}{\completed{\dual{\pathLL t'}}}$, hence $\pathLL t' = \interseq{\{\design m^x, \design n^y\}}{\completed{\dual{\pathLL t'}}}$. Thus by Proposition~\ref{asso-path} $\pathLL t_1 = \proj{\pathLL t'}{\design m^x} = \interseq{\design m^x}{\normalisation{\completed{\dual{\pathLL t'}}[\design n/y]}}$. Moreover $\normalisation{\completed{\dual{\pathLL t'}}[\design n/y]} \in \beh M^{x\perp}$: for any design $\design m' \in \beh M$ we have $\normalisation{\completed{\dual{\pathLL t'}}[\design n/y]} \perp \design m'^x$ because of the equality $\normalisation{\normalisation{\completed{\dual{\pathLL t'}}[\design n/y]}[\design m'/x]} = \normalisation{\completed{\dual{\pathLL t'}}[\design n/y, \design m'/x]} = \normalisation{(\design m' \otimes \design n)[\completed{\dual{\pathLL t}}/x_0]} = \daimon$, using associativity, one reduction step backwards, and the fact that $\completed{\dual{\pathLL t}} \in (\beh M \otimes \beh N)^\perp$. It follows that $\pathLL t_1 \in \visit{M}^x$.

    \item \underline{$\pathLL t\kappa^-$ is a path of $\design m \otimes \design n$}: Remember that $\overline{\pathLL t\kappa^-}$ is a path of $\completed{\dual{\pathLL s}}$, and we have just seen that $\pathLL t \in \kappa_\symtensor (\visit{M}^x \shuffle \visit{N}^y)$. Using the second constraint of the proposition, we should have $\pathLL t_1 \kappa^-\daimon \in \visit{M}^x$ or $\pathLL t_2 \kappa^-\daimon \in \visit{N}^y$. Without loss of generality suppose $\pathLL t_1 \kappa^-\daimon \in \visit{M}^x$. Since $\design m^x \in \beh M^x$ and $\pathLL t_1$ is a path of $\design m^x$, we should also have that $\pathLL t_1\kappa^-$ is a prefix of a path of $\design m^x$ by Lemma~\ref{nec}, hence $\view{\pathLL t' \kappa^-} = \view{\pathLL t_1 \kappa^-}$ is a view of $\design m^x$. But in this case, knowing that $\pathLL t$ is a path of $\design m \otimes \design n$ and that $\view{\pathLL t \kappa^-} = \kappa_\symtensor \view{\pathLL t' \kappa^-}$ is a view of $\design m \otimes \design n$, we deduce that $\pathLL t\kappa^-$ is a path of $\design m \otimes \design n$.
    \end{itemize}
    Last point contradicts the cause of divergence between $\design m \otimes \design n$ and $\completed{\dual{\pathLL s}}$. Hence $\completed{\dual{\pathLL s}} \in (\beh M \otimes \beh N)^\perp$. Moreover, $\dual{\pathLL s}$ is a path of $\completed{\dual{\pathLL s}} $, and since $\pathLL s \in \kappa_\symtensor (\visit{M}^x \shuffle \visit{N}^y)$ there exist $\design m_0 \in \beh M$ and $\design n_0 \in \beh N$ such that $\pathLL s$ is a path of $\design m_0 \otimes \design n_0$ (and $\design m_0 \otimes \design n_0 \in \beh M \otimes \beh N$). We deduce $\pathLL s = \interseq{\design m_0 \otimes \design n_0}{\completed{\dual{\pathLL s}}}$, hence $\pathLL s \in \visit{M \otimes N}$.
\end{proof}

\begin{corollary} \label{visit-arrow}
  $\pathLL s \in \visit{N \multimap P}$ if and only if the two conditions below are satisfied:
  \begin{enumerate}
  \item $\dual{\pathLL s} \in \kappa_\symtensor(\visit{N}^x \shuffle \dual{\visit{P}^y}) \cup \{\daimon\}$
  \item for all $\pathLL t \in \visit{N}^x \shuffle \dual{\visit{P}^y}$, for all $\kappa^-$ such that $\overline{\kappa_\symtensor \pathLL t \kappa^-}$ is a path of $\completed{\pathLL s}$, $\pathLL t \kappa^- \daimon \in \visit{N}^x \shuffle \dual{\visit{P}^y}$.
  \end{enumerate}
\end{corollary}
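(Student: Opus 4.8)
The plan is to derive this characterisation directly from the corresponding result for the tensor (Proposition~\ref{visit-tensor}), exploiting the definition $\beh N \multimap \beh P = (\beh N \otimes \beh P^\perp)^\perp$ together with the general duality $\dual{\visit B} = \visit{B^\perp}$ noted just after the definition of visitable paths. No fresh combinatorics on paths should be needed; the whole argument is a translation of the $\otimes$-statement through orthogonality.

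First I would rewrite the visitable paths of the linear map in terms of those of a tensor. Since $\beh N \multimap \beh P = (\beh N \otimes \beh P^\perp)^\perp$, applying $\dual{\visit B} = \visit{B^\perp}$ with $\beh B = \beh N \otimes \beh P^\perp$ gives $\visit{N \multimap P} = \dual{\visit{N \otimes P^\perp}}$. As $\dual{\cdot}$ is an involution on paths ($\dual{\dual{\pathLL s}} = \pathLL s$), the membership $\pathLL s \in \visit{N \multimap P}$ is therefore equivalent to $\dual{\pathLL s} \in \visit{N \otimes P^\perp}$.

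Next I would instantiate Proposition~\ref{visit-tensor} with $\beh M := \beh N$ and $\beh N := \beh P^\perp$, applied to the path $\dual{\pathLL s}$. This yields that $\dual{\pathLL s} \in \visit{N \otimes P^\perp}$ holds if and only if $(1')$ $\dual{\pathLL s} \in \kappa_\symtensor(\visit{N}^x \shuffle \visit{P^\perp}^y) \cup \{\daimon\}$, and $(2')$ for every $\pathLL t \in \visit{N}^x \shuffle \visit{P^\perp}^y$ and every $\kappa^-$ such that $\overline{\kappa_\symtensor \pathLL t \kappa^-}$ is a path of $\completed{\dual{(\dual{\pathLL s})}}$, one has $\pathLL t \kappa^- \daimon \in \visit{N}^x \shuffle \visit{P^\perp}^y$. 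It then remains to rewrite these into the stated form using two simplifications: $\visit{P^\perp} = \dual{\visit{P}}$ (again by the duality), whence $\visit{P^\perp}^y = \dual{\visit{P}^y}$; and $\completed{\dual{(\dual{\pathLL s})}} = \completed{\pathLL s}$ by involutivity of the dual. With these substitutions $(1')$ and $(2')$ become exactly conditions (1) and (2) of the corollary.

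The only genuinely delicate point is verifying that delocation commutes with the dual, i.e. that $\visit{P^\perp}^y = \dual{\visit{P}^y}$. This holds because delocation merely renames the distinguished root address $x_0$ as $y$, while the dual acts on polarities and on a trailing $\daimon$ but leaves addresses untouched; the two operations are thus independent and may be composed in either order. Everything else is a bookkeeping substitution, so I expect no real obstacle beyond keeping the delocation variables $x$ and $y$ attached to the correct sides $\beh N$ and $\beh P^\perp$ of the underlying tensor.
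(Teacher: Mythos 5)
Your derivation is correct and is exactly the route the paper intends: the corollary is stated as an immediate consequence of Proposition~\ref{visit-tensor} via $\beh N \multimap \beh P = (\beh N \otimes \beh P^\perp)^\perp$, the duality $\dual{\visit{B}} = \visit{B^\perp}$, and the involutivity of $\dual{\cdot}$ (the paper gives no further argument). Your explicit check that delocation commutes with the dual, giving $\visit{P^\perp}^y = \dual{\visit{P}^y}$, is the only point requiring a remark, and you handle it correctly.
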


\subsubsection{Tensor and Linear Map, Regular Case}

\begin{proposition} \label{visit-tensor-reg}
  If $\beh M$ and $\beh N$ regular then $\visit{M \otimes N} = \kappa_\symtensor (\visit{M}^x \shuffle \visit{N}^y) \cup \{\daimon\}$.
\end{proposition}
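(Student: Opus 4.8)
The plan is to derive this proposition from the general characterisation of Proposition~\ref{visit-tensor}, which already states that $\pathLL s \in \visit{M \otimes N}$ iff both its condition~(1), namely $\pathLL s \in \kappa_\symtensor(\visit{M}^x \shuffle \visit{N}^y) \cup \{\daimon\}$, and its condition~(2) hold. The inclusion $\visit{M \otimes N} \subseteq \kappa_\symtensor(\visit{M}^x \shuffle \visit{N}^y) \cup \{\daimon\}$ is then immediate from condition~(1). For the converse it suffices to show that, when $\beh M$ and $\beh N$ are regular, condition~(2) is \emph{automatically} satisfied by every $\pathLL s$ fulfilling condition~(1). So I would fix $\pathLL s = \kappa_\symtensor \pathLL s'$ with $\pathLL s' \in \visit{M}^x \shuffle \visit{N}^y$ (the case $\pathLL s = \daimon$ being trivial), together with $\pathLL t \in \visit{M}^x \shuffle \visit{N}^y$ and a negative action $\kappa^-$ such that $\overline{\kappa_\symtensor \pathLL t \kappa^-}$ is a path of $\completed{\dual{\pathLL s}}$, and the goal becomes producing $\pathLL t \kappa^- \daimon \in \visit{M}^x \shuffle \visit{N}^y$.

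First I would reduce this to a one-sided statement. Let $\pathLL t_1$ and $\pathLL t_2$ be the subsequences of $\pathLL t$ on the $\beh M$-side (address $x$) and the $\beh N$-side (address $y$), so $\pathLL t_1 \in \visit{M}^x$ and $\pathLL t_2 \in \visit{N}^y$. The action $\kappa^-$ is hereditarily justified either by $x$ or by $y$; without loss of generality assume it sits on the $\beh M$-side. Then $\pathLL t \kappa^- \daimon$ has $x$-component $\pathLL t_1 \kappa^- \daimon$ and $y$-component $\pathLL t_2$, and it is routine to check it is an aj-sequence satisfying visibility (the only new requirement, O-visibility of $\kappa^-$, is inherited from the path $\kappa_\symtensor \pathLL t \kappa^-$). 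Hence $\pathLL t \kappa^- \daimon \in (\pathLL t_1 \kappa^- \daimon) \shuffle \pathLL t_2$, and it suffices to prove the single-side claim $\pathLL t_1 \kappa^- \daimon \in \visit{M}^x$; the $y$-side case is symmetric and uses regularity of $\beh N$.

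To establish $\pathLL t_1 \kappa^- \daimon \in \visit{M}^x$ I would exploit regularity of $\beh M$. Since $\pathLL t_1 \in \visit{M}^x$, by Lemma~\ref{lem_visit_path_inc} there is an incarnated design $\design m \in |\beh M^x|$ of which $\pathLL t_1$ is a path. The move $\kappa^-$ offered by $\completed{\dual{\pathLL s}}$ after $\kappa_\symtensor \pathLL t$ has its justifier in $\antiview{\pathLL t_1}$ and its address bound by a positive action of $\pathLL t_1$; because an incarnated negative design carries a branch for every name, $\pathLL t_1 \kappa^-$ is the prefix of a path of a design of $|\beh M^x|$. Concretely I would pass, via the associativity of interaction, to a counter-design $\design e \in \beh M^{x\perp}$ that realises $\overline{\pathLL t_1}$ and then plays $\overline{\kappa^-}$ — such a counter-design is extracted from $\completed{\dual{\pathLL s}}$ projected onto the $\beh M$-component — so that $\interseq{\design m}{\design e}$ is a path of $\beh M^x$ extending $\pathLL t_1 \kappa^-$. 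By regularity this path lies in $\visit{M}^x$, and then Lemma~\ref{daimon_visit} applied to its negative-ended prefix $\pathLL t_1 \kappa^-$ yields $\pathLL t_1 \kappa^- \daimon \in \visit{M}^x$, as required.

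The main obstacle is exactly this last step: showing that the negative move $\kappa^-$ forced by the counter-design on the tensor can be realised as a genuine extension of $\pathLL t_1$ inside $\visit{M}^x$. In the general (non-regular) case this is precisely what condition~(2) of Proposition~\ref{visit-tensor} must decide by hand; here regularity does the work, since it guarantees that every path of an incarnated design of $\beh M^x$ is visitable and that $\visit{M}^x$ and $\visit{N}^y$ are stable under shuffle, which lets the component information carried by $\pathLL s'$ and $\pathLL t$ be merged into a single realiser. The careful bookkeeping — locating the justifier of $\kappa^-$ in $\antiview{\pathLL t_1}$, verifying that $\overline{\pathLL t_1}\,\overline{\kappa^-}$ is a path-prefix of a design of $\beh M^{x\perp}$ through Lemma~\ref{nec} and the associativity of paths (Proposition~\ref{asso-path}), and recombining the two sides by shuffle — is where the regularity hypotheses on both $\beh M$ and $\beh N$ are genuinely used.
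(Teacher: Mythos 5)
Your overall skeleton matches the paper's: reduce to checking condition~(2) of Proposition~\ref{visit-tensor}, decompose $\pathLL t$ into components $\pathLL t_1 \in \visit{M}^x$ and $\pathLL t_2 \in \visit{N}^y$, assume w.l.o.g.\ that $\kappa^-$ lives on the $\beh M$-side, and aim for $\pathLL t_1\kappa^-\daimon \in \visit{M}^x$. But the step that actually proves this last claim is where your argument breaks down. First, the assertion that ``an incarnated negative design carries a branch for every name'' is false: incarnation is \emph{minimality} for $\sqsubseteq$, so it prunes unvisited branches rather than completing them (it is $\completed{\pathLL s}$, maximal for $\preceq$, that has a branch ending in $\daimon$ for every name). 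So nothing guarantees that the design $\design m \in |\beh M^x|$ realising $\pathLL t_1$ can even respond to $\kappa^-$. Second, the counter-design $\design e$ you propose to extract from $\completed{\dual{\pathLL s}}$ by projection would, after normalisation against the $\beh N$-side, follow $\overline{\pathLL s_1}$ (the $\beh M$-component of the \emph{original} path $\pathLL s'$), not $\overline{\pathLL t_1}$ for the \emph{arbitrary} $\pathLL t$ quantified in condition~(2); there is no reason $\interseq{\design m}{\design e}$ extends $\pathLL t_1\kappa^-$. Third, the appeal to Lemma~\ref{nec} is circular here: that lemma presupposes a visitable path with prefix $\pathLL t_1\kappa^-$, which is exactly what is being sought.

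The missing idea is the link between the action $\kappa^-$ and the original decomposition $\pathLL s' \in \pathLL s_1 \shuffle \pathLL s_2$. Because $\completed{\dual{\pathLL s}}$ is built solely from the actions of $\dual{\pathLL s}$, the fact that $\overline{\kappa_\symtensor\pathLL t\kappa^-}$ is a path of it forces $\kappa^-$ to be an action of $\pathLL s_1$ (say) \emph{with the same chain of justifications}, i.e.\ $\triv{\pathLL t_1\kappa^-} = \trivv{\kappa^-}{\pathLL s_1}$. The paper then uses the trivial-view reformulation of regularity (Proposition~\ref{reg2}): since $\pathLL s_1'\kappa^-\daimon \in \visit{M}^x$ by Lemma~\ref{daimon_visit} (where $\pathLL s_1'\kappa^-$ is the prefix of $\pathLL s_1$ ending with $\kappa^-$), the sequence $\triv{\pathLL t_1\kappa^-\daimon}$ is a positive-ended trivial view of $\beh M^x$, hence visitable; and finally $\pathLL t_1\kappa^-\daimon \in \pathLL t_1 \shuffle \triv{\pathLL t_1\kappa^-\daimon} \subseteq \visit{M}^x$ by shuffle-stability. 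Both regularity clauses are thus used, but through Proposition~\ref{reg2} and the trivial-view identity rather than through any counter-design construction; without this mechanism your proof does not go through.
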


\begin{proof}
Suppose $\beh M$ and $\beh N$ regular. Following Proposition~\ref{visit-tensor}, it suffices to show that any path $\pathLL s \in \kappa_\symtensor (\visit{M}^x \shuffle \visit{N}^y) \cup \{\daimon\}$ satisfies the following condition: for all $\pathLL t \in \visit{M}^x \shuffle \visit{N}^y$, for all negative action $\kappa^-$ such that $\overline{\kappa_\symtensor \pathLL t \kappa^-}$ is a path of $\completed{\dual{\pathLL s}}$, $\pathLL t \kappa^- \daimon \in \visit{M}^x \shuffle \visit{N}^y$.

If $\pathLL s = \daimon$, there is nothing to prove, so suppose $\pathLL s = \kappa_\symtensor \pathLL s'$ where $\pathLL s' \in \visit{M}^x \shuffle \visit{N}^y$. Let $\pathLL t \in \visit{M}^x \shuffle \visit{N}^y$ and $\kappa^-$ be such that $\overline{\kappa_\symtensor \pathLL t \kappa^-}$ is a path of $\completed{\dual{\pathLL s}}$, that is $\overline{\pathLL t \kappa^-}$ is a path of $\completed{\dual{\pathLL s'}}$. Let $\pathLL s_1, \pathLL t_1 \in \visit{M}^x$ and $\pathLL s_2, \pathLL t_2 \in \visit{N}^y$ such that $\pathLL s' \in \pathLL s_1 \shuffle \pathLL s_2$ and $\pathLL t \in \pathLL t_1 \shuffle \pathLL t_2$. Without loss of generality, suppose $\kappa^-$ is an action in $\pathLL s_1$, thus we must show $\pathLL t_1 \kappa^- \daimon \in \visit{M}^x$. Notice that $\triv{\pathLL t_1\kappa^-} = \triv{\pathLL t\kappa^-} = \trivv{\kappa^-}{\pathLL s'} = \trivv{\kappa^-}{\pathLL s_1}$ (the second equality follows from the fact that $\overline{\pathLL t \kappa^-}$ is a path of $\completed{\,\dual{\pathLL s'}\,}$). Since $\pathLL s_1 \in \visit{M}^x$, the sequence $\trivv{\kappa^-}{\pathLL s_1} = \triv{\pathLL t_1\kappa^-}$ is a trivial view of $\beh M^x$. Let $\pathLL s_1'\kappa^-$ be the prefix of $\pathLL s_1$ ending with $\kappa^-$. By Lemma~\ref{daimon_visit} $\pathLL s_1'\kappa^-\daimon \in \visit{M}^x$, so $\triv{\pathLL t_1\kappa^- \daimon} = \triv{\pathLL s'_1\kappa^-\daimon}$ is also a trivial view of $\beh M^x$; by regularity of $\beh M$, we deduce $\triv{\pathLL t_1\kappa^- \daimon} \in \visit{M}^x$. We have $\pathLL t_1 \kappa^- \daimon \in \pathLL t_1 \shuffle \triv{\pathLL t_1\kappa^-\daimon}$, where both $\pathLL t_1$ and $\triv{\pathLL t_1\kappa^- \daimon}$ are in $\visit{M}^x$, hence $\pathLL t_1 \kappa^- \daimon \in \visit{M}^x$ by regularity of $\beh M$.
\end{proof}

\begin{corollary} \label{visit-arrow-reg}
  If $\beh N$ and $\beh P$ are regular then $\visit{N \multimap P} = \dual{\kappa_\symtensor (\visit{N} \shuffle \dual{\visit{P}})} \cup \{\epsilon\}$.
\end{corollary}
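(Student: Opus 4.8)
The plan is to deduce this corollary directly from Proposition~\ref{visit-tensor-reg} by unfolding the definition of $\multimap$ and applying the duality between a behaviour's visitable paths and those of its orthogonal.

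First I would recall that $\beh N \multimap \beh P = (\beh N \otimes \beh P^\perp)^\perp$, so that $\visit{N \multimap P} = \visit{(N \otimes P^\perp)^\perp} = \dual{\visit{N \otimes P^\perp}}$, using the identity $\dual{\visit B} = \visit{B^\perp}$ valid for every behaviour $\beh B$. Since regularity is symmetric (Definition~\ref{reg}: $\beh B$ is regular if and only if $\beh B^\perp$ is regular), the hypothesis that $\beh P$ is regular gives that $\beh P^\perp$ is regular as well; hence both $\beh N$ and $\beh P^\perp$ are regular and Proposition~\ref{visit-tensor-reg} applies to the tensor $\beh N \otimes \beh P^\perp$.

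Next I would substitute $\visit{P^\perp} = \dual{\visit P}$ into the formula provided by Proposition~\ref{visit-tensor-reg}, obtaining
\[\visit{N \otimes P^\perp} = \kappa_\symtensor (\visit{N}^x \shuffle \dual{\visit{P}}^y) \cup \{\daimon\}.\]
Taking the dual of both sides then produces $\visit{N \multimap P}$ on the left. On the right, the dual operation distributes over the union, and one checks from the definition of $\dual{\cdot}$ that $\dual{\{\daimon\}} = \{\epsilon\}$ (the dual of the one-action sequence $\daimon$ is the empty sequence). This yields $\visit{N \multimap P} = \dual{\kappa_\symtensor (\visit{N}^x \shuffle \dual{\visit{P}}^y)} \cup \{\epsilon\}$. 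Finally, since the delocations $x$ and $y$ have no effect on the inherent structure of the behaviours — they merely relocate the distinguished variable — the superscripts can be suppressed, which is exactly the stated formula.

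The argument is essentially a chain of rewritings, so no step is a genuine obstacle; the only points requiring care are the symmetry of regularity (needed to invoke Proposition~\ref{visit-tensor-reg} for $\beh P^\perp$) and the correct handling of the dual operation on the singleton $\{\daimon\}$ and on a union of sets of paths.
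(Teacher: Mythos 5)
Your proof is correct and follows exactly the route the paper intends: the corollary is left as an immediate consequence of Proposition~\ref{visit-tensor-reg}, obtained by unfolding $\beh N \multimap \beh P = (\beh N \otimes \beh P^\perp)^\perp$, using the symmetry of regularity and the identity $\visit{B^\perp} = \dual{\visit{B}}$, and dualizing (with $\dual{\{\daimon\}} = \{\epsilon\}$). Nothing further is needed.
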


\subsection{Proof of Proposition~\ref{prop_reg_stable}: Regularity and Connectives} \label{sub-reg}

\begin{proposition}\ \label{reg-sh}
  \begin{enumerate}
    \item If $\beh N$ is regular then $\shpos \beh N$ is regular.
    \item If $\beh P$ is regular then $\shneg \beh P$ is regular.
  \end{enumerate}
\end{proposition}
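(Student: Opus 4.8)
The plan is to handle the negative shift by reduction to the positive one, and to treat $\shpos$ directly against the three clauses of Definition~\ref{reg}. For part~(2), Lemma~\ref{lem-inc-shneg} gives $\shneg \beh P = (\shpos \beh P^\perp)^\perp$; since regularity is self-dual ($\beh B$ is regular iff $\beh B^\perp$ is) and $\beh P$ is regular iff $\beh P^\perp$ is, once part~(1) is proved we obtain $\shpos \beh P^\perp$ regular and hence its orthogonal $\shneg \beh P$ regular. So all the work is in part~(1).

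For part~(1) I would verify the three conditions of Definition~\ref{reg} for $\shpos \beh N$, using the explicit descriptions $\visit{\shpos N} = \kappa_\symshpos \visit{N}^x \cup \{\daimon\}$ and $\visit{(\shpos N)^\perp} = \visit{\shneg N^\perp} = \kappa_\symshneg \visit{N^\perp}^x \cup \{\epsilon\}$ coming from Proposition~\ref{visit-sh} and Lemma~\ref{lem-inc-shneg}, together with internal completeness (Theorem~\ref{thm_intcomp_all}). For the first condition, internal completeness gives $|\shpos \beh N| = \symshpos\langle|\beh N|\rangle \cup \{\daimon\}$, so an incarnated design is either $\daimon$, whose only path $\daimon$ lies in $\visit{\shpos N}$, or $\symshpos\langle\design n\rangle$ with $\design n \in |\beh N|$; every path of the latter has the shape $\kappa_\symshpos \pathLL s'$ with $\pathLL s'$ a path of $\design n^x$, and regularity of $\beh N$ yields $\pathLL s' \in \visit{N}^x$, so $\kappa_\symshpos \pathLL s' \in \visit{\shpos N}$. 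The second condition is the mirror image: the incarnated designs of $(\shpos N)^\perp = \shneg \beh N^\perp$ are the $\symshneg(x).\design p^x$ with $\design p \in |\beh N^\perp|$, their paths are $\epsilon$ or $\kappa_\symshneg \pathLL s'$ with $\pathLL s'$ a path of $\design p^x$, and regularity of $\beh N^\perp$ (again by self-duality) places $\pathLL s' \in \visit{N^\perp}^x$, hence $\kappa_\symshneg \pathLL s'$ is visitable.

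The third condition, stability of $\visit{\shpos N}$ and $\visit{(\shpos N)^\perp}$ under shuffle, is where the content lies and is the step I expect to be the main obstacle. On the positive side two nontrivial paths share the head $\kappa_\symshpos$, so $(\kappa_\symshpos \pathLL s) \shuffle (\kappa_\symshpos \pathLL t) = \kappa_\symshpos(\pathLL s \shuffle \pathLL t)$ with $\pathLL s, \pathLL t \in \visit{N}^x$, and stability of $\visit{N}$ under shuffle (regularity of $\beh N$) keeps the result inside $\kappa_\symshpos \visit{N}^x$. On the negative side the paths are $\kappa_\symshneg \pathLL s$ and $\kappa_\symshneg \pathLL t$, and the delicate point is linearity at the single address $x$ bound by $\kappa_\symshneg$: the continuations $\pathLL s, \pathLL t$ are positive-started at address $x$, so unless they begin with the same positive action no shuffle exists (two distinct actions cannot both carry address $x$), while if they do agree the negative shuffle collapses to $\kappa_\symshneg(\pathLL s \shuffle \pathLL t) \subseteq \kappa_\symshneg \visit{N^\perp}^x$ by regularity of $\beh N^\perp$. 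The cases involving $\daimon$ or $\epsilon$ are immediate. I would take care to write out this address bookkeeping explicitly, since it is precisely what forbids the shuffle from leaving the behaviour.
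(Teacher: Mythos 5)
Your proof is correct and follows essentially the same route as the paper: part (2) is reduced to part (1) via Lemma~\ref{lem-inc-shneg} and the self-duality of regularity, and part (1) rests on the descriptions of $\visit{\shpos N}$ and $\visit{(\shpos N)^\perp}$ from Proposition~\ref{visit-sh} together with the same shuffle argument (same head $\kappa_\symshpos$ on the positive side; forced agreement of the first positive action after $\kappa_\symshneg$ on the negative side). The only cosmetic difference is that you verify the clauses of Definition~\ref{reg} directly on incarnated designs, where the paper checks the equivalent criterion of Proposition~\ref{reg2} via trivial views; both work here because every path of $\symshpos\langle\design n\rangle$ is just $\kappa_\symshpos$ prefixed to a path of $\design n^x$.
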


\begin{proof}\ 
  \begin{enumerate}
  \item Following Proposition~\ref{reg2}:
    \begin{itemize}
    \item By internal completeness, the trivial views of $\shpos \beh N$ are of the form $\kappa_\symshpos\viewseq t$ where $\viewseq t$ is a trivial view of $\beh N$. Since $\beh N$ is regular $\viewseq t \in \visit{N}$. Hence by Proposition~\ref{visit-sh}, $\kappa_\symshpos\viewseq t \in \visit{\shpos N}$.

    \item Since $\visit{N}$ is stable by shuffle, so is $\visit{\shpos N} = \kappa_\symshpos \visit{N}^x$ where $\kappa_\symshpos$ is a positive action.
    \item For all paths $\kappa_\symshneg \pathLL s$, $\kappa_\symshneg \pathLL t \in \visit{(\shpos N)^\perp} = \kappa_\symshneg \visit{N^\perp}^x$ such that $\kappa_\symshneg \pathLL s \shuffle \kappa_\symshneg \pathLL t$ is defined, $\pathLL s$ and $\pathLL t$ start necessarily by the same positive action and $\pathLL s \shuffle \pathLL t \subseteq \visit{N^\perp}^x$ because $\visit{N^\perp}$ (thus also $\visit{N^\perp}^x$) is stable by $\shuffle$, hence $\kappa_\symshneg \pathLL s \shuffle \kappa_\symshneg \pathLL t = \kappa_\symshneg (\pathLL s \shuffle \pathLL t) \subseteq \visit{(\shpos N)^\perp}$.
    \end{itemize}
    \item If $\beh P$ is regular then $\beh P^\perp$ is too. Then by previous point $\shpos \beh P^\perp$ is regular, therefore so is $(\shpos \beh P^\perp)^\perp$. By Lemma~\ref{lem-inc-shneg}, this means that $\shneg \beh P$ is regular. 
  \end{enumerate}
\end{proof}

\begin{proposition}\ \label{reg-plus}
If $\beh M$ and $\beh N$ are regular then $\beh M \oplus \beh N$ is regular.
\end{proposition}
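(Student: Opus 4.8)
The plan is to verify regularity through the alternative characterisation of Proposition~\ref{reg2}, exactly as in the proof of Proposition~\ref{reg-sh} for the shifts. Thus it suffices to check that (i) every positive-ended trivial view of $\beh M \oplus \beh N$ is visitable in $\beh M \oplus \beh N$, and (ii) both $\visit{M \oplus N}$ and $\visit{(M \oplus N)^\perp}$ are stable under shuffle. Throughout I use the explicit shape of the visitable paths. On the positive side this is Proposition~\ref{visit-pl}: $\visit{M \oplus N} = \kappa_{\symplus_1} \visit{M}^{x_1} \cup \kappa_{\symplus_2} \visit{N}^{x_2} \cup \{\daimon\}$. On the negative side, since $\dual{\visit B} = \visit{B^\perp}$ and the dual of a path $\kappa_{\symplus_i} \pathLL r$ equals $\overline{\kappa_{\symplus_i}} \dual{\pathLL r}$ (checking both the $\daimon$-free and $\daimon$-ended cases), this gives $\visit{(M \oplus N)^\perp} = \overline{\kappa_{\symplus_1}} \visit{M^\perp}^{x_1} \cup \overline{\kappa_{\symplus_2}} \visit{N^\perp}^{x_2} \cup \{\epsilon\}$.

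For (i), by internal completeness (Theorem~\ref{thm_intcomp_all}) every design of $|\beh M \oplus \beh N|$ is $\symplus_1\langle\design m\rangle$ with $\design m \in |\beh M|$, or $\symplus_2\langle\design n\rangle$ with $\design n \in |\beh N|$, or $\daimon$; hence a positive-ended trivial view of $\beh M \oplus \beh N$ is either $\daimon$, or of the form $\kappa_{\symplus_1}\viewseq t$ with $\viewseq t$ a positive-ended trivial view of $\beh M^{x_1}$, or symmetrically on the $\symplus_2$ side. Regularity of $\beh M$ gives $\viewseq t \in \visit{M}^{x_1}$ by Proposition~\ref{reg2}, and then $\kappa_{\symplus_1}\viewseq t \in \visit{M \oplus N}$ by Proposition~\ref{visit-pl}; the $\symplus_2$ case is symmetric and $\daimon \in \visit{M \oplus N}$ directly. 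For shuffle stability of $\visit{M \oplus N}$, note that two positive paths can only be shuffled when they share their first action: if both begin with $\kappa_{\symplus_1}$, say $\pathLL s = \kappa_{\symplus_1}\pathLL s'$ and $\pathLL t = \kappa_{\symplus_1}\pathLL t'$ with $\pathLL s', \pathLL t' \in \visit{M}^{x_1}$, then $\pathLL s \shuffle \pathLL t = \kappa_{\symplus_1}(\pathLL s' \shuffle \pathLL t') \subseteq \kappa_{\symplus_1}\visit{M}^{x_1}$ by stability of $\visit{M}$ under shuffle; the $\kappa_{\symplus_2}$ case is symmetric, and any pair with distinct first actions (including $\daimon$ against a $\kappa_{\symplus_i}$-path) has empty shuffle.

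The genuinely delicate point is shuffle stability of $\visit{(M \oplus N)^\perp}$, where the exclusivity of the two summands must be exploited. Here the paths are negative and begin with the head action $\overline{\kappa_{\symplus_1}}$ or $\overline{\kappa_{\symplus_2}}$, both of address $x_0$. The key observation is that in any interleaving $\pathLL u \in \pathLL s \shuffle \pathLL t$, linearity of aj-sequences forbids two distinct actions of address $x_0$. Consequently, if $\pathLL s$ and $\pathLL t$ start on opposite summands, then $\pathLL u$ would need both head actions and the shuffle is empty; nothing to prove. If instead both start with $\overline{\kappa_{\symplus_1}}$, say $\pathLL s = \overline{\kappa_{\symplus_1}}\pathLL s'$ and $\pathLL t = \overline{\kappa_{\symplus_1}}\pathLL t'$ with $\pathLL s', \pathLL t' \in \visit{M^\perp}^{x_1}$, then $\pathLL u$ must share this single head action and the remaining actions interleave $\pathLL s'$ and $\pathLL t'$; since $\pathLL s'$ and $\pathLL t'$ both begin with a positive action of address $x_1$, linearity again forces these first actions to coincide (otherwise the shuffle is empty), so $\pathLL s \shuffle \pathLL t = \overline{\kappa_{\symplus_1}}(\pathLL s' \shuffle \pathLL t')$. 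Stability of $\visit{M^\perp}$ under shuffle — which holds because $\beh M$ regular implies $\beh M^\perp$ regular — then yields $\pathLL s' \shuffle \pathLL t' \subseteq \visit{M^\perp}^{x_1}$, whence $\pathLL s \shuffle \pathLL t \subseteq \visit{(M \oplus N)^\perp}$; the $\overline{\kappa_{\symplus_2}}$ case is symmetric, and shuffles involving $\epsilon$ are trivial. This establishes the second condition of Proposition~\ref{reg2} and hence regularity of $\beh M \oplus \beh N$. The main obstacle is precisely this last paragraph: making the branch-exclusivity argument via linearity of the head action, rather than a free interleaving as in the tensor case.
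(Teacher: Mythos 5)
Your proof is correct and follows essentially the same route as the paper, which simply notes that the argument is ``similar to Proposition~\ref{reg-sh}(1)'' using the structure of $\beh M \oplus \beh N$ and $(\beh M \oplus \beh N)^\perp$ observed in the proof of Proposition~\ref{visit-pl}: you verify the two conditions of Proposition~\ref{reg2} via the explicit form of $\visit{M\oplus N}$, and your branch-exclusivity argument for the negative shuffle (linearity of the shared address $x_0$) is exactly the detail the paper leaves implicit.
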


\begin{proof}
  Similar to Proposition~\ref{reg-sh} (1), by the same remark as in proof of Proposition~\ref{visit-pl}.
\end{proof}

In order to show that $\otimes$ preserves regularity, consider first the following definitions and lemma. We call \defined{quasi-path} a positive-ended P-visible aj-sequence. The \defined{shuffle} $\pathLL s \shuffle \pathLL t$ of two negative quasi-paths $\pathLL s$ and $\pathLL t$ is the set of paths $\pathLL u$ formed with actions from $\pathLL s$ and $\pathLL t$ such that $\proj{\pathLL u}{\pathLL s} = \pathLL s$ and $\proj{\pathLL u}{\pathLL t} = \pathLL t$.

\begin{lemma}\label{lem:dual_inverse_shuffle}
  Let $\pathLL s$ and $\pathLL t$ be negative quasi-paths. If $\pathLL s \shuffle \pathLL t \neq \emptyset$ then $\pathLL s$ and $\pathLL t$ are paths.
\end{lemma}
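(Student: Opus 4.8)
The plan is to reduce the statement to the O-visibility of each component and then transfer O-visibility from the interleaving to the two components by a restriction argument. First I would note that a negative quasi-path is by definition a positive-ended P-visible aj-sequence, so alternation, the daimon condition, linearity and P-visibility already hold for both $\pathLL s$ and $\pathLL t$; the only clause missing for ``being a path'' is O-visibility. Fix a path $\pathLL u \in \pathLL s \shuffle \pathLL t$, which exists since $\pathLL s \shuffle \pathLL t \neq \emptyset$. By definition of the negative shuffle $\proj{\pathLL u}{\pathLL s} = \pathLL s$ and $\proj{\pathLL u}{\pathLL t} = \pathLL t$, and $\pathLL u$ satisfies both P- and O-visibility. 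By symmetry it suffices to prove O-visibility of $\pathLL s$.

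The key structural remark is that justification never crosses between the two components: since $\pathLL u$ is an aj-sequence, each variable is the address of at most one action, and the addresses occurring in $\pathLL s$ are disjoint from those occurring in $\pathLL t$; hence the justifier of an $\pathLL s$-action is again an $\pathLL s$-action, so the $\pathLL s$-actions of $\pathLL u$ are closed under (hereditary) justification. Thus $\pathLL s$ is exactly the restriction of the path $\pathLL u$ to a justification-closed set of actions. Now take any prefix $\pathLL u_0 \kappa^-$ of $\pathLL u$ with $\kappa^-$ a negative $\pathLL s$-action, and set $\pathLL s_0 = \proj{\pathLL u_0}{\pathLL s}$, so that $\pathLL s_0 \kappa^-$ is the corresponding prefix of $\pathLL s$. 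O-visibility of $\pathLL u$ gives $\mathrm{just}(\kappa^-) \in \antiview{\pathLL u_0}$, and since $\mathrm{just}(\kappa^-)$ is an $\pathLL s$-action, it is enough to show that every $\pathLL s$-action occurring in $\antiview{\pathLL u_0}$ also occurs in $\antiview{\pathLL s_0} = \antiview{\proj{\pathLL u_0}{\pathLL s}}$; this yields $\mathrm{just}(\kappa^-) \in \antiview{\pathLL s_0}$, i.e.\ O-visibility of $\pathLL s$ at $\kappa^-$.

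I would prove this inclusion by induction on $\pathLL u_0$, following the recursive definition of the anti-view. Appending a negative action, and appending a positive $\pathLL s$-action, are routine: in the latter case the anti-view back-jumps to the justifier, which stays inside $\pathLL s$ by the non-crossing remark, so the jump is mirrored exactly by the corresponding back-jump in $\antiview{\pathLL s_0}$, and projection commutes with this step. The one delicate case — and the main obstacle — is appending a positive $\pathLL t$-action $\kappa^+$: the anti-view then back-jumps within $\pathLL t$ to $\mathrm{just}(\kappa^+)$ and discards the intervening block, and one must check that no $\pathLL s$-action survives in the interleaved anti-view that $\antiview{\pathLL s_0}$ would drop. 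Simple examples show that the anti-view does \emph{not} literally commute with projection (the interleaved anti-view may even retain strictly fewer $\pathLL s$-actions), so this cannot be handled by bookkeeping alone: the point is to use the P- and O-visibility of $\pathLL u$ to exclude the only harmful interleaving, namely one placing an $\pathLL s$-positive that back-jumps over a currently retained $\pathLL s$-action strictly between a $\pathLL t$-negative and the $\pathLL t$-positive it justifies. Visibility of $\pathLL u$ forbids precisely such a component switch.

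Granting the inclusion, O-visibility of $\pathLL s$ at every negative action follows, so $\pathLL s$ is a path; the symmetric argument gives that $\pathLL t$ is a path. The heart of the proof is thus the positive-other-component case of the induction, which is the (doubly-visible) analogue of the standard game-semantic fact that the restriction of a legal play to a set of moves closed under justification is again a legal play — the subtlety being exactly that visibility of the interleaving, rather than naive commutation of anti-views with projection, is what controls the switches between components.
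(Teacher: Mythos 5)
Your reduction is sound: a negative quasi-path already satisfies everything except O-visibility, justification never crosses between the two components of a shuffle (by linearity of $\pathLL u$), so it suffices to transfer O-visibility from $\pathLL u$ to $\pathLL s$. But the proof stops exactly where the lemma's content begins. Your inductive invariant --- every $\pathLL s$-action of $\antiview{\pathLL u_0}$ occurs in $\antiview{\proj{\pathLL u_0}{\pathLL s}}$ --- breaks in the case you yourself single out: when $\pathLL u_0\kappa^+$ ends with a positive $\pathLL t$-action, the anti-view back-jumps to the prefix $\pathLL u_0'$ ending on $\mathrm{just}(\kappa^+)$, and the induction hypothesis only places the surviving $\pathLL s$-actions inside $\antiview{\proj{\pathLL u_0'}{\pathLL s}}$; since anti-views of prefixes are not contained in anti-views of extensions (later positive $\pathLL s$-actions discard material), you cannot conclude they lie in $\antiview{\proj{\pathLL u_0}{\pathLL s}}$. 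You acknowledge this and assert that ``visibility of $\pathLL u$ forbids precisely such a component switch,'' but that assertion is never argued, and it is the entire mathematical substance of the lemma. As written, the proposal is a correct framing plus an unproven key claim, so it does not constitute a proof. It is also not clear that your prefix-wise invariant is the right induction loading: establishing it at an arbitrary prefix seems to require the same block analysis at every switch point, not just at the point of failure.

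For comparison, the paper does not attempt any prefix-wise commutation of anti-views with projection. It argues by contradiction on a triple $(\pathLL s,\pathLL t,\pathLL u)$ with $\pathLL u$ of minimal length, truncates so that the failure is at the very last negative action ($\pathLL u=\pathLL u_1\alpha^-\pathLL u_2\alpha^+\kappa^-\daimon$ with $\kappa^+=\mathrm{just}(\kappa^-)$ in $\pathLL u_1$), proves --- using both P- and O-visibility of $\pathLL u$ --- that every action of $\pathLL u_2$ is justified inside $\alpha^-\pathLL u_2$, and then excises the block $\alpha^-\pathLL u_2\alpha^+$ to produce a strictly shorter counterexample $(\proj{\pathLL u_1\kappa^-\daimon}{\pathLL s},\proj{\pathLL u_1}{\pathLL t},\pathLL u_1\kappa^-\daimon)$, contradicting minimality. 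If you want to salvage your plan, you would need to prove the self-containment of such blocks as a standalone claim and show that it rescues the positive-$\pathLL t$ step of your induction; at that point you would essentially have reconstructed the paper's excision argument in inductive clothing.
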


\begin{proof}
  We prove the result by contradiction. Let us suppose that there exists a triple $(\pathLL s, \pathLL t, \pathLL u)$ such that $\pathLL s$ and $\pathLL t$ are two negative quasi-paths, $\pathLL u \in \pathLL s \shuffle \pathLL t$ is a path, and at least one of $\pathLL s$ or $\pathLL t$ does not satisfy O-visibility, say $\pathLL s$: there exists a negative action $\kappa^-$ and a prefix $\pathLL s_0\kappa^-$ of $\pathLL s$ such that the action $\kappa^-$ is justified in $\pathLL s_0$ but $\mathrm{just}(\kappa^-)$ does not appear in $\antiview{\pathLL s_0}$.
  
  We choose the triple $(\pathLL s, \pathLL t, \pathLL u)$ such that the length of $\pathLL u$ is minimal with respect to all such triples. Without loss of generality, we can assume that $\pathLL u$ and $\pathLL s$ are of the form $\pathLL u = \pathLL u_0 \kappa^- \daimon$ and $\pathLL s = \pathLL s_0 \kappa^- \daimon$ respectively. Indeed, if this is not true, $\pathLL u$ has a strict prefix of the form $\pathLL u_0 \kappa^-$; in this case we can replace $(\pathLL s, \pathLL t, \pathLL u)$ by the triple $(\pathLL s_0\kappa^-\daimon, \proj{\pathLL u_0}{\pathLL t}, \pathLL u_0 \kappa^-\daimon)$ which satisfies all the constraints, and where the length of $ \pathLL u_0 \kappa^-\daimon$ is less or equal to the length of $\pathLL u$.
  
  Let $\kappa^+ = \mathrm{just}(\kappa^-)$. $\pathLL u$ is necessarily of the form $\pathLL u = \pathLL u_1 \alpha^- \pathLL u_2 \alpha^+ \kappa^- \daimon$ where $\alpha^-$ justifies $\alpha^+$ and $\kappa^+$ appears in $\pathLL u_1$, indeed:
  \begin{itemize}
  \item $\kappa^+$ does not appear immediately before $\kappa^-$ in $\pathLL u$, otherwise it would also be the case in $\pathLL s$, contradicting the fact that $\kappa^-$ is not O-visible in $\pathLL s$.
  \item The action $\alpha^+$ which is immediately before $\kappa^-$ in $\pathLL u$ is justified by an action $\alpha^-$, and $\kappa^+$ appears before $\alpha^-$ in $\pathLL u$, otherwise $\kappa^+$ would not appear in $\antiview{\pathLL u_0}$ and that would contradict O-visibility of $\pathLL u$.
  \end{itemize}
  Let us show by contradiction something that will be useful for the rest of this proof: in the path $\pathLL u$, all the actions of $\pathLL u_2$ (which cannot be initial) are justified in $\alpha^-\pathLL u_2$. If it is not the case, let $\pathLL u_1\alpha^-\pathLL u'_2\beta$ be longest prefix of $\pathLL u$ such that $\beta$ is an action of $\pathLL u_2$ justified in $\pathLL u_1$, and let $\beta'$ be the following action (necessarily in $\pathLL u_2\alpha^+$), thus $\beta'$ is justified in $\alpha^-\pathLL u_2$. If $\beta'$ is positive (resp. negative) then $\beta$ is negative (resp. positive), thus $\view{\pathLL u_1\alpha^-\pathLL u'_2\beta} = \view{\pathLL u'_1}$ (resp. $\antiview{\pathLL u_1\alpha^-\pathLL u'_2\beta} = \antiview{\pathLL u'_1}$) where $\pathLL u'_1$ is the prefix of $\pathLL u_1$ ending on $\mathrm{just}(\beta)$. But then $\view{\pathLL u_1\alpha^-\pathLL u'_2\beta}$ (resp. $\antiview{\pathLL u_1\alpha^-\pathLL u'_2\beta}$) does not contain $\mathrm{just}(\beta')$: this contradicts the fact that $\pathLL u$ is a path, since P-visibility (resp. O-visibility) is not satisfied.
  
  Now define $\pathLL u' = \pathLL u_1\kappa^-\daimon$, $\pathLL s' = \proj{\pathLL u'}{\pathLL s}$ and $\pathLL t' = \proj{\pathLL u'}{\pathLL t}$, and remark that:
  \begin{itemize}
  \item \underline{$\pathLL u'$ is a path}, indeed, O-visibility for $\kappa^-$ is still satisfied since $\antiview{\pathLL u_1 \alpha^- \pathLL u_2 \alpha^+ \kappa^-} = \antiview{\pathLL u_1} \alpha^- \alpha^+ \kappa^-$ and $\antiview{\pathLL u_1\kappa^-} = \antiview{\pathLL u_1}\kappa^-$ both contain $\kappa^+$ in $\antiview{\pathLL u_1}$.
  \item \underline{$\pathLL s'$ and $\pathLL t'$ are quasi-paths}, since $\pathLL s'$ is of the form $\pathLL s' = \pathLL s_1\kappa^-\daimon$ where $\pathLL s_1 = \proj{\pathLL u_1}{\pathLL s}$ is a prefix of $\pathLL s$ containing $\kappa^+ = \mathrm{just}(\kappa^-)$, and $\pathLL t' = \proj{\pathLL u'}{\pathLL t} = \proj{\pathLL u_1}{\pathLL t}$ is a prefix of $\pathLL t$.
  \item \underline{$\pathLL u' \in \pathLL s' \shuffle \pathLL t'$}.
  \item \underline{$\pathLL s'$ is not a path}: Note that $\pathLL s$ is of the form $\pathLL s_1 \pathLL s_2 \kappa^- \daimon$ where $\pathLL s_1 = \proj{\pathLL u_1}{\pathLL s}$ and $\pathLL s_2 = \proj{\alpha^- \pathLL u_2 \alpha^+}{\pathLL s}$. By hypothesis, $\pathLL s$ is not a path because $\kappa^+$ does not appear in $\antiview{\pathLL s_1\pathLL s_2}$. But $\antiview{\pathLL s_1\pathLL s_2}$ is of the form $\antiview{\pathLL s_1}\pathLL s_2'$, since all the actions of $\pathLL s_2$ are hereditarily justified by the first (necessarily negative) action of $\pathLL s_2$, indeed: we have proved that, in $\pathLL u$, all the actions of $\pathLL u_2$ (in particular those of $\pathLL s_2$) were justified in $\alpha^-\pathLL u_2$. Thus $\kappa^+$ does not appear in $\antiview{\pathLL s_1}$, which means that O-visibility is not satisfied for $\kappa^-$ in $\pathLL s' = \pathLL s_1\kappa^-\daimon$.
  \end{itemize}
  Hence the triple $(\pathLL s', \pathLL t', \pathLL u')$ satisfies all the conditions. This contradicts the minimality of $\pathLL u$.
\end{proof}

\begin{proposition} \label{reg-tensor}
  If $\beh M$ and $\beh N$ are regular, then $\beh M \otimes \beh N$ is regular.
\end{proposition}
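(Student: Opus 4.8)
The plan is to apply the alternative characterisation of regularity given by Proposition~\ref{reg2}: it suffices to show that (i) every positive-ended trivial view of $\beh M \otimes \beh N$ is visitable in $\beh M \otimes \beh N$, and (ii) both $\visit{M \otimes N}$ and $\visit{(M \otimes N)^\perp}$ are stable under $\shuffle$. Since $\beh M$ and $\beh N$ are regular, Proposition~\ref{visit-tensor-reg} supplies the explicit form $\visit{M \otimes N} = \kappa_\symtensor(\visit{M}^x \shuffle \visit{N}^y) \cup \{\daimon\}$, which is the backbone of the argument: every visitable path of $\beh M \otimes \beh N$ other than $\daimon$ opens with the tensor action $\kappa_\symtensor$ and then interleaves one path of the $x$-component (in $\visit{M}^x$) with one path of the $y$-component (in $\visit{N}^y$).

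For (i), recall that by internal completeness (Theorem~\ref{thm_intcomp_all}) the incarnated designs of $\beh M \otimes \beh N$ are $\daimon$ and the designs $\design m \otimes \design n$ with $\design m \in |\beh M|$ and $\design n \in |\beh N|$. A positive-ended trivial view of such a design is either reduced to $\kappa_\symtensor$ or of the form $\kappa_\symtensor \viewseq t$, where $\viewseq t$ is a positive-ended trivial view of $\design m^x$ (or, symmetrically, of $\design n^y$), since after the initial action a trivial view follows a single branch of one component. By regularity of $\beh M$ the view $\viewseq t$ lies in $\visit{M}^x$; and because $\beh N$ is negative we have $\epsilon \in \visit{N}^y$, so $\viewseq t \in \viewseq t \shuffle \epsilon \subseteq \visit{M}^x \shuffle \visit{N}^y$, whence $\kappa_\symtensor \viewseq t \in \visit{M \otimes N}$ (the case $\kappa_\symtensor$ alone is covered by $\epsilon \in \visit{M}^x \shuffle \visit{N}^y$).

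The core of the proof is the stability of $\visit{M \otimes N}$ under $\shuffle$. Take $\pathLL p, \pathLL q \in \visit{M \otimes N}$; setting aside the trivial cases involving $\daimon$, write $\pathLL p = \kappa_\symtensor \pathLL p'$ and $\pathLL q = \kappa_\symtensor \pathLL q'$ with $\pathLL p', \pathLL q' \in \visit{M}^x \shuffle \visit{N}^y$, and let $\pathLL p' \in \pathLL p_M \shuffle \pathLL p_N$ and $\pathLL q' \in \pathLL q_M \shuffle \pathLL q_N$ be their (forced) decompositions into $x$- and $y$-components, so that $\pathLL p_M, \pathLL q_M \in \visit{M}^x$ and $\pathLL p_N, \pathLL q_N \in \visit{N}^y$. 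Any element of $\pathLL p \shuffle \pathLL q$ has the shape $\kappa_\symtensor \pathLL u$ with $\pathLL u \in \pathLL p' \shuffle \pathLL q'$. Let $\pathLL u_M$ and $\pathLL u_N$ be the subsequences of $\pathLL u$ formed by the actions of the $x$- and $y$-components respectively; by construction $\pathLL u_M$ interleaves $\pathLL p_M$ and $\pathLL q_M$, $\pathLL u_N$ interleaves $\pathLL p_N$ and $\pathLL q_N$, and $\pathLL u \in \pathLL u_M \shuffle \pathLL u_N$. Both $\pathLL u_M$ and $\pathLL u_N$ are negative quasi-paths: they are P-visible, and they are positive-ended because the last action of each component occurring in $\pathLL u$ is the final (positive) action of one of the paths $\pathLL p_M, \pathLL q_M$ (resp. $\pathLL p_N, \pathLL q_N$). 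Hence Lemma~\ref{lem:dual_inverse_shuffle} applies to the non-empty shuffle $\pathLL u_M \shuffle \pathLL u_N \ni \pathLL u$ and guarantees that $\pathLL u_M$ and $\pathLL u_N$ are genuine paths. Then $\pathLL u_M \in \pathLL p_M \shuffle \pathLL q_M \subseteq \visit{M}^x$ and $\pathLL u_N \in \pathLL p_N \shuffle \pathLL q_N \subseteq \visit{N}^y$ by the shuffle-stability half of the regularity of $\beh M$ and $\beh N$, so $\pathLL u \in \visit{M}^x \shuffle \visit{N}^y$ and $\kappa_\symtensor \pathLL u \in \visit{M \otimes N}$.

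Finally, the stability of $\visit{(M \otimes N)^\perp}$ under $\shuffle$ is obtained by the dual of this argument. Since $(\beh M \otimes \beh N)^\perp = \beh M \multimap \beh N^\perp$ with $\beh M$ and $\beh N^\perp$ regular, Corollary~\ref{visit-arrow-reg} (equivalently, the identity $\dual{\visit{M \otimes N}} = \visit{(M \otimes N)^\perp}$) describes its visitable paths as duals of interleavings of an $\visit{M}$-component with an $\visit{N}$-component, and the same projection-and-recombination scheme — now invoking the shuffle-stability of $\beh M^\perp$ and $\beh N^\perp$ — yields the claim. I expect the main obstacle to be precisely the step asserting that the two side-projections $\pathLL u_M, \pathLL u_N$ of a shuffled path are themselves \emph{paths} and not merely arbitrary subsequences; this is exactly the role of Lemma~\ref{lem:dual_inverse_shuffle}, supported by the observation that such projections remain positive-ended. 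The residual bookkeeping — the delocation superscripts, the handling of $\daimon$-capping via Lemma~\ref{daimon_visit} on the orthogonal side, and the verification that the projections are quasi-paths — is routine but must be carried out with care.
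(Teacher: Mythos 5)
Your overall strategy --- Proposition~\ref{reg2} combined with the explicit description of $\visit{M \otimes N}$ from Proposition~\ref{visit-tensor-reg}, with Lemma~\ref{lem:dual_inverse_shuffle} used to recover path-hood of the side-projections --- is the same as the paper's, and your treatment of the trivial views and of the stability of $\visit{M \otimes N}$ under $\shuffle$ is acceptable (if anything more detailed than the paper's, which dispatches the $\shuffle$ case in one line by associativity and commutativity of the shuffle). The gap is in the last third: the stability of $\visit{M \otimes N}$ under $\antishuffle$ (equivalently, of $\visit{(M \otimes N)^\perp}$ under $\shuffle$) is \emph{not} ``the dual of this argument'', and it is precisely where the paper spends almost all of its effort, via an induction on the length of $\pathLL s \in \pathLL t \antishuffle \pathLL u$ with a five-way case analysis on the origin of the last positive action.

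Two things go missing in your dualisation. First, for the projection-and-recombination scheme even to start, you need the $x$-side and $y$-side projections of an anti-shuffled path to be alternating aj-sequences; this is not automatic, because an anti-shuffle interleaves the \emph{duals}, so nothing in the definition a priori prevents a positive action of the $\beh N$-component of $\pathLL u$ from immediately following a negative action of the $\beh M$-component of $\pathLL t$, which would make a projection contain two consecutive actions of the same polarity. The paper rules this out by a dedicated P-visibility argument (the last two cases of its analysis: the view $\view{\pathLL s_0\kappa_0^-}$ would consist only of $\kappa_\symtensor$ and actions of the first components, so it could not contain $\mathrm{just}(\kappa_0^+)$). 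Second, even once the projection is known to be a path, the paper does not conclude by mere shuffle-stability of the orthogonals as you propose: at each induction step it shows that the extended projection $\pathLL s_1\kappa^-\daimon$ (resp.\ $\pathLL s_1\kappa^-\kappa^+$) is a shuffle of anti-shuffles of \emph{trivial views} of $\beh M^x$ --- replacing one trivial view $\viewseq t$ by $\viewseq t\kappa^-\daimon$, shown visitable via Lemma~\ref{daimon_visit} --- and then invokes the full characterisation of Proposition~\ref{reg2}, i.e.\ trivial-view visitability together with \emph{both} closure properties of $\visit{M}$. So the anti-shuffle half needs the whole strength of the regularity of $\beh M$ and $\beh N$ plus a separate visibility argument, neither of which your ``same scheme, dually'' sentence supplies.
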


\begin{proof}
  Following Proposition~\ref{reg2}, we will prove that the positive-ended trivial views of $\beh M \otimes \beh N$ are visitable in $\beh M \otimes \beh N$, and that $\visit{M \otimes N}$ and $\visit{(M \otimes N)^\perp}$ are stable by shuffle.

Every trivial view of $\beh M \otimes \beh N$ is of the form $\kappa_\symtensor\viewseq t$. It follows from internal completeness (incarnated form) that $\kappa_\symtensor\viewseq t$ is a trivial view of $\beh M \otimes \beh N$ iff $\viewseq t$ is a trivial view either of $\beh M^x$ or of $\beh N^y$. As $\beh M$ (resp. $\beh N$) is regular, positive-ended trivial views of $\beh M^x$ (resp. $\beh N^y$) are in $\visit{M}^x$ (resp. $\visit{N}^y$). Thus by Proposition~\ref{visit-tensor-reg}, positive-ended trivial views of $\beh M \otimes \beh N$ are in $\visit{M \otimes N}$.
  
From Proposition~\ref{visit-tensor-reg}, and from the fact that $\shuffle$ is associative and commutative, we also have that $\visit{M \otimes N}$ is stable by shuffle.
  
Let us prove that $\visit{M \otimes N}$ is stable by anti-shuffle. Let $\pathLL t, \pathLL u \in \visit{M \otimes N}$ and let $\pathLL s \in \pathLL t \antishuffle \pathLL u$, we show that $\pathLL s \in \visit{M \otimes N}$ by induction on the length of $\pathLL s$. Notice first that, from Proposition~\ref{visit-tensor-reg}, there exist paths $\pathLL t_1, \pathLL u_1 \in \visit{M}^x$ and $\pathLL t_2, \pathLL u_2 \in \visit{N}^y$ such that $\pathLL t \in \kappa_\symtensor (\pathLL t_1 \shuffle \pathLL t_2)$ and $\pathLL u \in \kappa_\symtensor (\pathLL u_1 \shuffle \pathLL u_2)$. In the case $\pathLL s$ of length $1$, either $\pathLL s = \daimon$ or $\pathLL s = \kappa_\symtensor$, thus the result is immediate. So suppose $\pathLL s = \pathLL s'\kappa^-\kappa^+$ and by induction hypothesis $\pathLL s' \in \visit{M \otimes N}$. Hence, it follows from Proposition~\ref{visit-tensor-reg} that there exist paths $\pathLL s_1 \in \visit{M}^x$ and $\pathLL s_2 \in \visit{N}^y$ such that $\pathLL s' \in \kappa_\symtensor (\pathLL s_1 \shuffle \pathLL s_2)$. Without loss of generality, we can suppose that $\kappa^-$ is an action of $\pathLL t_1$, hence of $\pathLL t$. We study the different cases, proving each time either that $\pathLL s \in \visit{M \otimes N}$ or that the case is impossible.
    
      \begin{itemize}
      
      \item Either $\kappa^+ = \daimon$. In that case, $\pathLL s_1\kappa^-\daimon$ is a negative quasi-path. As $\pathLL s$ is a path and $\pathLL s \in \kappa_\symtensor(\pathLL s_1\kappa^-\daimon \shuffle \pathLL s_2)$, by Lemma~\ref{lem:dual_inverse_shuffle}, we have moreover that $\pathLL s_1\kappa^-\daimon$ is a path. Notice that $\kappa_\symtensor\triv{\pathLL s_1\kappa^-} = \trivv{\kappa^-}{\pathLL s} = \trivv{\kappa^-}{\pathLL t} = \kappa_\symtensor\trivv{\kappa^-}{\pathLL t_1}$. Hence $\triv{\pathLL s_1\kappa^-} = \trivv{\kappa^-}{\pathLL t_1}$ is a trivial view of $\beh M^x$. Let $\viewseq t\kappa^- = \triv{\pathLL s_1\kappa^-}$. By Lemma~\ref{triv-view-path}, $\pathLL s_1$ is a shuffle of anti-shuffles of trivial views of $\beh M^x$, one of which is the trivial view $\viewseq t$. Then remark that $\pathLL s_1\kappa^-\daimon$ is also a shuffle of anti-shuffles of trivial views of $\beh M^x$, replacing $\viewseq t$ by $\viewseq t\kappa^-\daimon$ (note that $\viewseq t\kappa^-\daimon$ is indeed a trivial view of $\beh M^x$ since $\viewseq t\kappa^-\daimon = \triv{\pathLL t_0\kappa^-\daimon}$ where $\pathLL t_0\kappa^-$ is the prefix of $\pathLL t_1$ ending with $\kappa^-$, and $\pathLL t_0\kappa^-\daimon \in \visit{M}^x$ by Lemma~\ref{daimon_visit}). It follows from Proposition~\ref{reg2} that $\pathLL s_1\kappa^-\daimon \in \visit{M}^x$. Finally, as $\pathLL s \in \kappa_\symtensor (\pathLL s_1\kappa^-\daimon \shuffle \pathLL s_2)$ and by Proposition~\ref{visit-tensor-reg}, we have $\pathLL s \in \visit{M \otimes N}$.
        
      \item Or $\kappa^+$ is a proper action of $\pathLL t_1$, hence of $\pathLL t$. Remark that $\view{\pathLL s'\kappa^-} = \view{\kappa_\symtensor\pathLL s_1\kappa^-} = \kappa_\symtensor\view{\pathLL s_1\kappa^-}$, thus $\mathrm{just}(\kappa^+)$ appears in $\view{\pathLL s_1\kappa^-}$ hence $\pathLL s_1\kappa^-\kappa^+$ is a (negative) quasi-path. As $\pathLL s$ is a path and as $\pathLL s \in \kappa_\symtensor(\pathLL s_1\kappa^-\kappa^+ \shuffle \pathLL s_2)$, by Lemma~\ref{lem:dual_inverse_shuffle} $\pathLL s_1\kappa^-\kappa^+$ is a path. We already know from previous item that $\pathLL s_1\kappa^-\daimon \in \visit{M}^x$. Notice that $\kappa_\symtensor\triv{\pathLL s_1\kappa^-\kappa^+} = \trivv{\kappa^+}{\pathLL s} = \trivv{\kappa^+}{\pathLL t} = \kappa_\symtensor\trivv{\kappa^+}{\pathLL t_1}$. Hence $\triv{\pathLL s_1\kappa^-\kappa^+} = \trivv{\kappa^+}{\pathLL t_1}$ is a trivial view of $\beh M^x$. Let $\viewseq u\kappa^+ = \triv{\pathLL s_1\kappa^-\kappa^+}$. By Lemma~\ref{triv-view-path}, $\pathLL s_1\kappa^-\daimon$ is a shuffle of anti-shuffles of trivial views of $\beh M^x$, one of which is the trivial view $\viewseq u\daimon$. Remark that $\pathLL s_1\kappa^-\kappa^+$ is also a shuffle of anti-shuffles of trivial views of $\beh M^x$, replacing $\viewseq u\daimon$ by $\viewseq u\kappa^+$. By Proposition~\ref{reg2}, $\pathLL s_1\kappa^-\kappa^+ \in \visit{M}^x$. Finally, as $\pathLL s \in \kappa_\symtensor (\pathLL s_1\kappa^-\kappa^+ \shuffle \pathLL s_2)$ and by Proposition~\ref{visit-tensor-reg}, we have $\pathLL s \in \visit{M \otimes N}$.
        
      \item Or $\kappa^+$ is a proper action of $\pathLL u_1$, hence of $\pathLL u$. The reasoning is similar to previous item, using $\pathLL u$ and $\pathLL u_1$ instead of $\pathLL t$ and $\pathLL t_1$ respectively.
        
      \item Or $\kappa^+$ is a proper action of $\pathLL t_2$, hence of $\pathLL t$. This is impossible, being given the structure of $\pathLL s$: the action $\kappa_0^+$ following the negative action $\kappa^-$ in $\pathLL t$ is necessarily in $\pathLL t_1$ (due to the structure of a shuffle), hence the action following $\kappa^-$ in $\pathLL s$ is necessarily either $\kappa_0^+$ (hence in $\pathLL t_1$) or in $\pathLL u$.

      \item Or $\kappa^+$ is a proper action of $\pathLL u_2$, hence of $\pathLL u$: this case also leads to a contradiction. We know from previous item that a positive action of $\pathLL t_2$ cannot immediately follow a negative action of $\pathLL t_1$ in $\pathLL s$. Similarly, a positive action of $\pathLL u_2$ (resp.\ $\pathLL t_1$, $\pathLL u_1$) cannot immediately follow a negative action of $\pathLL u_1$ (resp.\ $\pathLL t_2$, $\pathLL u_2$) in $\pathLL s$. Suppose that there exists a positive action $\kappa_0^+$ of $\pathLL u_2$ (or resp.\ $\pathLL t_2$, $\pathLL u_1$, $\pathLL t_1$) which follows immediately a negative action $\kappa_0^-$ of $\pathLL t_1$ (or resp.\ $\pathLL u_1$, $\pathLL t_2$, $\pathLL u_2$). Let $\pathLL s_0\kappa_0^-\kappa_0^+$ be the shortest prefix of $\pathLL s$ satisfying such a property, say $\kappa_0^+$ is an action of $\pathLL u_2$ and $\kappa_0^-$ is an action of $\pathLL t_1$. Then the view $\view{\pathLL s_0\kappa_0^-}$ is necessarily only made of $\kappa_\symtensor$ and of actions from $\pathLL t_1$ or $\pathLL u_1$, thus it does not contain $\mathrm{just}(\kappa_0^+)$ (where $\kappa_0^+$ cannot be initial because $\beh N$ is negative), i.e., $\pathLL s$ does not satisfy P-visibility: contradiction.
      \end{itemize}
\end{proof}

\begin{corollary} \label{reg-arrow}
  If $\beh N$ and $\beh P$ are regular, then $\beh N \multimap \beh P$ is regular.
\end{corollary}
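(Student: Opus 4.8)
The plan is to reduce regularity of $\beh N \multimap \beh P$ entirely to the tensor case already settled in Proposition~\ref{reg-tensor}, exploiting two ingredients: the definitional identity $\beh N \multimap \beh P = (\beh N \otimes \beh P^\perp)^\perp$, and the symmetry of regularity under orthogonal noted just after Definition~\ref{reg}, namely that $\beh B$ is regular if and only if $\beh B^\perp$ is regular. No new combinatorics on paths is needed here; the corollary is purely a recombination of existing results.

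Concretely, I would proceed in three short steps. First, since $\beh P$ is regular, the symmetry of regularity immediately gives that $\beh P^\perp$ is regular. Second, $\beh N$ and $\beh P^\perp$ are then both regular, so Proposition~\ref{reg-tensor} yields that $\beh N \otimes \beh P^\perp$ is regular. Third, applying the symmetry of regularity once more, the orthogonal $(\beh N \otimes \beh P^\perp)^\perp$ is regular as well; by definition this is exactly $\beh N \multimap \beh P$, which concludes.

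The main obstacle is, in fact, already behind us: all the genuine difficulty lies in the tensor case, where one must control the shape of the visitable paths (Proposition~\ref{visit-tensor-reg}) and establish their stability under shuffle and anti-shuffle, the latter resting on the non-trivial Lemma~\ref{lem:dual_inverse_shuffle}. The only point worth verifying for the present corollary is that the symmetry of regularity is available unconditionally. This is clear from Definition~\ref{reg}, whose three clauses are literally invariant under exchanging $\beh B$ and $\beh B^\perp$: the first two swap into one another, and the third is phrased symmetrically for $\visit{B}$ and $\visit{B^\perp}$, using $\dual{\visit{B}} = \visit{B^\perp}$. Hence the corollary follows with no further work.
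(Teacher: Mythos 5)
Your proof is correct and is exactly the argument the paper intends: the corollary is stated immediately after Proposition~\ref{reg-tensor} precisely because $\beh N \multimap \beh P = (\beh N \otimes \beh P^\perp)^\perp$ and regularity is symmetric under orthogonality, as remarked after Definition~\ref{reg}. Nothing further is needed.
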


\subsection{Proofs of Propositions~\ref{prop_pure_stable} and \ref{prop_arrow_princ}: Purity and Connectives} \label{sub-pur}

\begin{proof}[Proof (Proposition~\ref{prop_pure_stable})]
We must prove:
  \begin{itemize}
    \item If $\beh N$ is pure then $\shpos \beh N$ is pure.
    \item If $\beh P$ is pure then $\shneg \beh P$ is pure.
    \item If $\beh M$ and $\beh N$ are pure then $\beh M \oplus \beh N$ is pure.
    \item If $\beh M$ and $\beh N$ are pure then $\beh M \otimes \beh N$ is pure.
  \end{itemize}
  For the shifts and plus, the result is immediate given the form of visitable paths of $\shpos \beh N$, $\shneg \beh P$ and $\beh M \oplus \beh N$ (Propositions~\ref{visit-sh} and \ref{visit-pl}). Let us prove the result for the tensor. 

Let $\pathLL s = \pathLL s' \daimon \in V_{\beh M \otimes \beh N}$. According to Proposition~\ref{visit-tensor}, either $\pathLL s = \daimon$ or there exist $\pathLL s_1 \in \visit{M}^x$ and $\pathLL s_2 \in \visit{N}^y$ such that $\pathLL s \in \kappa_\symtensor(\pathLL s_1 \shuffle \pathLL s_2)$. If $\pathLL s = \daimon$ then it is extensible with $\kappa_\symtensor$, so suppose $\pathLL s \in \kappa_\symtensor(\pathLL s_1 \shuffle \pathLL s_2)$. Without loss of generality, suppose $\pathLL s_1 = \pathLL s_1'\daimon$. Since $\beh M$ is pure, $\pathLL s_1$ is extensible: there exists a proper positive action $\kappa^+$ such that $\pathLL s_1'\kappa^+ \in \visit{M}^x$. Then, note that $\pathLL s'\kappa^+$ is a path: indeed, since $\pathLL s_1'\kappa^+$ is a path, the justification of $\kappa^+$ appears in $\view{\pathLL s_1'} = \view{\pathLL s'}$. Moreover $\pathLL s'\kappa^+ \in \kappa_\symtensor(\visit{M}^x \shuffle \visit{N}^y)$, let us show that $\pathLL s'\kappa^+ \in V_{\beh M\otimes \beh N}$. Let $\pathLL t \in \visit{M}^x \shuffle \visit{N}^y$ and $\kappa^-$ a negative action such that $\overline{\kappa_\symtensor\pathLL t \kappa^-}$ is a path of $\completed{\dual{\pathLL s'\kappa^+}}$, and by Proposition~\ref{visit-tensor} it suffices to show that $\pathLL t \kappa^- \daimon \in \visit{M}^x \shuffle \visit{N}^y$. But $\completed{\dual{\pathLL s'\kappa^+}} = \completed{\overline{\pathLL s'\kappa^+}\daimon} = \completed{\overline{\pathLL s'}} = \completed{\dual{\pathLL s}}$, therefore  $\overline{\kappa_\symtensor\pathLL t \kappa^-}$ is a path of $\completed{\dual{\pathLL s}}$. Since $\pathLL s \in V_{\beh M \otimes \beh N}$, by Proposition~\ref{visit-tensor} we get $\pathLL t \kappa^- \daimon \in \visit{M}^x \shuffle \visit{N}^y$. Finally $\pathLL s'\kappa^+ \in V_{\beh M\otimes \beh N}$, hence $\pathLL s$ is extensible.
\end{proof}

\begin{proof}[Proof (Proposition~\ref{prop_arrow_princ})]
Since $\beh N$ and $\beh P$ are regular, $\visit{(N \multimap P)^\perp} = \kappa_\symtensor (\visit{N}^x \shuffle \dual{\visit{P}^y}) \cup \{\daimon\}$ by Corollary~\ref{visit-arrow-reg}. Let $\pathLL s \in V_{({\beh N \multimap \beh P})^\perp}$ and suppose $\dual{\pathLL s}$ is $\daimon$-ended, i.e., $\pathLL s$ is $\daimon$-free. We must show that either $\dual{\pathLL s}$ is extensible or $\dual{\pathLL s}$ is not well-bracketed. The path $\pathLL s$ is of the form $\pathLL s = \kappa_\symtensor\pathLL s'$ and there exist $\daimon$-free paths $\pathLL t \in \visit{N}^x$ and $\pathLL u \in \dual{\visit{P}^y}$ such that $\pathLL s' \in \pathLL t \shuffle \pathLL u$.
We are in one of the following situations:
\begin{itemize}
\item Either $\dual{\pathLL u} \in \visit{P}^y$ is not well-bracketed, hence neither is $\dual{\pathLL s}$.
\item Otherwise, since $\beh P$ is quasi-pure, $\dual{\pathLL u} = \overline{\pathLL u}\daimon$ is extensible, i.e., there exists a proper positive action $\kappa_{\pathLL u}^+$ such that $\overline{\pathLL u}\kappa_{\pathLL u}^+ \in \visit{P}^y$. If $\overline{\pathLL s}\kappa_{\pathLL u}^+$ is a path, then $\overline{\pathLL s}\kappa_{\pathLL u}^+ \in \visit{N \multimap P}$, hence $\dual{\pathLL s}$ is extensible: indeed, $\dual{\overline{\pathLL s}\kappa_{\pathLL u}^+} = \pathLL s\overline{\kappa_{\pathLL u}^+}\daimon \in \kappa_\symtensor(\pathLL t \shuffle \pathLL u \overline{\kappa_{\pathLL u}^+}\daimon)$, thus $\pathLL s\overline{\kappa_{\pathLL u}^+}\daimon \in \kappa_\symtensor(\visit{N}^x \shuffle \dual{\visit{P}^y})$. In the case $\overline{\pathLL s}\kappa_{\pathLL u}^+$ is not a path, this means that $\kappa_{\pathLL u}^+$ is justified by an action $\kappa_{\pathLL u}^-$ that does not appear in $\view{\overline{\pathLL s}}$, thus we have something of the form:

\begin{center}
 \begin{tikzpicture}
   
   \node (a) {$\overline{\pathLL s}\kappa_{\pathLL u}^+ = \hspace{.5cm} \dots \hspace{.5cm} \kappa^+ \hspace{.5cm} \dots \hspace{.5cm} \kappa_{\pathLL u}^- \hspace{.5cm} \dots \hspace{.5cm} \kappa^- \hspace{.5cm} \dots \hspace{.5cm} \kappa_{\pathLL u}^+$} ; 
  
 \draw[->,blue, thick] ($(a)+(4.3,.3)$) to [ out = 130, in = 50] ($(a)+(.3,.3)$) ;
 \draw[->,blue, thick] ($(a)+(2.2,.3)$) to [ out = 130, in = 50] ($(a)+(-1.8,.3)$) ;

 \draw[dotted, orange, thick] ($(a)+(4.3,-.3)$) to [ out = -90, in = -90] ($(a)+(3.4,-.3)$) ;
 \draw[dotted, orange, thick] ($(a)+(3.4,-.3)$) to [ out = -90, in = -90] ($(a)+(2.9,-.3)$) ;
 \draw[dotted, orange, thick] ($(a)+(2.9,-.3)$) to [ out = -90, in = -90] ($(a)+(2.2,-.3)$) ;
 \draw[-, orange, thick] ($(a)+(2.2,-.3)$) to [ out = -130, in = -50] ($(a)+(-1.8,-.3)$) ;
 \draw[->, dotted, orange, thick] ($(a)+(-1.8,-.3)$) to [ out = -90, in = -90] ($(a)+(-2.8,-.3)$) ;

\node[blue] (just1) at ($(a)+(2.9,.8)$) {just.} ;
\node[blue] (just2) at ($(a)+(-.4,.8)$) {just.} ;

\node[orange] (view) at ($(a)+(2.6,-1)$) {view $\view{\overline{\pathLL s}\kappa_{\pathLL u}^+}$} ;

\end{tikzpicture}
\end{center}

If $\kappa^-$ comes from $\overline{\pathLL t}$, and thus also $\kappa^+$, then $\overline s$ is not well-bracketed, indeed: since $\kappa_{\pathLL u}^-$ is hereditarily justified by $\overline{\kappa_\symtensor}$ and by no action from $\overline{\pathLL t}$, we have:

\begin{center}
 \begin{tikzpicture}
   
   \node (a) {$\overline{\pathLL s} = \hspace{.5cm} \overline{\kappa_\symtensor} \hspace{.5cm} \dots \hspace{.5cm} \kappa^+ \hspace{.5cm} \dots \hspace{.5cm} \kappa_{\pathLL u}^- \hspace{.5cm} \dots \hspace{.5cm} \kappa^- \hspace{.5cm} \dots$} ;

 \draw[->,blue, thick] ($(a)+(2.8,.3)$) to [out = 130, in = 50] ($(a)+(-.9,.3)$) ;

 \draw[dotted, blue, thick] ($(a)+(.9,-.3)$) to [ out = -90, in = -90] ($(a)+(.2,-.3)$) ;
 \draw[dotted, blue, thick] ($(a)+(.2,-.3)$) to [ out = -90, in = -90] ($(a)+(-.3,-.3)$) ;
 \draw[blue, thick] ($(a)+(-.3,-.3)$) to [ out = -90, in = -90] ($(a)+(-1.7,-.3)$) ;
 \draw[dotted, blue, thick] ($(a)+(-1.7,-.3)$) to [ out = -90, in = -90] ($(a)+(-2.4,-.3)$) ;
 \draw[->, dotted, blue, thick] ($(a)+(-2.4,-.3)$) to [out = -90, in = -90] ($(a)+(-2.9,-.3)$) ;

\node[blue] (just1) at ($(a)+(.9,.8)$) {just.} ;
\node[blue] (just2) at ($(a)+(-.9,-1)$) {just.} ;

\end{tikzpicture}
\end{center}

So suppose now that $\kappa^-$ comes from $\overline{\pathLL u}$, thus also $\kappa^+$. We know that $\view{\overline{\pathLL u}}$ contains $\kappa_{\pathLL u}^- = \mathrm{just}(\kappa_{\pathLL u}^+)$, thus in particular $\view{\overline{\pathLL u}}$ does not contain $\kappa^-$; on the contrary, we have seen that $\view{\overline{\pathLL s}}$ contains $\kappa^-$. By definition of the view of a sequence, this necessarily means that, in $\overline{\pathLL s}$, between the action $\kappa^-$ and the end of the sequence, the following happens: $\view{\overline{\pathLL s}}$ comes across an action $\alpha_{\pathLL t}^-$ from $\overline{\pathLL t}$, justified by an action $\alpha_{\pathLL t}^+$ also from $\overline{\pathLL t}$, making the view miss at least one action $\alpha_{\pathLL u}$ from $\overline{\pathLL u}$ appearing in $\view{\overline{\pathLL u}}$, as depicted below.

\begin{center}
 \begin{tikzpicture}
   
   \node (a) {$\overline{\pathLL s} = \hspace{.5cm} \overline{\kappa_\symtensor} \hspace{.5cm} \dots \hspace{.5cm} \kappa^- \hspace{.5cm} \dots \hspace{.5cm} \alpha_{\pathLL t}^+ \hspace{.5cm} \dots \hspace{.5cm} \alpha_{\pathLL u} \hspace{.5cm} \dots \hspace{.5cm} \alpha_{\pathLL t}^- \hspace{.5cm} \dots$} ;
   
   \draw[dotted, blue, thick] ($(a)+(1.8,.3)$) to [out = 130, in = 50] ($(a)+(.8,.3)$) ;
   \draw[blue, thick] ($(a)+(.8,.3)$) to [ out = 90, in = 90] ($(a)+(-1.1,.3)$) ;
   \draw[dotted, blue, thick] ($(a)+(-1.1,.3)$) to [out = 90, in = 90] ($(a)+(-2.9,.3)$) ;
   \draw[->, dotted, blue, thick] ($(a)+(-2.9,.3)$) to [out = 90, in = 90] ($(a)+(-3.9,.3)$) ;
   
   \node[blue] (just1) at ($(a)+(1.3,.8)$) {just.} ;
   
   \draw[dotted, orange, thick] ($(a)+(4.8,-.3)$) to [ out = -90, in = -90] ($(a)+(3.8,-.3)$) ;
   \draw[-, orange, thick] ($(a)+(3.8,-.3)$) to [ out = -130, in = -50] ($(a)+(0,-.3)$) ;
   \draw[dotted, orange, thick] ($(a)+(0,-.3)$) to [ out = -90, in = -90] ($(a)+(-1.4,-.3)$) ;
   \draw[dotted, orange, thick] ($(a)+(-1.4,-.3)$) to [ out = -90, in = -90] ($(a)+(-2.1,-.3)$) ;
   \draw[dotted, orange, thick] ($(a)+(-2.1,-.3)$) to [ out = -90, in = -90] ($(a)+(-2.8,-.3)$) ;
   \draw[->, dotted, orange, thick] ($(a)+(-2.8,-.3)$) to [ out = -90, in = -90] ($(a)+(-3.9,-.3)$) ;
   
   \node[orange] (view) at ($(a)+(1.9,-.8)$) {view $\view{\overline{\pathLL s}}$} ;
   
 \end{tikzpicture}
\end{center}

Since $\alpha_{\pathLL u}$ is hereditarily justified by $\overline{\kappa_\symtensor}$ and by no action from $\overline{\pathLL t}$, the path $\overline{\pathLL s}$ is not well-bracketed: the justifications of $\alpha_{\pathLL u}$ and of $\alpha_{\pathLL t}^-$ intersect.

To sum up, we have proved that in the case when $\dual{\pathLL u} = \overline{\pathLL u}\daimon$ is extensible, either $\dual{\pathLL s}$ is extensible too or it is not well-bracketed.
\end{itemize}
Hence $\beh N \multimap \beh P$ is quasi-pure.
\end{proof}

\section{Proofs of Section~\ref{sec-induct}}

In this section we prove:
\begin{itemize}
\item that the functions $\phi^{A}_{\sigma}$ are Scott-continuous (Proposition~\ref{prop_scott_conti}),
\item internal completeness for particular infinite unions of behaviours (Theorem~\ref{thm_union_beh}),
\item two lemmas of Subsection~\ref{reg_pure_data} (Lemmas~\ref{lem_incarn_hier} and \ref{lem_cb_basis}).
\end{itemize}

\subsection{Proof of Proposition~\ref{prop_scott_conti}}

\begin{lemma} \label{lem_intcomp_set}
Let $E, F$ be sets of atomic negative designs and $G$ be a set of atomic positive designs.
  \begin{enumerate}
    \item $\shpos (E^{\perp\perp}) = \symshpos\langle E\rangle^{\perp\perp}$
    \item $\shneg (G^{\perp\perp}) = \setst{\design n}{\proj{\design n}{\symshneg} \in \symshneg(x).G^x}^{\perp\perp}$
    \item $(E^{\perp\perp}) \oplus (F^{\perp\perp})  = (\symplus_1\langle E\rangle \cup \symplus_2\langle F\rangle)^{\perp\perp}$
    \item $(E^{\perp\perp}) \otimes (F^{\perp\perp}) = \symtensor\langle E, F\rangle^{\perp\perp}$
  \end{enumerate}
\end{lemma}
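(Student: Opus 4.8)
The plan is to derive all four identities from one uniform principle: for each set-level constructor $C$ appearing on a right-hand side (of one set argument for $\shpos$, $\shneg$, of two for $\oplus$, $\otimes$), the orthogonal $C(\cdots)^\perp$ is insensitive to replacing an argument by its bi-orthogonal closure, i.e.
\[ C(E,F)^\perp = C(E^{\perp\perp},F^{\perp\perp})^\perp, \]
with the convention that a one-argument constructor ignores $F$. Granting this, applying $(\cdot)^\perp$ once more gives $C(E,F)^{\perp\perp}=C(E^{\perp\perp},F^{\perp\perp})^{\perp\perp}$; since by definition $\shpos(E^{\perp\perp})=\symshpos\langle E^{\perp\perp}\rangle^{\perp\perp}$, $(E^{\perp\perp})\oplus(F^{\perp\perp})=(\symplus_1\langle E^{\perp\perp}\rangle\cup\symplus_2\langle F^{\perp\perp}\rangle)^{\perp\perp}$, and similarly for $\otimes$ and $\shneg$, the four items follow at once (the negative shift needing one extra rewriting, below). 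One inclusion of the displayed identity is free: $E\subseteq E^{\perp\perp}$ gives $C(E,F)\subseteq C(E^{\perp\perp},F^{\perp\perp})$, hence $C(E^{\perp\perp},F^{\perp\perp})^\perp\subseteq C(E,F)^\perp$. The content is the reverse inclusion, obtained by inspecting the one-step cut reduction.

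For the single-argument positive constructors this is short. A negative $\design m$ lies in $\symshpos\langle E\rangle^\perp$ iff $\symshpos\langle\design n\rangle\perp\design m$ for every $\design n\in E$; writing $\proj{\design m}{\symshneg}=\symshneg(x).\design p_\symshneg$, the cut reduces in one step, $\symshpos\langle\design n\rangle[\design m/x_0]\leadsto\design p_\symshneg[\design n/x]$, so the condition is exactly $\design p_\symshneg\in(E^x)^\perp$. Since $(E^x)^\perp=(E^\perp)^x=((E^{\perp\perp})^\perp)^x=((E^{\perp\perp})^x)^\perp$ — delocation commutes with orthogonality and $E^\perp=(E^{\perp\perp})^\perp$ — this membership is the same for $E$ and $E^{\perp\perp}$, which is the principle. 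The $\oplus$ case is identical slot by slot, using $(\symplus_1\langle E\rangle\cup\symplus_2\langle F\rangle)^\perp=\symplus_1\langle E\rangle^\perp\cap\symplus_2\langle F\rangle^\perp$ and the reductions through $\proj{\design m}{\symavec_1}$ and $\proj{\design m}{\symavec_2}$. For the negative shift I would first rewrite its right-hand side: with $H=\setst{\design n}{\proj{\design n}{\symshneg}\in\symshneg(x).G^x}$ and $H_0=\symshneg(x).G^x$, one has $H_0\subseteq H$, and for $\design n\in H$ the projection satisfies $\proj{\design n}{\symshneg}\sqsubseteq\design n$, hence $\proj{\design n}{\symshneg}\preceq\design n$, so $\design n\in H_0^{\perp\perp}$ by monotonicity (Theorem~\ref{thm_mono}); therefore $H^{\perp\perp}=H_0^{\perp\perp}$. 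It then remains to check $(\symshneg(x).G^x)^\perp=(\symshneg(x).(G^{\perp\perp})^x)^\perp$, which again comes from a one-step reduction: a positive design is orthogonal to $\symshneg(x).G^x$ precisely when it is $\daimon$ or of the form $\symshpos\langle\design n\rangle$ with $\design n\in G^\perp$, a condition mentioning $G$ only through $G^\perp=(G^{\perp\perp})^\perp$.

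The tensor case is the main obstacle, because $\symtensor\langle E,F\rangle^\perp$ tests both arguments at once. Here $\design m\in\symtensor\langle E,F\rangle^\perp$ iff, writing $\proj{\design m}{\sympar}=\sympar(x,y).\design p_\sympar$, we have $\normalisation{\design p_\sympar[\design n/x,\design n'/y]}=\daimon$ for all $\design n\in E$, $\design n'\in F$. The key is that closure can be pushed through one slot at a time: fixing $\design n'\in F$, the condition reads $\design p_\sympar[\,\cdot\,/x,\design n'/y]\in(E^x)^\perp$, so it persists for every $\design n\in(E^{\perp\perp})^x$; then fixing any such $\design n\in E^{\perp\perp}$, the condition reads $\design p_\sympar[\design n/x,\,\cdot\,/y]\in(F^y)^\perp$, so it persists for every $\design n'\in F^{\perp\perp}$. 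This upgrades the quantification from $E\times F$ to $E^{\perp\perp}\times F^{\perp\perp}$ without enlarging or shrinking the admissible set of $\design m$, yielding $\symtensor\langle E,F\rangle^\perp=\symtensor\langle E^{\perp\perp},F^{\perp\perp}\rangle^\perp$ and hence item (4). The only care needed throughout is the bookkeeping of delocations $(\cdot)^x$ and the verification that the residual components $\design p_b$ for names $b$ other than the dual of the head action never contribute to convergence, so that the stated one-step reductions are exact; I expect this to be routine given linearity and the definition of the normal form.
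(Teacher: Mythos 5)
Your proof is correct, and its core is the same as the paper's: everything reduces to computing the orthogonal of the set-level construction by one step of cut reduction and observing that the result depends on the arguments only through their orthogonals, so that $E^\perp=(E^{\perp\perp})^\perp$ does the rest. The paper packages this differently for items (1) and (2): it identifies $\symshpos\langle E\rangle^\perp$ with $\shneg(E^\perp)$ via the already-proved internal completeness theorem and then invokes the duality $\shneg\beh P=(\shpos\beh P^\perp)^\perp$ (its Lemma on the shifts), whereas you re-derive the needed description of the orthogonal from scratch and state the uniform principle $C(E,F)^\perp=C(E^{\perp\perp},F^{\perp\perp})^\perp$. Your treatment of item (2) — reducing $H^{\perp\perp}$ to $H_0^{\perp\perp}$ via $\proj{\design n}{\symshneg}\sqsubseteq\design n$ and monotonicity — is a correct substitute for the paper's appeal to internal completeness for $\shneg$. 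Where you genuinely add value is item (4): the paper dismisses it as ``very similar to (1)'', but the binary tensor is the one place where the orthogonal tests two slots simultaneously, and your slot-by-slot closure argument (fix $\design n'\in F$, upgrade the $x$-slot from $E$ to $E^{\perp\perp}$, then fix $\design n\in E^{\perp\perp}$ and upgrade the $y$-slot) is the honest way to discharge it; it is the standard stability argument for $\otimes$ and it is sound here. The only caveats are the ones you already flag — the delocation bookkeeping between $x_0$ and the bound variables, and checking that components $\design p_b$ for names $b$ other than the dual of the head action are irrelevant to convergence — both of which are routine given the definition of reduction and of the partial-sum notation.
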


\begin{proof}
  We prove (1) and (2), the other cases are very similar to (1).
  \begin{enumerate}
  \item $\symshpos\langle E\rangle^{\perp\perp} = \setst{\design n}{\proj{\design n}{\symshneg} \in \symshneg(x).(E^\perp)^x}^\perp = (\shneg (E^\perp))^\perp = (\shpos (E^{\perp\perp}))^{\perp\perp} = \shpos (E^{\perp\perp})$,
  \item $\setst{\design n}{\proj{\design n}{\symshneg} \in \symshneg(x).G^x}^{\perp\perp} = \setst{\symshpos\langle\design m\rangle}{\design m \in G^\perp}^\perp = (\shpos (G^\perp))^\perp = \shneg (G^{\perp\perp})$,
  \end{enumerate}
using the definition of the orthogonal, internal completeness, and Lemma~\ref{lem-inc-shneg}.
\end{proof}

\begin{proof}[Proof (Proposition~\ref{prop_scott_conti})]
  By induction on $A$, we prove that for every $X$ and every $\sigma$ the function $\phi^{A}_{\sigma}$ is continuous. Note that $\phi^{A}_{\sigma}$ is continuous if and only if for every directed subset $\mathbb P \subseteq \mathcal B^+$ we have $\bigvee_{\beh P \in \mathbb P} (\interpret{A}^{\sigma, X \mapsto \beh P}) = \interpret{A}^{\sigma, X \mapsto \bigvee \mathbb P}$. The cases $A = Y \in \mathcal V$ and $A = a \in \mathcal S$ are trivial, and the case $A = A_1 \oplus^+ A_2$ is very similar to the tensor, hence we only treat the two remaining cases. Let $\mathbb P \subseteq \mathcal B^+$ be directed.
  \begin{itemize}
    
  \item Suppose $A = A_1 \otimes^+ A_2$, thus $\interpret{A}^{\sigma, X \mapsto \beh P} = \interpret{A_1}^{\sigma, X \mapsto \beh P} \otimes^+ \interpret{A_2}^{\sigma, X \mapsto \beh P}$, with both functions $\phi^{A_i}_{\sigma}: \beh P \mapsto \interpret{A_i}^{\sigma, X \mapsto \beh P}$ continuous by induction hypothesis. For any positive behaviour $\beh P$, let us write $\sigma_{\beh P}$ instead of $\sigma, X \mapsto \beh P$. We have
    \[ \bigvee_{\beh P \in \mathbb P} \interpret{A}^{\sigma_{\beh P}} = (\bigcup_{\beh P \in \mathbb P} \interpret{A}^{\sigma_{\beh P}})^{\perp\perp} = (\bigcup_{\beh P \in \mathbb P} (\interpret{A_1}^{\sigma_{\beh P}} \otimes^+ \interpret{A_2}^{\sigma_{\beh P}}))^{\perp\perp} \]
    Let us show that 
    \[\bigcup_{\beh P \in \mathbb P} (\interpret{A_1}^{\sigma_{\beh P}} \otimes^+ \interpret{A_2}^{\sigma_{\beh P}}) = \symtensor \langle \bigcup_{\beh P' \in \mathbb P} \shneg \interpret{A_1}^{\sigma_{\beh P'}}, \bigcup_{\beh P'' \in \mathbb P} \shneg \interpret{A_2}^{\sigma_{\beh P''}} \rangle \cup \{\daimon\} \tag{$*$} \label{star}\]
    By internal completeness we have $\interpret{A_1}^{\sigma_{\beh P}} \otimes^+ \interpret{A_2}^{\sigma_{\beh P}} = \symtensor \langle \shneg \interpret{A_1}^{\sigma_{\beh P}}, \shneg \interpret{A_2}^{\sigma_{\beh P}} \rangle \cup \{\daimon\}$ for every $\beh P \in \mathbb P$. The inclusion $(\subseteq)$ of (\ref{star}) is then immediate, so let us prove $(\supseteq)$. First, indeed, $\daimon$ belongs to the left side. Let $\beh P', \beh P'' \in \mathbb P$, let $\design m \in \shneg \interpret{A_1}^{\sigma_{\beh P'}}$, $\design n \in \shneg \interpret{A_2}^{\sigma_{\beh P''}}$, and let us show that $\symtensor \langle \design m, \design n \rangle \in \interpret{A_1}^{\sigma_{\beh P}} \otimes^+ \interpret{A_2}^{\sigma_{\beh P}}$ where $\beh P = \beh P' \vee \beh P''$ (note that $\beh P \in \mathbb P$ since $\mathbb P$ is directed). By induction hypothesis, $\phi^{A_1}_{\sigma}$ is continuous, thus in particular increasing; since $\beh P' \subseteq \beh P$, it follows that $\interpret{A_1}^{\sigma_{\beh P'}} = \phi^{A_1}_{\sigma}(\beh P') \subseteq \phi^{A_1}_{\sigma}(\beh P) = \interpret{A_1}^{\sigma_{\beh P}}$. Similarly, $\interpret{A_2}^{\sigma_{\beh P''}} \subseteq \interpret{A_2}^{\sigma_{\beh P}}$. We get $\symtensor \langle \design m, \design n \rangle \in \symtensor \langle \shneg \interpret{A_1}^{\sigma_{\beh P}}, \shneg \interpret{A_2}^{\sigma_{\beh P}} \rangle \subseteq \interpret{A_1}^{\sigma_{\beh P}} \otimes^+ \interpret{A_2}^{\sigma_{\beh P}}$, using internal completeness for $\shneg$, which proves (\ref{star}). Using internal completeness, Lemma~\ref{lem_intcomp_set} and induction hypothesis, we deduce

    \begin{align*}
      (\bigcup_{\beh P \in \mathbb P} (\interpret{A_1}^{\sigma_{\beh P}} \otimes^+ \interpret{A_2}^{\sigma_{\beh P}}))^{\perp\perp} & = \symtensor \langle \bigcup_{\beh P' \in \mathbb P} \shneg \interpret{A_1}^{\sigma_{\beh P'}}, \bigcup_{\beh P'' \in \mathbb P} \shneg \interpret{A_2}^{\sigma_{\beh P''}} \rangle^{\perp\perp} \\
      & = (\bigcup_{\beh P' \in \mathbb P} \shneg \interpret{A_1}^{\sigma_{\beh P'}})^{\perp\perp} \otimes (\bigcup_{\beh P'' \in \mathbb P} \shneg \interpret{A_2}^{\sigma_{\beh P''}})^{\perp\perp} \\ 
      & = (\bigcup_{\beh P' \in \mathbb P} \interpret{A_1}^{\sigma_{\beh P'}})^{\perp\perp} \otimes^+ (\bigcup_{\beh P'' \in \mathbb P} \interpret{A_2}^{\sigma_{\beh P''}})^{\perp\perp} \\ 
      & = \interpret{{A_1}}^{\sigma, X \mapsto \bigvee \mathbb P} \otimes^+ \interpret{{A_2}}^{\sigma, X \mapsto \bigvee \mathbb P} \\ 
      & = \interpret{A}^{\sigma, X \mapsto \bigvee \mathbb P}
    \end{align*}
    Consequently $\phi^{A}_{\sigma}$ is continuous.
  \item If $A = \mu Y.A_0$, define $f_0 : \beh Q \mapsto \interpret{A_0}^{\sigma, X \mapsto \bigvee \mathbb P, Y \mapsto \beh Q}$ and, for every $\beh P \in \mathcal B^+$, $f_{\beh P} : \beh Q \mapsto \interpret{A_0}^{\sigma, X \mapsto \beh P, Y \mapsto \beh Q}$. Those functions are continuous by induction hypothesis, thus using Kleene fixed point theorem we have $\mathrm{lfp}(f_0) = \bigvee_{n \in \mathbb N}{f_0}^n(\daimon) \mbox{ and } \mathrm{lfp}(f_{\beh P}) = \bigvee_{n \in \mathbb N}{f_{\beh P}}^n(\daimon)$.
    Therefore $\bigvee_{\beh P \in \mathbb P} (\interpret{A}^{\sigma, X \mapsto \beh P}) = \bigvee_{\beh P \in \mathbb P} (\mathrm{lfp}(f_{\beh P})) = \bigvee_{\beh P \in \mathbb P} (\bigvee_{n \in \mathbb N}{f_{\beh P}}^n(\daimon)) = \bigvee_{n \in \mathbb N} (\bigvee_{\beh P \in \mathbb P} {f_{\beh P}}^n(\daimon))$.

    For every $\beh Q \in \mathcal B^+$ the function $g_{\beh Q}: \beh P \mapsto f_{\beh P}(\beh Q)$ is continuous by induction hypothesis, hence $f_0(\beh Q) = \bigvee_{\beh P \in \mathbb P} f_{\beh P}(\beh Q)$.
    From this, we prove easily by induction on $m$ that for every $\beh Q \in \mathcal B^+$ we have ${f_0}^m(\beh Q) = \bigvee_{\beh P \in \mathbb P} {f_{\beh P}}^m(\beh Q)$.
    Thus $\bigvee_{\beh P \in \mathbb P} (\interpret{A}^{\sigma, X \mapsto \beh P}) = \bigvee_{n \in \mathbb N}{f_0}^n(\daimon) = \mathrm{lfp}(f_0) = \interpret{A}^{\sigma, X \mapsto \bigvee \mathbb P}$. We conclude that the function $\phi^{A}_{\sigma}$ is continuous.
  \end{itemize}
\end{proof}

\subsection{Proof of Theorem~\ref{thm_union_beh}}

Before proving Theorem~\ref{thm_union_beh} we need some lemmas. Suppose $(\beh A_n)_{n \in \mathbb N}$ is an infinite sequence of regular behaviours such that for all $n \in \mathbb N$, $|\beh A_n| \subseteq |\beh A_{n+1}|$; the simplicity hypothesis is not needed for now. Let us note $\beh A = \bigcup_{n \in \mathbb N} \beh A_n$. Notice that the definition of visitable paths can harmlessly be extended to any set $E$ of designs of same polarity, even if it is not a behaviour; the same applies to the definition of incarnation, provided that $E$ satisfies the following: if $\design d, \design e_1, \design e_2 \in E$ are cut-free designs such that $\design e_1 \sqsubseteq \design d$ and $\design e_2 \sqsubseteq \design d$ then there exists $\design e \in E$ cut-free such that $\design e \sqsubseteq \design e_1$ and $\design e \sqsubseteq \design e_2$. In particular, as a union of behaviours, $\beh A$ satisfies this condition.

\begin{lemma} \label{0lem_visit_hier} \label{0lem_visit_union}  \label{0lem_incarn_union}
  \begin{enumerate}
  \item $\forall n \in \mathbb N$, $V_{\beh A_n} \subseteq V_{\beh A_{n+1}}$.
  \item $V_{\bigcup_{n \in \mathbb N} \beh A_n} = \bigcup_{n \in \mathbb N} V_{\beh A_n}$.
  \item $|\bigcup_{n \in \mathbb N} \beh A_n| = \bigcup_{n \in \mathbb N} |\beh A_n|$.
  \end{enumerate}
\end{lemma}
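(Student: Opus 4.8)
I would prove the three statements in order, since each rests on the previous. Two facts are used throughout: $\beh A_n \subseteq \beh A_{n+1}$ (hence $\beh A_{n+1}^\perp \subseteq \beh A_n^\perp$), already noted before the theorem, and that orthogonality sends unions to intersections, so $\beh A^\perp = \bigcap_{n \in \mathbb N} \beh A_n^\perp$. For (1), start from $\pathLL s = \interseq{\design d}{\design e} \in \visit{A_n}$ with cut-free $\design d \in \beh A_n$ and $\design e \in \beh A_n^\perp$. By Lemma~\ref{lem_visit_path_inc}, $\pathLL s$ is a path of the incarnation $|\design d|_{\beh A_n}$, which lies in $|\beh A_n| \subseteq |\beh A_{n+1}|$; the test $\design e$ can then be discarded, because the first clause of regularity (Definition~\ref{reg}) for $\beh A_{n+1}$ guarantees that every path of a design of $|\beh A_{n+1}|$ is in $\visit{A_{n+1}}$. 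Hence $\pathLL s \in \visit{A_{n+1}}$, and by transitivity $\visit{A_m} \subseteq \visit{A_n}$ whenever $m \le n$.

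For (2), the inclusion $V_{\bigcup_{n} \beh A_n} \subseteq \bigcup_{n} \visit{A_n}$ is immediate: a realiser $\design d \in \beh A$, $\design e \in \beh A^\perp$ has $\design d \in \beh A_m$ for some $m$ and then $\design e \in \beh A_m^\perp$, so the path is visitable in $\beh A_m$. The converse is the crux. Given $\pathLL s \in \visit{A_m}$ realised by a cut-free $\design d \in \beh A_m \subseteq \beh A$, I would build a single test realising $\pathLL s$ against every level at once, namely $\design e' = \completed{\dual{\pathLL s}}$, the $\preceq$-maximal (cut-free) design having $\dual{\pathLL s}$ as a path. For $n \ge m$, step (1) gives $\pathLL s \in \visit{A_n}$, so $\dual{\pathLL s} \in \visit{A_n^\perp}$ and Proposition~\ref{cor-mono} yields $\completed{\dual{\pathLL s}} \in \beh A_n^\perp$; for $n < m$ one uses instead $\beh A_m^\perp \subseteq \beh A_n^\perp$. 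Thus $\design e' \in \bigcap_{n} \beh A_n^\perp = \beh A^\perp$. Since $\pathLL s$ is a path of $\design d$ and $\dual{\pathLL s}$ a path of $\design e'$, Proposition~\ref{perp-path0} gives $\design d \perp \design e'$ and the uniqueness of the interaction path (Proposition~\ref{inter-unique}) forces $\interseq{\design d}{\design e'} = \pathLL s$, so $\pathLL s \in V_{\beh A}$.

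For (3), I would reason by antisymmetry of $\sqsubseteq$, using that enlarging a behaviour can only shrink incarnations ($\beh B \subseteq \beh B'$ implies $|\design d|_{\beh B'} \sqsubseteq |\design d|_{\beh B}$). For $\bigcup_{n} |\beh A_n| \subseteq |\beh A|$, take $\design d \in |\beh A_m|$ and suppose $\design c := |\design d|_{\beh A} \sqsubsetneq \design d$; then $\design c \in \beh A_k$ for some $k$, and with $N = \max(m,k)$ the transitive hypothesis gives $|\beh A_m| \subseteq |\beh A_N|$, so $\design d$ is incarnated in $\beh A_N$, while $\design c \in \beh A_N$ sits strictly below it — contradiction. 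Conversely, if $\design d \in |\beh A|$ then $\design d \in \beh A_m$ for some $m$, and $|\design d|_{\beh A_m} \sqsubseteq \design d$ is a cut-free design of $\beh A$ below $\design d$; minimality of $|\design d|_{\beh A} = \design d$ forces $|\design d|_{\beh A_m} = \design d$, i.e.\ $\design d \in |\beh A_m|$.

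The main obstacle is the non-trivial inclusion of (2): a realiser is available only on a single level's orthogonal $\beh A_m^\perp$, yet one must exhibit a realiser in the whole intersection $\bigcap_{n} \beh A_n^\perp$. This is precisely where regularity is indispensable — it powers step (1), which in turn lets Proposition~\ref{cor-mono} push the completed dual $\completed{\dual{\pathLL s}}$ into every $\beh A_n^\perp$ simultaneously. I expect (1) and (2) to carry the real content, with (3) a purely order-theoretic manipulation depending only on the incarnation-inclusion hypothesis.
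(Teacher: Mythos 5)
Your proof is correct and follows essentially the same route as the paper's: (1) via the incarnation inclusion plus regularity of $\beh A_{n+1}$, (2)'s hard direction via the completed dual $\completed{\dual{\pathLL s}}$ landing in every $\beh A_n^\perp$ by monotonicity and hence in $\beh A^\perp$, and (3) by the same minimality arguments. The only (harmless) variation is in the easy inclusion of (2), where you use $\design e \in \beh A^\perp \subseteq \beh A_m^\perp$ directly instead of the paper's appeal to regularity of $\beh A_m$.
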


\begin{proof}
  \begin{enumerate}
  \item Fix $n$ and let $\pathLL s \in V_{A_n}$. There exist $\design d \in |\beh A_n|$ such that $\pathLL s$ is a path of $\design d$. Since $|\beh A_n| \subseteq |\beh A_{n+1}|$ we have $\design d \in |\beh A_{n+1}|$, thus by regularity of $\beh A_{n+1}$, $\pathLL s \in V_{A_{n+1}}$.
  \item $(\subseteq)$ Let $\pathLL s \in \visit A$. There exist $n \in \mathbb N$ and $\design d \in |\beh A_n|$ such that $\pathLL s$ is a path of $\design d$. By regularity of $\beh A_n$ we have $\pathLL s \in V_{\beh A_n}$. \\
    $(\supseteq)$ Let $m \in \mathbb N$ and $\pathLL s \in V_{\beh A_m}$. For all $n \ge m$, $V_{\beh A_m} \subseteq V_{\beh A_n}$ by previous item, thus $\pathLL s \in V_{\beh A_n}$. Hence if we take $\design e = \completed{\dual{\pathLL s}}$, we have $\design e \in {\beh A_n}^\perp$ for all $n \ge m$ by monotonicity. We deduce $\design e \in \bigcap_{n \ge m}{\beh A_n}^\perp = (\bigcup_{n \ge m} \beh A_n)^\perp = (\bigcup_{n \in \mathbb N} \beh A_n)^\perp = \beh A^\perp$. Let $\design d \in \beh A_m$ such that $\pathLL s$ is a path of $\design d$; we have $\design d \in \beh A$ and $\design e \in \beh A^\perp$, thus $\interseq{\design d}{\design e} = \pathLL s \in \visit A$.
  \item $(\subseteq)$ Let $\design d$ be cut-free and minimal for $\sqsubseteq$ in $\beh A$. There exists $m \in \mathbb N$ such that $\design d \in \beh A_m$. Thus $\design d$ is minimal for $\sqsubseteq$ in $\beh A_m$ otherwise it would not be minimal in $\beh A$, hence the result. \\
    $(\supseteq)$ Let $m \in \mathbb N$, and let $\design d \in |\beh A_m|$. By hypothesis, $\design d \in |\beh A_n|$ for all $n \ge m$. Suppose $\design d$ is not in $|\beh A|$, so there exists $\design d' \in \beh A$ such that $\design d' \sqsubseteq \design d$ and $\design d' \neq \design d$. In this case, there exists $n \ge m$ such that $\design d' \in \beh A_n$, but this contradicts the fact that $\design d \in |\beh A_n|$.
  \end{enumerate}
\end{proof}

\begin{lemma} \label{0lem_visit_double}
$V_{\bigcup_{n \in \mathbb N} \beh A_n} = \dual{V_{(\bigcup_{n \in \mathbb N} \beh A_n)^\perp}} = V_{(\bigcup_{n \in \mathbb N} \beh A_n)^{\perp\perp}}$.
\end{lemma}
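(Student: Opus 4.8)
Write $\beh A = \bigcup_{n\in\mathbb N}\beh A_n$. The plan is to split the double equality into the two identities $\dual{V_{\beh A^\perp}} = V_{\beh A^{\perp\perp}}$ and $V_{\beh A} = V_{\beh A^{\perp\perp}}$, and prove each in turn. The first is immediate: as the orthogonal of a set, $\beh A^\perp$ is a behaviour, so the remark that $\dual{\visit B} = \visit{B^\perp}$ for every behaviour $\beh B$ applies with $\beh B = \beh A^\perp$, giving $\dual{V_{\beh A^\perp}} = V_{(\beh A^\perp)^\perp} = V_{\beh A^{\perp\perp}}$. Throughout I will use the previous lemma in the form $V_{\beh A} = \bigcup_{n} V_{\beh A_n}$ (Lemma~\ref{0lem_visit_union}), together with $(\beh A^{\perp\perp})^\perp = \beh A^\perp$ and $\beh A^\perp = \bigcap_n \beh A_n^\perp$.

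The inclusion $V_{\beh A} \subseteq V_{\beh A^{\perp\perp}}$ is easy: if $\pathLL s = \interseq{\design d}{\design e}$ with $\design d \in \beh A$ and $\design e \in \beh A^\perp$, then $\design d \in \beh A^{\perp\perp}$ and $\design e \in (\beh A^{\perp\perp})^\perp$, so $\pathLL s \in V_{\beh A^{\perp\perp}}$. For the converse I would first reduce to realising a single path inside $\beh A$. Given $\pathLL s \in V_{\beh A^{\perp\perp}}$, we have $\dual{\pathLL s} \in \dual{V_{\beh A^{\perp\perp}}} = V_{\beh A^\perp}$, so by Proposition~\ref{cor-mono} the design $\completed{\dual{\pathLL s}}$ lies in $\beh A^\perp$. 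By Proposition~\ref{perp-path0}, any cut-free $\design d' \in \beh A$ having $\pathLL s$ as a path satisfies $\design d' \perp \completed{\dual{\pathLL s}}$ and $\interseq{\design d'}{\completed{\dual{\pathLL s}}} = \pathLL s$, whence $\pathLL s \in V_{\beh A}$. Thus it suffices to exhibit a design of $\beh A$ of which $\pathLL s$ is a path.

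The main obstacle is exactly this last point, since the design $\design d \in \beh A^{\perp\perp}$ originally realising $\pathLL s$ need not belong to $\beh A$. By Lemma~\ref{lem_visit_path_inc}, $\pathLL s$ is a path of the incarnated design $\design d_0 := |\design d|_{\beh A^{\perp\perp}} \in |\beh A^{\perp\perp}|$, whose material is confined to the finite path $\pathLL s$; this is why, contrary to Theorem~\ref{thm_union_beh}, no simplicity hypothesis should be needed here. I would then show $\pathLL s$ is visitable in a single $\beh A_m$ by decomposing it with Lemma~\ref{triv-view-path}: $\pathLL s$ lies in a finite shuffle of views of $\design d_0$, each view lying in a finite anti-shuffle of trivial views of $\beh A^{\perp\perp}$. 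The crux is to prove that each such finite trivial view is already a trivial view of some $\beh A_m$, using the inclusion of incarnations $|\beh A_m| \subseteq |\beh A_{m+1}|$ and monotonicity (Theorem~\ref{thm_mono}) to place the relevant finite slice inside one $\beh A_m$. Once this holds, regularity of $\beh A_m$ (via Proposition~\ref{reg2}) makes those trivial views visitable in $\beh A_m$; stability of $V_{\beh A_m}$ under shuffle and anti-shuffle, together with $V_{\beh A_m} \subseteq V_{\beh A_{m+1}}$ (so that finitely many indices may be replaced by their maximum) and Lemma~\ref{daimon_visit} for the $\daimon$-endings, then yields $\pathLL s \in V_{\beh A_m} \subseteq V_{\beh A}$. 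Establishing that the finite trivial views of $\beh A^{\perp\perp}$ descend to some $\beh A_m$ is the step I expect to demand the most care.
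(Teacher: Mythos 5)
Your overall architecture --- decompose the path into (anti-)shuffles of trivial views, realise each trivial view inside a single $\beh A_m$, then reassemble using regularity of $\beh A_m$ and $V_{\beh A_m} \subseteq V_{\beh A_{m+1}}$ --- matches the paper's, and your preliminary reductions (the identity $\dual{V_{\beh A^\perp}} = V_{\beh A^{\perp\perp}}$ for $\beh A = \bigcup_{n}\beh A_n$, the easy inclusion $V_{\beh A} \subseteq V_{\beh A^{\perp\perp}}$, the observation that simplicity is not needed here) are all correct. But the step you yourself flag as the crux is a genuine gap, and the tools you name will not close it. You work on the side of $\beh A^{\perp\perp}$: you take $\design d_0 = |\design d|_{\beh A^{\perp\perp}}$ and want every trivial view of $\design d_0$ to be a trivial view of some $\beh A_m$. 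A priori $\design d_0$ need not belong to any $\beh A_m$, nor be comparable for $\sqsubseteq$ or $\preceq$ to any design of any $\beh A_m$ --- indeed $\beh A^{\perp\perp} \subseteq \beh A$ is precisely what Theorem~\ref{thm_union_beh} (for which this lemma is a stepping stone) is meant to establish, so assuming that the material of $|\beh A^{\perp\perp}|$ descends to the $\beh A_m$ is close to circular. Monotonicity only lets you enlarge a design already known to lie in a behaviour; it does not produce a design of $\beh A_m$ containing a prescribed trivial view, and the inclusions $|\beh A_m| \subseteq |\beh A_{m+1}|$ say nothing about $|\beh A^{\perp\perp}|$.

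The paper's proof avoids this by working on the orthogonal side, and this is the idea your sketch is missing. It proves $\visit A = \dual{\visit{A^\perp}}$ directly: given $\pathLL s \in \visit{A^\perp}$, it takes $\design e \in |\beh A^\perp|$ with $\pathLL s$ a path of $\design e$ and decomposes $\pathLL s$ into trivial views $\viewseq t_1, \dots, \viewseq t_k$ of $\design e$. The key point is that incarnation in $\beh A^\perp$ is witnessed by interaction with $\beh A$ itself: if some $\viewseq t_i$ occurred in no $\interseq{\design e}{\design d}$ with $\design d \in \beh A$, then $\design e$ would not be minimal in $\beh A^\perp = (\bigcup_n \beh A_n)^\perp$. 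This yields finitely many designs $\design d_1, \dots, \design d_k \in \beh A$, hence a single $n$ with all $\design d_i \in \beh A_n$; each $\dual{\viewseq t_i}$ is then a trivial view of $|\design d_i|_{\beh A_n}$, so visitable in $\beh A_n$ by regularity, and closure of $V_{\beh A_n}$ under shuffle and anti-shuffle finishes, giving $\dual{\pathLL s} \in V_{\beh A_n} \subseteq \visit A$. To salvage your plan, replace your $\design d_0 \in |\beh A^{\perp\perp}|$ by the incarnation of the counter-design in $\beh A^\perp$ (legitimate by Lemma~\ref{lem_visit_path_inc}) and run the trivial-view argument there.
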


\begin{proof}
  In this proof we use the alternative definition of regularity (Proposition~\ref{reg2}). We prove $\visit A = \dual{\visit{A^\perp}}$, and the result will follow from the fact that for any behaviour $\beh B$ (in particular if $\beh B = \beh A^{\perp\perp}$) we have $\dual{\visit{B^\perp}} = \visit B$. First note that the inclusion $\visit A \subseteq \dual{\visit{A^\perp}}$ is immediate.

  Let $\pathLL s \in \visit{A^\perp}$ and let us show that $\dual{\pathLL s} \in \visit A$. Let $\design e \in |\beh A^\perp|$ such that $\pathLL s$ is a path of $\design e$. By Lemma~\ref{triv-view-path} and the remark following it, $\pathLL s$ is in the shuffle of anti-shuffles of trivial views $\viewseq t_1, \dots, \viewseq t_k$ of $\beh A^\perp$. For every $i \le k$, suppose $\viewseq t_i = \triv{\kappa_i}$; necessarily, there exists a design $\design d_i \in \beh A$ such that $\kappa_i$ occurs in $\interseq{\design e}{\design d_i}$, i.e., such that $\viewseq t_i$ is a subsequence of $\interseq{\design e}{\design d_i}$, otherwise $\design e$ would not be in the incarnation of $\beh A^\perp$ (it would not be minimal). Let $n$ be big enough such that $\design d_1, \dots, \design d_k \in \beh A_n$, and note that in particular $\design e \in {\beh A_n}^\perp$. For all $i$, $\dual{\viewseq t_i}$ is a trivial view of $|\design d_i|_{\beh A_n}$, thus it is a trivial view of $\beh A_n$. By regularity of $\beh A_n$ we have $\dual{\viewseq t_i} \in V_{\beh A_n}$. Since $\dual{\pathLL s}$ is in the anti-shuffle of shuffles of $\dual{\viewseq t_1}, \dots, \dual{\viewseq t_k}$, we have $\dual{\pathLL s} \in V_{\beh A_n}$ using regularity again. Therefore $\dual{\pathLL s} \in \visit A$ by Lemma~\ref{0lem_visit_union}.
\end{proof}

\begin{lemma} \label{0lem_union_reg}
  $(\bigcup_{n \in \mathbb N} \beh A_n)^\perp$ and $(\bigcup_{n \in \mathbb N} \beh A_n)^{\perp\perp}$ are regular.
\end{lemma}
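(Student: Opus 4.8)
The plan is to verify the two conditions of the alternative characterisation of regularity (Proposition~\ref{reg2}) for the behaviour $\beh B := (\bigcup_{n} \beh A_n)^\perp = \beh A^\perp$, where I write $\beh A = \bigcup_n \beh A_n$. Since regularity of a behaviour is equivalent to regularity of its orthogonal, establishing it for $\beh A^\perp$ immediately yields it for $(\beh A^\perp)^\perp = \beh A^{\perp\perp}$ as well, so both halves of the statement follow at once. First I would record the shape of the visitable paths involved: by Lemma~\ref{0lem_visit_double} one has $\visit{B^\perp} = \visit{A^{\perp\perp}} = \visit A$ and $\visit B = \visit{A^\perp} = \dual{\visit A}$, and combining Lemma~\ref{0lem_visit_union} with the identity $\dual{\visit{A_n}} = \visit{A_n^\perp}$ gives the two increasing-union descriptions $\visit A = \bigcup_n \visit{A_n}$ and $\visit{A^\perp} = \bigcup_n \visit{A_n^\perp}$ (increasing by Lemma~\ref{0lem_visit_union}(1)).

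Next comes the shuffle-stability condition. Given $\pathLL s, \pathLL t$ in $\visit A$ with $\pathLL s \shuffle \pathLL t$ defined, I would pick $n$ large enough that both lie in $\visit{A_n}$ (possible by increasingness), use the regularity of $\beh A_n$ to get $\pathLL s \shuffle \pathLL t \subseteq \visit{A_n} \subseteq \visit A$, and conclude that $\visit A$ is stable under shuffle; the identical argument on the $\visit{A_n^\perp}$ shows $\visit{A^\perp}$ is stable too. This settles the stability clause of Proposition~\ref{reg2} for $\beh B$, since $\visit B = \visit{A^\perp}$ and $\visit{B^\perp} = \visit A$.

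The remaining clause — and the main obstacle — is the trivial-view condition: every positive-ended trivial view $\viewseq t$ of $\beh A^\perp$ must be shown visitable in $\beh A^\perp$. Such a $\viewseq t$ is $\triv\kappa$ for a proper action $\kappa$ of an incarnated design $\design e \in |\beh A^\perp|$. Because $\design e$ is incarnated, $\kappa$ is visited, so it occurs in some $\pathLL p = \interseq{\design e}{\design d}$ with $\design d \in \beh A^{\perp\perp}$, and $\viewseq t = \trivv\kappa{\pathLL p}$ is a subsequence of $\pathLL p$. Then $\pathLL p \in \visit{A^\perp} = \bigcup_n \visit{A_n^\perp}$, so $\pathLL p \in \visit{A_n^\perp}$ for some $n$; since $\design e \in \beh A^\perp \subseteq \beh A_n^\perp$, Lemma~\ref{lem_visit_path_inc} makes $\pathLL p$ a path of the incarnation $|\design e|_{\beh A_n^\perp} \in |\beh A_n^\perp|$, whence $\viewseq t = \trivv\kappa{\pathLL p}$ is a trivial view of $\beh A_n^\perp$. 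Regularity of $\beh A_n^\perp$ together with Proposition~\ref{reg2} then gives $\viewseq t \in \visit{A_n^\perp} \subseteq \visit{A^\perp}$.

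The delicate point to get right is precisely this descent from the incarnation of the infinite union's orthogonal down to a single finite level $\beh A_n^\perp$, which is what the visit-property built into incarnation and the union decomposition of Lemma~\ref{0lem_visit_union} make possible; the monotonicity of both path families is what lets a single index $n$ serve for all the data at hand. With both clauses of Proposition~\ref{reg2} verified, $\beh A^\perp$ is regular, and hence so is $\beh A^{\perp\perp}$, completing the proof.
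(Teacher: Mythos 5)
Your proof is correct and follows essentially the same route as the paper's: both verify the two clauses of Proposition~\ref{reg2} for $(\bigcup_n \beh A_n)^\perp$ by descending to a single finite level $n$ (using Lemmas~\ref{0lem_visit_union} and~\ref{0lem_visit_double}, the incarnation property, and the regularity of the $\beh A_n$), and then transfer regularity to the biorthogonal by symmetry of regularity under orthogonality. The only, harmless, difference is in the trivial-view clause: the paper dualizes $\viewseq t$ into a trivial view of some $\beh A_n$ via a design of $\beh A$ witnessing the incarnation of $\design e$, whereas you stay on the orthogonal side, locating the witnessing visitable path in some $\visit{A_n^\perp}$ and invoking Lemma~\ref{lem_visit_path_inc} to land in $|\beh A_n^\perp|$.
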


\begin{proof}
  Let us show $\beh A^\perp$ is regular using the equivalent definition (Proposition~\ref{reg2}).
  \begin{itemize}
  \item Let $\viewseq t$ be a trivial view of $\beh A^\perp$. By a similar argument as in the proof above, there exists $n \in \mathbb N$ such that $\dual{\viewseq t}$ is a trivial view of $\beh A_n$, thus $\dual{\viewseq t} \in V_{\beh A_n} \subseteq \visit A$. By Lemma~\ref{0lem_visit_double} $\viewseq t \in \visit{A^\perp}$.
  \item Let $\pathLL s, \pathLL t \in \visit{A^\perp}$. By Lemma~\ref{0lem_visit_double}, $\dual{\pathLL s}, \dual{\pathLL t} \in \visit A$. By Lemma~\ref{0lem_visit_union}(2), there exists $n \in \mathbb N$ such that $\dual{\pathLL s}, \dual{\pathLL t} \in V_{\beh A_n}$, thus by regularity of $\beh A_n$ we have $\dual{\pathLL s} \antishuffle \dual{\pathLL t}$, $\dual{\pathLL s} \shuffle \dual{\pathLL t} \subseteq V_{\beh A_n} \subseteq \visit A$, in other words $\dual{\pathLL s \shuffle \pathLL t}$, $\dual{\pathLL s \antishuffle \pathLL t} \subseteq \visit A$. By Lemma~\ref{0lem_visit_double} we deduce $\pathLL s \shuffle \pathLL t$, $\pathLL s \antishuffle \pathLL t \subseteq \visit{A^\perp}$, hence $\visit{A^\perp}$ is stable under shuffle and anti-shuffle.
  \end{itemize}
  Finally $\beh A^\perp$ is regular. We deduce that $\beh A^{\perp\perp}$ is regular since regularity is stable under orthogonality.
\end{proof}

Let us introduce some more notions for next proof. An \defined{\boldmath$\infty$-path} (resp. \defined{\boldmath$\infty$-view}) is a finite or infinite sequence of actions satisfying all the conditions of the definition of path (resp. view) but the requirement of finiteness. In particular, a finite $\infty$-path (resp. $\infty$-view) is a path (resp. a view). An \defined{\boldmath$\infty$-path} (resp. \defined{\boldmath$\infty$-view}) \defined{of} a design $\design d$ is such that any of its positive-ended prefix is a path (resp. a view) of $\design d$. We call \defined{infinite chattering} a closed interaction which diverges because the computation never ends; note that infinite chattering occurs in the interaction between two atomic designs $\design p$ and $\design n$ if and only if there exists an infinite $\infty$-path $\pathLL s$ of $\design p$ such that $\dual{\pathLL s}$ is an $\infty$-path of $\design n$ (where, when $\pathLL s$ is infinite, $\dual{\pathLL s}$ is obtained from $\pathLL s$ by simply reversing the polarities of all the actions). Given an infinite $\infty$-path $\pathLL s$, the design $\completed{\pathLL s}$ is constructed similarly to the case when $\pathLL s$ is finite (see \textsection~\ref{ord-mono}).

For the proof of the theorem, suppose now that the behaviours $(\beh A_n,)_{n \in \mathbb N}$ are simple. Remark that the second condition of simplicity implies in particular that the dual of a path in a design of a simple behaviour is a view.

\begin{proof}[Proof (Theorem~\ref{thm_union_beh})]

We must show that $\beh A^{\perp\perp} \subseteq \beh A$ since the other inclusion is trivial. Remark the following: given designs $\design d$ and $\design d'$, if $\design d \in \beh A$ and $\design d \sqsubseteq \design d'$ then $\design d' \in \beh A$. Indeed, if $\design d \in \beh A$ then there exists $n \in \mathbb N$ such that $\design d \in \beh A_n$; if moreover $\design d \sqsubseteq \design d'$ then in particular $\design d \preceq \design d'$, and by monotonicity $\design d' \in \beh A_n$, hence $\design d' \in \beh A$. Thus it is sufficient to show $|\beh A^{\perp\perp}| \subseteq \beh A$ since for every $\design d' \in \beh A^{\perp\perp}$ we have $|\design d'| \in |\beh A^{\perp\perp}|$ and $|\design d'| \sqsubseteq \design d'$.

So let $\design d \in |\beh A^{\perp\perp}|$ and suppose $\design d \notin \beh A$. First note the following: by Lemmas~\ref{0lem_visit_double} and \ref{0lem_union_reg}, every path $\pathLL s$ of $\design d$ is in $\visit{A^{\perp\perp}} = \visit A$, thus there exists $\design d' \in |\beh A|$ containing $\pathLL s$. We explore separately the possible cases, and show how they all lead to a contradiction. \\
\textbf{If {\boldmath$\design d$} has an infinite number of maximal slices} then:
\begin{itemize}
\item Either there exists a negative subdesign $\design n = \negdes{a}{\vect{x^a}}{\design p_a}$ of $\design d$ for which there is an infinity of names $a \in \mathcal A$ such that $\design p_a \neq \Omega$. In this case, let $\viewseq v$ be the view of $\design d$ such that for every action $\kappa^-$ among the first ones of $\design n$, $\viewseq v\kappa^-$ is the prefix of a view of $\design d$. All such sequences $\viewseq v\kappa^-$ being prefixes of paths of $\design d$, we deduce by regularity of $\beh A^{\perp\perp}$ and using Lemma~\ref{daimon_visit} that $\viewseq v\kappa^-\daimon \in \visit{A^{\perp\perp}}$. Let $\design d' \in |\beh A|$ be such that $\viewseq v$ is a view of $\design d'$. Since $\design d'$ is also in $\beh A^{\perp\perp}$, we deduce by Lemma~\ref{nec} that for every action $\kappa^-$ among the first ones of $\design n$, $\viewseq v\kappa^-$ is the prefix of a view of $\design d'$. Thus $\design d'$ has an infinite number of slices: contradiction.

\item Or we can find an infinite $\infty$-view $\viewseq v = (\kappa^-_0)\kappa^+_1\kappa^-_1\kappa^+_2\kappa^-_1\kappa^+_3\kappa^-_3 \dots$ of $\design d$ (the first action $\kappa^-_0$ being optional depending on the polarity of $\design d$) satisfying the following: there is an infinity of $i \in \mathbb N$ such than $\kappa^-_i$ is one of the first actions of a negative subdesign $\negdes{a}{\vect{x^a}}{\design p_a}$ of $\design d$ with at least two names $a \in \mathcal A$ such that $\design p_a \neq \Omega$. Let $\viewseq v_i$ be the prefix of $\viewseq v$ ending on $\kappa^+_i$. There is no design $\design d' \in |\beh A|$ containing $\viewseq v$, indeed: in this case, for all $i$ and all negative action $\kappa^-$ such that $\viewseq v_i\kappa^-$ is a prefix of a view of $\design d$, $\viewseq v_i\kappa^-$ would be a prefix of a view of $\design d'$ by Lemma~\ref{nec}, thus $\design d'$ would have an infinite number of slices, which is impossible since the $\beh A_n$ are simple. Thus consider $\design e = \completed{\dual{\viewseq v}}$: since all the $\viewseq v_i$ are views of designs in $|\beh A| = \bigcup_{n \in \mathbb N} |\beh A_n|$ and since the $\beh A_n$ are simple, the sequences $\dual{\viewseq v_i}$ are views, thus $\dual{\viewseq v}$ is an $\infty$-view. Therefore an interaction between a design $\design d' \in \beh A$ and $\design e$ necessarily eventually converges by reaching a daimon of $\design e$, indeed: infinite chattering is impossible since we cannot follow $\viewseq v$ forever, and interaction cannot fail after following a finite portion of $\viewseq v$ since those finite portions $\viewseq v_i$ are in $\visit A$. Hence $\design e \in \beh A^\perp$. But $\design d \not \perp \design e$, because of infinite chattering following $\viewseq v$. Contradiction.
\end{itemize}
\textbf{If {\boldmath$\design d$} has a finite number of maximal slices} $\design c_1, \dots, \design c_k$ then for every $i \le k$ there exist an $\infty$-path $\pathLL s_i$ that visits all the positive proper actions of $\design c_i$. Indeed, any (either infinite or positive-ended) sequence $\pathLL s$ of proper actions in a slice $\design c \sqsubseteq \design d$, without repetition, such that polarities alternate and the views of prefixes of $\pathLL s$ are views of $\design c$, is an $\infty$-path:
\begin{itemize}
\item (Linearity) is ensured by the fact that we are in only one slice,
\item (O-visibility) is satisfied since positive actions of $\design d$, thus also of $\design c$, are justified by the immediate previous negative action (a condition true in $|\beh A|$, thus also satisfied in $\design d$ because all its views are views of designs in $|\beh A|$)
\item (P-visibility) is natively satisfied by the fact that $\pathLL s$ is a promenade in the tree representing a design.
\end{itemize}
For example, $\pathLL s$ can travel in the slice $\design c$ as a breadth-first search on couples of nodes $(\kappa^-,\kappa^+)$ such that $\kappa^+$ is just above $\kappa^-$ in the tree, and $\kappa^+$ is proper. Then 2 cases:

\begin{itemize}
\item Either for all $i$, there exists $n_i \in \mathbb N$ and $\design d_i \in \beh A_{n_i}$ such that $\pathLL s_i$ is an $\infty$-path of $\design d_i$. Without loss of generality we can even suppose that $\design c_i \sqsubseteq \design d_i$: if it is not the case, replace some positive subdesigns (possibly $\Omega$) of $\design d_i$ by $\daimon$ until you obtain $\design d'_i$ such that $\design c_i \sqsubseteq \design d'_i$, and note that indeed $\design d'_i \in \beh A_{n_i}$ since $\design d_i \preceq \design d'_i$. Let $N = \mathrm{max}_{1 \le i \le k} (n_i)$. Since $\design d \not \in \beh A$, thus in particular $\design d \not \in \beh A_N$, there exists $\design e \in \beh A_N^\perp$ such that $\design d \not \perp \design e$. The reason of divergence cannot be infinite chattering, otherwise there would exist an infinite $\infty$-path $\pathLL t$ in $\design d$ such that $\dual{\pathLL t}$ is in $\design e$, and $\pathLL t$ is necessarily in a single slice of $\design d$ (say $\design c_i$) to ensure its linearity; but in this case we would also have $\design d_i \not \perp \design e$ where $\design d_i \in \beh A_N$, impossible. Similarly, for all (finite) path $\pathLL s$ of $\design d$, there exists $i$ such that $\pathLL s$ is a path of $\design c_i$ thus of $\design d_i \in \beh A_N$; this ensures that interaction between $\design d$ and $\design e$ cannot diverge after a finite number of steps either, leading to a contradiction.

\item Or there is an $i$ such that the (necessarily infinite) $\infty$-path $\pathLL s_i$ is in no design of $\beh A$. In this case, let $\design e = \completed{\dual{\pathLL s_i}}$ (where $\dual{\pathLL s_i}$ is a view since the $\beh A_n$ are simple), and with a similar argument as previously we have $\design e \in \beh A^\perp$ but $\design d \not \perp \design e$ by infinite chattering, contradiction.
\end{itemize}

\end{proof}

\subsection{Proofs of Subsection~\ref{reg_pure_data}}

\begin{proof}[Proof (Lemma~\ref{lem_incarn_hier})]
  By induction on $A$, we prove that for all $X \in \mathcal V$ and $\sigma: \mathrm{FV}(A) \setminus \{X\} \to \mathcal B^+$ simple and regular, the induction hypothesis consisting in all the following statements holds:
  \begin{enumerate}
  \item for all simple regular behaviours $\beh P, \beh P' \in \mathcal B^+$, if $|\beh P| \subseteq |\beh P'|$ then $|{\phi^{A}_{\sigma}}(\beh P)| \subseteq |{\phi^{A}_{\sigma}}(\beh P')|$;
  \item for all $n \in \mathbb N$, $|(\phi^{A}_{\sigma})^n(\daimon)| \subseteq |(\phi^{A}_{\sigma})^{n+1}(\daimon)|$;
  \item for all simple regular behaviour $\beh P \in \mathcal B^+$, ${\phi^{A}_{\sigma}}(\beh P)$ is simple and regular;
  \item $\interpret{\mu X.A}^{\sigma} = \bigcup_{n \in \mathbb N} (\phi^{A}_{\sigma})^n(\daimon)$.
  \item $|\interpret{\mu X.A}^{\sigma}| = \bigcup_{n \in \mathbb N} |(\phi^{A}_{\sigma})^n(\daimon)|$.
  \end{enumerate}
  Let us write $\sigma_{\beh P}$ for $\sigma, X\mapsto \beh P$. Note that the base cases are immediate. If $A = A_1 \oplus^+ A_2$ or $A = A_1 \otimes^+ A_2$ then:
  \begin{enumerate}
  \item Follows from the incarnated form of internal completeness (in Theorem~\ref{thm_intcomp_all}).
  \item Easy by induction on $n$, using previous item.
  \item Regularity of ${\phi^{A}_{\sigma}}(\beh P)$ comes from Proposition~\ref{prop_reg_stable}, and simplicity is easy since the structure of the designs in $\interpret{A}^{\sigma_{\beh P}}$ is given by internal completeness.
  \item By Corollary~\ref{coro_kleene_sup} we have $\interpret{\mu X.A}^{\sigma} = (\bigcup_{n \in \mathbb N} (\phi^{A}_{\sigma})^n(\daimon))^{\perp\perp}$, and  by Theorem~\ref{thm_union_beh} we have $(\bigcup_{n \in \mathbb N} (\phi^{A}_{\sigma})^n(\daimon))^{\perp\perp} = \bigcup_{n \in \mathbb N} (\phi^{A}_{\sigma})^n(\daimon)$ since items (2) and (3) guarantee that the hypotheses of the theorem are satisfied.
  \item By previous item and Lemma~\ref{0lem_incarn_union}(3).
  \end{enumerate}
If $A = \mu Y.A_0$ then:
  \begin{enumerate}
    \item Suppose $|\beh P| \subseteq |\beh P'|$, where $\beh P$ and $\beh P'$ are simple regular.
      We have $|{\phi^{A}_{\sigma}}(\beh P)| = |\interpret{\mu Y.A_0}^{\sigma_{\beh P}}| = \bigcup_{n \in \mathbb N} |(\phi^{A_0}_{\sigma_{\beh P}})^n(\daimon)|$ by induction hypothesis (5), and similarly for $\beh P'$. By induction on $n$, we prove that
      \[|(\phi^{A_0}_{\sigma_{\beh P}})^n(\daimon)| \subseteq |(\phi^{A_0}_{\sigma_{\beh P'}})^n(\daimon)| \tag{$\delta$}\]
      It is immediate for $n = 0$, and the inductive case is:
      \begin{align*}
        |(\phi^{A_0}_{\sigma_{\beh P}})^{n+1}(\daimon)| & = |\phi^{A_0}_{\sigma_{\beh P}}((\phi^{A_0}_{\sigma_{\beh P}})^n(\daimon))| & \\
        & \subseteq |\phi^{A_0}_{\sigma_{\beh P}}((\phi^{A_0}_{\sigma_{\beh P'}})^n(\daimon))| & \mbox{ by induction hypotheses (1), (3) and ($\delta$)} \\
        & = |\phi^{A_0}_{\sigma, Y \mapsto (\phi^{A_0}_{\sigma_{\beh P'}})^n(\daimon)}(\beh P)|\\
        & \subseteq |\phi^{A_0}_{\sigma, Y \mapsto (\phi^{A_0}_{\sigma_{\beh P'}})^n(\daimon)}(\beh P')| & \mbox{ by induction hypotheses (1) and (3)} \\
        & = |(\phi^{A_0}_{\sigma_{\beh P'}})^{n+1}(\daimon)|
      \end{align*}
      \setcounter{enumi}{2}
    \item By induction hypotheses (2), (3) and (4) respectively, we have
      \begin{itemize}
      \item for all $n \in \mathbb N$, $|(\phi^{A_0}_{\sigma})^n(\daimon)| \subseteq |(\phi^{A_0}_{\sigma})^{n+1}(\daimon)|$,
      \item for all $n \in \mathbb N$, $(\phi^{A_0}_{\sigma})^n(\daimon)$ is simple regular,
      \item $\interpret{\mu Y.A_0}^{\sigma} = \bigcup_{n \in \mathbb N}(\phi^{A_0}_{\sigma})^n(\daimon)$. 
      \end{itemize}
      Consequently, by Corollary~\ref{coro_reg_pur}, $\interpret{\mu Y.A_0}^{\sigma}$ is simple regular.
      \setcounter{enumi}{1}
    \item 4. 5. Similar to the cases $A = A_1 \oplus^+ A_2$ and $A = A_1 \otimes^+ A_2$.
  \end{enumerate}
\end{proof}

\begin{proof}[Proof (Lemma~\ref{lem_cb_basis})]
  By induction on $A$:
  \begin{itemize}
  \item If $A = a$ then it has basis $\interpret{a} = \beh C_a$.
  \item If $A = A_1 \oplus^+ A_2$, without loss of generality suppose $A_1$ is steady, with basis $\beh B_1$. Take $\otimes_1 \shneg \beh B_1$, as a basis for $A$, where the connective $\otimes_1$ is defined like $\shpos$ with a different name of action: $\otimes_1 \beh N :=\symplus_1\langle\beh N\rangle^{\perp\perp}$ and by internal completeness $\otimes_1 \beh N :=\symplus_1\langle\beh N\rangle$.
  \item If $A = A_1 \otimes^+ A_2$ then both $A_1$ and $A_2$ are steady, of respective base $\beh B_1$ and $\beh B_2$. The behaviour $\beh B = \beh B_1 \otimes^+ \beh B_2$ is a basis for $A$, indeed: since $\beh B_1$ and $\beh B_2$ are regular, Proposition~\ref{visit-tensor-reg} gives $V_{\beh B_1 \otimes^+ \beh B_2} = \kappa_\symtensor(V_{\shneg \beh B_1}^x \shuffle V_{\shneg \beh B_2}^y) \cup \{\daimon\}$ where, by Proposition~\ref{visit-sh}, $V_{\shneg \beh B_i} = \kappa_\symshneg V_{\beh B_i}^x \cup \{\epsilon\}$ for $i \in \{1, 2\}$; from this, and using internal completeness, we deduce that $\beh B$ satisfies all the conditions.
  \item Suppose $A = \mu X.A_0$, where $A_0$ is steady and has a basis $\beh B_0$, let us show that $\beh B_0$ is also a basis for $A$.
    \begin{itemize}
    \item By Proposition~\ref{prop_reg_union}, $\interpret{A}^{\sigma} = \bigcup_{n \in \mathbb N}(\phi^{A_0}_{\sigma})^n(\daimon)$, and since $\beh B_0$ is a basis for $A_0$ we have $\beh B_0 \subseteq \interpret{A_0}^{\sigma, X \to \daimon} = (\phi^{A_0}_{\sigma})(\daimon)$, so indeed $\beh B_0 \subseteq \interpret{A}^{\sigma}$.
      \item By induction hypothesis, we immediately have that for every path $\pathLL s \in V_{\beh B_0}$, there exists $\pathLL t \in V_{\beh B_0}^{max}$ $\daimon$-free extending $\pathLL s$.
    \item By Lemma~\ref{0lem_visit_union}(2) $V_{\interpret{A}^{\sigma}} = \{\daimon\} \cup \bigcup_{n \in \mathbb N} V_{(\phi^{A_0}_{\sigma})^{n+1}(\daimon)} = \{\daimon\} \cup \bigcup_{n \in \mathbb N} V_{\interpret{A_0}^{\sigma_n}}$ where $\sigma_n = \sigma, X \mapsto (\phi^{A_0}_{\sigma})^n(\daimon)$ has a simple regular image. By induction hypothesis, for all $n \in \mathbb N$, $V_{\beh B}^{max} \subseteq V_{\interpret{A_0}^{\sigma_n}}^{max}$, therefore $V_{\beh B}^{max} \subseteq V_{\interpret{A}^{\sigma}}^{max}$.
    \end{itemize}
  \end{itemize}
\end{proof}

\section{Proof of Proposition~\ref{prop_main}} \label{wpf}

In this section, we prove Proposition~\ref{prop_main}, which requires first several lemmas. Let us denote the set of functional behaviours by $\mathcal F$, and recall that $\mathcal D$ stands for the set of data behaviours.

\begin{lemma} \label{lem_datafunc_pure}
  Let $\beh P \in \mathcal D$, and let $\beh Q$ be a pure regular behaviour. The behaviour $\beh P \multimap^+ \beh Q$ is pure.
\end{lemma}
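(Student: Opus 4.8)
The plan is to exploit the definition $\beh P \multimap^+ \beh Q = \shpos((\shneg \beh P) \multimap \beh Q)$ and reduce to purity of the underlying linear map. Since purity is stable under $\shpos$ (Proposition~\ref{prop_pure_stable}), it suffices to show that $\beh M := (\shneg \beh P) \multimap \beh Q$ is pure. As $\beh P$ is a data behaviour it is pure and regular (Corollaries~\ref{coro_data_pur} and~\ref{coro_data_reg}), hence so is $\shneg \beh P$ (Propositions~\ref{prop_pure_stable} and~\ref{prop_reg_stable}); together with the hypotheses on $\beh Q$, both factors of $\beh M$ are pure and regular. In particular both are regular, so I may use the regular form of the visitable paths (Corollary~\ref{visit-arrow-reg}, via $\beh M^\perp = (\shneg \beh P) \otimes \beh Q^\perp$ and Proposition~\ref{visit-tensor-reg}):
\[ \visit{M^\perp} = \kappa_\symtensor(\visit{\shneg P}^x \shuffle \visit{Q^\perp}^y) \cup \{\daimon\}. \]

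First I would set up the analysis exactly as in the proof of quasi-purity (Proposition~\ref{prop_arrow_princ}). Let $\pathLL r\daimon \in \visit M$ be $\daimon$-ended and put $\pathLL s = \dual{\pathLL r\daimon} = \overline{\pathLL r} \in \visit{M^\perp}$; this is $\daimon$-free and of the form $\pathLL s = \kappa_\symtensor \pathLL s'$ with $\pathLL s' \in \pathLL t \shuffle \pathLL u$ for some $\daimon$-free $\pathLL t \in \visit{\shneg P}^x$ and $\pathLL u \in \visit{Q^\perp}^y$. The improvement over the quasi-purity argument is that $\beh Q$ is now fully pure, not merely quasi-pure: thus the $\daimon$-ended path $\dual{\pathLL u} = \overline{\pathLL u}\daimon \in \visit{Q}^y$ is always extensible, yielding a proper positive action $\kappa^+_{\pathLL u}$ with $\overline{\pathLL u}\kappa^+_{\pathLL u} \in \visit{Q}^y$, with no well-bracketedness proviso. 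This removes the first escape case of Proposition~\ref{prop_arrow_princ}; the whole burden is then to show that $\overline{\pathLL s}\kappa^+_{\pathLL u}$ is again a path, for then $\overline{\pathLL s}\kappa^+_{\pathLL u} \in \visit M$ by the form above and $\pathLL r\daimon = \dual{\pathLL s}$ is extensible.

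The heart of the proof — and the step I expect to be the main obstacle — is establishing this P-visibility, i.e.\ that the justifier $\kappa^-_{\pathLL u} = \mathrm{just}(\kappa^+_{\pathLL u})$ still lies in $\view{\overline{\pathLL s}}$. Equivalently, I would show that when the argument behaviour is $\shneg \beh P$ with $\beh P$ a data behaviour, every $\daimon$-ended visitable path of $\beh M$ is well-bracketed, which together with quasi-purity (Proposition~\ref{prop_arrow_princ}) yields purity. The structural reason is that $\shneg \beh P$ has a single initial negative action $\kappa_\symshneg$ through which every argument action is hereditarily justified, and that $\beh P$, being a data behaviour, is simple: every positive action is justified by the immediately preceding negative one. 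Hence, once the argument has been called, its exploration is a nested, well-bracketed read of a data value carrying no pending Player obligation that could later push a codomain justifier out of view. Concretely I would revisit the two bad subcases in the proof of Proposition~\ref{prop_arrow_princ} — where the straddling negative action comes either from $\overline{\pathLL t}$ (the argument side) or from $\overline{\pathLL u}$ (the codomain side) — and show that the single-entry data structure of $\shneg \beh P$ rules both out, transferring path information between $\beh M$ and its factors via Lemmas~\ref{daimon_visit} and~\ref{nec}.

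This is precisely the property that fails for a higher-order argument such as $\beh Q_1 \multimap^+ \beh Q_2$, where the argument is itself interactive and its moves can remain pending, so making the contrast with the data case explicit is the crux. I anticipate the delicate point to be the bookkeeping on views of the shuffle $\pathLL s' \in \pathLL t \shuffle \pathLL u$: one must verify that no justification arc internal to the codomain component $\pathLL u$ can straddle an argument action of $\pathLL t$ in a way that survives as a valid path, which is exactly where the hereditary justification of all of $\pathLL t$ by $\kappa_\symshneg$ and the simplicity of $\beh P$ are used.
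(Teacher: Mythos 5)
Your setup matches the paper's exactly: reduce to $(\shneg\beh P)\multimap\beh Q$ via Proposition~\ref{prop_pure_stable}, use regularity and Corollary~\ref{visit-arrow-reg} to write $\overline{\pathLL s}\in\kappa_\symtensor(\pathLL t\shuffle\dual{\pathLL u})$ with $\pathLL t\in\visit{\shneg P}$ $\daimon$-free and $\pathLL u\in\visit{Q}$ $\daimon$-ended, extend $\pathLL u$ by purity of $\beh Q$, and then establish P-visibility of $\pathLL s\kappa^+$. You also correctly isolate the decisive structural fact: in paths of $\shneg\beh P$ for $\beh P$ a data behaviour, every proper positive action is justified by the immediately preceding negative action. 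The gap is in the step you announce as the heart of the proof. The claim that every $\daimon$-ended visitable path of $\beh M=(\shneg\beh P)\multimap\beh Q$ is well-bracketed (so that quasi-purity would finish) is both not equivalent to the needed P-visibility — Proposition~\ref{prop_arrow_princ} gives ``extensible or not well-bracketed'', so well-bracketedness is sufficient for extensibility but not necessary — and, more importantly, false. Regularity forces $\visit{M^\perp}$ to contain \emph{all} shuffles $\kappa_\symtensor(\pathLL t\shuffle\dual{\pathLL u})$, and an interleaving that inserts a codomain excursion between a positive action of $\pathLL t$ and a negative action of $\pathLL t$ it justifies is not well-bracketed; the same already happens inside a single data argument such as $\shneg((\beh A\otimes^+\beh B)\otimes^+\beh C)$, where a $\beh C$-excursion may sit between the tensor action of $\beh A\otimes^+\beh B$ and the negative action opening the $\beh B$ component. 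Nothing in the hypotheses prevents $\beh Q$ from contributing non-well-bracketed paths either. So the route ``well-bracketedness $+$ quasi-purity'' cannot be completed.

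What does work — and what the paper does — is to use the immediate-justification property directly rather than through well-bracketedness: induct on the length of $\pathLL t$, peel off its last pair $\kappa_p^-\kappa_p^+$ (which appears as the adjacent block $\overline{\kappa_p^-\kappa_p^+}$ in $\pathLL s$, by the structure of shuffles of negative paths), and check that since $\kappa_p^+$ is justified by $\kappa_p^-$, deleting this block does not affect whether $\mathrm{just}(\kappa^+)$ survives in $\view{\pathLL s}$. Your fallback plan of revisiting the two bad subcases of Proposition~\ref{prop_arrow_princ} can be repaired in the same spirit (a straddling negative action coming from the argument side must be justified by the action immediately before it, leaving no room for the missing justifier in between), but as written the proposal rests on a false well-bracketedness claim and does not go through.
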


\begin{proof}
  By Proposition~\ref{prop_pure_stable} it suffices to show that $(\shneg \beh P) \multimap \beh Q$ is pure. Remark first that, by construction of data behaviours, the following assertion is satisfied in views (thus also in paths) of $\shneg \beh P$: every proper positive action is justified by the negative action preceding it.
  
By regularity and Corollary~\ref{visit-arrow-reg}, we have $\visit{(\shneg P) \multimap Q} = \dual{\kappa_\symtensor (\visit{\shneg P} \shuffle \dual{\visit{Q}})} \cup \{\epsilon\}$. Let $\pathLL s\daimon \in \visit{(\shneg P) \multimap Q}$, and we will prove that it is extensible. There exist $\pathLL t_1 \in \visit{\shneg P}$ and $\pathLL t_2 \in \visit{Q}$ such that $\dual{\pathLL s\daimon} = \overline{\pathLL s} \in \kappa_\symtensor(\pathLL t_1 \shuffle \dual{\pathLL t_2})$. In particular $\pathLL t_1$ is $\daimon$-free and $\pathLL t_2$ is $\daimon$-ended, say $\pathLL t_2 = \pathLL t_2'\daimon$. Since $\beh Q$ is pure, there exists $\kappa^+$ such that $\pathLL t_2'\kappa^+ \in V_{\beh Q}$. Let us show that $\pathLL s\kappa^+$ is a path, i.e., that if $\kappa^+$ is justified then $\mathrm{just}(\kappa^+)$ appears in $\view{\pathLL s}$, by induction on the length of $\pathLL t_1$:
  \begin{itemize}
  \item If $\pathLL t_1 = \epsilon$ then  $\pathLL s\kappa^+ = \pathLL t_2'\kappa^+$ hence it is a path.
  \item Suppose $\pathLL t_1 = \pathLL t_1'\kappa_p^-\kappa_p^+$. Since $\pathLL t_1$ is $\daimon$-free, $\kappa_p^+$ is proper. Thus $\pathLL s$ is of the form $\pathLL s = \pathLL s_1 \overline{\kappa_p^-\kappa_p^+}\pathLL s_2$, and by induction hypothesis $\pathLL s_1 \pathLL s_2 \kappa^+$ is a path, i.e., $\mathrm{just}(\kappa^+)$ appears in $\view{\pathLL s_1 \pathLL s_2}$.
    \begin{itemize}
    \item Either $\view{\pathLL s} = \view{\pathLL s_1 \pathLL s_2}$ and indeed $\mathrm{just}(\kappa^+)$ also appears in $\view{\pathLL s}$.
    \item Or $\view{\pathLL s}$ is of the form $\view{\pathLL s} = \view{\pathLL s_1}\overline{\kappa_p^-}\overline{\kappa_p^+}\pathLL s'_2$ since, by the remark at the beginning of this proof, $\kappa_p^+$ is justified by $\kappa_p^-$. This means in particular that $\pathLL s'_2$ start with the same positive action as $\pathLL s_2$, thus we have $\view{\pathLL s_1 \pathLL s_2} = \view{\pathLL s_1} \pathLL s'_2$. Since $\mathrm{just}(\kappa^+)$ appears in $\view{\pathLL s_1 \pathLL s_2}$ and it is an action of $\pathLL s_1$, it appears in $\view{\pathLL s_1}$ thus also in $\view{\pathLL s}$.
    \end{itemize}
  \end{itemize}
  Therefore $\pathLL s\kappa^+$ is a path. Since $\pathLL s \kappa^+ \in \dual{\kappa_\symtensor (\visit{\shneg P} \shuffle \dual{\visit{Q}})}$ and the behaviours are regular, $\pathLL s\kappa^+ \in \visit{P \multimap^+ Q}$, thus $\pathLL s\daimon$ is extensible. As this is true for every $\daimon$-ended path in $\visit{(\shneg P) \multimap Q}$, the behaviour $(\shneg \beh P) \multimap \beh Q$ is pure, and so is $\beh P \multimap^+ \beh Q$.
\end{proof}

\begin{lemma}\label{cons-arrow-pure}
  If $\beh P \in \mathcal F$ and $\beh Q \in \mathrm{Const}$ then $\beh P \multimap^+ \beh Q$ is pure.
\end{lemma}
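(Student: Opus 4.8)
The plan is to reduce the statement to a purity question for the linear map and then exploit that a constant behaviour has the smallest possible set of visitable paths. Since $\beh P \multimap^+ \beh Q = \shpos((\shneg \beh P) \multimap \beh Q)$ by definition and purity is stable under $\shpos$ (Proposition~\ref{prop_pure_stable}), I would first reduce the goal to showing that $(\shneg \beh P) \multimap \beh Q$ is pure. As $\beh P$ is functional it is regular (Proposition~\ref{prop_func_quasi_pure}), hence so is $\shneg \beh P$ (Proposition~\ref{prop_reg_stable}); and $\beh Q = \beh C_a$ is regular. Thus Corollary~\ref{visit-arrow-reg} applies and yields
\[\visit{(\shneg P) \multimap Q} = \dual{\kappa_\symtensor(\visit{\shneg P} \shuffle \dual{\visit{Q}})} \cup \{\epsilon\}.\]
Writing $\kappa_a = \posdes{x_0}{a}{\vect x}$ for the only proper action of $\beh C_a$, we have $\visit{Q} = \{\daimon, \kappa_a\}$, so $\dual{\visit{Q}} = \{\epsilon, \overline{\kappa_a}\daimon\}$: the codomain contributes at most the single negative action $\overline{\kappa_a}$.

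Next I would take an arbitrary $\daimon$-ended path $\pathLL s\daimon \in \visit{(\shneg P) \multimap Q}$ and prove it is extensible. Its dual $\dual{\pathLL s\daimon} = \overline{\pathLL s}$ is $\daimon$-free and, by the displayed formula, equal to $\kappa_\symtensor \pathLL u$ with $\pathLL u \in \pathLL t_1 \shuffle \pathLL t_2$, $\pathLL t_1 \in \visit{\shneg P}$ and $\pathLL t_2 \in \{\epsilon, \overline{\kappa_a}\daimon\}$. Because $\pathLL u$ is $\daimon$-free while $\proj{\pathLL u}{\pathLL t_2} = \pathLL t_2$, the only possibility is $\pathLL t_2 = \epsilon$; hence $\pathLL u = \pathLL t_1$ and, crucially, the output action $\overline{\kappa_a}$ does not occur in $\overline{\pathLL s}$ — equivalently, $\kappa_a$ has not yet been played in $\pathLL s$.

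The extension then amounts to playing the output. I would check that $\pathLL t_1 \overline{\kappa_a}\daimon$ is a path: $\pathLL t_1$ is positive-ended, $\overline{\kappa_a}$ is a negative action initial in this subsequence (so that O-visibility is vacuous) living at an address fresh with respect to $\pathLL t_1$, and $\daimon$ closes it. Therefore $\pathLL t_1 \overline{\kappa_a}\daimon \in \visit{\shneg P} \shuffle \dual{\visit{Q}}$, so $\kappa_\symtensor \pathLL t_1 \overline{\kappa_a}\daimon \in \kappa_\symtensor(\visit{\shneg P} \shuffle \dual{\visit{Q}})$, and its dual $\pathLL s\kappa_a$ lies in $\visit{(\shneg P) \multimap Q}$ (the dual of a path is a path). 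Since $\kappa_a$ is a proper positive action, this witnesses that $\pathLL s\daimon$ is extensible. As $\pathLL s\daimon$ was arbitrary, $(\shneg \beh P) \multimap \beh Q$ is pure, whence so is $\beh P \multimap^+ \beh Q$.

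The step I expect to require the most care — and the only place where $\beh Q \in \mathrm{Const}$ is used in an essential way, beyond its regularity and purity — is the analysis of $\dual{\visit{Q}} = \{\epsilon, \overline{\kappa_a}\daimon\}$. Its smallness guarantees both that in a $\daimon$-ended interaction the output is necessarily untouched, and that producing it consists of a single action which, being initial in the codomain component, can always be appended after any exploration of the argument without breaking P-visibility. This is precisely what fails when the codomain is a genuine compound behaviour such as a $\oplus^+$ (cf.\ Example~\ref{ex-final}): there, producing the output may demand an action whose justifier has already left the current view, which is the source of impurity identified in Proposition~\ref{prop_main}.
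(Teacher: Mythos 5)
Your proof is correct and follows essentially the same route as the paper's: reduce to $(\shneg \beh P) \multimap \beh Q$ via stability of purity under $\shpos$, use the regular-case description of $\visit{(\shneg P) \multimap Q}$ to see that $\visit{Q} = \{\daimon, \kappa_a\}$ forces the codomain component of any $\daimon$-ended visitable path to be empty, and then extend with $\kappa_a$, whose justifier is the initial action $\kappa_\symtensor$ and hence always present in the view. The only (immaterial) difference is that you verify the extension on the dual side and invoke that the dual of a path is a path, where the paper checks P-visibility of $\pathLL s\kappa_a$ directly.
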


\begin{proof}
We prove that $(\shneg \beh P) \multimap \beh Q$ is pure, and the conclusion will follow from Proposition~\ref{prop_pure_stable}. Let $\kappa^+ = \posdes{x_0}{a}{\vect{y}}$ where $\beh Q = \beh C_a$, and let $\pathLL s \daimon \in \visit{(\shneg P) \multimap Q}$. As in the proof of Lemma~\ref{lem_datafunc_pure}, there exist $\pathLL t_1 \in \visit{\shneg P}$ and $\pathLL t_2 \in \visit{Q}$ such that $\dual{\pathLL s\daimon} = \overline{\pathLL s} \in \kappa_\symtensor(\pathLL t_1 \shuffle \dual{\pathLL t_2})$ with $\pathLL t_2$ $\daimon$-ended. But $\visit{Q} = \{\daimon, \kappa^+\}$, thus $\pathLL t_2 = \daimon$ and $\dual{\pathLL t_2} = \epsilon$. Hence $\pathLL s \daimon = \dual{\kappa_\symtensor\pathLL t_1}$, and this path is extensible with action $\kappa^+$, indeed: $\pathLL s\kappa^+$ is a path because $\kappa^+$ is justified by $\kappa_\symtensor$, which is the only initial action of $\pathLL s\kappa^+$ thus appearing in $\view{\pathLL s}$; moreover $\dual{\pathLL s\kappa^+} \in \kappa_\symtensor(\pathLL t_1 \shuffle \dual{\kappa^+})$ where $\kappa^+ \in \visit Q$, therefore $\pathLL s \kappa^+ \in \visit{(\shneg P) \multimap Q}$.
\end{proof}

\begin{lemma} \label{arrow-max}
Let $\beh P, \beh Q \in \mathcal F$. If there is $\pathLL s \in \visit{Q}$ $\daimon$-free (resp. $\daimon$-ended) and maximal, then there is $\pathLL t \in \visit{P \multimap^+ Q}$ $\daimon$-free (resp. $\daimon$-ended) and maximal.
\end{lemma}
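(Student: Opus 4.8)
The plan is to argue entirely with the explicit descriptions of visitable paths. Since $\beh P$ and $\beh Q$ are functional they are regular (Proposition~\ref{prop_func_quasi_pure}), so by Proposition~\ref{visit-sh} and Corollary~\ref{visit-arrow-reg} every non-trivial path of $\beh P \multimap^+ \beh Q = \shpos((\shneg \beh P) \multimap \beh Q)$ has the shape $\pathLL t = \kappa_\symshpos \dual{(\kappa_\symtensor \pathLL w)}^x$ with $\pathLL w \in \visit{\shneg P}^x \shuffle \dual{\visit Q^y}$, where moreover $\visit{\shneg P} = \kappa_\symshneg \visit{P}^x \cup \{\epsilon\}$. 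I will first record an elementary duality coming from $\dual{\visit Q}=\visit{Q^\perp}$: a path $\pathLL s$ is $\daimon$-free and maximal in $\visit Q$ iff $\dual{\pathLL s}=\overline{\pathLL s}\daimon$ is $\daimon$-ended and maximal in $\visit{Q^\perp}$, and symmetrically $\pathLL s=\pathLL s_0\daimon$ is $\daimon$-ended and maximal iff $\dual{\pathLL s}=\overline{\pathLL s_0}$ is $\daimon$-free and maximal (a proper extension on one side dualises to one on the other). This is what controls the position of $\daimon$ under the two dualisations appearing in $\pathLL t$.

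In the $\daimon$-free case I take the $\shneg \beh P$-component empty, $\pathLL w = \dual{\pathLL s}^y$ (legitimate since $\epsilon\in\visit{\shneg P}$), giving a $\daimon$-free visitable $\pathLL t$ that, up to delocation, is $\kappa_\symshpos\,\overline{\kappa_\symtensor}^x\,\pathLL s^{y}$. For maximality, suppose $\pathLL t$ is a strict prefix of some $\pathLL t'=\kappa_\symshpos \dual{(\kappa_\symtensor \pathLL w')}^x$ with $\pathLL w'\in \pathLL a'\shuffle \dual{\pathLL s'}^y$. Projecting onto the output gives $\pathLL s\sqsubseteq \pathLL s'$, hence $\pathLL s'=\pathLL s$ by maximality of $\pathLL s$, so the extension lies in the argument ($\pathLL a'\neq\epsilon$). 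But $\dual{\pathLL s}^y=\overline{\pathLL s}^y\daimon$ ends with $\daimon$, which must be last in the shuffle, so the first (negative) action of $\pathLL a'$ would be inserted right after the negative-ended block $\overline{\pathLL s}^y$, breaking alternation. Thus $\pathLL a'=\epsilon$ and $\pathLL t$ is maximal.

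The $\daimon$-ended case is the crux, and the empty-argument construction \emph{fails} here: with $\pathLL s=\pathLL s_0\daimon$ the component $\dual{\pathLL s}^y=\overline{\pathLL s_0}^y$ is $\daimon$-free and positive-ended, so the alternation obstruction disappears and the opponent can reopen the argument after the output daimon (playing $\overline{\kappa_\symshneg}$, which is P-visible because its justifier $\overline{\kappa_\symtensor}$ still lies on the view). This is forced by quasi-purity (Proposition~\ref{prop_arrow_princ}): a maximal $\daimon$-ended path must fail well-bracketing, so it cannot ignore the argument. The remedy is to \emph{exhaust} the argument. Choose a maximal $\daimon$-free path $\pathLL a=\kappa_\symshneg\pathLL p^x$ of $\shneg \beh P$ and take any $\pathLL w\in \pathLL a \shuffle \overline{\pathLL s_0}^y$ (nonempty via a switching interleaving); both components are $\daimon$-free, so $\pathLL w$ is, making $\pathLL t$ $\daimon$-ended. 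Maximality follows by projection once more: a proper positive extension of $\pathLL t$ would extend either the output component, contradicting maximality of $\pathLL s$, or the argument component, contradicting maximality of $\pathLL a$.

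The one ingredient this uses, and the main obstacle, is that $\shneg \beh P$ (equivalently $\beh P$) admits a maximal $\daimon$-free visitable path. I will establish this separately, by induction on the functional behaviour, that \emph{every} functional behaviour has a maximal $\daimon$-free visitable path: data behaviours are pure (Corollary~\ref{coro_data_pur}), hence their maximal paths are $\daimon$-free; the cases $\oplus^+$ and $\otimes^+$ combine the paths given by the induction hypothesis through $\kappa_{\symplus_i}$ and $\kappa_\symtensor$; and the arrow case $\beh P_1\multimap^+\beh P_2$ is closed by feeding a maximal $\daimon$-free path of $\beh P_2$ (induction hypothesis) into the already-proved $\daimon$-free case of the present lemma, so there is no circularity. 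Then $\kappa_\symshneg\pathLL p^x$ for $\pathLL p$ maximal $\daimon$-free in $\beh P$ is maximal $\daimon$-free in $\shneg \beh P$. The remaining work is routine but must be carried out with care: verifying that the chosen interleavings are genuine paths, i.e.\ satisfy P- and O-visibility (for which Lemma~\ref{lem:dual_inverse_shuffle} is the relevant tool), and tracking the $\daimon$ through the dualisations so that the $\daimon$-status of $\pathLL t$ matches that of $\pathLL s$.
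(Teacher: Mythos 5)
Your argument is essentially correct, but it takes a genuinely different and heavier route than the paper's in the $\daimon$-ended case. The paper uses a single construction for both cases: it picks \emph{any} non-empty $\daimon$-free $\pathLL s' \in \visit{\shneg P}$ (which exists simply because $\beh P$ is a positive behaviour distinct from $\daimon$) and forms $\pathLL t' = \overline{\kappa_\symtensor\pathLL s'}\,\pathLL s$, i.e.\ the whole argument block placed \emph{before} the output block. Maximality then comes for free on the argument side: once interaction has moved on to $\pathLL s$, any later positive action justified inside $\overline{\pathLL s'}$ is screened off by P-visibility (the view of $\overline{\kappa_\symtensor\pathLL s'}\pathLL s_0$ jumps from the first action of $\pathLL s_0$ straight back to $\overline{\kappa_\symtensor}$), and re-opening the argument with $\overline{\kappa_\symshneg}$ is blocked by linearity since $\pathLL s'$ is non-empty; a $Q$-side extension contradicts maximality of $\pathLL s$, exactly as in your projection argument. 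So the paper never needs the argument component to be maximal. You instead \emph{exhaust} the argument with a maximal $\daimon$-free $\pathLL a$, which forces you to first establish that every functional behaviour admits a maximal $\daimon$-free visitable path --- this is Lemma~\ref{max-df}, whose proof in the paper uses the present lemma. You correctly break the circularity by stratifying ($\daimon$-free half, then the auxiliary lemma, then the $\daimon$-ended half), and your diagnosis that the empty-argument construction fails in the $\daimon$-ended case (the opponent can reopen the argument after the output daimon) is accurate; but the price is an extra induction over $\mathcal F$ that the paper's ordering of the concatenation makes unnecessary.

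One genuine (though repairable) gap: in the base case of your auxiliary induction you justify the existence of a maximal $\daimon$-free visitable path in a data behaviour by purity alone. Purity only says that $\daimon$-ended visitable paths are extensible; it does not by itself guarantee that \emph{any} maximal visitable path exists, so the statement ``their maximal paths are $\daimon$-free'' is vacuous without an existence argument. The existence is exactly what the basis Lemma~\ref{lem_cb_basis} provides (every $\pathLL s \in V_{\beh B}$ extends to a $\daimon$-free path in $V_{\beh B}^{max} \subseteq V_{\interpret{A}^\sigma}^{max}$), and it is what the paper's own proof of Lemma~\ref{max-df} invokes; you should cite it there. The remaining deferred verifications (that the chosen concatenations satisfy O- and P-visibility) are of the same routine kind the paper itself leaves implicit.
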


\begin{proof}
  Suppose there exists $\pathLL s \in \visit{Q}$ $\daimon$-free (resp. $\daimon$-ended) and maximal.
  Since $\beh P$ is positive and different from $\daimon$, there exists $\pathLL s' \in \visit{\shneg P}$ $\daimon$-free and non-empty. Let $\pathLL t' = \dual{\kappa_\symtensor\pathLL s'\dual{\pathLL s}}$, and remark that $\pathLL t' = \overline{\kappa_\symtensor\pathLL s'}\pathLL s$. This is a path (O- and P-visibility are satisfied), it belongs to $\visit{(\shneg P) \multimap Q}$, it is $\daimon$-free (resp. $\daimon$-ended). Suppose it is extensible, and consider both the ``$\daimon$-free'' and the ``$\daimon$-ended'' cases:
  \begin{itemize}
  \item if $\pathLL s$ and $\pathLL t'$ are $\daimon$-free, then there exists a negative action $\kappa^-$ such that $\pathLL t'\kappa^-\daimon \in \visit{(\shneg P) \multimap Q}$. But $\pathLL t'\kappa^-\daimon = \overline{\kappa_\symtensor\pathLL s'}\pathLL s\kappa^-\daimon$, and since it belongs to $\visit{(\shneg P) \multimap Q} = \dual{\kappa_\symtensor (\visit{\shneg P} \shuffle \visit{Q^\perp})} \cup \{\epsilon\}$, we necessarily have $\pathLL s\kappa^-\daimon \in \visit{Q}$ -- indeed, the sequence $\overline{\pathLL s'}\kappa^-$ has two adjacent negative actions. This contradicts the maximality of $\pathLL s$ in $\visit{Q}$.
  \item if $\pathLL s$ and $\pathLL t'$ are $\daimon$-ended, there exists a positive action $\kappa^+$ that extends $\pathLL t'$ and a contradiction arises with a similar reasoning.
  \end{itemize}
Hence $\pathLL t'$ is maximal in $\visit{(\shneg P) \multimap Q}$. Finally, $\pathLL t = \kappa_\symshpos \pathLL t'$ fulfills the requirements.\end{proof}

\begin{lemma}\label{max-df}
  For every behaviour $\beh P \in \mathcal F$, there exists $\pathLL s \in \visit{P}$ maximal and $\daimon$-free.
\end{lemma}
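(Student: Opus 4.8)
The plan is to prove the statement by structural induction on the grammar generating functional behaviours, exhibiting at each stage an explicit $\daimon$-free visitable path that cannot be strictly extended. Throughout I use that functional behaviours are regular (Proposition~\ref{prop_func_quasi_pure}), so that the regular forms of the visitable paths (Propositions~\ref{visit-sh}, \ref{visit-pl} and \ref{visit-tensor-reg}) apply to all the behaviours involved, and that by Proposition~\ref{reg-sh} the shifts $\shneg \beh P$ of the subbehaviours are again regular.

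For the base case $\beh P = \beh P_0 \in \mathcal D$, write $\beh P_0 = \interpret{A}$ for a closed steady data pattern $A$ and let $\beh B$ be a basis for $A$ (Lemma~\ref{lem_cb_basis}, the empty environment being vacuously simple regular). Since $\beh B$ is a behaviour, $\visit{B}$ is nonempty; applying the second property of the basis to any $\pathLL s \in \visit{B}$ yields a $\daimon$-free path $\pathLL t \in \visit{B}^{max}$ extending $\pathLL s$, and the third property gives $\pathLL t \in \visit{P_0}^{max}$. Thus $\pathLL t$ is the desired $\daimon$-free maximal visitable path.

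For the two positive connectives I prepend the relevant initial actions to paths furnished by the induction hypothesis. If $\beh P = \beh P_1 \oplus^+ \beh P_2$, take a $\daimon$-free maximal $\pathLL s \in \visit{P_1}$; then $\kappa_{\symplus_1}\kappa_\symshneg \pathLL s$ (suitably delocated) lies in $\visit{P}$ by Propositions~\ref{visit-sh} and \ref{visit-pl}, is $\daimon$-free, and is maximal because any strict extension would remain in the $\symplus_1$-component and so would strictly extend $\pathLL s$ in $\visit{P_1}$. If $\beh P = \beh P_1 \otimes^+ \beh P_2$, take $\daimon$-free maximal paths $\pathLL s_1 \in \visit{P_1}$ and $\pathLL s_2 \in \visit{P_2}$; then $\kappa_\symshneg \pathLL s_1$ and $\kappa_\symshneg \pathLL s_2$ are visitable in the (disjointly delocated) shifts $\shneg \beh P_1$ and $\shneg \beh P_2$, and for any $\pathLL u$ in their shuffle the path $\kappa_\symtensor \pathLL u$ belongs to $\visit{P}$ by Proposition~\ref{visit-tensor-reg} and is $\daimon$-free. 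Since $\pathLL u$ is a complete interleaving, its projections recovering all of both paths, any strict extension of $\kappa_\symtensor \pathLL u$ carries, again by Proposition~\ref{visit-tensor-reg}, an extra action in one component, producing a strictly longer visitable path of that component and contradicting the maximality given by the induction hypothesis.

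The case $\beh P = \beh P_1 \multimap^+ \beh P_2$ is then immediate: by the induction hypothesis $\visit{P_2}$ contains a $\daimon$-free maximal path, and Lemma~\ref{arrow-max} (the $\daimon$-free case) transports it to a $\daimon$-free maximal path of $\beh P_1 \multimap^+ \beh P_2$, the outer $\shpos$ being absorbed by that lemma. The only genuinely delicate point is the maximality verification in the tensor case: one must check that a full shuffle of two maximal paths leaves no room for an additional action without reopening one of the two components, which is exactly what the regular characterisation of $\visit{M \otimes N}$ (Proposition~\ref{visit-tensor-reg}) controls. All the remaining steps are routine prefix and projection bookkeeping.
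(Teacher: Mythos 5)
Your proof is correct and follows essentially the same route as the paper's (which is only a sketch): induction on the grammar of functional behaviours, using the basis from Lemma~\ref{lem_cb_basis} for the data case, Lemma~\ref{arrow-max} for $\multimap^+$, and direct constructions via the regular forms of visitable paths (Propositions~\ref{visit-sh}, \ref{visit-pl}, \ref{visit-tensor-reg}) for $\oplus^+$ and $\otimes^+$. You merely spell out the connective cases that the paper dismisses as easy, and your maximality argument for the tensor via projections of the shuffle is the intended one.
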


\begin{proof}
By induction on $\beh P$. If $\beh P \in \mathcal D$ then take $\pathLL s \in \visit B$ maximal, where $\beh B$ is a base of $\beh P$. Use Lemma~\ref{arrow-max} in the case of $\multimap^+$, and the result is easy for $\otimes^+$ and $\oplus^+$.
\end{proof}

\begin{lemma} \label{contra-pur}
  Let $\beh P \in \mathcal F$ and let $\mathcal C$ be a context. If $\mathcal C[\beh P]$ pure then $\beh P$ pure.
\end{lemma}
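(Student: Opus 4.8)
The plan is to prove the contrapositive by induction on the structure of the context $\mathcal C$: if $\beh P$ is impure then $\mathcal C[\beh P]$ is impure. Recall that impurity of a behaviour $\beh B$ means precisely that $\visit{B}^{max}$ contains a $\daimon$-ended path, i.e. a maximal (non-extensible) $\daimon$-ended visitable path. So the invariant I carry through the induction is ``$\mathcal C[\beh P]$ has a maximal $\daimon$-ended visitable path''. Throughout, every behaviour involved is a functional behaviour, hence regular by Proposition~\ref{prop_func_quasi_pure}, which lets me use the regular-case descriptions of visitable paths (Propositions~\ref{visit-sh}, \ref{visit-pl}, \ref{visit-tensor-reg} and Corollary~\ref{visit-arrow-reg}). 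The base case $\mathcal C = [~]$ is immediate. For the inductive step, writing $\mathcal C$ as its outermost constructor applied to a strictly smaller context $\mathcal C_0$ (and a fixed functional behaviour $\beh Q$), the induction hypothesis already yields that $\beh P$ impure implies $\mathcal C_0[\beh P]$ impure; it then remains to propagate impurity across that single outermost layer.

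The case $\beh Q \multimap^+ \mathcal C_0$ is handled directly: Lemma~\ref{arrow-max} states that a $\daimon$-ended maximal path in $\visit{\mathcal C_0[\beh P]}$ produces a $\daimon$-ended maximal path in $\visit{\beh Q \multimap^+ \mathcal C_0[\beh P]}$, which is exactly the propagation I need. For the connectives $\oplus^+$ and $\otimes^+$ I first observe, using $\visit{\shneg R} = \kappa_\symshneg \visit{R}^x \cup \{\epsilon\}$ (Proposition~\ref{visit-sh}), that a maximal $\daimon$-ended path of a positive behaviour $\beh R$ lifts bijectively to one of $\shneg \beh R$; since $\beh R \oplus^+ \beh S = (\shneg\beh R)\oplus(\shneg\beh S)$ and $\beh R \otimes^+ \beh S = (\shneg\beh R)\otimes(\shneg\beh S)$, it suffices to propagate impurity across $\oplus$ and $\otimes$ of negative behaviours. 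For $\oplus$, Proposition~\ref{visit-pl} gives $\visit{M\oplus N} = \kappa_{\symplus_1}\visit{M}^{x_1} \cup \kappa_{\symplus_2}\visit{N}^{x_2} \cup \{\daimon\}$, so prepending the first action $\kappa_{\symplus_i}$ to the witness of the impure summand produces a $\daimon$-ended path of the sum, and its maximality is transparent because actions of the two summands never interleave.

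The real work is the $\otimes$ case. Given a maximal $\daimon$-ended $\pathLL m\daimon \in \visit{M}^x$, with $\beh M$ the impure factor, I take $\kappa_\symtensor\pathLL m\daimon$, which lies in $\visit{M\otimes N}$ since $\kappa_\symtensor\pathLL m\daimon \in \kappa_\symtensor(\pathLL m\daimon \shuffle \epsilon)$ with $\epsilon \in \visit{N}^y$ (Proposition~\ref{visit-tensor-reg}). The point to verify is that this path stays maximal. Suppose it were extensible by a positive $\kappa^+$, so $\kappa_\symtensor\pathLL m\kappa^+$ is a prefix of some $\kappa_\symtensor\pathLL u$ with $\pathLL u \in \visit{M}^x \shuffle \visit{N}^y$. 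I expect to argue that $\kappa^+$ cannot come from the $\beh N$-side: being positive, it would need its justification among $\beh N$-side negative actions appearing in $\view{\kappa_\symtensor\pathLL m}$, yet that view contains only $\kappa_\symtensor$ and $\beh M$-side actions, contradicting P-visibility. Hence $\kappa^+$ is an $\beh M$-side action, and projecting onto the left factor exhibits $\pathLL m\kappa^+$ as a prefix of a path in $\visit{M}^x$, contradicting maximality of $\pathLL m\daimon$.

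The main obstacle is exactly this justification-tracking step in the tensor case, namely showing that the untouched factor $\beh N$ cannot revive the interaction and extend the $\daimon$; everything else reduces either to Lemma~\ref{arrow-max} or to the explicit visitable-path descriptions. Once the tensor maximality argument is in place, the three propagation facts combine with the induction hypothesis to close the induction and give that $\mathcal C[\beh P]$ pure forces $\beh P$ pure.
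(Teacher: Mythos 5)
Your proof is correct, and its skeleton (contrapositive, induction on the context, Lemma~\ref{arrow-max} for $\beh Q \multimap^+ \mathcal C_0$, prepending $\kappa_{\symplus_i}\kappa_\symshneg$ for $\oplus^+$) coincides with the paper's. The one place where you genuinely diverge is the $\otimes^+$ case. The paper does \emph{not} leave the untouched factor $\beh Q$ alone: it invokes Lemma~\ref{max-df} to obtain a maximal $\daimon$-free path $\pathLL t \in \visit{Q}$ and builds the witness $\kappa_\symtensor \kappa_\symshneg^{\pathLL t}\pathLL t\,\kappa_\symshneg^{\pathLL s}\pathLL s\daimon$, which first exhausts $\beh Q$ and then plays the impure factor's maximal $\daimon$-ended path; maximality is then immediate because neither factor has anything left to offer. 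You instead take $\kappa_\symtensor\pathLL m\daimon$, i.e.\ the shuffle with $\epsilon \in \visit{\shneg Q}$, and rule out any proper positive extension coming from the $\beh Q$ side by P-visibility: such an action is justified by a $\beh Q$-side negative action, none of which occur in $\view{\kappa_\symtensor\pathLL m}$ (the $\beh Q$ component can only be entered through its initial negative action, which cannot be the action replacing $\daimon$). This is sound --- the remaining case, an $\beh M$-side extension, projects via $\proj{\cdot}{\pathLL t_1}$ to a positive-ended prefix of a path of $\visit{M}^x$ and contradicts maximality of $\pathLL m\daimon$ by Lemma~\ref{daimon_visit} --- and it is arguably leaner, since it avoids Lemma~\ref{max-df} in this lemma altogether (that lemma is still needed elsewhere, e.g.\ in the proof of Proposition~\ref{prop_main}). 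What the paper's construction buys in exchange is that maximality is checked without any justification-tracking, purely from the maximality of the two component paths; what yours buys is a shorter witness and one fewer auxiliary lemma.
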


\begin{proof}
  We prove the contrapositive by induction on $\mathcal C$. Suppose $\beh P$ is impure.
  \begin{itemize}
  \item If $\mathcal C = [~]$ then $\mathcal C[\beh P] = \beh P$, thus $\mathcal C[\beh P]$ is impure.
  \item If $\mathcal C = \mathcal C' \oplus^+ \beh Q$ or $\beh Q \oplus^+ \mathcal C'$ and by induction hypothesis $\mathcal C'[\beh P]$ is impure, i.e., there exists a maximal path $\pathLL s\daimon \in \visit{\mathcal C'[P]}$, then one of $\kappa_{\symplus_1}\kappa_\symshneg \pathLL s\daimon$ or $\kappa_{\symplus_2}\kappa_\symshneg \pathLL s\daimon$ is maximal in $\visit{\mathcal C[P]}$, hence the result.
  \item If $\mathcal C = \mathcal C' \otimes^+ \beh Q$ or $\beh Q \otimes^+ \mathcal C'$ and by induction hypothesis there exists a maximal path $\pathLL s\daimon \in \visit{\mathcal C'[P]}$, then by Lemma~\ref{max-df}, there exists a $\daimon$-free maximal path $\pathLL t \in \visit{Q}$. Consider the path $\pathLL u = \kappa_\symtensor \kappa_\symshneg^{\pathLL t} \pathLL t \kappa_\symshneg^{\pathLL s} \pathLL s\daimon$, where:
    \begin{itemize}
    \item $\kappa_\symshneg^{\pathLL t}$ justifies the first action of $\pathLL t$,
    \item $\kappa_\symshneg^{\pathLL s}$ justifies the first one of $\pathLL s$, and
    \item $\kappa_\symtensor$ justifies $\kappa_\symshneg^{\pathLL t}$ and $\kappa_\symshneg^{\pathLL s}$, one on each (1\textsuperscript{st} or 2\textsuperscript{nd}) position, depending on the form of $\mathcal C$.
    \end{itemize}
    We have $\pathLL u \in \visit{\mathcal C[\beh P]}$, and $\pathLL u$ is $\daimon$-ended and maximal, hence the result.
  \item If $\mathcal C = \beh Q \multimap^+ \mathcal C'$ and by induction hypothesis $\mathcal C'[\beh P]$ is impure, then Lemma~\ref{arrow-max} (in its ``$\daimon$-ended'' version) concludes the proof. 
  \end{itemize}
\end{proof}

\begin{proof}[Proof (Proposition~\ref{prop_main})]
\noindent $(\Rightarrow)$ Suppose $\beh P$ impure. By induction on behaviour $\beh P$:
\begin{itemize}
\item $\beh P \in \mathcal D$ impossible by Corollary~\ref{coro_data_pur}.
\item If $\beh P = \beh P_1 \oplus^+ \beh P_2$ (resp. $\beh P = \beh P_1 \otimes^+ \beh P_2$) then one of $\beh P_1$ or $\beh P_2$ is impure by Proposition~\ref{prop_pure_stable}, say $\beh P_1$. By induction hypothesis, $\beh P_1$ is of the form $\beh P_1 = \mathcal C'_1[~\mathcal C'_2[\beh Q_1 \multimap^+ \beh Q_2] \multimap^+ \beh R~]$. Let $\mathcal C_1 = \mathcal C'_1 \oplus^+ \beh P_2$ (resp. $\mathcal C_1 = \mathcal C'_1 \otimes^+ \beh P_2$) and $\mathcal C_2 = \mathcal C'_2$, in order to get the result for $\beh P$.
\item If $\beh P = \beh P_1 \multimap^+ \beh P_2$, then $\beh P_2 \not \in \mathrm{Const}$ by Lemma~\ref{cons-arrow-pure}, and:
  \begin{itemize}
  \item If $\beh P_2$ impure, then by induction hypothesis $\beh P_2$ is of the form $\beh P_2 = \mathcal C'_1[~\mathcal C'_2[\beh Q_1 \multimap^+ \beh Q_2] \multimap^+ \beh R~]$, and it suffices to take $\mathcal C_1 = \beh P_1 \to \mathcal C'_1$ and $\mathcal C_2 = \mathcal C'_2$ to get the result for $\beh P$.
  \item If $\beh P_2$ is pure, since it is also regular the conclusion follows from Lemma~\ref{lem_datafunc_pure}.
  \end{itemize}
\end{itemize}

\noindent $(\Leftarrow)$ Let $\mathcal C_1, \mathcal C_2$ be contexts, $\beh Q_1, \beh Q_2, \beh R \in \mathcal P$ with $\beh R \not \in \mathrm{Const}$. Let $\beh P = \mathcal C_1[~\mathcal C_2[\beh Q_1 \multimap^+ \beh Q_2] \multimap^+ \beh R~]$ and $\beh Q = \mathcal C_2[\beh Q_1 \multimap^+ \beh Q_2]$. We prove that $\beh P$ is impure.

First suppose that $\beh P = \mathcal C_2[\beh Q_1 \multimap^+ \beh Q_2] \multimap^+ \beh R$, and in this case we show the result by induction on the depth of context $\mathcal C_2$. The exact induction hypothesis will be: there exists a maximal $\daimon$-ended path in $\visit P$ of the form $\kappa_\symshpos\pathLL s\daimon$ where $\overline{\pathLL s} \in \kappa_\symtensor((\kappa_\symshneg\visit Q) \shuffle \dual{\visit R})$.
\begin{itemize}

\item If $\mathcal C_2 = [~]$, then $\beh Q = \beh Q_1 \multimap^+ \beh Q_2 = \shpos(\shneg \beh Q_1 \multimap \beh Q_2)$ and $\beh P = \beh Q \multimap^+ \beh R = \shpos (\shneg \beh Q \multimap \beh R)$. In order to differentiate actions $\kappa_\symshpos, \kappa_\symshneg, \kappa_\symtensor$ used to construct $\beh Q$ from those to construct $\beh P$, we will use corresponding superscripts. Let $\kappa^Q_\symshneg\pathLL t_1 \in \visit{\shneg Q_1}$ be $\daimon$-free (and non-empty). Let $\pathLL t_2 \in V_{\beh Q_2}$ be a maximal $\daimon$-free path: its existence is ensured by Lemma~\ref{max-df}, and it has one proper positive initial action $\kappa_2^+$. Now let $\pathLL t = \dual{\kappa^Q_\symtensor\kappa^Q_\symshneg\pathLL t_1\dual{\pathLL t_2}} = \overline{\kappa^Q_\symtensor\kappa^Q_\symshneg\pathLL t_1}\pathLL t_2$. Similarly to the path constructed in proof of Lemma~\ref{arrow-max}, we have that $\pathLL t$ is $\daimon$-free, it is in $\visit{(\shneg Q_1) \multimap Q_2}$, and it is maximal. Thus $\kappa^Q_\symshpos\pathLL t \in \visit{Q}$. Since $\beh R \notin \mathrm{Const}$, there exists a path of the form $\kappa^+\kappa^-\daimon \in \visit{R}$, and thus necessarily $\kappa^+$ justifies $\kappa^-$. Define the sequence:
  \[\pathLL s\daimon = \overline{\kappa^P_\symtensor\kappa^P_\symshneg\kappa^Q_\symshpos}\kappa^Q_\symtensor\kappa^Q_\symshneg \kappa^+ \kappa^-\pathLL t_1\overline{\pathLL t_2}\daimon\]
  and notice the following facts:
  \begin{enumerate}
  \item \underline{$\pathLL s\daimon$ is a path}: it is a linear aj-sequence. Since $\kappa^-$ is justified by $\kappa^+$, O- and P-visibility are easy to check.
  \item \underline{$\pathLL s\daimon \in V_{\shneg \beh Q \multimap \beh R}$}: indeed, we have $\dual{\pathLL s\daimon} \in \kappa^P_\symtensor(\kappa^P_\symshneg\kappa^Q_\symshpos\pathLL t \shuffle \dual{\kappa^+\kappa^-\daimon})$ where $\kappa^P_\symshneg\kappa^Q_\symshpos\pathLL t \in \visit{\shneg Q}$ and $\kappa^+\kappa^-\daimon \in \visit R$.
  \item \underline{$\pathLL s\daimon$ is maximal}: Let us show that $\pathLL s\daimon$ is not extensible. First, it is not possible to extend it with an action from $\beh Q^\perp$, because this would contradict the maximality of $\pathLL t$ in $\visit{Q}$. Suppose it is extensible with an action $\kappa^{+\prime}$ from $\beh R$, that is $\pathLL s \kappa^{+\prime} \in \visit{\shneg Q \multimap R}$ and $\dual{\pathLL s\kappa^{+\prime}} \in \kappa^P_\symtensor(\kappa^P_\symshneg\kappa^Q_\symshpos\pathLL t \shuffle \dual{\kappa^+\kappa^-\kappa^{+\prime}})$ where $\kappa^+\kappa^-\kappa^{+\prime} \in \visit R$. The action $\kappa^{+\prime}$ (that cannot be initial) is necessarily justified by $\kappa^-$. But $\view{\pathLL s}$ contains necessarily the first negative action of $\overline{\pathLL t_2}$, which is the only initial action in $\overline{\pathLL t_2}$, and this action is justified by $\kappa^Q_\symtensor$ in $\pathLL s$. Therefore $\view{\pathLL s}$ does not contain any action from $\pathLL s$ between $\kappa^Q_\symtensor$ and $\overline{\pathLL t_2}$, in particular it does not contain $\kappa^- = \mathrm{just}(\kappa^{+\prime})$. Thus $\pathLL s\kappa^{+\prime}$ is not P-visible: contradiction. Hence $\pathLL s\daimon$ maximal.
  \end{enumerate}
  Finally $\kappa^P_\symshpos\pathLL s \daimon \in \visit{P}$ is not extensible, and of the required form.

\item If $\mathcal C_2 = \beh Q_0 \multimap^+ \mathcal C$, then $\beh Q$ is of the form $\beh Q = \beh Q_0 \multimap^+ \beh Q'$, thus previous reasoning applies.
  
\item If $\mathcal C_2 = \mathcal C \otimes^+ \beh Q_0$ or $\beh Q_0 \otimes^+ \mathcal C$, the induction hypothesis gives us the existence of a maximal path in $\visit{\mathcal C[Q_1 \multimap^+ Q_2]\multimap^+ R}$ of the form $\kappa^P_\symshpos\overline{\kappa^P_\symtensor\kappa^P_\symshneg}\pathLL s'\daimon$ where $\kappa^P_\symshneg\overline{\pathLL s'} \in (\kappa^P_\symshneg\pathLL t') \shuffle \dual{\pathLL u}$ with $\pathLL t' \in \visit{\mathcal C[Q_1 \multimap^+ Q_2]}$ and $\pathLL u \in \visit R$.
  Let $\pathLL t_0 \in \visit{Q_0}$ be $\daimon$-free and maximal, using Lemma~\ref{max-df}. Consider the following sequence:
  \[\pathLL s \daimon = \overline{\kappa^P_\symtensor\kappa^P_\symshneg\kappa^Q_\symtensor\kappa^0_\symshneg\pathLL t_0\kappa^1_\symshneg}\pathLL s'\daimon\]
  where:
  \begin{itemize}
  \item $\kappa_\symshneg^0$ justifies the first action of $\pathLL t_0$, 
  \item $\kappa_\symshneg^1$ justifies the first action of $\overline{\pathLL s'}$ thus the first action of $\pathLL t'$,
  \item $\kappa^Q_\symtensor$ justifies $\kappa_\symshneg^0$ and $\kappa_\symshneg^1$,
  \item $\kappa_\symshneg^P$ now justifies $\kappa^Q_\symtensor$,
  \item $\kappa_\symtensor^P$ justifies the same actions as before.
  \end{itemize}
  Notice that:
  \begin{enumerate}
  \item \underline{$\pathLL s\daimon$ is a path}: O- and P-visibility are satisfied.
  \item \underline{$\pathLL s\daimon \in V_{\shneg \beh Q \multimap \beh R}$}:
    We have $\kappa^Q_\symtensor\kappa^0_\symshneg\pathLL t_0\kappa^1_\symshneg\overline{\pathLL t'} \in \kappa^Q_\symtensor(\kappa^0_\symshneg\visit{\beh Q_0} \shuffle \kappa^1_\symshneg\visit{\mathcal C[Q_1 \multimap^+ Q_2]}) = \visit{Q}$, hence $\dual{\pathLL s\daimon} \in \kappa^P_\symtensor(\visit{\shneg Q} \shuffle \dual{\visit R})$.
  \item \underline{$\pathLL s\daimon$ is maximal}: Indeed, it cannot be extended neither by an action of $\beh Q_0^\perp$ (contradicts the maximality of $\pathLL t_0$) nor by an action of $\mathcal C[\beh Q_1 \multimap^+ \beh Q_2]^\perp$ or $\beh R$ (contradicts the maximality of $\pathLL s'$).
  \end{enumerate}
  Finally $\kappa^P_\symshpos\pathLL s\daimon \in \visit{P}$ is a path satisfying the constraints.

\item If $\mathcal C_2 = \mathcal C \oplus^+ \beh Q_0$ or $\beh Q_0 \oplus^+ \mathcal C$, by induction hypothesis, there exists a path of the form $\kappa^P_\symshpos\overline{\kappa^P_\symtensor\kappa^P_\symshneg}\pathLL s'\daimon$ maximal in $\visit{\mathcal C[Q_1 \multimap^+ Q_2]\multimap^+ R}$, where $\kappa^P_\symshneg\overline{\pathLL s'} \in (\kappa^P_\symshneg\pathLL t') \shuffle \dual{\pathLL u}$ with $\pathLL t' \in \visit{\mathcal C[Q_1 \multimap^+ Q_2]}$ and $\pathLL u \in \visit R$. Reasoning as previous item, we see that for one of $i \in \{1, 2\}$ (depending on the form of context $\mathcal C_2$) the path $\kappa^P_\symshpos\overline{\kappa^P_\symtensor\kappa^P_\symshneg\kappa^Q_{\symplus_i}\kappa_\symshneg}\pathLL s'\daimon$ (where $\kappa_\symshneg^P$ now justifies $\kappa^Q_{\symplus_i}$) is in $\visit{P}$, maximal, and of the required form.
\end{itemize}
The result for the general case, where $\beh P = \mathcal C_1[~\mathcal C_2[\beh Q_1 \multimap^+ \beh Q_2] \multimap^+ \beh R~]$, finally comes from Lemma~\ref{contra-pur}.
\end{proof}

\end{document}